\documentclass[a4paper,12pt]{article}
\usepackage{geometry}
\geometry{left=22mm,right=22mm,top=20mm,bottom=20mm}
\usepackage{amsmath}
\usepackage{amsthm}
\usepackage{authblk}
\usepackage{hyperref}
\usepackage{bm}
\usepackage{cite}
\usepackage{mathrsfs}
\usepackage{amssymb}
\usepackage{amsfonts}
\usepackage{graphicx}
\usepackage[justification=centering]{caption}
\usepackage{subcaption}
\usepackage{tikz}
\usepackage{mathtools}
\usepackage{xcolor}
\usepackage{authblk}
\usepackage[labelfont=bf]{caption}
\usetikzlibrary{intersections}
\allowdisplaybreaks

\theoremstyle{plain}
\newtheorem{prop}{Proposition}[section]

\newtheorem{theo}{Theorem}[section]
\newtheorem{lem}{Lemma}[section]

\newtheorem{assu}{Assumption}[section]
\newtheorem{rhp}{RH problem}[section]
\newtheorem*{unnumberednum*}{Numerical Verifications}
\def\rd{{\rm d}}

\allowdisplaybreaks
\theoremstyle{remark}
\newtheorem*{remark}{Remark}

\captionsetup[figure]{labelsep=space}
\title{Long-time behaviors of the two-component nonlinear Klein-Gordon equation:
higher-order asymptotics}

\author[$^{\dagger}$]{Deng-Shan Wang}
\author[$^{\dagger}$]{Yingmin Yang\thanks{Corresponding author: yangym@mail.bnu.edu.cn}}
\author[$^{\ddagger}$]{Liming Zang}

\affil[$^{\dagger}$]{\small School of Mathematical Sciences,
	Beijing Normal University, Beijing 100875, China}
\affil[$^{\ddagger}$]{School of Applied Science, Beijing Information Science and Technology University, Beijing 100192, China}

\date{}

\begin{document}
\maketitle
\begin{abstract}
This work investigates the long-time asymptotic behaviors of solutions to the initial value problem of the two-component nonlinear Klein-Gordon equation by inverse scattering transform and Riemann-Hilbert formulism. Two reflection coefficients are defined and their properties are analyzed in detail. The Riemann-Hilbert problem associated with the initial value problem is constructed in term of the two reflection coefficients. The Deift-Zhou nonlinear steepest descent method is then employed to analyze the Riemann-Hilbert problem, yielding the long-time asymptotics of the solution in different regions. Specifically, a higher-order asymptotic expansion of the solution inside the light cone is provided, and the leading term of this asymptotic solution is compared with results from direct numerical simulations, showing excellent agreement. This work not only provides a comprehensive analysis of the long-time behaviors of the two-component nonlinear Klein-Gordon equation but also offers a robust framework for future studies on similar nonlinear systems with third-order Lax pair.
\par
{\bf Keywords:} Nonlinear Klein-Gordon equation, Higher-order asymptotics, Riemann-Hilbert problem, Lax pair.
	\end{abstract}

\section{Introduction}

\ \ \ \
In relativistic quantum mechanics, the mass-energy equation for a free particle is
$E^2=c^2p^2+m^2c^4$, where $E, p$ and $m$ are the energy, momentum and rest mass of the particle, and $c$ is the speed of light. For the Planck constant $\hbar$, the Schr\"{o}dinger correspondence principle shows that $E\rightarrow i\hbar\partial_t$ and $p\rightarrow -i\hbar \nabla$, which result in the Klein-Gordon (KG) equation
$$
\hbar^2\partial_t^2\Psi=c^2\hbar^2\triangle\Psi-m^2c^4\Psi,
$$
for wave function $\Psi=\Psi(t,{\bf{x}})$. Initially, physicists Oskar Klein and Walter Gordon introduced the KG equation to investigate relativistic electrons. Later, it was discovered that the KG equation accurately describes the spinless pion \cite{Messiah-1999}, even though the Dirac equation was already renowned for its effective description of spinning electrons. The nonlinear KG equation is readily derived by incorporating the term $V'(|\Psi|^2)\Psi$, where
$V$ represents the potential energy density of the fields:
$$
\frac{\hbar^2}{2mc^2}\partial_t^2\Psi-\frac{\hbar^2}{2m}\triangle\Psi
+\frac{mc^2}{2}\Psi+V'(|\Psi|^2)\Psi=0.
$$
Different potentials correspond different kinds of nonlinear KG equations such as the linear wave equation for $V=|\Psi|^2$ and the $\Psi^4$-field equation for $V=|\Psi|^4$ \cite{Kleinert-2001}, which has direct connection with the SU(2) Yang-Mills theory. Moreover, both sine-Gordon equation $u_{tt} - u_{xx} =\sin(u)$ \cite{Takhatajian-1974,Zhou-sG} and Tzitz\'eica equation \cite{tzitzeica,tzitzeica-2024}
\begin{align}\label{Tt}
		u_{tt} - u_{xx} = {\rm{e}}^{-2u} - {\rm{e}}^u,
\end{align}
belong to the nonlinear KG type equation. The Tzitz\'eica equation (\ref{Tt}) was first derived to described affine surface with negative affine mean curvature \cite{tzitzeica} and can also be reduced from the anti-self-dual Yang-Mills in the space $\mathbb{R}^{2,2}$ by gauge group ${\rm{SL}}(3,\mathbb{R})$ \cite{Dunajski-2009}. Besides there are also various variants of the one-dimensional nonlinear KG equation. For example, C${\rm\hat{o}}$te, Martel and Yuan \cite{Martel-2021} studied the long-time asymptotics of the one-dimensional damped nonlinear KG equation
$$
u_{tt}+2\alpha u_t - u_{xx} +u-|u|^{p-1}u=0,\quad \alpha>0,\quad p>2.
$$
\par
In the early 1980s, Fordy and Gibbons \cite{Fordy1,Fordy2} proposed a class of coupled nonlinear KG equations solvable by means of inverse scattering transform, among which the following two-component nonlinear KG equation
\begin{equation}\label{1}
		\left\{
		\begin{aligned}
			u_{\xi\tau}&={\rm{e}}^{2u}-{\rm{e}}^{-u}\cosh{3v},\\
			v_{\xi\tau}&={\rm{e}}^{-u}\sinh{3v},
		\end{aligned}\right.
\end{equation}
is surprising to be completely integrable and has Lax pair of the form
	\begin{equation}\label{Lax-O}
		\begin{aligned}
			\Phi_\xi=U\Phi,\quad \Phi_\tau=V\Phi,\quad \Phi=(\phi_1,\phi_2,\phi_3)^{\rm T},
		\end{aligned}
	\end{equation}
	where the matrices $U$ and $V$ are
	\begin{equation*}
			\begin{aligned}
			U&=
			\begin{pmatrix}
				u_\xi+v_\xi & \lambda   &0\\
				0   &-2v_\xi    &\lambda\\
				\lambda &0  &-u_\xi+v_\xi
			\end{pmatrix},\quad V=
			\frac{1}{\lambda}\begin{pmatrix}
				0   &   0   & {\rm{e}}^{2u}\\
				{\rm{e}}^{-u-3v}   &0  &0\\
				0   &{\rm{e}}^{3v-u}   &0
			\end{pmatrix},
		\end{aligned}
	\end{equation*}
where $\lambda$ is the spectral parameter.
\par
Taking the laboratory coordinates $x=\xi+\tau$ and $ t=\xi-\tau$, it is deduced that
	\begin{equation*}\begin{aligned}
		\partial_x=\frac{1}{2}\left(\partial_\xi+\partial_\tau\right),\quad \partial_t=\frac{1}{2}\left(\partial_\xi-\partial_\tau\right).
	\end{aligned}\end{equation*}
Applying these transformations, the Lax pair (\ref{Lax-O}) is converted into
	\begin{equation}\begin{aligned}\label{2}
		\Phi_x=\tilde{L}\Phi,\quad \Phi_t=\tilde{A}\Phi,\quad \Phi=(\phi_1,\phi_2,\phi_3)^{\rm T},
	\end{aligned}\end{equation}
	where the matrices $\tilde{L}$ and $\tilde{A}$ are
	
	\begin{align*}
		\tilde{L}=\frac{1}{2}(U+V)&=\frac{\lambda}{2}\begin{pmatrix}
			0   &1  &0\\
			0   &0  &1\\
			1   &0  &0
		\end{pmatrix}+\frac{1}{2\lambda}\begin{pmatrix}
			0   &0 & {\rm{e}}^{2u}\\
			{\rm{e}}^{-u-3v}   &0  &0\\
			0&{\rm{e}}^{3v-u}  &0
		\end{pmatrix}\\
		&\quad+\frac{1}{2}\begin{pmatrix}
			(u+v)_x+(u+v)_t &0  &0\\
			0   &-2v_x-2v_t &0\\
			0   &0  &(v-u)_x+(v-u)_t
		\end{pmatrix},
		\\
		\tilde{A}=\frac{1}{2}(U-V)&=\frac{\lambda}{2}\begin{pmatrix}
			0   &1  &0\\
			0   &0  &1\\
			1   &0  &0
		\end{pmatrix}-\frac{1}{2\lambda}\begin{pmatrix}
			0   &0  &{\rm{e}}^{2u}\\
			{\rm{e}}^{-u-3v}   &0  &0\\
			0&{\rm{e}}^{3v-u}  &0
		\end{pmatrix}\\
		&\quad+\frac{1}{2}\begin{pmatrix}
			(u+v)_x+(u+v)_t &0  &0\\
			0   &-2v_x-2v_t &0\\
			0   &0  &(v-u)_x+(v-u)_t
		\end{pmatrix},\end{align*}
	and the equation \eqref{1} is converted into
	\begin{equation}\label{two-component-KG}
		\left\{
		\begin{aligned}
			&u_{tt}-u_{xx}=-{\rm{e}}^{2u}+{\rm{e}}^{-u}\cosh{3v},\\
			&v_{tt}-v_{xx}=-{\rm{e}}^{-u}\sinh{3v},
		\end{aligned}
		\right.
	\end{equation}
which can be reduced to the Tzitz\'eica equation (\ref{Tt}) \cite{tzitzeica} by taking $v\to u$ and then $u\to \frac{1}{2}u.$ In fact, the mappings $u\to u+\frac{1}{2}v$ and $v\to -\frac{1}{2}v$ indicate that the two-component nonlinear KG equation (\ref{two-component-KG}) is equivalent to the $\mathbb{Z}_3$ two dimensional Toda chain \cite{Toda-2012}
\begin{equation}\label{two-component-Toda}
		\left\{
		\begin{aligned}
			&u_{tt}-u_{xx}={\rm{e}}^{v-u}-{\rm{e}}^{2u+v},\\
			&v_{tt}-v_{xx}={\rm{e}}^{-u-2v}-{\rm{e}}^{v-u},
		\end{aligned}
		\right.
	\end{equation}
given by Mikhailov \cite{Mikhailov-1981} through reducing Toda lattice $u_{\alpha,tt}-u_{\alpha,xx}={\rm{e}}^{u_{\alpha+1}-u_\alpha}-{\rm{e}}^{u_{\alpha}-u_{\alpha-1}}$ for $\alpha\in\mathbb{Z}/\mathbb{Z}_3$ and $u_1+u_2+u_3=0.$ On the other hand, Dunajski and Plansangkate \cite{Dunajski-2009} re-derived the $\mathbb{Z}_3$ two dimensional Toda chain (\ref{two-component-Toda}) through reducing the Hitchin equations (also called the ``self dual equation on a Riemann surface") in gauge theory \cite{Hitchin-1987}. \"{O}zer \cite{Ozer-1998} derived a new integrable coupled nonlinear Schr\"{o}dinger equation from the two-component nonlinear KG equation (\ref{two-component-KG}) by
the method of multiple scales. Wu, He and Geng \cite{Wu-Geng-2013} introduce a trigonal curve associated with the two-component nonlinear KG equation and derive its quasi-periodic solutions in terms of the Baker-Akhiezer function \cite{Kotlyarov-Shepelsky-2017}. Thus the two-component nonlinear KG equation (\ref{two-component-KG}) is an important integrable model in quantum field theory, gauge field theory and low-dimensional geometry.
\par
So far, the understanding of the solutions and asymptotic properties of the two-component nonlinear KG equation (\ref{two-component-KG}) remains quite limited. A primary reason is that its Lax pair is of third order, which significantly complicates the inverse scattering transform \cite{Constantin-2010} required to solve it, thereby rendering the task highly intricate and challenging as Deift proposed in the 8th of the list of sixteen open problems \cite{Deift-2008}. In recent years, Charlier, Lenells and one of the present authors \cite{good-boussinesq,lenells-wang-good} have endeavored to explore the long-time asymptotics of the good Boussinesq equation, which also features a third-order Lax pair, by developing the Deift-Zhou nonlinear steepest descent method \cite{Deift-Zhou-1993} for oscillatory Riemann-Hilbert (RH) problems \cite{Charlier-Lenells-2022,Grava-Minakov-2020,Girotti-Grava-2021,WangJMP,Charlier-Lenells-miura,Charlier-Lenells-2024,Wang-Zhu-2025}. This offers a potential approach for exploring the inverse scattering theory and examining the long-time asymptotics of the two-component nonlinear KG equation (\ref{two-component-KG}).
\par
This work develops a formalism of the inverse scattering transform for the initial value problem of two-component nonlinear KG equation \eqref{two-component-KG} on the line, i.e.,
\begin{equation}\label{3}
		\left\{
		\begin{aligned}
			&u_{tt}-u_{xx}=-{\rm{e}}^{2u}+{\rm{e}}^{-u}\cosh{3v},\\
			&v_{tt}-v_{xx}=-{\rm{e}}^{-u}\sinh{3v},\\
			&u(x,0)=u_0(x)\in \mathcal{S}(\mathbb{R}),\quad v(x,0)=v_0(x)\in \mathcal{S}(\mathbb{R}).
		\end{aligned}
		\right.
	\end{equation}
Assuming the existence of a solution, it is demonstrated that the solution to this problem can be expressed in terms of the solution to a $3\times3$ matrix-valued RH problem, which is formulated by using two reflection coefficients and is defined solely in terms of the initial data. Subsequently, the long-time asymptotic behaviors of the solution to (\ref{3}) are established through the Deift-Zhou nonlinear steepest descent analysis and the theory of high-order asymptotics \cite{higher-order}.
\par
The organization of this paper is as follows: Section \ref{sec2} presents the main results of this paper. In Section \ref{sec3}, starting from the Lax pair of the two-component nonlinear KG equation \eqref{3}, a thorough spectral analysis is carried out, and the corresponding RH problem is constructed. Section \ref{sec4} employs the nonlinear steepest descent method to meticulously examine the long-time asymptotic behaviors of the solution across four distinct regions as time approaches infinity. A significant focus is placed on deriving higher-order asymptotic expressions for the solution within the light cone, thereby offering a more nuanced understanding for the two-component nonlinear KG equation \eqref{3} and its long-term dynamics.

	\section{Main results}\label{sec2}
	\quad\quad This section lists the main theorems and conclusions of the present work. Here, we first provide the definitions of the scattering matrices that will be used in the following conclusions. For the initial value problem (\ref{3}), direct scattering analysis suggests that one can define the scattering matrices $s(\lambda)=(s_{ij}(\lambda))_{3\times3}$ and $s^A(\lambda)=(s^A_{ij}(\lambda))_{3\times3}$ for the regions $D_1$ and $D_4$ in Fig. \ref{figSigma}, which are given in \eqref{slambda} and \eqref{sAlambda} below, respectively.  First of all, the following assumption is formulated.

    \begin{assu}\label{assu1}
    	$($Absence of solitons$)$.
    	In the context of the initial value problem \eqref{3}, it is presupposed that the elements $(s(\lambda))_{11}$ and $(s^A(\lambda))_{11}$ are nonzero for $\lambda\in \bar{D}_1\setminus\{0\}$ and $\lambda\in \bar{D}_4\setminus\{0\}$, respectively.
    \end{assu}
\par
	That is, we suppose there do not exist solitons in the problem \eqref{3}. The reflection coefficients $r_1(\lambda)$ and $r_2(\lambda)$ are defined by
	\begin{equation}\label{4}
		\left\{
		\begin{aligned}
			&r_1(\lambda):=\frac{s_{12}(\lambda)}{s_{11}(\lambda)},\quad \lambda \in (0,\infty),\\
			&r_2(\lambda):=\frac{s^A_{12}(\lambda)}{s^A_{11}(\lambda)},\quad \lambda \in (-\infty,0).
		\end{aligned}
		\right.
	\end{equation}
	
	\begin{theo}\label{theo21}
		
		Assuming that the initial data $u_0$, $v_0\in \mathcal{S}(\mathbb{R})$, the reflection coefficients $r_1(\lambda)$ and $r_2(\lambda)$ are rigorously defined for $\lambda\in (0,\infty)$ and $\lambda \in (-\infty,0)$, respectively and exhibit the subsequent properties:
    \begin{enumerate}
    	
    	\item The functions $r_1(\lambda)$ and $r_2(\lambda)$ are smooth within their respective domains of definition and rapidly decay as $\lambda\rightarrow\pm\infty$.
    	\item Both the functions $r_1(\lambda)$, $r_2(\lambda)$ and their derivatives rapidly decay as $\lambda\rightarrow0$, and $r_1(0)=r_2(0)=0$.
    \end{enumerate}
	\end{theo}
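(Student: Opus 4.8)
The plan is to reduce every claim to the behavior of the Jost solutions of the $x$-part of the Lax pair \eqref{2} evaluated on the initial data, and then to read off the two reflection coefficients from these solutions through the definition \eqref{4}. First I would conjugate $\tilde L$ by the constant matrix that diagonalizes the asymptotic operator $\tilde L_0=\tfrac{\lambda}{2}A+\tfrac{1}{2\lambda}A^{-1}$, where $A$ is the cyclic permutation matrix appearing in \eqref{2} and $A^{-1}=A^2$ is the limit of the $\tfrac{1}{2\lambda}$-block as $u,v\to0$. Since $u_0,v_0\in\mathcal S(\mathbb R)$ vanish at $\pm\infty$, the coefficients of $\tilde L$ built from $u_0,v_0$ and their derivatives at the initial time — in particular $e^{2u}-1$, $e^{-u-3v}-1$, $e^{3v-u}-1$ and the diagonal part — all belong to $\mathcal S(\mathbb R)$, so in the diagonalizing frame the spectral problem has the form (diagonal phase) $+$ (Schwartz potential). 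The eigenvalues of $\tilde L_0$ are $\ell_j(\lambda)=\tfrac{\lambda}{2}\omega^{j}+\tfrac{1}{2\lambda}\omega^{-j}$, $j=0,1,2$, and the Jost solutions are constructed as the bounded solutions of the associated Volterra integral equations whose kernels carry the factor $e^{(\ell_j(\lambda)-\ell_k(\lambda))(x-y)}$.

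\emph{Step 1 (smoothness).} Solving these Volterra equations by the usual Neumann series, whose absolute and uniform convergence is supplied by the Schwartz decay of the potential, produces Jost solutions that are smooth in $\lambda$ on the interior of each spectral sector of Fig.~\ref{figSigma}, with continuous boundary values on the separating curves. Writing $s_{11},s_{12}$ and their adjoint counterparts $s^A_{11},s^A_{12}$ as the appropriate Wronskian-type combinations of columns of the Jost matrices, these entries inherit this smoothness. Assumption \ref{assu1} guarantees $s_{11}\neq0$ on $\bar D_1\setminus\{0\}$ and $s^A_{11}\neq0$ on $\bar D_4\setminus\{0\}$, so the quotients $r_1=s_{12}/s_{11}$ and $r_2=s^A_{12}/s^A_{11}$ are smooth on $(0,\infty)$ and $(-\infty,0)$, respectively.

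\emph{Step 2 (decay as $\lambda\to\pm\infty$ and as $\lambda\to0$).} The off-diagonal entry $s_{12}$ admits an integral representation in which the Schwartz potential is paired against the oscillatory factor $e^{(\ell_j(\lambda)-\ell_k(\lambda))y}$. The structural point is the $\lambda\leftrightarrow1/\lambda$ duality of the phases: each difference $\ell_j-\ell_k$ grows linearly in $\lambda$ as $\lambda\to\infty$ and linearly in $1/\lambda$ as $\lambda\to0$, while never vanishing on the interior of the rays carrying $r_1,r_2$. Repeated integration by parts therefore transfers each power of the large quantity (namely $\lambda$ at infinity, and $1/\lambda$ at the origin) onto a derivative of the potential; since the potential is Schwartz, this forces $s_{12}$ to decay faster than any power of $\lambda$ and of $\lambda^{-1}$, together with all its $\lambda$-derivatives. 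As $s_{11}$ and $s^A_{11}$ tend to $1$ in both limits, the same rapid decay passes to $r_1,r_2$, and the limiting values yield $r_1(0)=r_2(0)=0$.

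The step I expect to be the genuine obstacle is the behavior at $\lambda=0$, where $\tilde L$ is singular and the phases $\ell_j(\lambda)$ blow up. There one must show that the Jost solutions remain uniformly controlled as $\lambda\to0$ and that the integration-by-parts scheme is legitimate in spite of the $1/\lambda$ singularity: concretely, that the Neumann series for the transformed problem converges uniformly up to the origin, and that the boundary terms generated at each integration by parts are dominated by the Schwartz decay of the potential rather than by the exploding phase. Once this uniform control near $\lambda=0$ is established, the duality with the $\lambda\to\infty$ regime makes the rapid vanishing of $r_1,r_2$ at the origin a mirror image of their rapid decay at infinity, and the remaining estimates are routine.
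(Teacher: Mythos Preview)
Your framework is correct and aligns with the paper for Step~1 (smoothness via Neumann series for the Volterra equations) and for the $\lambda\to\infty$ half of Step~2 (WKB expansion of $X_\pm$ yielding $s(\lambda)=I+O(\lambda^{-1})$, hence rapid decay of $s_{12}$). The paper carries these out exactly as you describe, after the same constant diagonalization $P$.

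The gap is in the $\lambda\to 0$ argument. You correctly flag that this is the obstacle, but your proposed resolution --- push integration by parts through and hope the $1/\lambda$ singularity in the potential is beaten by the $1/\lambda$ growth of the phase --- does not close as written. After diagonalization the potential is $L_1=N+\tfrac{1}{6\lambda}M$, so the amplitude in the integral representation of $s_{12}$ carries an explicit $1/\lambda$; moreover your IBP scheme also differentiates $X_+(\,\cdot\,,\lambda)$ in $y$, and you have not shown that $X_+$ and its $y$-derivatives remain uniformly bounded as $\lambda\to 0$ (in fact $X_+\to G(x)\neq I$, not $I$). Without that uniform control the oscillatory-integral gain is formal.

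The paper supplies exactly the missing tool: a second, \emph{solution-dependent} gauge transformation $X=G(x,t)Y$, with $G$ built explicitly from $e^{2u},e^{u-3v}$. Conjugation by $G$ produces a new potential $\widehat L_1=\widehat N^{(x)}+\tfrac{\lambda}{6}\widehat M$ in which the $1/\lambda$ term has been \emph{removed}; the singular piece is absorbed into the diagonal phase $\mathfrak L$, which is unchanged. In this frame $Y_\pm=I+O(\lambda)$ with full Taylor expansions at $0$, the scattering matrix is the same $s(\lambda)$, and $s_{12}$ becomes an integral of a Schwartz-in-$y$, regular-in-$\lambda$ amplitude against $e^{-y(l_1-l_2)}$ with $|l_1-l_2|\sim 1/\lambda$. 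Now your integration-by-parts argument is legitimate and gives $s_{12}=O(\lambda^N)$ for every $N$, hence the rapid vanishing of $r_1,r_2$ at the origin. So your Step~2 is salvageable, but only after inserting this gauge transformation; that is the concrete ``uniform control near $\lambda=0$'' you said must be established, and the paper's $G$ is how it is done.
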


	\begin{rhp}\label{rhp}
	Find a $3\times3$ matrix-valued function $M(x,t,\lambda)$ with the following properties:
	\begin{itemize}
		\item The function $M(x,t,\lambda)$ is holomorphic for $\lambda\in\mathbb{C}\setminus\Sigma$, where 	$\Sigma$ is defined by Fig. \ref{figSigma}.
		\item The function $M(x,t,\lambda)$ is analytic for $\lambda\in\mathbb{C}\setminus\Sigma$, and as $\lambda$ approaches $\Sigma$ from the left and right, the boundary values $M_+(x,t,\lambda)$ and $M_-(x,t,\lambda)$ of $M(x,t,\lambda)$ exist and satisfy the following relationship:
		\begin{equation}\label{5}
			M_+(x,t,\lambda)=M_-(x,t,\lambda)v_j(x,t,\lambda), \quad \lambda\in \Sigma_j,
		\end{equation}
		 for $j=1,2,\cdots,6$, and
	\begin{equation}\label{6}\begin{aligned}
	v_1 &\!\!=\!\!\begin{pmatrix}
		1	&	-r_1(\lambda){\rm{e}}^{-\theta_{21}}	&0\\
		r_1^*(\lambda){\rm{e}}^{\theta_{21}}	&1-r_1(\lambda)r_1^*(\lambda)	&0\\
		0	&0&1
	\end{pmatrix}\!\!,&&
	v_2\!\!=\!\!\begin{pmatrix}
		1	\!\!&0\!\!	&0\\
		0	\!\!&1\!-\!r_2(\omega\lambda)r_2^*(\omega\lambda)\!\!	&-r_2^*(\omega\lambda){\rm{e}}^{-\theta_{32}}\\
		0	\!\!&r_2(\omega\lambda){\rm{e}}^{\theta_{32}}\!\!	&1
	\end{pmatrix}\!\!,\\
	v_3 &\!\!=\!\!\begin{pmatrix}
		1-r_1(\omega^2\lambda)r_1^*(\omega^2\lambda)	&	\!\!0\!\!	&r_1^*(\omega^2\lambda){\rm{e}}^{-\theta_{31}}\\
		0 &\!\! 1\!\!& 0\\
		-r_1(\omega^2\lambda){\rm{e}}^{\theta_{31}}	&\!\!0\!\!	&1\\
	\end{pmatrix}\!\!,\!&&\!
	v_4\!\! =\!\!\begin{pmatrix}
		1-r_2(\lambda)r_2^*(\lambda)	&-r_2^*(\lambda){\rm{e}}^{-\theta_{21}}		&0\\
		r_2(\lambda){\rm{e}}^{\theta_{21}}	&1	&0\\
		0	&0&1
	\end{pmatrix}\!\!,\\
	v_5 &\!\!=\!\!\begin{pmatrix}
		1	&0	&0\\
		0	&1	&-r_1(\omega\lambda){\rm{e}}^{-\theta_{32}}\\
		0	&r_1^*(\omega\lambda){\rm{e}}^{\theta_{32}}&1-r_1(\omega\lambda)r_1^*(\omega\lambda)
	\end{pmatrix}\!\!,&&
	v_6 \!\!=\!\!\begin{pmatrix}
		1\!\!	&\!\!0\!\!	&r_2(\omega^2\lambda){\rm{e}}^{-\theta_{31}}\\
		0\!\!	&\!\!1\!\!	&0\\
		-r^*_2(\omega^2\lambda){\rm{e}}^{\theta_{31}}\!\!	&\!\!0\!\!	&1\!-\!r_2(\omega^2\lambda)r^*_2(\omega^2\lambda)
	\end{pmatrix}\!\!,
\end{aligned}\end{equation}
		where $\omega=\text{e}^{\frac{2}{3}\pi i}$, $\theta_{ij}=(l_i-l_j)x+(a_i-a_j)t$, $l_j=\frac{\omega^j\lambda+(w^j\lambda)^{-1}}{2}$, $a_j=\frac{\omega^j\lambda-(w^j\lambda)^{-1}}{2}$.
		
		\item  As $\lambda\rightarrow\infty$, $M(x,t,\lambda)=I+\mathcal{O}(\frac{1}{\lambda})$ for $\lambda\in\mathbb{C}\setminus\Sigma$.
		\item  As $\lambda\rightarrow0$, $M(x,t,\lambda)=G(x,t)+\mathcal{O}({\lambda})$ for $\lambda\in\mathbb{C}\setminus\Sigma$, where
		\begin{equation*}\begin{aligned}
				G(x,t)=
				\frac{{\rm{e}}^{2u}}{3}\begin{pmatrix}
					1	&	\omega	&	\omega^2\\
					\omega^2	&	1	&\omega\\
					\omega	&\omega^2	&1
				\end{pmatrix}+\frac{{\rm{e}}^{u-3v}}{3}\begin{pmatrix}
					1	&\omega^2	&\omega\\
					\omega	&1	&\omega^2\\
					\omega^2	&\omega	&1
				\end{pmatrix}+\frac{1}{3}
				\begin{pmatrix}
					1	&1	&1\\
					1	&1	&1\\
					1	&1	&1
				\end{pmatrix}.
		\end{aligned}\end{equation*}
		\item  $M(x,t,\lambda)=\tilde{\sigma}_1^{-1}M(x,t,\omega\lambda)\tilde{\sigma}_1=\tilde{\sigma}_2\overline{M(x,t,\overline{\lambda})}\tilde{\sigma}_2^{-1}$, where
		\begin{equation*}
				\tilde{\sigma}_1=\begin{pmatrix}
				0   &   1   &0\\
				0   &   0   &1\\
				1   &   0   &0
			\end{pmatrix},\quad	\tilde{\sigma}_2=\begin{pmatrix}
				0   &   1   &0\\
				1   &   0   &0\\
				0   &   0   &1
			\end{pmatrix}.
		\end{equation*}
	\end{itemize}
	\end{rhp}
	
	\begin{theo}\label{reconstruction formula}
		Under the conditions of Assumption \ref{assu1}, let $u(x,t)$ and $v(x,t)$ be solutions to the initial value problem \eqref{3}. Define the reflection coefficients $r_1(\lambda)$ and $r_2(\lambda)$ according to the equation \eqref{4} and an existence time $T\in (0,\infty)$. The RH problem \ref{rhp} admits a unique solution $M(x,t,\lambda)$ for each point $(x,t)\in \mathbb{R}\times \left[0,T\right)$ if it exists, and moreover, the solution to the initial value problem of two-component nonlinear KG equation \eqref{3} can be expressed by
		\begin{equation}\label{7}
			\left\{
			\begin{aligned}
				&u(x,t)=\frac{1}{2}\lim_{\lambda\rightarrow0}\ln[(\omega,\omega^2,1)M(x,t,\lambda)]_{13},\\
				&v(x,t)=\frac{1}{6}\lim_{\lambda\rightarrow0}\ln[(\omega,\omega^2,1)M(x,t,\lambda)]_{13}-\frac{1}{3}\lim_{\lambda\rightarrow0}\ln[(\omega^2,\omega,1)M(x,t,\lambda)]_{13}.
			\end{aligned}\right.
		\end{equation}
	\end{theo}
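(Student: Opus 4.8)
The plan is to treat the two assertions separately: first the uniqueness of the solution to RH problem \ref{rhp}, and then the reconstruction identities \eqref{7}. Uniqueness will be obtained from a determinant computation combined with a Liouville-type (vanishing lemma) argument, while the reconstruction will follow by extracting ${\rm e}^{2u}$ and ${\rm e}^{u-3v}$ from the explicit coefficient $G(x,t)$ governing the behavior of $M$ as $\lambda\to0$.

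For uniqueness, I would first record that every jump matrix in \eqref{6} has unit determinant: each nontrivial $2\times2$ block has the form $\begin{pmatrix} 1 & -r\,{\rm e}^{-\theta} \\ r^*{\rm e}^{\theta} & 1-rr^* \end{pmatrix}$, whose determinant is $1-rr^*+rr^*=1$, and the remaining diagonal entry equals $1$, so $\det v_j\equiv1$ for $j=1,\dots,6$. Consequently $\det M$ has no jump across $\Sigma$ and extends holomorphically there; since $\det M\to1$ as $\lambda\to\infty$ while $\det M\to\det G$ stays bounded as $\lambda\to0$, the point $\lambda=0$ is a removable singularity and Liouville's theorem forces $\det M\equiv1$. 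In particular $\det G=1$, so $G(x,t)$ is invertible and $M(x,t,\lambda)$ is invertible for every $\lambda$. Given two solutions $M_1,M_2$, I would then set $N:=M_1M_2^{-1}$; the common jump relation \eqref{5} cancels, so $N$ is holomorphic across each ray, with $N\to I$ as $\lambda\to\infty$ and $N\to GG^{-1}=I$ as $\lambda\to0$. Hence $N$ extends to an entire, bounded matrix function, necessarily equal to $I$, giving $M_1=M_2$.

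For the reconstruction, I would invoke the spectral construction of Section \ref{sec3}, which identifies the unique solution $M$ with the suitably normalized eigenfunction matrix of the Lax pair \eqref{2} and produces exactly the stated expansion $M=G+\mathcal{O}(\lambda)$. Writing the three constituent matrices of $G$ as the rank-one outer products $(1,\omega^2,\omega)^{\rm T}(1,\omega,\omega^2)$, $(1,\omega,\omega^2)^{\rm T}(1,\omega^2,\omega)$ and $(1,1,1)^{\rm T}(1,1,1)$, a direct computation using $\omega^3=1$ and $1+\omega+\omega^2=0$ evaluates the relevant row-vector products; the relations $1+\omega+\omega^2=0$ annihilate the unwanted contributions and leave
\begin{equation*}
\lim_{\lambda\to0}[(\omega,\omega^2,1)M]_{13}={\rm e}^{2u},\qquad \lim_{\lambda\to0}[(\omega^2,\omega,1)M]_{13}={\rm e}^{u-3v}.
\end{equation*}
Taking logarithms (legitimate since $u,v$ are real, so both limits are positive) and forming the combinations $\tfrac12\ln{\rm e}^{2u}=u$ and $\tfrac16\ln{\rm e}^{2u}-\tfrac13\ln{\rm e}^{u-3v}=v$ yields precisely \eqref{7}.

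The hard part will be the local analysis at the singular points $\lambda=0$ and $\lambda=\infty$, where all six rays of $\Sigma$ meet and where $M$ is normalized to $G\neq I$ rather than to the identity. Both the claim that $\det M$ and $N$ extend holomorphically (removable singularities) across these junctions and the validity of the $\mathcal{O}(\lambda)$ expansion at $\lambda=0$ rely on the fact, established in Theorem \ref{theo21}, that $r_1,r_2$ and their derivatives decay rapidly and vanish at $\lambda=0$; this forces the jump matrices in \eqref{6} to approach the identity there, so that $M$ stays bounded and the isolated singularities are genuinely removable rather than merely controlled. Handling this junction behavior carefully, rather than the algebraic extraction, is where the real care is needed.
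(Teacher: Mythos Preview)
Your proposal is correct and follows essentially the same route as the paper. The reconstruction step matches exactly: the paper records (in the Remark following the $\lambda\to0$ expansion of $X_\pm$) the eigenvector identities $(\omega,\omega^2,1)G={\rm e}^{2u}(\omega,\omega^2,1)$ and $(\omega^2,\omega,1)G={\rm e}^{u-3v}(\omega^2,\omega,1)$, which is precisely your outer-product computation, and then reads off \eqref{7}. For uniqueness the paper merely states the result as a separate lemma without writing out a proof; your determinant-plus-Liouville argument is the standard one and is what is implicitly intended, with the removable-singularity issue at the junction $\lambda=0$ handled exactly as you indicate, via the vanishing of $r_1,r_2$ there from Theorem~\ref{theo21}.
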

	
	\begin{lem}
		Under the condition that Theorem \ref{reconstruction formula} holds, the solution $M(x,t,\lambda)$ of RH problem \ref{rhp} is unique, if it exits.
	\end{lem}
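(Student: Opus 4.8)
The plan is to run the standard two-step Liouville argument for the uniqueness of a Riemann--Hilbert solution, the only nonstandard feature being the confluence point $\lambda=0$ at which all six rays of $\Sigma$ meet. Throughout, the point $(x,t)$ is fixed and suppressed.

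First I would show that $\det M\equiv 1$. Every jump matrix in \eqref{6} has unit determinant: expanding $v_1$ along its trivial third row gives $1\cdot\big[(1-r_1r_1^*)+r_1r_1^*\big]=1$, and since $v_2,\dots,v_6$ share the same structure of a $2\times2$ block bordered by an identity entry, $\det v_j=1$ for every $j$. Hence $\det M_+=\det M_-$ on each $\Sigma_j\setminus\{0\}$, so $\det M$ continues holomorphically across every ray and is holomorphic on $\mathbb{C}\setminus\{0\}$. The normalization $M=G+\mathcal{O}(\lambda)$ as $\lambda\to0$ shows that $\det M\to\det G$ is finite there, so the origin is a removable singularity; together with $\det M\to1$ as $\lambda\to\infty$, Liouville's theorem forces $\det M\equiv1$. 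In particular $M(\lambda)$ is invertible for every $\lambda\in\mathbb{C}\setminus\Sigma$.

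Next, given two solutions $M$ and $\tilde M$, I would form $N:=M\tilde M^{-1}$. On each $\Sigma_j\setminus\{0\}$ the jumps cancel, $N_+=M_-v_jv_j^{-1}\tilde M_-^{-1}=N_-$, so $N$ is holomorphic on $\mathbb{C}\setminus\{0\}$. As $\lambda\to\infty$ one has $N\to I$, and as $\lambda\to0$ one has $N\to GG^{-1}=I$, using that $G$ is invertible; thus $N$ is bounded near the origin, the singularity there is removable, and $N$ is a bounded entire function equal to $I$ at infinity. By Liouville $N\equiv I$, i.e.\ $M=\tilde M$, which is the asserted uniqueness.

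The main obstacle, and the single departure from the textbook $2\times2$ case, is controlling the origin, specifically verifying that the limit matrix $G$ is invertible so that $N\to I$ there. I would settle this by recognizing $G=\mathrm{e}^{2u}P_2+\mathrm{e}^{u-3v}P_1+P_0$, where $P_k=\tfrac13 f_k f_k^{\dagger}$ are the mutually orthogonal rank-one projections onto the discrete-Fourier vectors $f_k=(1,\omega^k,\omega^{2k})^{\rm T}$, $k=0,1,2$ (one checks directly that the three circulant matrices appearing in $G$ are exactly $f_2f_2^\dagger$, $f_1f_1^\dagger$ and $f_0f_0^\dagger$). Since $\{P_0,P_1,P_2\}$ commute and sum to the identity, $G$ is normal and simultaneously diagonalized with eigenvalues $1$, $\mathrm{e}^{2u}$ and $\mathrm{e}^{u-3v}$, whence $\det G=\mathrm{e}^{3u-3v}\neq0$. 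With the invertibility of $G$ in hand the removable-singularity step at $\lambda=0$ is justified and the argument closes.
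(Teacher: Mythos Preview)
The paper states this lemma without proof, so there is no in-text argument to compare against. Your two-step Liouville argument is the standard and correct route for RH uniqueness: each jump matrix in \eqref{6} visibly has unit determinant, so $\det M$ extends across the open rays to an entire function with limit $1$ at infinity and finite limit $\det G$ at the origin, whence $\det M\equiv 1$ and $M$ is everywhere invertible; then $N=M\tilde M^{-1}$ is entire with $N\to I$ at both $\infty$ and $0$ (the latter using invertibility of $G$), so $N\equiv I$. Your diagonalization of $G$ via the three discrete-Fourier rank-one projections is also correct and gives eigenvalues $1,\ e^{2u},\ e^{u-3v}$, confirming $\det G\neq 0$.

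One incidental observation: your step-one conclusion $\det M\equiv 1$, combined with the RH normalization $M\to G$ at $\lambda=0$, would force $\det G=1$, whereas your own eigenvalue calculation yields $\det G=e^{3(u-v)}$. This does not damage the uniqueness argument---the lemma is conditional on existence, and if no solution exists the claim is vacuous---but it does flag a tension in how the paper formulates the $\lambda\to 0$ condition (the same tension appears between the paper's assertion $\det M_n=1$ and its claim that $M_n\to G$ as $\lambda\to 0$).
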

	\begin{theo}\label{region}
		Under the conditions in Theorem \ref{reconstruction formula} and reconstruction formula \eqref{7}, the solution to the initial value problem of the two-component nonlinear KG equation \eqref{3} exhibits different asymptotic behaviors in distinct regions with the division of regions as shown in Fig. \ref{asy}:
		\begin{description}
			\item[Region I:] In this region, $\vert\frac{x}{t}\vert\rightarrow\infty$, and when $\vert x\vert>>t$, the solutions $u(x,t)$ and $v(x,t)$ decay rapidly to zero and behave as $\mathcal{O}(|x|^{-N-3/2})$ $(N\geq0)$.
			\item[Region II:] This region satisfies $1\leq\vert\frac{x}{t}\vert<\infty$, and the solutions $u(x,t)$ and $v(x,t)$ decay rapidly to zero with the behavior of $\mathcal{O}(t^{-N-3/2})$ $(N\geq0)$.
			\item[Region III:] In this region, $\vert\frac{x}{t}\vert\rightarrow1^-$,  it represents a transition region, and the leading terms of the solutions $u(x,t)$ and $v(x,t)$ are the same as those in equations \eqref{8} and \eqref{9} below, but they have different error of $\mathcal{O}\left(t^{-N-3/2} +\frac{C_N(\lambda_0)(\ln t)^{N+1}}{t^{(N+2)/2}}\right) $ where $N\geq0$ and $C_N(\lambda_0)$ is a smooth non-negative function that vanishes to any order as $\lambda_0=0$ and $\lambda_0=\infty$.
			\item[Region IV:] In this region, $\vert\frac{x}{t}\vert<1$, the higher-order asymptotic solution to the initial value problem of the two-component nonlinear KG equation \eqref{3} is formulated as
			\begin{align}\label{8}
					u(x,t)=&\frac{1}{2}\ln \!\!\left(\! 1\!\!+\!\!\sum\limits_{0\leq q\leq p\leq N}\!\!\frac{(\ln t)^q}{(bt)^{p/2}}\frac{2{\rm{Re}}\,\left( \omega^2\delta^0_{-\lambda_0}{\rm{e}}^{-\theta_{21}(-\lambda_0)}b_{pq}(x,t)\right) -2{\rm{Re}}\,\left(\omega \delta^0_{\lambda_0}{\rm{e}}^{\theta_{21}(\lambda_0)}a^*_{pq}(x,t)\right)}{\sqrt{\frac{2\sqrt{3}t\lambda_0}{1+\lambda_0^2}}}\right)\nonumber\\
					& +\mathcal{O}\left( \frac{(\ln t)^{N+1}}{t^{(N+2)/2}}\right),
				\end{align}
			\begin{align}\label{9}
					v(x,t)
					&=\frac{1}{6}\ln\!\! \left( \!1\!\!+\!\!\sum\limits_{0\leq q\leq p\leq N}\!\!\frac{(\ln t)^q}{(bt)^{p/2}}\frac{2{\rm{Re}}\,\left( \omega^2\delta^0_{-\lambda_0}{\rm{e}}^{-\theta_{21}(-\lambda_0)}b_{pq}(x,t)\right) -2{\rm{Re}}\,\left(\omega \delta^0_{\lambda_0}{\rm{e}}^{\theta_{21}(\lambda_0)}a^*_{pq}(x,t)\right)}{\sqrt{\frac{2\sqrt{3}t\lambda_0}{1+\lambda_0^2}}}\!\right)\nonumber\\
					& \quad -\frac{1}{3}\ln \!\!\left(\!\! 1\!\!+\!\!\sum\limits_{0\leq q\leq p\leq N}\!\!\frac{(\ln t)^q}{(bt)^{p/2}}\frac{2{\rm{Re}}\,\left( \omega\delta^0_{-\lambda_0}{\rm{e}}^{-\theta_{21}(-\lambda_0)}b_{pq}(x,t)\right)\!-\!2{\rm{Re}}\,\left(\omega^2 \delta^0_{\lambda_0}{\rm{e}}^{\theta_{21}(\lambda_0)}a^*_{pq}(x,t)\right)}{\sqrt{\frac{2\sqrt{3}t\lambda_0}{1+\lambda_0^2}}}\!\right)\nonumber\\
					&\quad +\mathcal{O}\left( \frac{(\ln t)^{N+1}}{t^{(N+2)/2}}\right).
				\end{align}
	The definitions of the symbols involved can be found in Section \ref{sec42}.

					\end{description}
				\end{theo}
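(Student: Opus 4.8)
The plan is to run the Deift--Zhou nonlinear steepest descent method on RH problem~\ref{rhp} separately in each region and then read off $u,v$ from the $\lambda\to0$ behavior of $M$ via the reconstruction formula~\eqref{7}. The common first step is to analyze the controlling phase. Using $l_j=\tfrac12(\omega^{j}\lambda+(\omega^{j}\lambda)^{-1})$, $a_j=\tfrac12(\omega^{j}\lambda-(\omega^{j}\lambda)^{-1})$ together with $\omega^{2}-\omega=-\sqrt3\,\ri$, one finds
\[
\theta_{21}(x,t,\lambda)=-\frac{\sqrt3\,\ri}{2}\Big[(x+t)\lambda+(t-x)\lambda^{-1}\Big],
\]
so the stationary points satisfy $\lambda^{2}=(t-x)/(t+x)$. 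Thus for $|x/t|<1$ there are two real saddle points $\pm\lambda_0$ with $\lambda_0=\sqrt{(t-x)/(t+x)}$, for $|x/t|>1$ the saddles move off the real axis, and the light cone $|x/t|=1$ is precisely the coalescence threshold $\lambda_0\to0$. The symmetries $M(\lambda)=\tilde\sigma_1^{-1}M(\omega\lambda)\tilde\sigma_1=\tilde\sigma_2\overline{M(\bar\lambda)}\tilde\sigma_2^{-1}$ reduce the work to a single pair of saddles on one ray of $\Sigma$, the remaining rays being fixed by symmetry.

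In Regions~I and~II (outside the light cone) the saddles are non-real, so after a mild contour deformation $\mathrm{Re}\,\theta_{21}$ keeps a definite sign along the jump contour and the oscillatory factors $\mathrm e^{\pm\theta_{21}}$ are exponentially damped. Combined with the smoothness and rapid decay of $r_1,r_2$ as $\lambda\to\pm\infty$ and as $\lambda\to0$ from Theorem~\ref{theo21}, a single small-norm estimate gives $M=I+(\text{small})$ uniformly, so that $G(x,t)\to I$ and, through~\eqref{7}, $u,v\to0$. The stated decay rates $\mathcal O(|x|^{-N-3/2})$ and $\mathcal O(t^{-N-3/2})$ then follow by repeated integration by parts against the oscillation (equivalently, by expanding the small-norm resolvent to order $N$), with $|x|$ controlling the rate when $|x|\gg t$ and $t$ controlling it on $1\le|x/t|<\infty$.

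The heart of the argument is Region~IV. Here I would first perform a scalar $\delta$-function conjugation $M\mapsto M\Delta^{-1}$, where $\Delta$ is built from the solution of a scalar RH problem with jump $1-r_1r_1^{*}$ (and its $\omega$-images) across the portions of $\Sigma$ between the saddles; this removes the diagonal entries of the jumps and produces the evaluation constants $\delta^{0}_{\pm\lambda_0}$ appearing in~\eqref{8}--\eqref{9}. Opening lenses then turns the jumps into purely off-diagonal, exponentially small matrices away from $\pm\lambda_0$, leaving localized model problems at each saddle. Near each saddle I rescale $\lambda-\lambda_0\sim(bt)^{-1/2}z$ with the scale set by the curvature $\theta_{21}''(\lambda_0)=-2\sqrt3\,\ri\,t/\big((1+\lambda_0^{2})\lambda_0\big)$, so that the local parametrix is a parabolic-cylinder model; its explicit solution and subleading expansion supply, after the local change of variables, the normalizing factor $\sqrt{2\sqrt3\,t\lambda_0/(1+\lambda_0^{2})}$ and, through the high-order iteration of \cite{higher-order}, the double series in $(\ln t)^{q}(bt)^{-p/2}$. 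The error $E=M^{(\mathrm{final})}(\text{parametrix})^{-1}$ solves a small-norm RH problem whose resolvent, expanded to order $N$, yields the remainder $\mathcal O\big((\ln t)^{N+1}t^{-(N+2)/2}\big)$. Finally I evaluate the factorization $M=E\cdot(\text{outer})\cdot\Delta$ at $\lambda=0$, substitute into the reconstruction formula~\eqref{7}, and collect real parts and $\omega$-phases at the two saddles to obtain~\eqref{8}--\eqref{9}. Region~III is the transition $\lambda_0\to0^{+}$: the same leading terms persist, but uniformity degrades as the saddle approaches the origin, which is why the remainder acquires the extra term $C_N(\lambda_0)(\ln t)^{N+1}t^{-(N+2)/2}$ with $C_N$ vanishing to all orders at $\lambda_0=0,\infty$.

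The main obstacle is two-fold. First, the third-order ($3\times3$) Lax structure with six jump rays makes the signature table, the lens geometry, and the bookkeeping of coupled matrix entries far more delicate than in the classical $2\times2$ case; the $\mathbb Z_3$ and conjugation symmetries must be used systematically to avoid an unmanageable case analysis. Second, and more seriously, the solution is reconstructed from the $\lambda\to0$ limit in~\eqref{7} rather than from $\lambda\to\infty$, so the saddle-point data localized at $\pm\lambda_0$ cannot simply be read off as a residue at infinity; it must be transported to the origin through the explicit outer parametrix and the $\delta$-function, and the $\lambda\to0$ values $\delta^{0}_{\pm\lambda_0}$ together with $a^{*}_{pq},b_{pq}$ assembled with the correct $\omega$-weights. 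Pushing this through to all orders in the double $(\ln t)^{q}(bt)^{-p/2}$ expansion, while maintaining uniformity up to the light cone in Region~III, is where the bulk of the technical effort will lie.
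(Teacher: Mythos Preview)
Your proposal is correct and follows essentially the same route as the paper: phase analysis of $\theta_{21}$, small-norm arguments outside the light cone, and inside the light cone a $\delta$-conjugation, lens opening, parabolic-cylinder local models at $\pm\lambda_0$ with the Deift--Zhou higher-order expansion, then evaluation at $\lambda=0$ via~\eqref{7}. One small point you gloss over: for $\xi<-1$ the signature of $\mathrm{Re}\,\theta_{21}$ is reversed, so the paper performs a preliminary $\delta$-conjugation there as well (not only in Region~IV) before the lens factorization works; and your worry about ``transporting'' the saddle data to $\lambda=0$ is in fact resolved simply because $\Delta(0)=I$ and the lens factors are the identity at the origin, so only the Cauchy integral of the small-norm error contributes.
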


	\begin{figure}[htbp]
		\centering
	\begin{tikzpicture}
		\draw[thick] (-4.5,4.5) -- (0,0) -- (4.5,4.5);

		\fill[black!20!blue!20] (-4.5,4.5) -- (0,0) -- (-4.5,0.5) -- cycle;
		\fill[black!20!blue!20] (4.5,4.5) -- (0,0) -- (4.5,0.5) -- cycle;
		\fill[black!20!blue!40] (4,4.5) -- (0,0) -- (-4,4.5) -- cycle;
		
		\draw[very thick,-latex] (-4.7,0) -- (4.7,0) node[right] {$x$};
		\draw[very thick,-latex] (0,0) -- (0,4.7) node[above] {$t$};
	    \node at (0,-0.3) {0};
		\node at (-3,2) {II};
		\node at (3,2) {II};
		\node at (-4,0.2) {I};
		\node at (4,0.2) {I};
		\node at (0,3) {IV};
		\node at (-3.7,3.9) {III};
		\node at (3.7,3.9) {III};
	\end{tikzpicture}
	\caption{Division of asymptotic regions in the upper $(x,t)$-half plane}
	\label{asy}
	\end{figure}
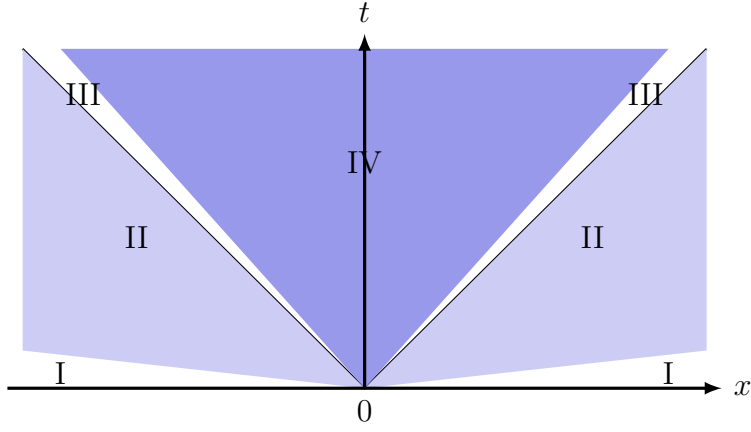

	The above theorem involves results for four regions. The results for Regions ${\rm{I}}$ and ${\rm{II}}$ are jointly provided and proved by Lemma \ref{lem41} and Lemma \ref{lem42}. The result for Region ${\rm{III}}$ is given and proved by Lemma \ref{RegionIII}. The result for the last region, Region ${\rm{IV}}$, is discussed in detail and rigorously in Section \ref{sec42} and is presented in Lemma \ref{RegionIV}.
	
	\begin{lem}
		As $\left| \frac{x}{t}\right|\to1 $, the solutions $u(x,t)$ and $v(x,t)$ in Theorem \ref{region} from the inside of the light cone in Region ${\rm{III}}$ can be matched to the long-time asymptotics in Region ${\rm{II}}$.
	\end{lem}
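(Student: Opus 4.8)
The plan is to reduce the matching to a single limit of the stationary point $\lambda_0$ and then exploit the vanishing of the reflection coefficients established in Theorem~\ref{theo21}. First I would pin down how $\lambda_0$ depends on the ray $x/t$. A direct computation from the definitions $l_j=\tfrac{\omega^j\lambda+(\omega^j\lambda)^{-1}}{2}$, $a_j=\tfrac{\omega^j\lambda-(\omega^j\lambda)^{-1}}{2}$ gives the governing phase $\theta_{21}=-\tfrac{\sqrt3}{2}\mathrm{i}\,[\lambda(x+t)-\lambda^{-1}(x-t)]$, so that the stationary phase condition $\partial_\lambda\theta_{21}=0$ yields $\lambda_0^2=\tfrac{t-x}{t+x}$. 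Hence on the $x>0$ branch $\lambda_0\to0^{+}$ as $x/t\to1^{-}$, while on the $x<0$ branch $\lambda_0\to+\infty$ as $x/t\to-1^{+}$. Approaching the light cone from inside Region~III therefore forces the stationary point to merge with the singular point $\lambda=0$ (respectively to escape to $\lambda=\infty$), and the entire matching reduces to controlling the $\lambda_0\to0$ (respectively $\lambda_0\to\infty$) behavior of the expansions \eqref{8}--\eqref{9}.

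Next I would line up the two descriptions on the overlap. In Region~II (Lemma~\ref{lem42}) the condition $\lambda_0^2=\tfrac{t-x}{t+x}$ has no real root when $|x|\ge t$, so repeated integration by parts against the rapidly decaying reflection coefficients produces pure decay $\mathcal O(t^{-N-3/2})$ with no oscillatory main term. In Region~III (Lemma~\ref{RegionIII}) the main term coincides with that of \eqref{8}--\eqref{9}, carried by the amplitude $\bigl(2\sqrt3\,t\lambda_0/(1+\lambda_0^2)\bigr)^{-1/2}$ times the coefficients $\delta^0_{\pm\lambda_0}$, $a_{pq}(x,t)$ and $b_{pq}(x,t)$, all of which are built from $r_1,r_2$ and their $\lambda$-derivatives evaluated at $\pm\lambda_0$. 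Matching thus amounts to showing that, as $\lambda_0\to0$, this main term collapses while the remainder degenerates to the Region~II bound.

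The heart of the argument is the second property in Theorem~\ref{theo21}: $r_1,r_2$ together with all their derivatives vanish to all orders at $\lambda=0$, and $r_1(0)=r_2(0)=0$. Consequently each $\delta^0_{\pm\lambda_0}$, $a_{pq}$ and $b_{pq}$ is $\mathcal O(\lambda_0^{\infty})$, which overwhelms the mild singularity $\lambda_0^{-1/2}$ of the amplitude; every summand in \eqref{8}--\eqref{9} is then $\mathcal O(\lambda_0^{\infty}/\sqrt t)$, the argument of each logarithm tends to $1$, and the main terms of $u,v$ decay faster than any power of $\lambda_0$. The same reasoning forces $C_N(\lambda_0)\to0$ to all orders, so the Region~III error $\mathcal O\!\bigl(t^{-N-3/2}+C_N(\lambda_0)(\ln t)^{N+1}t^{-(N+2)/2}\bigr)$ reduces to $\mathcal O(t^{-N-3/2})$, which is exactly the Region~II estimate of Lemma~\ref{lem42}. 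The $x/t\to-1^{+}$ side is handled identically, using the first property in Theorem~\ref{theo21} (rapid decay at $\lambda=\pm\infty$) to treat $\lambda_0\to\infty$, together with the symmetries of RH problem~\ref{rhp}.

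The main obstacle is uniformity rather than the formal limit. The two one-sided expansions are established under different scalings of $x/t$, so the matching requires a single estimate valid throughout the overlap, where $\lambda_0$ is small (or large) while simultaneously $t\to\infty$. I would therefore track the joint $(\lambda_0,t)$ dependence of the amplitude, of $\delta^0_{\pm\lambda_0}$, and of $C_N(\lambda_0)$, and verify that in the overlap both the Region~III main term and the excess error $C_N(\lambda_0)(\ln t)^{N+1}t^{-(N+2)/2}$ are $o(t^{-N-3/2})$ for every $N$. Establishing this uniform, $N$-independent control---so that the residual orders on the two sides genuinely coincide---is the crux; once it is in place, the two asymptotic representations agree on the overlap and the claimed matching follows.
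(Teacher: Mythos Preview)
The paper does not actually prove this lemma; it simply cites Ref.~\cite{tzitzeica-2024} and states that the method there applies directly. Your proposal is therefore considerably more explicit than the paper, and the overall strategy---track $\lambda_0\to 0$ (or $\infty$) as $|x/t|\to 1^-$, use the all-order vanishing of $r_1,r_2$ at $0$ and $\infty$ from Theorem~\ref{theo21} to kill the leading oscillatory term, and let $C_N(\lambda_0)\to 0$ absorb the excess error into $\mathcal{O}(t^{-N-3/2})$---is sound and is exactly the mechanism the paper has in mind when it sets up Region~III in Lemma~\ref{RegionIII}.

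One inaccuracy to fix: it is not true that $\delta^0_{\pm\lambda_0}=\mathcal{O}(\lambda_0^\infty)$. By Lemma~\ref{lem45}, $|\delta^0_{\lambda_0}(\lambda_0)|=1$ and $|\delta^0_{-\lambda_0}(-\lambda_0)|=e^{-2\pi\nu_4(-\lambda_0)}\to 1$, so these factors are merely bounded. The vanishing of the main term comes entirely from $a_{pq}$ and $b_{pq}$: for instance $|a_{00}|=\sqrt{\nu_1(\lambda_0)}$, and $\nu_1(\lambda_0)=-\tfrac{1}{2\pi}\ln(1-|r_1(\lambda_0)|^2)$ vanishes to all orders because $r_1$ does; the higher $a_{pq},b_{pq}$ likewise involve $r_j^{(n)}(\pm\lambda_0)$ and inherit the same rapid decay. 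This is enough to beat the $\lambda_0^{-1/2}$ in the amplitude, so your conclusion is unaffected. Your identification of uniformity in $(\lambda_0,t)$ as the genuine technical point is also correct, and is precisely what the referenced argument supplies.
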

	\begin{proof}
		The method in Ref. \cite{tzitzeica-2024} can directly provide the proof of this lemma.
	\end{proof}
	\ \ \ \ For parameter $N=0$, the specific expressions for the long-time asymptotic behavior of the solutions \eqref{8} and \eqref{9} to the two-component nonlinear KG equation \eqref{3} are presented as follows:
	\begin{align}\label{N01}
		u(x,t)
		=&\frac{1}{2}\ln\!  \left(\! 1+3^{-\frac{1}{4}}\sqrt{\frac{2\left( 1+\lambda_0^2\right) }{t\lambda_0}}\left( \sqrt{\nu_1}\cos\left( \frac{5\pi}{12}-\frac{2\sqrt{3}t\lambda_0}{1+\lambda_0^2}-\nu_1\ln \left( \frac{6\sqrt{3}t\lambda_0}{1+\lambda_0^2}\right) \!\!+m_1\right)\right.\right. \nonumber\\
		&\left.\left.-\sqrt{\nu_4}\cos\!\left(\! \frac{13\pi}{12}\!-\!\frac{2\sqrt{3}t\lambda_0}{1+\lambda_0^2}\!-\!\nu_4\ln \left( \frac{6\sqrt{3}t\lambda_0}{1+\lambda_0^2}\right) +m_2  \right) \right)  \right)   +\mathcal{O}\left( \frac{\ln t}{t}\right),	
	\end{align}
	\begin{align}\label{N02}
		v(x,t)
		=&\frac{1}{6}\ln\!\! \left( 1+3^{-\frac{1}{4}}\sqrt{\frac{2\left( 1+\lambda_0^2\right) }{t\lambda_0}}\left( \sqrt{\nu_1}\cos\left( \frac{5\pi}{12}-\frac{2\sqrt{3}t\lambda_0}{1+\lambda_0^2}-\nu_1\ln \left( \frac{6\sqrt{3}t\lambda_0}{1+\lambda_0^2}\right) +m_1\right) \right.\right.\nonumber\\
		&\left.\left.-\sqrt{\nu_4}\cos\left( \frac{13\pi}{12}-\frac{2\sqrt{3}t\lambda_0}{1+\lambda_0^2}-\nu_4\ln \left( \frac{6\sqrt{3}t\lambda_0}{1+\lambda_0^2}\right) +m_2\right)\right)  \right) \nonumber\\
		&-\frac{1}{3}\ln \left(\! 1\!+\!3^{-\frac{1}{4}}\sqrt{\frac{2\left( 1+\lambda_0^2\right) }{t\lambda_0}}\!\left(\! \sqrt{\nu_1}\cos\left( \frac{13\pi}{12}\!-\!\frac{2\sqrt{3}t\lambda_0}{1+\lambda_0^2}\!-\!\nu_1\ln \left( \frac{6\sqrt{3}t\lambda_0}{1+\lambda_0^2}\right) \!+\!m_1\right) \right.\right.\nonumber\\
		&\left.\left.-\sqrt{\nu_4}\cos\left( \frac{5\pi}{12}-\frac{2\sqrt{3}t\lambda_0}{1+\lambda_0^2}-\nu_4\ln \left( \frac{6\sqrt{3}t\lambda_0}{1+\lambda_0^2}\right) +m_2\right) \right) \right) +\mathcal{O}\left( \frac{\ln t}{t}\right) 	,
	\end{align}
	where
	\begin{equation*}
		\begin{aligned}
			m_1=&-\left( \arg r_1(\lambda_0)+\arg \Gamma(-i\nu_1)+\nu_4\ln 4\right) +\frac{1}{\pi}\int_{0}^{-\lambda_0}\log_0\frac{\vert s-\omega \lambda_0\vert}{\vert s-\lambda_0\vert}{\rm{d}} \ln(1-\vert r_2(s)\vert^2)\\
			&+\frac{1}{\pi}\int_{0}^{\lambda_0}\log_{-\pi}\frac{\vert s-\lambda_0\vert}{\vert s-\omega \lambda_0\vert}{\rm{d}} \ln(1-\vert r_1(s)\vert^2),\\
		\end{aligned}
	\end{equation*}
	\begin{equation*}
		\begin{aligned}
			m_2=&-(\arg r_2(-\lambda_0)+\arg \Gamma(-i\nu_4)+\nu_1\ln 4)+\frac{1}{\pi}\int_{0}^{\lambda_0}\log_{-\pi}\frac{\vert s+\omega \lambda_0\vert}{\vert s+\lambda_0\vert}{\rm{d}} \ln(1-\vert r_1(s)\vert^2)\\
			&+\frac{1}{\pi}\int_{0}^{-\lambda_0}\log_0\frac{\vert s+\lambda_0\vert}{\vert s+\omega \lambda_0\vert}{\rm{d}} \ln(1-\vert r_2(s)\vert^2),
		\end{aligned}
	\end{equation*}
	where $\lambda_0$ and $\nu_1, \nu_4$ are defined by \eqref{lambda0} and \eqref{nu14}, respectively. Next, we provide the following numerical verifications and further validate the accuracy of the asymptotic solution we have given. 
	\begin{unnumberednum*}
			To substantiate the precision of Theorem \ref{region}, we examine an initial value problem exemplified by a Gaussian wave packet of the form:
			\begin{equation}\label{10}\left\lbrace
				\begin{aligned}
					&u(x,0)=-0.1{\rm{e}}^{-\frac{x^2}{2}},\quad &&u_t(x,0)=0,\\
					&v(x,0)=-0.05{\rm{e}}^{-\frac{x^2}{2}},\quad &&v_t(x,0)=0,\\
				\end{aligned}\right.
			\end{equation}
which satisfies the Assumption \ref{assu1}. Fig. \ref{figrhp-direct} illustrates the comparisons between the asymptotic solutions \eqref{N01} and \eqref{N02} and the results obtained from numerical simulations by using the initial conditions specified in \eqref{10} at time $t=50$ and $t=100$, respectively. The figures confirm that the long-time asymptotic solutions approximate the numerical solutions within an acceptable margin of error. Furthermore, a detailed examination of Fig. \ref{figrhp-direct} indicates that within Regions {\rm{I}} and {\rm{II}}, the solution $u(x,t)$ and $v(x,t)$ tend toward zero when $\vert x\vert>t$. This trend is in alignment with the theoretical predictions for Regions {\rm{I}} and {\rm{II}}.
			\begin{figure}[ht]
				\centering
				\begin{subfigure}[b]{0.49\textwidth}
					\includegraphics[width=\textwidth]{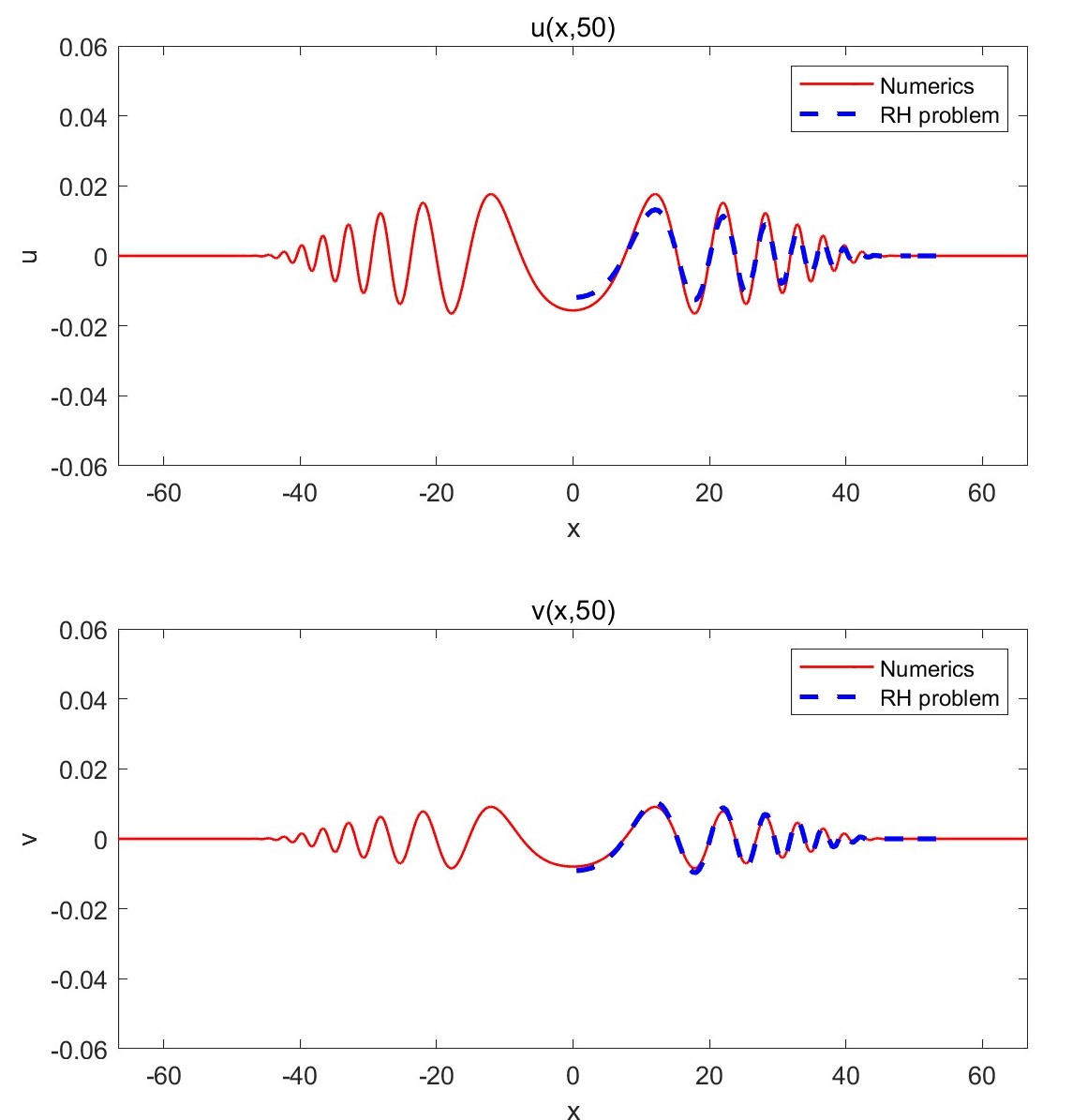}
					\label{figrhp-direct50}
				\end{subfigure}
				\begin{subfigure}[b]{0.48\textwidth}
					\includegraphics[width=\textwidth]{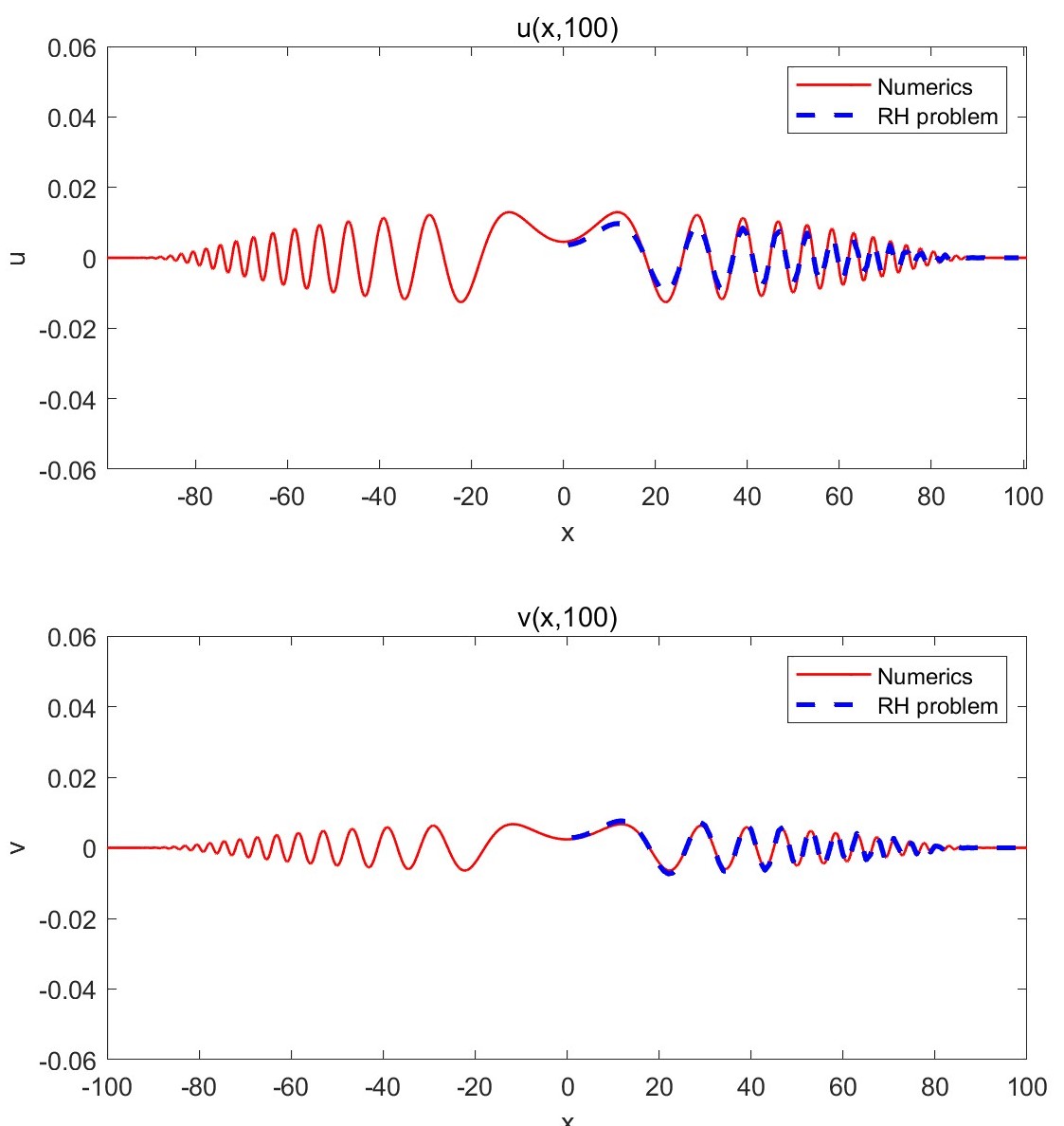}
					\label{figrhp-direct100}
				\end{subfigure}
				\caption{The comparisons between the theoretical results given by Theorem \ref{region} and the full numerical simulations of the two-component nonlinear KG equation \eqref{3} under the initial value condition \eqref{10} for $t=50$ (left) and $t=100$ (right)}
				\label{figrhp-direct}
			\end{figure}

			In summary, the numerical analysis confirms Theorem \ref{region}, highlighting the reliability and accuracy of its asymptotic formulations.
			
		\end{unnumberednum*}

	\section{Spectral analysis}\label{sec3}
		\quad\quad
		In this section, the spectral analysis and inverse scattering transform of the two-component nonlinear KG equation \eqref{3} are discussed based on its Lax pair \eqref{2}.
	\subsection{Direct scattering problem}
	\quad\quad
	In order to treat the three linearly independent solutions of the Lax pair \eqref{2} simultaneously, formulate it in matrix form as
	\begin{equation}\label{11}
		\left\{
		\begin{aligned}
			\tilde{X}_x&=\tilde{L}\tilde{X},\\
			\tilde{X}_t&=\tilde{A}\tilde{X},
		\end{aligned}
		\right.
	\end{equation}
	where $\tilde{X}(x,t,\lambda)$ is a $3\times 3$ matrix-valued function consisting of three linearly independent solutions of the Lax pair \eqref{2}. Introduce a gauge transformation $\tilde{X}(x,t,\lambda)=P\check{X}(x,t,\lambda)$, where the matrix $P$ is given by
	\begin{equation*}\begin{aligned}
		P=\begin{pmatrix}
			\omega  &\omega^2   &1\\
			\omega^2    &\omega &1\\
			1   &1  &1
		\end{pmatrix},\quad \omega=\text{e}^{\frac{2}{3}\pi i},
	\end{aligned}\end{equation*}
then the Lax pair \eqref{11} becomes
	\begin{equation}\label{12}
		\left\{
		\begin{aligned}
			\check{X}_x&={L}\check{X},\\
			\check{X}_t&={A}\check{X},
		\end{aligned}
		\right.
	\end{equation}
where the matrices $L$ and $A$ satisfy the relations $L=P^{-1}\tilde{L}P$ and $A=P^{-1}\tilde{A}P$, and take the forms
	\begin{equation*}\begin{aligned}
		L(x,t,\lambda)&=\frac{1}{2}\left( \lambda J+\frac{1}{\lambda}J^2\right) +\frac{1}{6\lambda}M+N,
		\\
		A(x,t,\lambda)&=\frac{1}{2}\left( \lambda J-\frac{1}{\lambda}J^2\right) -\frac{1}{6\lambda}M+N,
	\end{aligned}\end{equation*}
	where
	
	\begin{align*}
		J&=\begin{pmatrix}
			\omega   &0  &0\\
			0   &\omega^2    &0\\
			0&0&1
		\end{pmatrix},\quad
		N=\frac{u_t+u_x}{6}
		\begin{pmatrix}
			0   &   \omega-1    &\omega^2-1\\
			\omega^2-1  &0  &\omega-1\\
			\omega-1    &   \omega^2-1  &0
		\end{pmatrix}-
		\frac{v_t+v_x}{2}
		\begin{pmatrix}
			0   &   \omega^2  &\omega\\
			\omega    &0  &\omega^2\\
			\omega^2  &\omega   &0
		\end{pmatrix},\\
		M&=({\rm{e}}^{-u-3v}+\omega {\rm{e}}^{3v-u}+\omega^2{\rm{e}}^{2u})
		\begin{pmatrix}
			0   &1  &0\\
			0   &0  &\omega^2\\
			\omega &0   &0
		\end{pmatrix}+
		(\omega {\rm{e}}^{-u-3v}+ {\rm{e}}^{3v-u}+\omega^2{\rm{e}}^{2u})
		\begin{pmatrix}
			0   &0  &1\\
			\omega^2    &0  &0\\
			0   &\omega&0
		\end{pmatrix}\\
		&\quad+({\rm{e}}^{-u-3v}+{\rm{e}}^{3v-u}+{\rm{e}}^{2u}-3)J^2.
	\end{align*}

	Denote
	\begin{equation*}\begin{aligned}
		\mathfrak{L}=\frac{1}{2}\left(\lambda J+\frac{1}{\lambda}J^2\right),\quad
		\mathfrak{A}=\frac{1}{2}\left(\lambda J-\frac{1}{\lambda}J^2\right),\quad
	\end{aligned}\end{equation*}
	then the diagonal entries of $\mathfrak{L}={\rm diag} (l_1,l_2,l_3)$ and $\mathfrak{A}={\rm diag} (a_1,a_2,a_3)$ are given by
	\begin{equation*}
		l_j(\lambda)=\frac{\omega^j \lambda+(\omega^j \lambda)^{-1}}{2
	},\quad a_j(\lambda)=\frac{\omega^j \lambda-(\omega^j \lambda)^{-1}}{2},\quad j=1,2,3.
	\end{equation*}
	So the matrix-valued functions $L$ and $A$ can be written as
	\begin{equation*}
	L:=\mathfrak{L}+L_1,\quad 	A:=\mathfrak{A}+A_1,
	\end{equation*}
	where
	\begin{equation*}\begin{aligned}
		L_1&=N+\frac{1}{6\lambda}M,\\
		A_1&=N-\frac{1}{6\lambda}M.
	\end{aligned}\end{equation*}
	According to $u_0$, $v_0 \in \mathcal{S}(\mathbb{R})$, it is directly verified that
	\begin{equation*}\begin{aligned}
		\lim_{\left|x\right|\to \infty} L_1=O,\quad \lim_{\left|x\right|\to \infty} A_1=O.
	\end{aligned}\end{equation*}
\par
Obviously, the matrices $L$ and $A$ satisfy the $\mathbb{Z}_3$ symmetry
	\begin{equation*}\begin{aligned}
		L(k)=\tilde{\sigma}_1^{-1}L(\omega \lambda) \tilde{\sigma}_1, \quad
		A(k)=\tilde{\sigma}_1^{-1}A(\omega \lambda) \tilde{\sigma}_1, \quad
		\tilde{\sigma}_1=\begin{pmatrix}
			0   &   1   &0\\
			0   &   0   &1\\
			1   &   0   &0
		\end{pmatrix},
	\end{aligned}\end{equation*}
	and the $\mathbb{Z}_2$ symmetry
	\begin{equation*}\begin{aligned}
		L(k)=\tilde{\sigma}_2\overline{L(\overline{\lambda})} \tilde{\sigma}_2^{-1}, \quad
		A(k)=\tilde{\sigma}_2\overline{A(\overline{\lambda})} \tilde{\sigma}_2^{-1}, \quad
	\tilde{\sigma}_2=\begin{pmatrix}
			0   &   1   &0\\
			1   &   0   &0\\
			0   &   0   &1
		\end{pmatrix}.
	\end{aligned}\end{equation*}
	
	Before proceeding to further analysis, introduce the eigenfunction $X$ by the following transformation
	\begin{equation*}\begin{aligned}
		\check{X}=X {\rm{e}}^{\mathfrak{L}x+\mathfrak{A}t},
	\end{aligned}\end{equation*}
	 and substitute it into \eqref{12}. Then we obtain that
	\begin{equation}\label{13}
		\left\{
		\begin{aligned}
			X_x-[\mathfrak{L},X]&=L_1X,\\
			X_t-[\mathfrak{A},X]&=A_1X.
		\end{aligned}
		\right.
	\end{equation}
\subsubsection{The eigenfunctions $X_{\pm}(x,\lambda)$}
\ \ \ \
	Now fix $t=0$ and denote $L(x,\lambda)$ in place of $L(x,0,\lambda)$ for notational simplicity. Then analyze the $x$-part of the Lax pair \eqref{13} for $t=0$:
	\begin{equation}\begin{aligned}\label{14}
		X_x-[\mathfrak{L},X]&=L_1X,
	\end{aligned}\end{equation}
whose solution is formalized by introducing two matrix-valued function $X_+(x,\lambda)$  and $X_-(x,\lambda)$ which are the solutions of the linear Volterra integral equations
	\begin{equation}\label{15}
		\begin{aligned}
			X_+(x,\lambda)&=I-\int_x^{\infty} {\rm{e}}^{(x-y)\widehat{\mathfrak{L}(\lambda)}}(L_1X_+)(y,\lambda)\rd y,\\
			X_-(x,\lambda)&=I+\int_{-\infty}^{x} {\rm{e}}^{(x-y)\widehat{\mathfrak{L}(\lambda)}}(L_1	X_-)(y,\lambda)\rd y,
			\end{aligned}
	\end{equation}
	where ${\rm{e}}^{\hat{\mathfrak{L}}}$ is an operator acting on the $3\times 3$ matrix by ${\rm{e}}^{\hat{\mathfrak{L}}}A={\rm{e}}^{{\mathfrak{L}}}A{\rm{e}}^{{-\mathfrak{L}}}$.
	
	Decompose the complex $\lambda$-plane by $\Sigma:=\{\mathbb{R},\omega\mathbb{R},\omega^2\mathbb{R}\}$,
	then the contour $\Sigma$ divides the plane $\mathbb{C}$ into the six open subsets $\{D_n\}^6_1$ as shown in Fig. \ref{figSigma}.
	
    \begin{figure}[htbp]
    	\centering
    	\begin{tikzpicture}[scale=1]
    		\draw[very thick, black!20!blue, -latex] (0,0) -- (2,0);
    		\draw [very thick,black!20!blue](0,0) -- (3,0);
    		\draw[very thick, black!20!blue, latex-] (-2,0) -- (0,0);
    		\draw [very thick,black!20!blue](-3,0) -- (0,0);
    		\draw[very thick, black!20!blue, -latex] (0,0) -- (1,1.732);
    		\draw [very thick,black!20!blue](0,0) -- (1.5,1.5*1.732);
    		\draw[very thick, black!20!blue, latex-] (-1,-1.732) -- (0,0);
    		\draw [very thick,black!20!blue](-1.5,-1.5*1.732) -- (0,0);
    		\draw[very thick, black!20!blue, -latex] (0,0) -- (-1,1.732);
    		\draw [very thick,black!20!blue](0,0) -- (-1.5,1.732*1.5);
    		\draw[very thick, black!20!blue, latex-] (1,-1.732) -- (0,0);
    		\draw [very thick,black!20!blue](0,0) -- (1.5,-1.5*1.732);
    		\node[red!70!black,below] at (1.8,0) {$1$};
    		\node[red!70!black,right] at (1,1.5) {$2$};
    		\node[red!70!black,left] at (-1,1.5) {$3$};
    		\node[red!70!black,below] at (-1.8,0) {$4$};
    		\node[red!70!black,left] at (-1,-1.5) {$5$};
    		\node[red!70!black,right] at (1,-1.5) {$6$};
    		\node[right] at (3,0) {$\Sigma$};
    		\draw (0:0.4cm) arc (0:60:0.4cm);
    		\node[right] at (0.25,0.2) {$\pi/3$};
    		\node at (1.732,0.8) {$D_1$};
    		\node[right] at (-0.3,1.7) {$D_2$};
    		\node[left] at (-1.4,0.8) {$D_3$};
    		\node[below] at (-1.6,-0.7) {$D_4$};
    		\node[left] at (0.2,-1.7) {$D_5$};
    		\node[right] at (1.5,-0.9) {$D_6$};
    	\end{tikzpicture}
    	\caption{The jump contour $\Sigma$ decomposes the complex $\lambda$-plane into six parts}
    	\label{figSigma}
    \end{figure}
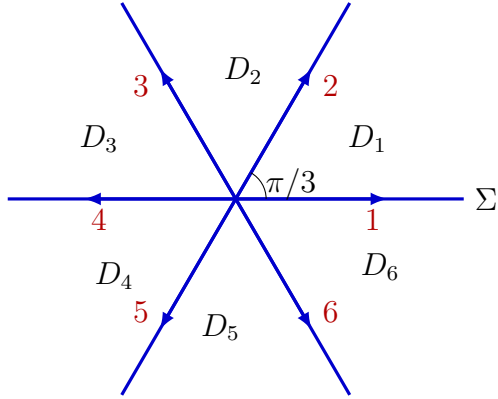
	Further let $S=\{\lambda\in\mathbb{C}|\arg \lambda\in (\frac{2\pi}{3},\frac{4\pi}{3})\}$ denote the interior of $\bar{D}_3\cup\bar{D}_4$, where $\bar{D}_j$ means the closure of set $D_j$ for $j=3,4$.
	\begin{prop}
		$($Basic properties of the eigenfunctions $X_{\pm}(x,\lambda)$$)$. Suppose $u_0,v_0\in \mathcal{S}(\mathbb{R})$, then the equations \eqref{15} uniquely define two solutions $X_\pm(x,\lambda)$ in the equation \eqref{13} with the following properties:
		\begin{enumerate}
			\item The function $X_+(x,\lambda)$ is well-defined for $x\in \mathbb{R}$ and $\lambda\in (\omega^2 \bar{S},\omega \bar{S},\bar{S})\setminus\{0\}$. For each $\lambda\in (\omega^2\bar{S},\omega\bar{S},\bar{S})\setminus\{0\}$, the function $X_+(\cdot,\lambda)$ is smooth and satisfies \eqref{14}.
            Similarly, $X_-(x,\lambda)$ is well-defined for $x\in \mathbb{R}$ and $\lambda\in (-\omega^2 \bar{S},-\omega \bar{S},-\bar{S})\setminus\{0\}$. For each $\lambda\in (-\omega^2\bar{S},-\omega\bar{S},-\bar{S})\setminus\{0\}$, $X_-(\cdot,\lambda)$ is also smooth and satisfies \eqref{14}.
			\item For each $x\in\mathbb{R}$, the function $X_+(x,\cdot)$ is continuous for $\lambda\in(\omega^2 \bar{S},\omega \bar{S},\bar{S})\setminus\{0\}$ and analytic for $\lambda\in (\omega^2 S,\omega S,S)\setminus\{0\}$. The function $X_-(x,\cdot)$ is continuous for $\lambda\in(-\omega^2 \bar{S},-\omega \bar{S},-\bar{S})\setminus\{0\}$ and analytic for $\lambda\in (-\omega^2 S,-\omega S,-S)\setminus\{0\}$.
			\item For each $x\in \mathbb{R}$ and $ j \in \mathbb{N}_+$, the partial derivative $\frac{\partial^j X_+(x,\cdot)}{\partial \lambda^j}$ has a continuous extension to $(\omega^2\bar{S},\omega \bar{S},\bar{S})\setminus \{0\}$, and  $\frac{\partial^j X_-(x,\cdot)}{\partial \lambda^j}$ has a continuous extension to $(-\omega^2\bar{S},-\omega \bar{S},-\bar{S})\setminus \{0\}$.
			\item For each $n\geq 1$ and $\epsilon > 0$, there exist functions $f_+(x)$ and $f_-(x)$ that are smooth, positive, and have bounded support over $\mathbb{R}$ with rapidly decay as $x\rightarrow+\infty$ and
			$x\rightarrow-\infty$, respectively. Under these conditions,  the following inequalities hold for $x\in \mathbb{R}$ and $j=0,1,\dots, n$:
     	\begin{equation*}\begin{aligned}
				&\left|\frac{\partial^j }{\partial \lambda^j} \left(X_+(x,\lambda)-I\right)\right|\leq f_+(x),\quad \lambda\in(\omega^2 \bar{S},\omega \bar{S},\bar{S}),\quad |\lambda|>\epsilon,\\
				&\left|\frac{\partial^j }{\partial \lambda^j} \left(X_-(x,\lambda)-I\right)\right|\leq f_-(x),\quad \lambda\in(-\omega^2 \bar{S},-\omega \bar{S},-\bar{S}),\quad |\lambda|>\epsilon.
			\end{aligned}\end{equation*}
			\item The eigenfunctions $X_+(x,\lambda)$ and $X_-(x,\lambda)$ satisfy the following symmetry relationships
			\begin{equation*}
				X_{\pm}(x,\lambda)=\tilde{\sigma}_1^{-1}X_{\pm}(x,\omega\lambda)\tilde{\sigma}_1=\tilde{\sigma}_2\overline{X_{\pm}(x,\overline{\lambda})}\tilde{\sigma}_2^{-1},\quad \lambda\in({\pm}\omega^2\bar{S},{\pm}\omega\bar{S},{\pm}\bar{S})\setminus\{0\}.			
			\end{equation*}
			\item Assume $u_0(x)$ and $v_0(x)$ are of compact support, for each $x\in\mathbb{R}$, then the functions $X_\pm(x,\lambda)$ are defined, analytic for $\lambda\in\mathbb{C}\setminus\{0\}$, and $\det X_+=\det X_-=1$.
		\end{enumerate}
	\end{prop}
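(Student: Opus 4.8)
The plan is to construct $X_{\pm}$ from the Volterra equations \eqref{15} by the method of successive approximations, writing $X_{\pm}=\sum_{n\ge 0}X_{\pm}^{(n)}$ with $X_{\pm}^{(0)}=I$ and each later term obtained by one application of the integral operator. The whole argument hinges on first pinning down, column by column, the sectors in which the matrix kernel ${\rm e}^{(x-y)\widehat{\mathfrak{L}}}$ stays bounded along the relevant integration ray. Since ${\rm e}^{(x-y)\widehat{\mathfrak{L}}}$ multiplies the $(i,k)$ entry by the scalar ${\rm e}^{(l_i-l_k)(x-y)}$, and the $X_+$ integral runs over $y\ge x$ (so $x-y\le 0$) while the $X_-$ integral runs over $y\le x$, boundedness of the $k$-th column of $X_+$ (resp.\ $X_-$) is equivalent to $\mathrm{Re}(l_i-l_k)\ge 0$ (resp.\ $\le 0$) for all $i$. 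First I would compute these real parts from $l_j(\lambda)=\tfrac12\big(\omega^j\lambda+(\omega^j\lambda)^{-1}\big)$ and check that the resulting inequalities single out exactly the sectors $\omega^2\bar S,\ \omega\bar S,\ \bar S$ for the three columns of $X_+$, and their negatives for $X_-$, consistently with the partition of $\mathbb{C}$ into $D_1,\dots,D_6$ in Fig.~\ref{figSigma}.

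With the sectors fixed, Parts~1 and~2 follow from the Volterra structure. Because $L_1\to O$ rapidly as $|x|\to\infty$ (a consequence of $u_0,v_0\in\mathcal{S}(\mathbb{R})$), the $n$-th iterate is bounded in the appropriate sector by the simplex estimate $\tfrac{1}{n!}\big(\int\|L_1\|\big)^n$, so the series converges absolutely and uniformly on compact sets in $x$ and $\lambda$, giving a unique bounded solution; this is where the uniqueness assertion of the proposition comes from. Each iterate is analytic in $\lambda$ throughout the open sector (where the exponential decays strictly) and continuous up to the closed sector (where it is merely bounded), and uniform convergence transfers analyticity (Morera/Weierstrass) and continuity to the sum. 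Differentiating \eqref{15} in $x$ recovers the ODE \eqref{14} and yields smoothness in $x$.

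Parts~3 and~4 are the technical heart and, I expect, the main obstacle. Differentiating \eqref{15} $j$ times in $\lambda$ produces a linear Volterra system for $\partial_\lambda^j X_{\pm}$ whose inhomogeneity involves $\lambda$-derivatives of $\mathfrak{L}$; each such derivative brings down a factor of $(x-y)$ from the kernel, which threatens the convergence of the iteration. The point is that these polynomial-in-$(x-y)$ factors are absorbed by the rapid decay of $L_1$, provided one stays in a region $|\lambda|>\epsilon$ away from the pole of $\mathfrak{L}$ at $\lambda=0$. Controlling this \emph{uniformly up to the boundary rays}, where the exponentials are only bounded rather than decaying, is the delicate step, and it is precisely what forces the use of the Schwartz hypothesis rather than mere integrability of $L_1$; the same estimates then yield the bounds $\big|\partial_\lambda^j(X_{\pm}-I)\big|\le f_{\pm}(x)$ with $f_{\pm}$ inheriting the tail behaviour of $L_1$, which is Part~4, and the continuous extension of the derivatives to the closed sectors, which is Part~3.

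Finally, Part~5 follows from uniqueness: I would check that the symmetries $L(\lambda)=\tilde\sigma_1^{-1}L(\omega\lambda)\tilde\sigma_1=\tilde\sigma_2\overline{L(\overline\lambda)}\,\tilde\sigma_2^{-1}$, which pass to $L_1$ and $\mathfrak{L}$, force $\tilde\sigma_1^{-1}X_{\pm}(x,\omega\lambda)\tilde\sigma_1$ and $\tilde\sigma_2\overline{X_{\pm}(x,\overline\lambda)}\,\tilde\sigma_2^{-1}$ to satisfy the very same equations \eqref{15}, so by uniqueness they coincide with $X_{\pm}(x,\lambda)$. For Part~6, compact support of $u_0,v_0$ makes $L_1$ compactly supported in $x$, so the integrals are over a finite interval and the kernels are entire on $\mathbb{C}\setminus\{0\}$, whence each $X_{\pm}$ extends analytically to all of $\mathbb{C}\setminus\{0\}$. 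The determinant is constant in $x$: since $\mathrm{tr}(L_1)=0$ and $\mathrm{tr}\big(X^{-1}[\mathfrak{L},X]\big)=0$, Jacobi's formula gives $\partial_x\det X_{\pm}=0$, and evaluating in the limits $x\to\pm\infty$, where $X_{\pm}\to I$, yields $\det X_+=\det X_-=1$.
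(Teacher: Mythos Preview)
Your proposal is correct and follows precisely the approach the paper indicates: the paper's own proof consists of the single sentence that the proposition ``is achieved through a direct examination of the Volterra integral equation presented in \eqref{15},'' and what you have written is exactly that direct examination, carried out column by column via successive approximation with the standard $\tfrac{1}{n!}\bigl(\int\|L_1\|\bigr)^n$ simplex bound, the symmetry-plus-uniqueness argument for Part~5, and Jacobi's formula with $\mathrm{tr}\,L_1=0$ for Part~6. In effect you have supplied the details the paper omits; nothing further is needed.
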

	\begin{proof}
		The proof of this proposition is achieved through a direct examination of the Volterra integral equation presented in \eqref{15}.
	\end{proof}

		\subsubsection{Asymptotics of the eigenfunctions $X_+(x,\lambda)$ and $X_-(x,\lambda)$ as $\lambda\to \infty$}
		\quad\quad
		In this subsection, we give the behaviors of the eigenfunctions $X_+(x,\lambda)$ and $X_-(x,\lambda)$ as $\lambda\to \infty$. For the equation \eqref{14}, perform the Wentzel-Kramers-Brillouin (WKB) expansion of $X$ as $\lambda\to \infty$, then the following expansions of power series are gotten
		\begin{equation}\begin{aligned}\label{16}
				X_{\pm}(x,\lambda)=I+\frac{X_{\pm}^{(1)}(x)}{\lambda}+\frac{X_{\pm}^{(2)}(x)}{\lambda^2}+\cdots,
		\end{aligned}\end{equation}
		where the coefficients $\{X^{(j)}_{\pm}\}_1^\infty$ satisfy $
		\lim\limits_{x\to\pm\infty}X^{(j)}_{\pm}(x)=0$ for $j\geq 1.$
		Since it satisfies equation \eqref{14}, substitute equation \eqref{16} into \eqref{14} to get the recursive formula
     	\begin{equation}\label{17}
    	\left\{
    	\begin{aligned}
		&\bigg[\frac{J}{2},X_{\pm}^{(n+1)}\bigg]+\bigg[\frac{J^2}{2},X_{\pm}^{(n-1)}\bigg]=(\partial_x X_{\pm}^{(n)})^{(o)}-\frac{1}{6}(MX_{\pm}^{(n-1)})^{(o)}-(NX_{\pm}^{(n)})^{(o)},\\
		&(\partial_x X_{\pm}^{(n)})^{(d)}=\frac{1}{6}(MX_{\pm}^{(n-1)})^{(d)}
		+(NX_{\pm}^{(n)})^{(d)},
    	\end{aligned}
     	\right.
       \end{equation}
       where $X_{\pm}^{(-1)}=O$ and $X_{\pm}^{(0)}=I$, the notations $(A)^{(d)}$ and $(A)^{(o)}$ represent the diagonal and non-diagonal parts of the matrix $A$, respectively. The coefficients $\left\lbrace X_{\pm}^{(j)}\right\rbrace $ are uniquely determined recursively from \eqref{17}, and
       \begin{equation*}\begin{aligned}
       			\left( X_{\pm}^{(1)}\right) ^{(o)}=&\frac{1}{6}\!\!\left(  \frac{\partial}{\partial t}u +\frac{\partial}{\partial x}u\right) \!\!\left(
       		\begin{array}{ccc}
       			0 & \omega^{2} & \omega^{2}
       			\\
       			\omega  & 0 & \omega
       			\\
       			1 & 1 & 0
       		\end{array}\right)\!
       		-\!\frac{1}{6}\!\!\left( \frac{\partial}{\partial t}v +\frac{\partial}{\partial x}v\right) \!\!\left(
       		\begin{array}{ccc}
       			0 & 1-\omega  & \omega -1
       			\\
       			1-\omega^{2} & 0 & \omega^{2}-1
       			\\
       			1+2\omega  & -1-2\omega  & 0
       		\end{array}\right),\\
       		\left( X_{\pm}^{(1)}\right) ^{(d)}=&\frac{1}{6}\int_{-\infty}^x({\mathrm e}^{-u_0(x^{\prime})-3 v_0(x^{\prime})}+{\mathrm e}^{3 v_0(x^{\prime})-u_0(x^{\prime})}+{\mathrm e}^{2 u_0(x^{\prime})}-3+\\
       		&(u_{0t}(x^{\prime})+u_{0x}(x^{\prime}))^{2}+3\left(v_{0t}(x^{\prime})+v_{0x}(x^{\prime})\right)^{2})\rd x^{\prime} {J}^{2}.
       		\end{aligned}\end{equation*}
       		
     The subsequent proposition presents the asymptotic properties of $X_\pm(x,\lambda)$ for $\lambda\rightarrow\infty$.
       		
		\begin{prop} $($Asymptotics of $X_+(x,\lambda)$ and $X_-(x,\lambda)$ as $\lambda\to\infty$$)$.
			Assume $u_0(x),\ v_0(x)\in \mathcal{S}(\mathbb{R})$. The matrices $X_{\pm}(x,\lambda)$ align completely with their respective series expansions to any order as $\lambda\rightarrow\infty$. To be more specific, for $p\geq0$, the functions
			\begin{equation*}\begin{aligned}
				&(X_+)_{(p)}(x,\lambda):=I+\frac{X_+^{(1)}(x)}{\lambda}+\cdots+\frac{X_+^{(p)}(x)}{\lambda^p},\\
				&(X_-)_{(p)}(x,\lambda):=I+\frac{X_-^{(1)}(x)}{\lambda}+\cdots+\frac{X_-^{(p)}(x)}{\lambda^p},
			\end{aligned}\end{equation*}
			are well-defined, and for $\forall j\geq 0$, we have
			\begin{equation*}\begin{aligned}
				\left|  \frac{\partial^j}{\partial \lambda^j}\left(X_{+}-(X_{+})_{(p)}\right)\right| &\leq\frac{f_+(x)}{\mid\lambda\mid^{p+1}}
				,\quad x\in\mathbb{R},\ \lambda \in (\omega^2 \bar{S},\omega \bar{S},\bar{S}),\ \mid\lambda\mid\geq 2,\\
				\left|  \frac{\partial^j}{\partial \lambda^j}\left(X_{-}-(X_{-})_{(p)}\right)\right| &\leq\frac{f_-(x)}{\mid\lambda\mid^{p+1}}
				,\quad x\in\mathbb{R},\ \lambda \in (-\omega^2 \bar{S},-\omega \bar{S},-\bar{S}),\ \mid\lambda\mid\geq 2,
			\end{aligned}\end{equation*}
		where $f_+(x)$ and $f_-(x)$ are smooth, positive, and have bounded support over $\mathbb{R}$ with rapidly decay properties as $x\rightarrow+\infty$ and
		$x\rightarrow-\infty$, respectively.
		\end{prop}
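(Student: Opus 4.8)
The plan is to prove the estimate for $X_+$ in detail; the argument for $X_-$ is identical once the direction of integration is reversed and the roles of $x\to+\infty$ and $x\to-\infty$ are exchanged. First I would record that $(X_\pm)_{(p)}$ is well defined: the recursion \eqref{17} determines each coefficient $X_\pm^{(j)}$ uniquely, and since $u_0,v_0\in\mathcal{S}(\mathbb{R})$ the matrices $M,N$ and all their $x$-derivatives are Schwartz, so each $X_\pm^{(j)}(x)$ is smooth with $\lim_{x\to\pm\infty}X_\pm^{(j)}=0$ and rapid decay.

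The core is a residual-plus-Volterra argument. Write $\mathcal{D}[X]:=X_x-[\mathfrak{L},X]-L_1X$, so that $X_+$ solves $\mathcal{D}[X_+]=O$, and set $E_p:=X_+-(X_+)_{(p)}$. Substituting the truncated series into $\mathcal{D}$ and invoking \eqref{17} to cancel all powers $\lambda^{0},\dots,\lambda^{-(p-1)}$, the residual $R_p:=\mathcal{D}[(X_+)_{(p)}]$ collapses to the two leftover boundary terms
\[
R_p=\frac{1}{\lambda^{p}}\Bigl[\tfrac{J}{2},X_+^{(p+1)}\Bigr]-\frac{1}{\lambda^{p+1}}\Bigl(\tfrac12[J^2,X_+^{(p)}]+\tfrac16 MX_+^{(p)}\Bigr),
\]
whose leading part is purely off-diagonal and $O(\lambda^{-p})$ while the remainder is $O(\lambda^{-p-1})$; both are Schwartz in $x$. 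Then $E_p$ obeys $\mathcal{D}[E_p]=-R_p$ with $E_p\to O$ as $x\to+\infty$ (since $X_+\to I$ and $X_+^{(j)}\to O$), which I would convert into the Volterra equation
\[
E_p(x,\lambda)=-\int_x^\infty {\rm{e}}^{(x-y)\widehat{\mathfrak{L}}}(L_1E_p)(y,\lambda)\,\rd y+G_p(x,\lambda),\qquad G_p:=\int_x^\infty {\rm{e}}^{(x-y)\widehat{\mathfrak{L}}}R_p(y,\lambda)\,\rd y.
\]

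The key estimate is that $G_p=O(|\lambda|^{-p-1})$ uniformly, even though $R_p$ is only $O(\lambda^{-p})$. On the domains $(\omega^2\bar S,\omega\bar S,\bar S)$ the factors ${\rm{e}}^{(x-y)(l_i-l_j)}$ stay bounded for $y\ge x$ (the same sign condition that makes \eqref{15} well posed), and for $i\ne j$ one has the spectral gap $|l_i-l_j|\gtrsim|\lambda|$ once $|\lambda|\ge2$. Integrating the off-diagonal $\lambda^{-p}$ term of $R_p$ by parts in $y$ therefore transfers one derivative onto the Schwartz factor and extracts $(l_i-l_j)^{-1}=O(\lambda^{-1})$, upgrading that contribution to $O(\lambda^{-p-1})$; the diagonal entries carry ${\rm{e}}^{0}=1$ but the diagonal part of $R_p$ is already $O(\lambda^{-p-1})$, so they need no gain. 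A standard Neumann-series bound for the Volterra operator with kernel ${\rm{e}}^{(x-y)\widehat{\mathfrak{L}}}L_1$, whose iterates are controlled by the Schwartz tail $\int_x^\infty|L_1|$, then yields $|E_p|\le f_+(x)/|\lambda|^{p+1}$ with $f_+$ smooth, positive and rapidly decaying as $x\to+\infty$.

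The main obstacle is the uniform control of the $\lambda$-derivatives $\partial_\lambda^j E_p$. Differentiating the Volterra equation $j$ times produces, through $\partial_\lambda^m {\rm{e}}^{(x-y)\widehat{\mathfrak{L}}}$, polynomial weights $(x-y)^m$ multiplied by powers of $\partial_\lambda\mathfrak{L}=\tfrac12(J-\lambda^{-2}J^2)$; I would absorb these growing factors using the rapid $x$-decay of $L_1$ and $R_p$ together with the boundedness of the exponential, and close the estimates by induction on $j$ via a Gronwall/Neumann argument that reproduces the gain-of-one-power integration by parts at each differentiation order. Care is needed to keep all constants uniform down to $|\lambda|\ge2$, i.e.\ away from the singularity of $\mathfrak{L}$ at $\lambda=0$, and to check that differentiation does not spoil the spectral-gap factor $(l_i-l_j)^{-1}$; these are the genuinely technical points, whereas the leading-order mechanism is exactly the residual/integration-by-parts scheme above.
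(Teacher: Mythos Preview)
Your approach is correct and is precisely the residual-plus-Volterra scheme that the paper implicitly invokes by citing \cite{good-boussinesq} and \cite{lenells-sG}; the paper itself gives no details beyond that reference. Your identification of the residual $R_p$, the integration-by-parts gain of one power on the off-diagonal $[J,X_+^{(p+1)}]$ term via the spectral gap $|l_i-l_j|\gtrsim|\lambda|$, and the inductive treatment of $\partial_\lambda^j$ match the standard argument in those references essentially line for line.
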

		\begin{proof}
			The proof here is analogous to that of the good Boussinesq equation and the sine-Gordon equation, referring to references \cite{good-boussinesq} and \cite{lenells-sG}. For the sake of brevity, further elaboration is omitted here.
		\end{proof}
		
		\subsubsection{Asymptotics of the eigenfunctions $X_+(x,\lambda)$ and $X_-(x,\lambda)$ as $\lambda\to 0$}
		\quad\quad
		It is complicated that the kernel of the Jost function has a singularity at $\lambda =0$, which will cause bad behavior and the framework of the Volterra integral equation is not appropriate for analyzing the behavior of the Jost function for $\lambda\rightarrow 0$. To solve this problem, take a gauge transformation $\check{X}(x,t,\lambda)=G(x,t)\hat{{X}}(x,t,\lambda)$, where
		\begin{equation*}\begin{aligned}
			G(x,t)=
			\frac{{\rm{e}}^{2u}}{3}\begin{pmatrix}
				1	&	\omega	&	\omega^2\\
				\omega^2	&	1	&\omega\\
				\omega	&\omega^2	&1
			\end{pmatrix}+\frac{{\rm{e}}^{u-3v}}{3}\begin{pmatrix}
				1	&\omega^2	&\omega\\
				\omega	&1	&\omega^2\\
				\omega^2	&\omega	&1
			\end{pmatrix}+\frac{1}{3}
			\begin{pmatrix}
				1	&1	&1\\
				1	&1	&1\\
				1	&1	&1
			\end{pmatrix},
		\end{aligned}\end{equation*}
then the new Lax pair is derived in the following form
		\begin{equation*}\begin{aligned}
			\hat{X}_x&=G^{-1}(LG-G_x)\hat{X}=\widehat{L}\hat{X},\\
			\hat{X}_t&=G^{-1}(AG-G_t)\hat{X}=\widehat{A}\hat{X},
		\end{aligned}\end{equation*}
where the expressions for $\widehat{L}$ and $\widehat{A}$ are as follows
		\begin{equation*}\begin{aligned}
			\widehat{L}=
			&\frac{1}{2}\left(\lambda J+\frac{1}{\lambda}J^2\right)+\widehat{N}^{(x)}+\frac{\lambda}{6} \widehat{M},\\
			\widehat{A}=
			&\frac{1}{2}\left(\lambda J-\frac{1}{\lambda}J^2\right)+\widehat{N}^{(t)}-\frac{\lambda}{6} \widehat{M},
		\end{aligned}\end{equation*}
		where
		
		\begin{equation*}\begin{aligned}
			\widehat{N}^{(x)}&=N-\partial_x\begin{pmatrix}
				u-v	&\frac{\omega-1}{3}u-\omega^2v	&\frac{\omega^2-1}{3}u-\omega v\\
				\frac{\omega^2-1}{3}u-\omega v	&u-v	&\frac{\omega-1}{3}u-\omega^2v\\
				\frac{\omega-1}{3}u-\omega^2v	&\frac{\omega^2-1}{3}u-\omega v	&u-v
			\end{pmatrix},\\
			\widehat{N}^{(t)}&=N-\partial_t\begin{pmatrix}
				u-v	&\frac{\omega-1}{3}u-\omega^2v	&\frac{\omega^2-1}{3}u-\omega v\\
				\frac{\omega^2-1}{3}u-\omega v	&u-v	&\frac{\omega-1}{3}u-\omega^2v\\
				\frac{\omega-1}{3}u-\omega^2v	&\frac{\omega^2-1}{3}u-\omega v	&u-v
			\end{pmatrix},\\
			\widehat{M}&= (\omega^2{\rm{e}}^{2u}+\omega {\rm{e}}^{-u+3v}+{\rm{e}}^{-u-3v})\!\!
			\begin{pmatrix}
				0   &1  &0\\
				0   &0  &\omega\\
				\omega^2 &0   &0
			\end{pmatrix}\!+\!
			({\rm{e}}^{2u}+\omega {\rm{e}}^{-u+3v}+\omega^2 {\rm{e}}^{-u-3v})\!\!
			\begin{pmatrix}
				0   &0  &1\\
				\omega    &0  &0\\
				0   &\omega^2&0
			\end{pmatrix}\\
			&\quad +({\rm{e}}^{2u}+{\rm{e}}^{-u+3v}+{\rm{e}}^{-u-3v}-3)J.
		\end{aligned}\end{equation*}
		Similarly, we have
		\begin{equation*}\begin{aligned}
			\widehat{\mathfrak{L}}=\lim_{\left|x\right|\to \infty} \widehat{L}=\frac{1}{2}\left(\lambda J+\frac{1}{\lambda}J^2\right),\quad
			\widehat{\mathfrak{A}}=\lim_{\left|x\right|\to \infty} \widehat{A}=\frac{1}{2}\left(\lambda J-\frac{1}{\lambda}J^2\right),\quad
		\end{aligned}\end{equation*}
		together with
		\begin{equation*}\begin{aligned}
			\widehat{L_1}&\equiv \widehat{L}-\widehat{\mathfrak{L}}=\widehat{N}^{(x)}+\frac{\lambda}{6}\widehat{M},\\
			\widehat{A_1}&\equiv \widehat{L}-\widehat{\mathfrak{L}}=\widehat{N}^{(t)}-\frac{\lambda}{6}\widehat{M}.
		\end{aligned}\end{equation*}
		
		Then introduce the eigenfunction $Y$ by the  transformation
	$\hat{X}=Y {\rm{e}}^{\widehat{\mathfrak{L}}x+\widehat{\mathfrak{A}}t}$, and
		obtain another new Lax pair:
		\begin{equation}\label{18}
			\left\{
			\begin{aligned}
					Y_x-[\mathfrak{L},Y]&=\widehat{L_1}Y,\\
					Y_t-[\mathfrak{A},Y]&=\widehat{A_1}Y.
				\end{aligned}
			\right.
		\end{equation}
		In the case of fixing $t=0$, the Volterra integral equations of the $x$-part of the Lax pair \eqref{18} are given
			\begin{equation*}\begin{aligned}
				Y_+(x,\lambda)&=I-\int_x^{\infty} {\rm{e}}^{(x-y)\widehat{\mathfrak{L}(\lambda)}}(\widehat{L_1}Y_+)(y,\lambda)\rd y,\\
				Y_-(x,\lambda)&=I+\int_{-\infty}^{x} {\rm{e}}^{(x-y)\widehat{\mathfrak{L}(\lambda)}}(\widehat{L_1}	Y_-)(y,\lambda)\rd y.
			\end{aligned}\end{equation*}
		As we already defined, $X(x,t,\lambda)=G(x,t)Y(x,t,\lambda)$, it is found that $Y(x,\lambda)$ satisfies the spectral problem without singularity at the origin.
		Consider the following expansion for $Y(x,\lambda)$ at $\lambda=0$ as
		\begin{equation*}\begin{aligned}
			Y_{\pm}(x,\lambda)=	Y_{\pm}^{(0)}+\lambda 	Y_{\pm}^{(1)}+\cdots,\quad \lambda\rightarrow 0.
		\end{aligned}\end{equation*}
		By bringing the above expansion into the Lax pair \eqref{18} satisfied by $Y(x,\lambda)$, the recursion relationship of the following expansion coefficients is obtained
			\begin{equation*}
			\left\{
			\begin{aligned}
				&\bigg[\frac{J}{2},Y_{\pm}^{(n-1)}\bigg]+\bigg[\frac{J^2}{2},Y_{\pm}^{(n+1)}\bigg]=(\partial_x Y_{\pm}^{(n)})^{(o)}-\frac{1}{6}(MX_{\pm}^{(n-1)})^{(o)}-(NX_{\pm}^{(n)})^{(o)},\\
				&(\partial_x Y_{\pm}^{(n)})^{(d)}=\frac{1}{6}(MY_{\pm}^{(n-1)})^{(d)}
				+(NY_{\pm}^{(n)})^{(d)},
			\end{aligned}
			\right.
		\end{equation*}
		where $Y_{\pm}^{(-1)}=O$ and $Y_{\pm}^{(0)}=I$. Under this relation, the behavior of $X(x,\lambda)$ as $\lambda\rightarrow0$ will be described by the transformation $X=GY$, with the following expansion
		\begin{equation}\label{19}
			X_{\pm}(x,\lambda)=G(x)+G(x)Y_{\pm}^{(1)}\lambda+G(x)Y_{\pm}^{(2)}\lambda^2\cdots,\quad \lambda\rightarrow0.
		\end{equation}

		\begin{remark}  The function $G(x,t)$ satisfies two properties:
		\begin{itemize}
				\item 	$G(x,t)$ satisfies the following symmetry
				\begin{equation*}
					\tilde{\sigma}_1^{-1}G(x,t)\tilde{\sigma}_1=G(x,t),\quad \tilde{\sigma}_2 \overline{G(x,t)}\tilde{\sigma}_2=G(x,t).
				\end{equation*}
				\item 	It is noted that
				\begin{equation*}\begin{aligned}
					\begin{pmatrix}
						\omega&\omega^2&1
					\end{pmatrix}G={\rm{e}}^{2u}\begin{pmatrix}
						\omega&\omega^2&1
					\end{pmatrix},\quad
					\begin{pmatrix}
						\omega^2&\omega&1
					\end{pmatrix}G={\rm{e}}^{u-3v}\begin{pmatrix}
						\omega^2&\omega&1
					\end{pmatrix}.\\
				\end{aligned}\end{equation*}
				\end{itemize}
				
			According to these properties, the reconstruction formula \eqref{7} of solution to the initial value problem of the two-component nonlinear KG equation \eqref{3} can be obtained:
				\begin{equation*}
				\left\{
				\begin{aligned}
					&u(x,t)=\frac{1}{2}\lim_{\lambda\rightarrow0}\ln[(\omega,\omega^2,1)M(x,t,\lambda)]_{13},\\
					&v(x,t)=\frac{1}{6}\lim_{\lambda\rightarrow0}\ln[(\omega,\omega^2,1)M(x,t,\lambda)]_{13}-\frac{1}{3}\lim_{\lambda\rightarrow0}\ln[(\omega^2,\omega,1)M(x,t,\lambda)]_{13}.
				\end{aligned}\right.
			\end{equation*}
		\end{remark}

		\subsection{The scattering matrix}\ \ \ \
		Assuming that the initial value functions $u_0(x)$ and $v_0(x)$ are of compact support, it follows that a function $s(\lambda)$ exists for $\lambda\in \mathbb{C}\setminus\{0\}$ , which is independent of both $x$ and $t$. Indeed, the subsequent relationship holds
		\begin{equation}\label{20}
			X_+(x,\lambda)=X_-(x,\lambda){\rm{e}}^{x\widehat{\mathfrak{L}(\lambda)}} s(\lambda),\quad \lambda \in \mathbb{C} \backslash \{0\},
		\end{equation}
		which describes the analytic continuation of the scattering data in the complex plane except for the origin. In the broader scenario where $u_0(x)$, $v_0(x)\in\mathcal{S}(\mathbb{R})$, the above mentioned relation \eqref{20} for $s(\lambda)$ still holds in its domain.

	  \begin{prop}\label{prop33}
				$($Basic properties of $s(\lambda)$$)
				$.
				Given that $u_0(x), v_0(x)\in \mathcal{S}(\mathbb{R})$, the matrix-valued function $s(\lambda)$ exhibits the following  properties:
				\begin{enumerate}
					\item The domain of $s(\lambda)$ is
					\begin{equation*}
						\left(\begin{array}{ccc}
						\omega^2\bar{S} & \mathbb{R}_+ & \omega\mathbb{R}_+
						\\
						\mathbb{R}_+ & \omega\bar{S} & 	\omega^2 \mathbb{R}_+
							\\
						\omega \mathbb{R}_+ & \omega^2 \mathbb{R}_+ &\bar{S}
						\end{array}\right)
						\backslash\{0\},
					\end{equation*}
					where $\mathbb{R}_+$ and $ \mathbb{R}_-$ denote the positive and negative real axis, respectively. The matrix-valued function $s(\lambda)$ maintains continuity right up to the boundary of its domain and is analytic throughout the interior of this domain.
					\item  The function $s(\lambda)$ behaves as
					\begin{equation*}
						s(\lambda)=I-\sum_{j=1}^{N}\frac{s_j}{\lambda^j}+\mathcal{O}\left( \frac{1}{\lambda^{N+1}}\right) ,\quad \lambda\rightarrow\infty,
					\end{equation*}
					and
					\begin{equation*}
						s(\lambda)=I+s^{(1)}\lambda+s^{(2)}\lambda^2+\cdots,\quad \lambda\rightarrow0.
					\end{equation*}
					\item  The function $s(\lambda)$ satisfies the symmetries
					\begin{equation*}
						s(\lambda)=\tilde{\sigma}_1^{-1}s(\omega\lambda)\tilde{\sigma}_1=\tilde{\sigma}_2\overline{s(\overline{\lambda})} \tilde{\sigma}_2^{-1}.
					\end{equation*}
				\end{enumerate}
			\end{prop}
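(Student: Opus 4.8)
The plan is to read off every property of $s(\lambda)$ from the scattering relation \eqref{20}, which I first rewrite in a form solved for $s$. Since $\mathfrak{L}$ is diagonal, the operator $e^{x\widehat{\mathfrak{L}}}$ acts by conjugation, so \eqref{20} is equivalent to $s(\lambda)=e^{-x\mathfrak{L}}X_-^{-1}(x,\lambda)X_+(x,\lambda)e^{x\mathfrak{L}}$, i.e.\ entrywise $s_{ij}=e^{x(l_j-l_i)}\big(X_-^{-1}X_+\big)_{ij}$. First I would confirm that $s$ is genuinely independent of $x$: both $X_\pm$ solve \eqref{14}, so $X_-^{-1}X_+$ conjugated by $e^{x\mathfrak{L}}$ has vanishing $x$-derivative, by a one-line computation using the Lax equation and the bracket $[\mathfrak{L},\cdot]$. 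All three properties are then extracted from this representation together with the properties of $X_\pm$ established in the eigenfunction propositions.

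For Property 1 (domain, continuity, analyticity) -- which I expect to be the main obstacle -- the key is to locate the sectors of analyticity of the two factors. From the Volterra equations \eqref{15}, the $(k,j)$ entry of $X_+$ is controlled by $e^{(x-y)(l_k-l_j)}$ with $y>x$, so column $j$ of $X_+$ is analytic exactly where $\mathrm{Re}\,l_j=\min_k \mathrm{Re}\,l_k$; using $\mathrm{Re}\,l_j=\tfrac12(|\lambda|+|\lambda|^{-1})\cos(\arg\lambda+2\pi j/3)$ this gives the sectors $\omega^2 S,\omega S,S$ for $j=1,2,3$, matching the eigenfunction proposition. For $X_-^{-1}$ I would \emph{not} use the columns of $X_-$ (whose sectors are disjoint), but instead observe that $W=X_-^{-1}$ satisfies the adjoint equation $W_x-[\mathfrak{L},W]=-WL_1$ with $W\to I$ as $x\to-\infty$; its own Volterra representation shows that row $i$ of $X_-^{-1}$ is analytic exactly where $\mathrm{Re}\,l_i=\min_k\mathrm{Re}\,l_k$. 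Hence $s_{ij}$ is analytic on $\{\mathrm{Re}\,l_i\ \mathrm{minimal}\}\cap\{\mathrm{Re}\,l_j\ \mathrm{minimal}\}$: for $i=j$ this is a full two-dimensional sector ($\omega^2 S,\omega S,S$ on the diagonal), while for $i\neq j$ the two minimal sectors meet only along a boundary ray, producing the rays $\mathbb{R}_+,\omega\mathbb{R}_+,\omega^2\mathbb{R}_+$ recorded in the domain matrix. On each such set $\mathrm{Re}(l_j-l_i)=0$, so the prefactor $e^{x(l_j-l_i)}$ is bounded (equal to $1$ on the diagonal), and continuity up to the boundary and analyticity in the interior descend directly from the corresponding statements for $X_+$ and for the rows of $X_-^{-1}$. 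The subtle point worth stressing is that the full matrix $X_-$ is analytic in no two-dimensional region, yet the individual rows of $X_-^{-1}$ are; this is precisely what lets the diagonal entries of $s$ extend to sectors.

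For Property 2 I would treat the two limits separately. As $\lambda\to\infty$, I substitute the WKB expansions \eqref{16} for $X_+$ and $X_-$ into $s_{ij}=e^{x(l_j-l_i)}(X_-^{-1}X_+)_{ij}$; on the diagonal the prefactor is $1$, giving $s_{ii}=1+\lambda^{-1}(X_+^{(1)}-X_-^{(1)})_{ii}+\cdots$, and since the whole of $s$ is $x$-independent the off-diagonal coefficients are fixed by the same product of series, yielding $s(\lambda)=I-\sum_{j=1}^N s_j\lambda^{-j}+\mathcal{O}(\lambda^{-N-1})$ with $s_j$ built from the $X_\pm^{(k)}$. As $\lambda\to0$ the factor $e^{x\mathfrak{L}}$ is singular, so I pass to the gauge-regularized eigenfunctions $Y_\pm$ of \eqref{18}: because $X_\pm=G\,Y_\pm$ with the common prefactor $G$, the same matrix $s$ satisfies $Y_+=Y_-e^{x\widehat{\mathfrak{L}}}s$, and since $Y_\pm=I+\mathcal{O}(\lambda)$ by \eqref{19} one gets $Y_-^{-1}Y_+=I+\mathcal{O}(\lambda)$ and hence the convergent expansion $s(\lambda)=I+s^{(1)}\lambda+s^{(2)}\lambda^2+\cdots$.

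Finally, Property 3 follows formally from the symmetries of the eigenfunctions. Using $X_\pm(\lambda)=\tilde{\sigma}_1^{-1}X_\pm(\omega\lambda)\tilde{\sigma}_1$ together with $\tilde{\sigma}_1^{-1}\mathfrak{L}(\omega\lambda)\tilde{\sigma}_1=\mathfrak{L}(\lambda)$ (inherited from the stated $\mathbb{Z}_3$ symmetry of $L$), I would conjugate \eqref{20} at $\omega\lambda$ by $\tilde{\sigma}_1$ to obtain $X_+(\lambda)=X_-(\lambda)e^{x\mathfrak{L}(\lambda)}\big(\tilde{\sigma}_1^{-1}s(\omega\lambda)\tilde{\sigma}_1\big)e^{-x\mathfrak{L}(\lambda)}$; comparing with \eqref{20} and invoking uniqueness of the transition matrix gives $s(\lambda)=\tilde{\sigma}_1^{-1}s(\omega\lambda)\tilde{\sigma}_1$. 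The identical argument with $\tilde{\sigma}_2$, the $\mathbb{Z}_2$ symmetry $X_\pm(\lambda)=\tilde{\sigma}_2\overline{X_\pm(\overline{\lambda})}\tilde{\sigma}_2^{-1}$, and $\tilde{\sigma}_2\overline{\mathfrak{L}(\overline{\lambda})}\tilde{\sigma}_2^{-1}=\mathfrak{L}(\lambda)$ yields $s(\lambda)=\tilde{\sigma}_2\overline{s(\overline{\lambda})}\tilde{\sigma}_2^{-1}$, completing the three properties.
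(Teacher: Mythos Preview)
Your proposal is correct and more detailed than the paper's own proof, but it takes a slightly different route for Property~1. The paper does not work with the product $X_-^{-1}X_+$; instead it sends $x\to-\infty$ in \eqref{20} (where $X_-\to I$) to obtain the single integral representation
\[
s(\lambda)=I-\int_{\mathbb{R}}e^{-y\widehat{\mathfrak{L}(\lambda)}}(L_1X_+)(y,\lambda)\,\rd y,
\]
and then reads off the domain of each entry $s_{ij}$ directly from the boundedness of $e^{-y(l_i-l_j)}$ together with the sector on which column $j$ of $X_+$ is defined. This bypasses any separate analysis of $X_-^{-1}$. Your approach instead treats $s$ as the product of row $i$ of $X_-^{-1}$ with column $j$ of $X_+$; it is equally valid (indeed, your observation that row $i$ of $X_-^{-1}$ is analytic where $\mathrm{Re}\,l_i$ is minimal is exactly the content of the paper's cofactor proposition, since $X_-^{-1}=(X_-^A)^T$), and it makes the symmetry between the two Jost solutions more transparent, at the cost of introducing the adjoint Volterra equation. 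For Property~2 at $\lambda\to0$ both you and the paper pass to the $Y$-gauge; the paper phrases it as evaluating \eqref{20} at $x=0$, while you note the cancellation $X_-^{-1}X_+=Y_-^{-1}Y_+$, which is the same mechanism. The paper does not spell out the $\lambda\to\infty$ expansion or the symmetries at all, so your arguments there simply fill in what the paper leaves implicit.
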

			\begin{proof}
				Examining the value of $\lambda$ that resides within the intersection of the pertinent domains, we can discuss the relationship between the two Jost functions
				\begin{equation}\label{21}	X_+(x,\lambda)=X_-(x,\lambda){\rm{e}}^{x\widehat{\mathfrak{L}(\lambda)}} s(\lambda).
				\end{equation}
				It is clear that as $x\rightarrow-\infty$, $X_-(x,\lambda)$ rapidly converges to $I$. So we can represent $s(\lambda)$ by the following integral equation
					\begin{equation}\label{slambda}\begin{aligned}
					s(\lambda)=I-\int_\mathbb{R} {\rm{e}}^{-y\widehat{\mathfrak{L}(\lambda)}}(L_1X_+)(y,\lambda)\rd y.
				\end{aligned}\end{equation}
			By conducting a detailed analysis of ${\rm{e}}^{l_i-l_j}$, it is immediate that $s(\lambda)$ is well-defined for
				\begin{equation*}
					\lambda\in	\left(\begin{array}{ccc}
						\omega^2\bar{S} & \mathbb{R}_+ & \omega\mathbb{R}_+
						\\
						\mathbb{R}_+ & \omega\bar{S} & 	\omega^2 \mathbb{R}_+
						\\
						\omega \mathbb{R}_+ & \omega^2 \mathbb{R}_+ &\bar{S}
					\end{array}\right)
					\backslash\{0\}.
				\end{equation*}
				
				In addition, by analyzing the asymptotic behavior of $X(x,\lambda)$ as $\lambda\rightarrow0$, and setting $x=0$, we can obtain
				\begin{equation*}
					G(0)+G(0)X_+^{(1)}\lambda+\mathcal{O}(\lambda^2)=(G(0)+G(0)X_-^{(1)}\lambda+\mathcal{O}(\lambda^2))s(\lambda),
				\end{equation*}
				which gives the expansion of $s(\lambda)$ as $\lambda\rightarrow0$. In other words, the expansion of $s(\lambda)$ can be given
				\begin{equation*}
					s(\lambda)=I+s^{(1)}\lambda+s^{(2)}\lambda^2+\cdots,\quad \lambda\rightarrow0,
				\end{equation*}
				where $s^{(j)}=(X_-^{(j)})^{-1}X_+^{(j+1)}$, $j=1,2,\cdots$.
			\end{proof}
			\begin{remark}
				According to Proposition \ref{prop33}, the function
				$s(\lambda)$ can be analytically continued to the point  $\lambda=0$.
					\end{remark}
			
			\subsection{The cofactor matrix}
			\quad\quad
			Let $M^A=(M^{-1})^T$ be the cofactor matrix of matrix $M$. Utilizing $(X^A)_x=-X^A(X_x)^TX^A$ and the $x$-part of the Lax pair satisfied by $X(x,\lambda)$ in (\ref{14}), we can get the Lax representation corresponding to the cofactor matrix $X^A=X^A(x,\lambda)$ as
			\begin{equation*}
				X^A_x+[\mathfrak{L},X^A]=-L_1^TX^A.
			\end{equation*}
			Similar to the above discussion of $X(x,\lambda)$ satisfying Lax pair in (\ref{14}), the Volterra integral equations for $X^A(x,\lambda)$ are also expressed below
			\begin{equation}\label{22}
				\begin{aligned}
					&X_+^A(x,\lambda)=I+\int_{x}^{\infty}{\rm{e}}^{-(x-y)\widehat{\mathfrak{L}(\lambda)}}(L_1^TX_+^A)(y,\lambda)\rd y,\\
				   &X_-^A(x,\lambda)=I-\int_{-\infty}^{x}{\rm{e}}^{-(x-y)\widehat{\mathfrak{L}(\lambda)}}(L_1^TX_-^A)(y,\lambda)\rd y.
				\end{aligned}
			\end{equation}
			
		\begin{prop}
		$($Basic properties of the eigenfunctions $X^A_{\pm}(x,\lambda)$$)$. Suppose $u_0(x), v_0(x)\in \mathcal{S}(\mathbb{R})$. Then the equations \eqref{22} uniquely define two $3\times 3$ matrix-valued solutions $X^A_\pm(x,\lambda)$ of \eqref{22} with the following properties:
	\begin{enumerate}
		\item The function $X^A_+(x,\lambda)$ is well-defined for $x\in \mathbb{R}$ and $\lambda\in (-\omega^2 \bar{S},-
		\omega \bar{S},-\bar{S})\setminus\{0\}$. For each $\lambda\in (-\omega^2\bar{S},-\omega\bar{S},-\bar{S})\setminus\{0\}$, $X_+(\cdot,\lambda)$ is smooth and satisfies \eqref{22}. Similarly $X^A_-(x,\lambda)$ is well-defined for $x\in \mathbb{R}$ and $\lambda\in (\omega^2 \bar{S},\omega \bar{S},\bar{S})\setminus\{0\}$. For each $\lambda\in (\omega^2\bar{S},\omega\bar{S},\bar{S})\setminus\{0\}$, $X_-^A(\cdot,\lambda)$ is smooth and satisfies \eqref{22}.
		\item For each $x\in\mathbb{R}$, the function $X^A_+(x, \lambda)$ is continuous for $\lambda\in(-\omega^2 \bar{S},-\omega \bar{S},-\bar{S})\setminus\{0\}$ and analytic for $\lambda\in (-\omega^2 S,-\omega S,-S)\setminus\{0\}$. The function $X^A_-(x, \lambda)$ is continuous for $\lambda\in(\omega^2 \bar{S},\omega \bar{S},\bar{S})\setminus\{0\}$ and analytic for $\lambda\in (\omega^2 S,\omega S,S)\setminus\{0\}$.
		\item For each $x\in \mathbb{R}$ and $j\in \mathbb{N}_+$, the partial derivative $\frac{\partial^j X^A_+(x,\cdot)}{\partial \lambda^j}$ has a continuous extension to $(-\omega^2\bar{S},-\omega \bar{S},-\bar{S})\setminus \{0\}$, and $\frac{\partial^j X^A_-(x,\cdot)}{\partial \lambda^j}$ has a continuous extension to $(\omega^2\bar{S},\omega \bar{S},\bar{S})\setminus \{0\}$.
		\item For each $n\geq 1$ and $\epsilon > 0$, there exist functions $f_+(x)$ and $f_-(x)$ that are smooth, positive, and have bounded support over $\mathbb{R}$ with rapidly decay properties as $x\rightarrow+\infty$ and
		$x\rightarrow-\infty$, respectively. Under this condition, the following inequalities hold for $x\in \mathbb{R}$ and $j=0,1,\dots, n$:
		\begin{equation*}\begin{aligned}
			&\left|\frac{\partial^j }{\partial \lambda^j} \left(X^A_+(x,\lambda)-I\right)\right|\leq f_+(x),\quad \lambda\in(-\omega^2 \bar{S},-\omega \bar{S},-\bar{S}), &&&|\lambda|>\epsilon,\\
			&\left|\frac{\partial^j }{\partial \lambda^j} \left(X^A_-(x,\lambda)-I\right)\right|\leq f_-(x),\quad \lambda\in(\omega^2 \bar{S},\omega \bar{S},\bar{S}), &&&|\lambda|>\epsilon.
		\end{aligned}\end{equation*}
		\item $X^A_+(x,\lambda)$ and $X^A_-(x,\lambda)$ satisfy the following symmetry relationships
		\begin{equation*}
			X^A_{\pm}(x,\lambda)=\tilde{\sigma}_1^{-1}X^A_{\pm}(x,\omega\lambda)\tilde{\sigma}_1=\tilde{\sigma}_2\overline{X^A_{\pm}(x,\overline{\lambda})}\tilde{\sigma}_2^{-1},\quad \lambda\in({\mp}\omega^2\bar{S},{\mp}\omega\bar{S},{\mp}\bar{S})\setminus\{0\}.			
		\end{equation*}
		\item	Assume $u_0(x)$ and $v_0(x)$ are of compact support for each $x\in\mathbb{R}$, then the functions $X^A_\pm(x,\lambda)$ are defined, analytic for $\lambda\in\mathbb{C}\setminus\{0\}$, and $\det X^A_+=\det X^A_-=1$.
	\end{enumerate}
	\end{prop}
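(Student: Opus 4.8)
The plan is to run the argument in close parallel to the proof of the corresponding proposition for $X_\pm(x,\lambda)$, organizing everything around the Volterra equations \eqref{22} and adapting each estimate to the two structural changes relative to \eqref{15}: the kernel now carries the transposed potential $L_1^T$ in place of $L_1$, and the propagator is ${\rm e}^{-(x-y)\widehat{\mathfrak{L}(\lambda)}}$ rather than ${\rm e}^{(x-y)\widehat{\mathfrak{L}(\lambda)}}$. Writing the operator action entrywise as $({\rm e}^{-(x-y)\widehat{\mathfrak{L}}}A)_{ij}={\rm e}^{-(x-y)(l_i-l_j)}A_{ij}$, my first task would be to pin down the domains of definition. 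For $X_+^A$ one integrates over $y\in[x,\infty)$, so $x-y\le 0$ and the prefactor ${\rm e}^{-(x-y)(l_i-l_j)}$ stays bounded precisely where $\mathrm{Re}\,(l_i-l_j)\le 0$; this is exactly the reflection of the condition $\mathrm{Re}\,(l_i-l_j)\ge 0$ that governed $X_+$, and tracing it through the six sectors $D_n$ of Fig.~\ref{figSigma} would yield the claimed domain $(-\omega^2\bar S,-\omega\bar S,-\bar S)\setminus\{0\}$ for $X_+^A$ and the mirror-image domain for $X_-^A$. One could in principle deduce everything from $X_\pm$ via the identity $X^A=(X^{-1})^T$, but since the analyticity sectors are reflected and invertibility of $X_\pm$ is only guaranteed under compact support, it is cleaner to work directly from \eqref{22}.

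Once the domains are fixed, I would derive properties (1)--(4) from the standard Volterra/Neumann-series machinery. Because $u_0,v_0\in\mathcal{S}(\mathbb{R})$ forces $L_1^T$ to decay rapidly as $|x|\to\infty$, the iterated kernels are dominated by a convergent series whose tail is controlled by $\int f_\pm$, giving existence, uniqueness and smoothness in $x$; holomorphy of ${\rm e}^{-(x-y)(l_i-l_j)}$ and of $L_1^T$ in $\lambda$ away from the origin would transfer analyticity in the open sectors and continuity up to the boundary to the solution; and differentiating the Neumann series term by term in $\lambda$ and estimating with the rapid decay of $\partial_\lambda^j L_1^T$ would produce the majorants $f_\pm(x)$ in the derivative bounds, exactly as for $X_\pm$.

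For the symmetries (5) I would transpose the $\mathbb{Z}_3$ and $\mathbb{Z}_2$ relations satisfied by $L$ (hence by $L_1$) to obtain the corresponding relations for $L_1^T$, observe that conjugation by $\tilde\sigma_1$ and $\tilde\sigma_2$ maps the Volterra equation \eqref{22} to itself under $\lambda\mapsto\omega\lambda$ and $\lambda\mapsto\bar\lambda$ respectively, and then invoke uniqueness to conclude that $X_\pm^A$ inherit these symmetries, the indicated $\mp$ signs on the sectors being the image of the domains under $\omega$-rotation and conjugation. Finally, for compactly supported $u_0,v_0$ the integration range in \eqref{22} is effectively finite, so no boundedness constraint on the exponential survives and $X_\pm^A$ extend analytically to $\mathbb{C}\setminus\{0\}$; the determinant claim is then immediate from $X^A=(X^{-1})^{T}$ together with $\det X_\pm=1$. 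I expect the one genuinely delicate step to be the sector bookkeeping in the first paragraph, namely verifying that the transpose together with the sign flip in the propagator sends each analyticity sector of $X_\pm$ to precisely the reflected sector asserted here rather than to some partial overlap, since everything downstream is insensitive to the replacement of $L_1$ by $L_1^T$, whereas the domains are not.
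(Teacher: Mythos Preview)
Your proposal is correct and follows exactly the approach the paper intends: the paper does not write out a separate proof for this proposition at all, having already noted for the analogous result on $X_\pm$ that ``the proof of this proposition is achieved through a direct examination of the Volterra integral equation,'' and the cofactor case is left implicit as the obvious adaptation with $L_1^T$ and the sign-flipped propagator. Your outline in fact supplies more detail than the paper does, including the correct observation that the sector reflection is the only point requiring care.
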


	\begin{prop} $($Asymptotics of $X^A_+(x,\lambda)$ and $X^A_-(x,\lambda)$ as $\lambda\to\infty$$)$.
		Assume $u_0(x),\ v_0(x)\in \mathcal{S}(\mathbb{R})$. The matrices $X^A_{\pm}(x,\lambda)$ align completely with their respective series expansions to any order as $\lambda\rightarrow\infty$. To be more specific, for $p\geq0$, the functions
		\begin{equation*}\begin{aligned}
			&(X_+^A)_{(p)}(x,\lambda):=I+\frac{(X^A_+)^{(1)}(x)}{\lambda}+\cdots+\frac{(X^A_+)^{(p)}(x)}{\lambda^p},\\
			&(X^A_-)_{(p)}(x,\lambda):=I+\frac{(X^A_-)^{(1)}(x)}{\lambda}+\cdots+\frac{(X^A_-)^{(p)}(x)}{\lambda^p},
		\end{aligned}\end{equation*}
		are well-defined, and for any $\geq 0$, we have
		\begin{equation*}\begin{aligned}
			\left|  \frac{\partial^j}{\partial \lambda^j}\left(X^A_{+}-(X^A_{+})_{(p)}\right)\right| &\leq\frac{f_+(x)}{\mid\lambda\mid^{p+1}}
			,\quad x\in\mathbb{R},\ \lambda \in (-\omega^2 \bar{S},-\omega \bar{S},-\bar{S}),\ \mid\lambda\mid\geq 2,\\
			\left|  \frac{\partial^j}{\partial \lambda^j}\left(X^A_{-}-(X^A_{-})_{(p)}\right)\right| &\leq\frac{f_-(x)}{\mid\lambda\mid^{p+1}}
			,\quad x\in\mathbb{R},\ \lambda \in (\omega^2 \bar{S},\omega \bar{S},\bar{S}),\ \mid\lambda\mid\geq 2,
		\end{aligned}\end{equation*}
		where $f_+(x)$ and $f_-(x)$ are bounded smooth positive functions of $x\in \mathbb{R}$ with rapidly decay properties as $x\rightarrow+\infty$ and
		$x\rightarrow-\infty$, respectively.
	\end{prop}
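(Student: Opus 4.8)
The plan is to reproduce, in the cofactor setting, the WKB-plus-Volterra argument that establishes the analogous asymptotics for $X_\pm(x,\lambda)$. The starting point is the cofactor Lax equation $X^A_x+[\mathfrak{L},X^A]=-L_1^TX^A$ together with the Volterra representations \eqref{22}. Substituting a formal power series $X^A_\pm=I+\sum_{j\geq 1}(X^A_\pm)^{(j)}(x)\lambda^{-j}$ into this equation and matching powers of $\lambda$ yields a recursion for the coefficients entirely parallel to \eqref{17}: the off-diagonal part of $(X^A_\pm)^{(n+1)}$ is fixed algebraically by the commutator with $J/2$, while its diagonal part is obtained by integrating a first-order ODE whose inhomogeneity is built from $M^T$, $N^T$ and lower-order coefficients. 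The normalization $\lim_{x\to\pm\infty}(X^A_\pm)^{(j)}=0$ selects the integration constants uniquely, so the partial sums $(X^A_\pm)_{(p)}$ are well defined, exactly as claimed. The algebraic identity $X^A_\pm=((X_\pm)^{-1})^{\mathrm T}$ cannot be used as a shortcut, because the integrals in \eqref{22} converge on the \emph{reflected} domains $(-\omega^2\bar S,-\omega\bar S,-\bar S)$ and $(\omega^2\bar S,\omega\bar S,\bar S)$ rather than on the domains of $X_\pm$.

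Next I would set $R^\pm_{(p)}:=X^A_\pm-(X^A_\pm)_{(p)}$ and insert $(X^A_\pm)_{(p)}$ into \eqref{22}. Because the coefficients solve the recursion, all contributions through order $\lambda^{-p}$ cancel, leaving a Volterra equation of the schematic form
\[
R^\pm_{(p)}(x,\lambda)=E_p^\pm(x,\lambda)\mp\int {\rm e}^{-(x-y)\widehat{\mathfrak{L}(\lambda)}}\bigl(L_1^T R^\pm_{(p)}\bigr)(y,\lambda)\,\rd y,
\]
where the source $E_p^\pm$ is $\mathcal{O}(|\lambda|^{-(p+1)})$ uniformly in $x$, since it is built from $L_1^T$ (which carries the factor $1/(6\lambda)$ in its $M$-part and is Schwartz in $x$) acting on the tail of the expansion. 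The crucial structural fact is that on the stated spectral domains each entry of the kernel ${\rm e}^{-(x-y)\widehat{\mathfrak{L}(\lambda)}}$ equals ${\rm e}^{-(x-y)(l_i-l_j)}$ with $\mathrm{Re}\,[-(x-y)(l_i(\lambda)-l_j(\lambda))]\leq 0$ over the relevant range of $y$; this is precisely where the sign flip of the commutator forces the reflected domains. Hence the Volterra operator is a contraction in the weighted sup-norm used for $X_\pm$, and iterating the Neumann series together with the integrability of $L_1$ gives the uniform bound $|R^\pm_{(p)}|\leq f_\pm(x)/|\lambda|^{p+1}$ for $|\lambda|\geq 2$.

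To control the $\lambda$-derivatives I would differentiate the Volterra equation $j$ times in $\lambda$. Each derivative lands on the kernel, on $L_1^T$, or on $R^\pm_{(p)}$; differentiating the kernel produces factors $\partial_\lambda^m(l_i-l_j)$ that are polynomially bounded away from $\lambda=0$, and on the closed domains these are controlled by the exponential decay of the kernel off the contour and by the Schwartz decay of $L_1$ and its $\lambda$-derivatives. Arguing by induction on $j$ and reusing the same contraction estimate propagates the bound to all derivatives, yielding the stated inequalities for $j=0,1,\dots$.

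The step I expect to be the main obstacle is the boundary analysis of the kernel. On the interiors of $(-\omega^2 S,-\omega S,-S)$ the exponents have strictly negative real part and everything is a routine contraction, but as $\arg\lambda$ approaches the boundary rays $\{\mathbb{R},\omega\mathbb{R},\omega^2\mathbb{R}\}$ some of the exponentials ${\rm e}^{-(x-y)(l_i-l_j)}$ degenerate to purely oscillatory behavior, so the naive contraction constant fails to be uniform up to the boundary. Establishing the uniform-in-$\lambda$ estimates there, and for the $\lambda$-derivatives, where differentiating the oscillatory factors brings down growing powers of $(x-y)$, requires trading these powers against the rapid decay of $L_1^T$ in the integration variable coming from $u_0,v_0\in\mathcal{S}(\mathbb{R})$, exactly as in the good Boussinesq and sine-Gordon treatments.
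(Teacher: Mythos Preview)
Your proposal is correct and follows exactly the approach the paper relies on: the paper states this proposition without proof, treating it as the obvious cofactor analogue of the $X_\pm$ asymptotics, which in turn is deferred to the good Boussinesq and sine-Gordon references \cite{good-boussinesq,lenells-sG}. Your WKB-plus-Volterra scheme, with the domain reflection forced by the sign change in the commutator, is precisely that standard argument, and your identification of the boundary-ray uniformity as the main technical point matches what those references handle.
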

	
	Based on the definition of $X^A$, it is natural to get $X^A=G^AY^A$. In other words, the function $Y^A$ is subject to the following equation
	\begin{equation*}
		Y_x^A+[\mathfrak{L},Y^A]=-(G^{-1}LG-G^{-1}G_x-\mathfrak{L})^TY^A=-\widehat{L_1}^TY^A.
	\end{equation*}
	The Volterra integral equation for the cofactor matrix function $Y^A$ are expressed by
	\begin{equation*}
		\begin{aligned}
			&Y_+^A(x,\lambda)=I+\int_{x}^{\infty}{\rm{e}}^{-(x-y)\widehat{\mathfrak{L}(\lambda)}}(\widehat{L_1}^TY_+^A)(y,\lambda)\rd y,\\
			&Y_-^A(x,\lambda)=I-\int_{-\infty}^{x}{\rm{e}}^{-(x-y)\widehat{\mathfrak{L}(\lambda)}}(\widehat{L_1}^TY_-^A)(y,\lambda)\rd y.
		\end{aligned}
	\end{equation*}
	Similar to the analysis of $Y_{\pm}(x,\lambda)$ when $\lambda\rightarrow0$, the expansions for $Y_{\pm}^A(x,\lambda)$ near $\lambda=0$ as follows
	\begin{equation*}
		\begin{aligned}
			Y_{\pm}^A(x,\lambda)&=(Y^A_{\pm})^{(0)}(x)+\lambda(Y^A_{\pm})^{(1)}(x)+\cdots\\
			&=I+\lambda(Y^A_{\pm})^{(1)}(x)+\cdots,\qquad \lambda\rightarrow0.
		\end{aligned}
	\end{equation*}
	
	The analysis of the two scattering matrices $s(\lambda)$ and $s^A(\lambda)$ is similar, and the Jost functions $X^A_+(x,\lambda)$ and $X^A_-(x,\lambda)$ satisfy the following relation
	\begin{equation*}
		X^A_+(x,\lambda)=X^A_-(x,\lambda){\rm{e}}^{-x\widehat{\mathfrak{L}(\lambda)}}s^A(\lambda),\quad \lambda\in \mathbb{C}\setminus\{0\},
	\end{equation*}
	where the scattering matrix $s^A(\lambda)$ is given by
		\begin{equation}\label{sAlambda}\begin{aligned}
		s^A(\lambda)=I+\int_\mathbb{R} {\rm{e}}^{y\widehat{\mathfrak{L}(\lambda)}}(L_1^TX_+^A)(y,\lambda)\rd y.
	\end{aligned}\end{equation}
			
	In the following statement, we briefly give the relevant properties of $s^A(\lambda)$.
	
	\begin{prop}
		$($Basic properties of $s^A(\lambda)$$)$. Suppose $u_0(x), v_0(x)\in \mathcal{S}(\mathbb{R})$. Then $s^A(\lambda)$ has the following properties:
		\begin{enumerate}
			\item The domain of $s^A(\lambda)$ is
			\begin{equation*}
				\left(\begin{array}{ccc}
					-\omega^2\bar{S} & \mathbb{R}_- & \omega\mathbb{R}_-
					\\
					\mathbb{R}_- & -\omega\bar{S} & 	\omega^2 \mathbb{R}_-
					\\
					\omega \mathbb{R}_- & \omega^2 \mathbb{R}_- &-\bar{S}
				\end{array}\right)
				\backslash\{0\}.
			\end{equation*}
		The matrix function $s^A(\lambda)$ maintains continuity right up to the boundary of its domain and is analytic throughout the interior of this domain.
			\item  The function $s(\lambda)$ behaves as
			\begin{equation*}
				s^A(\lambda)=I-\sum_{j=1}^{N}\frac{s^A_j}{\lambda^j}+\mathcal{O}\left( \frac{1}{\lambda^{N+1}}\right) ,\quad \lambda\rightarrow\infty,
			\end{equation*}
			and
			\begin{equation*}
				s^A(\lambda)=I+(s^A)^{(1)}\lambda+(s^A)^{(2)}\lambda^2+\cdots,\quad \lambda\rightarrow0.
			\end{equation*}
			\item  The function $s^A(\lambda)$ satisfies the symmetries
			\begin{equation*}
				s^A(\lambda)=\tilde{\sigma}_1^{-1}s^A(\omega\lambda)\tilde{\sigma}_1=\tilde{\sigma}_2\overline{s^A(\overline{\lambda})} \tilde{\sigma}_2^{-1}.
			\end{equation*}
		\end{enumerate}
	\end{prop}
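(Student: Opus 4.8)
The proof runs in exact parallel with that of Proposition \ref{prop33} for $s(\lambda)$, now based on the integral representation \eqref{sAlambda} and on the properties of the cofactor eigenfunctions $X^A_\pm$ established in the two preceding propositions. The plan is to read off each of the three assertions directly from the behavior of $X^A_+(y,\lambda)$ and of the exponential kernel ${\rm{e}}^{y\widehat{\mathfrak{L}(\lambda)}}$ appearing in \eqref{sAlambda}, the only structural difference from the $s(\lambda)$ case being the sign of the exponent (here ${\rm{e}}^{+y\widehat{\mathfrak{L}}}$ against the ${\rm{e}}^{-y\widehat{\mathfrak{L}}}$ of \eqref{slambda}).

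For the domain (item 1), I would write the $(i,j)$ entry as $s^A_{ij}(\lambda)=\delta_{ij}+\int_\mathbb{R}{\rm{e}}^{(l_i(\lambda)-l_j(\lambda))y}(L_1^TX_+^A)_{ij}(y,\lambda)\,\rd y$ with $l_j(\lambda)=\tfrac12(\omega^j\lambda+(\omega^j\lambda)^{-1})$. Since $u_0,v_0\in\mathcal{S}(\mathbb{R})$ force $L_1^T$ to decay rapidly and the bound $|X_+^A-I|\le f_+(x)$ keeps $X_+^A$ bounded on $\mathbb{R}$, the integrand decays rapidly at both ends; convergence of the $y$-integral over all of $\mathbb{R}$ is therefore governed entirely by whether the factor ${\rm{e}}^{(l_i-l_j)y}$ grows. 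For $i=j$ the exponent vanishes and the entry extends to the full analyticity sector of the relevant column of $X^A_+$, namely $-\omega^2\bar S,\,-\omega\bar S,\,-\bar S$ on the diagonal; for $i\neq j$, convergence at both $y\to\pm\infty$ forces $\mathrm{Re}\,(l_i(\lambda)-l_j(\lambda))=0$, which cuts each off-diagonal entry down to the ray where the exponent is purely imaginary. Tracking these rays produces precisely the claimed half-lines, and the whole domain matrix follows; continuity up to the boundary and analyticity in the interior then come from the corresponding properties of $X_+^A$ via dominated convergence and Morera's theorem.

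For the asymptotics (item 2), I would substitute into \eqref{sAlambda} the large-$\lambda$ expansion of $X_+^A$ supplied by the proposition on the asymptotics of $X^A_\pm$ as $\lambda\to\infty$, integrating term by term to obtain $s^A(\lambda)=I-\sum_{j=1}^N s^A_j\lambda^{-j}+\mathcal{O}(\lambda^{-N-1})$. For the behavior near the origin I would instead use the regular factorization $X^A=G^AY^A$ with $Y_\pm^A(x,\lambda)=I+\lambda(Y^A_\pm)^{(1)}(x)+\cdots$, insert it into the scattering relation $X^A_+(x,\lambda)=X^A_-(x,\lambda){\rm{e}}^{-x\widehat{\mathfrak{L}(\lambda)}}s^A(\lambda)$, set $x=0$, and match powers of $\lambda$ to get $s^A(\lambda)=I+(s^A)^{(1)}\lambda+\cdots$ with coefficients expressed through the $(Y^A_\pm)^{(j)}(0)$. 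The symmetries (item 3) are inherited from those of $X^A_\pm$: substituting $X^A_\pm(x,\lambda)=\tilde\sigma_1^{-1}X^A_\pm(x,\omega\lambda)\tilde\sigma_1=\tilde\sigma_2\overline{X^A_\pm(x,\overline\lambda)}\tilde\sigma_2^{-1}$ into the scattering relation and using the compatible conjugation of $\widehat{\mathfrak{L}}$ by $\tilde\sigma_1$ and $\tilde\sigma_2$ already exploited for $s(\lambda)$ reduces everything to the two stated identities.

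I expect the domain determination in item 1 to be the only genuinely delicate step. The sign bookkeeping of $\mathrm{Re}\,(l_i-l_j)$ on each ray---now controlled by ${\rm{e}}^{+y\widehat{\mathfrak{L}}}$ rather than ${\rm{e}}^{-y\widehat{\mathfrak{L}}}$---is exactly what flips the sectors and half-lines into their negative counterparts, and care is needed to confirm that each off-diagonal entry converges on precisely the advertised ray and on no larger set, so that the stated domain matrix is sharp.
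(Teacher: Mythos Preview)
Your proposal is correct and matches the paper's approach: the paper does not give a separate proof of this proposition but treats it as a direct parallel of Proposition~\ref{prop33} for $s(\lambda)$, relying on the integral representation \eqref{sAlambda} and the already-established properties of $X^A_\pm$, exactly as you outline. Your identification of the sign flip in the exponential kernel as the mechanism that produces the ``negative'' domain matrix is precisely the point, and your handling of the small-$\lambda$ expansion via $X^A=G^AY^A$ and of the symmetries via those of $X^A_\pm$ mirrors the paper's treatment of $s(\lambda)$.
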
		
			
	\subsection{The eigenfunctions $M_n(x, \lambda)$}		
	\quad\quad
	The objective of this subsection is to construct the piecewise analytic functions $M_n(x, \lambda)$ over the region $D_n$ for $n=1,2,\cdots,6$.
	
	For each $n=1, 2,\cdots,6$, and $\lambda\in D_n$, define a $3\times 3$ matrix-value solution by the following system of Fredholm integral equation
	\begin{equation}\label{23}\begin{aligned}
		(M_n)_{ij}(x,\lambda)=\delta_{ij}+\int_{\gamma_{ij}^n} \left({\rm{e}}^{(x-y)\widehat{\mathfrak{L}(\lambda)}}(L_1M_n)(y,\lambda)\right)_{ij}\rd y,
	\end{aligned}\end{equation}
	where the contours $\gamma_{ij}^n\ (n=1,\dots,6,\ i,\ j =1,\ 2,\ 3)$ are defined by
	\begin{equation*}
		\gamma_{ij}^n=\left\{
		\begin{aligned}
			(-\infty,x),\quad {\rm{Re}}\,l_i(\lambda)<{\rm{Re}}\,l_j(\lambda),\\
			(+\infty,x),\quad {\rm{Re}}\,l_i(\lambda)\ge{\rm{Re}}\,l_j(\lambda),
		\end{aligned}
		\quad {\rm{for}} \quad \lambda\in D_n.
		\right.
	\end{equation*}
   Define the set of zeros for the Fredholm determinants associated with Fredholm integral equations as $\mathcal{N}$,  and extend $\tilde{\mathcal{N}}:=\mathcal{N}\cup\{0\}$. The following proposition gives some properties of the matrix-valued functions defined by equation \eqref{23}.
		\begin{prop}
		$($Basic properties of $M_n$$)$. Suppose $u_0(x), v_0(x)\in \mathcal{S}(\mathbb{R})$, then the Fredholm integral equations referenced by \eqref{23} uniquely determine $M_n$ for $n=1, 2,\cdots,6$. These functions exhibit the following properties:
		\begin{enumerate}
			\item The function $M_n(x,\lambda)$ is defined for $x\in \mathbb{R}$ and $\lambda\in \bar{D}_n\setminus\tilde{\mathcal{N}}$, and is smooth and satisfies \eqref{14} for each $\lambda\in \bar{D}_n\setminus\tilde{\mathcal{N}}$.
			
			\item The function $M_n(x,\lambda)$ is continuous for $\lambda\in \bar{D}_n\setminus\tilde{\mathcal{N}}$ and analytic for $\lambda\in D_n\setminus\tilde{\mathcal{N}}$.
			
			\item For each $\epsilon>0$, the function $M_n(x,\lambda)$ satisfies the property of boundedness:
			\begin{equation*}
				\left|M_n(x,\lambda)\right|\le C(\epsilon), \quad x\in \mathbb{R},\lambda \in \bar{D}_n, {\rm{dist}}(\lambda,\tilde{\mathcal{N}})\ge\epsilon.
			\end{equation*}
			\item The partial derivative
			 $\frac{\partial^j M_n(x,\cdot)}{\partial \lambda^j}$ has the continuous extension to $\bar{D}_n\setminus\tilde{\mathcal{N}}$ for $j=1,2,\cdots$.			
			\item  $\det M_n(x,\lambda)=1$ for $x\in\mathbb{R}$ and $\lambda \in \bar{D}_n\setminus\tilde{\mathcal{N}}$.
			
			\item The functions $M(x,\lambda):=M_n(x,\lambda)$ for $\lambda\in D_n$ satisfy the symmetries
			\begin{equation*}				M(x,\lambda)=\tilde{\sigma}_1^{-1}M(x,\omega\lambda)\tilde{\sigma}_1=\tilde{\sigma}_2\overline{M(x,\overline{\lambda})}\tilde{\sigma}_2^{-1},\quad \lambda\in\mathbb{C}\setminus\tilde{\mathcal{N}}.	\end{equation*}
			\end{enumerate}
	\end{prop}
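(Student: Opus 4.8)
The plan is to read \eqref{23} as a coupled system of Fredholm integral equations of the second kind in the variable $x$, with $\lambda\in\bar D_n$ entering as a parameter, and to run analytic Fredholm theory. First I would check that the prescription for $\gamma_{ij}^n$ makes the kernel bounded: the $(i,j)$ entry carries the factor ${\rm{e}}^{(x-y)(l_i-l_j)}$, and in both admissible cases — namely ${\rm{Re}}\,l_i<{\rm{Re}}\,l_j$ with $y<x$, and ${\rm{Re}}\,l_i\ge{\rm{Re}}\,l_j$ with $y>x$ — one has ${\rm{Re}}\big[(x-y)(l_i-l_j)\big]\le0$. Together with the Schwartz decay of $L_1=N+\tfrac{1}{6\lambda}M$ inherited from $u_0,v_0\in\mathcal{S}(\mathbb{R})$, this shows that the integral operator $K_\lambda$ in \eqref{23} is compact (Hilbert--Schmidt, indeed trace class after the standard factorisation of such half-line kernels) on the space of bounded continuous $3\times3$ matrix functions, and that $\lambda\mapsto K_\lambda$ is analytic on $D_n$ and continuous up to $\bar D_n$ away from $\lambda=0$, where $\mathfrak L$ is singular. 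The analytic Fredholm theorem then yields a unique solution $M_n$ precisely when the Fredholm determinant $\det({\rm{Id}}-K_\lambda)$ does not vanish; collecting its zeros into $\mathcal N$ and adjoining the origin to form $\tilde{\mathcal N}$ gives properties (1) and (2), while smoothness in $x$ and the fact that $M_n$ solves \eqref{14} follow by differentiating \eqref{23} under the integral sign.

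For the uniform bound (3), I would use that the resolvent $({\rm{Id}}-K_\lambda)^{-1}$ is norm continuous in $\lambda$ and therefore uniformly bounded once $\lambda$ is kept at distance $\ge\epsilon$ from $\tilde{\mathcal N}$, the $x$-uniformity coming from the translation structure of the kernel and the integrability of the dominating functions $f_\pm$; this produces the constant $C(\epsilon)$. The derivative estimates (4) follow by differentiating \eqref{23} repeatedly in $\lambda$: each $\partial_\lambda$ either strikes the exponential, generating a factor polynomial in $x-y$ that is absorbed by the decay of $L_1$, or strikes $L_1$, whose $\lambda$-derivatives retain Schwartz decay in $x$; an induction on $j$ combined with the resolvent bound of (3) then provides the continuous extension of $\partial_\lambda^jM_n$ to $\bar D_n\setminus\tilde{\mathcal N}$.

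Property (5) I would deduce from ${\rm{tr}}\,L=0$, which holds since ${\rm{tr}}\,J={\rm{tr}}\,J^2=0$ while $N$ and $M$ are traceless. Setting $\Psi=M_n\,{\rm{e}}^{\mathfrak L x}$, Abel's formula gives $\partial_x\det\Psi=({\rm{tr}}\,L)\det\Psi=0$, so $\det M_n$ is independent of $x$; letting $x\to+\infty$, the contour prescription forces every off-diagonal entry with ${\rm{Re}}\,l_i\ge{\rm{Re}}\,l_j$ and every diagonal entry to collapse to $\delta_{ij}$, leaving a triangular matrix with unit diagonal, whence $\det M_n\equiv1$. For the symmetries (6) I would argue by uniqueness: the $\mathbb Z_3$ and $\mathbb Z_2$ symmetries of $L$ established above, together with $l_j(\omega\lambda)=l_{j+1}(\lambda)$ and the conjugation identity $\overline{l_j(\bar\lambda)}=l_{-j}(\lambda)$, imply that the maps $\lambda\mapsto\omega\lambda$ and $\lambda\mapsto\bar\lambda$ permute the regions $D_n$ and their contours $\gamma_{ij}^n$ compatibly with conjugation by $\tilde\sigma_1$ and $\tilde\sigma_2$. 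Hence $\tilde\sigma_1^{-1}M(x,\omega\lambda)\tilde\sigma_1$ and $\tilde\sigma_2\overline{M(x,\bar\lambda)}\tilde\sigma_2^{-1}$ satisfy the same integral equation \eqref{23} as $M(x,\lambda)$ and, by the uniqueness already established, must coincide with it.

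The step I expect to be the main obstacle is the Fredholm-theoretic core rather than any isolated computation. In contrast to the Volterra equations \eqref{15}, whose Neumann series converge unconditionally, here one must show that the exceptional set $\mathcal N$ is genuinely discrete, control the resolvent uniformly up to the boundary $\partial D_n$ — where two of the ${\rm{Re}}\,l_i$ coincide and the kernel is only marginally bounded — and secure the $x$-uniformity of every estimate in the presence of the singularity at $\lambda=0$ that forces the excision of the origin. Adapting the techniques developed for the good Boussinesq and sine-Gordon equations in \cite{good-boussinesq,lenells-sG} to the present $\mathbb Z_3$-symmetric third-order setting is where the real work lies.
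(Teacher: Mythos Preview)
Your proposal is correct and follows essentially the same approach as the paper. The only notable difference is in presentation of the Fredholm-theoretic core: where you invoke the analytic Fredholm theorem for a compact analytic family $K_\lambda$, the paper writes the classical Fredholm determinant series $f(\lambda)=\sum_m\frac{(-1)^m}{m!}\int K^{(m)}\,dx_1\cdots dx_m$ explicitly and bounds it via Hadamard's inequality $|K^{(m)}|\le m^{m/2}\prod b(y_j)$, thereby showing directly that $f$ is holomorphic on $D_1\setminus\{0\}$, continuous to the boundary, and bounded at infinity (since $L_1=\mathcal{O}(1)$ there), so that the zero set in $D_1$ is finite; your abstract route and the paper's explicit one are two standard packagings of the same argument, and your treatment of items (3)--(6), which the paper leaves implicit, is accurate.
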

	
	\begin{proof}
		Without loss of generality, we will only prove for the first column of matrix $M_1(x,\lambda)$, and the proofs for the other cases are  analogous. Let $m_i(x,\lambda)=(M_1)_{i1}(x,\lambda)$, and write
		\begin{equation*}
			m_i(x,\lambda)=\delta_{i1}+\int_{\mathbb{R}}\sum_{j=1}^{3}K_{ij}(x,y,\lambda)m_j(y,\lambda)\rd y,\quad x\in\mathbb{R},\lambda\in\bar{D}_1\setminus\{0\},i=1,2,3,
		\end{equation*}
		where the kernel $K$ is defined by
		\begin{equation*}
			K_{ij}(x,y,\lambda)=\left\{
				\begin{aligned}
					&H(x-y){\rm{e}}^{(x-y)(l_i-l_1)}(L_1(y,\lambda))_{ij},\quad &&{\rm{if}}\,\gamma_{i1}^1=(-\infty,x),\\
					&-H(y-x){\rm{e}}^{(x-y)(l_i-l_1)}(L_1(y,\lambda))_{ij},\quad &&{\rm{if}}\,\gamma_{i1}^1=(\infty,x),
				\end{aligned}
				\right.
		\end{equation*}
		and $H(x)=1$ for $x>0$, $H(x)=0$ for $x\le0$. Fix $\epsilon>0$ small, and let $\bar{D}_1^\epsilon=\bar{D}_1\setminus\{\left|\lambda\right|<\epsilon\}$. So there is a Schwartz function $b(x)$ that makes
		\begin{equation*}
			\left|L_1(x,\lambda)\right|\le b(x),\quad x\in \mathbb{R},\lambda \in \bar{D}_1^\epsilon.
		\end{equation*}
		
		 Additionally, the Fredholm determinant corresponding to the first column is
		\begin{equation*}
			f(\lambda)=\sum_{m=0}^{\infty}\frac{(-1)^m}{m!}\sum_{i_1,i_2,\cdots,i_m=1}^{3}\int_{\mathbb{R}^m}K^{(m)}\bigg(
			\begin{aligned}
				x_1,i_1,x_2,i_2,\cdots,x_m,i_m\\
				x_1,i_1,x_2,i_2,\cdots,x_m,i_m
			\end{aligned};
			\lambda
			\bigg)\rd x_1 \rd x_2 \cdots \rd x_m,
		\end{equation*}
		where
		\begin{equation*}
			K^{(m)}\bigg(
			\begin{aligned}
				x_1,i_1,x_2,i_2,\cdots,x_m,i_m\\
				y_1,i_1^{\prime},y_2,i_2^{\prime},\cdots,y_m,i_m^{\prime}
			\end{aligned};
			\lambda
			\bigg)=\det
			\begin{pmatrix}
				K(x_1,y_1,\lambda)_{i_1i_1^{\prime}} & \cdots & K(x_1,y_m,\lambda)_{i_1i_m^{\prime}}\\
				\vdots &  &\vdots\\
				K(x_m,y_1,\lambda)_{i_mi_1^{\prime}} & \cdots &K(x_m,y_m,\lambda)_{i_mi_m^{\prime}}
			\end{pmatrix}.
		\end{equation*}
		
		Using the Hadamard's inequality,  we immediately get
		\begin{equation*}
			\left|K^{(m)}\right|\le m^{\frac{m}{2}}\prod_{j=1}^{m}b(y_j).
		\end{equation*}
		Hence, the Fredholm determinant $f(\lambda)$, which is related to the first column of \eqref{23}, is a  holomorphic function for $\lambda \in D_1\setminus\tilde{\mathcal{N}}$ and it remains continuous up to the boundary of $D_1\setminus\tilde{\mathcal{N}}$. Moreover, given that the potential matrix $L_1\rightarrow\mathcal{O}(1)$ nonvanishing as $\lambda\rightarrow\infty$, the determinant $f$ also remains bounded. This implies that the number of zeros for $f(\lambda)$ within the region $D_1$ is finite.
	\end{proof}
	
	\begin{lem}
		Suppose $u_0(x), v_0(x)\in \mathcal{S}(\mathbb{R})$ with compact support. Then
		\begin{equation}\label{24}
			\begin{aligned}
				M_n(x,\lambda)&=X_-(x,\lambda){\rm{e}}^{x\widehat{\mathfrak{L}(\lambda)}}S_n(\lambda)\\
				&=X_+(x,\lambda){\rm{e}}^{x\widehat{\mathfrak{L}(\lambda)}}T_n(\lambda), \quad n=1,2,\cdots,6,
			\end{aligned}
		\end{equation}
		where $S_n(\lambda)$ and $T_n(\lambda)$ are given in terms of the entries and the $(ij)$-th minor $m_{ij}(s)$ of the matrix $s(\lambda)$, which are expressed by
		\begin{equation*}\begin{aligned}
			S_1(\lambda) &=\begin{pmatrix}
				s_{11}	&	0	&0\\
				s_{21}	&\frac{m_{33}(s)}{s_{11}}	&0\\
				s_{31}	&\frac{m_{23}(s)}{s_{11}}&\frac{1}{m_{33}(s)}
			\end{pmatrix},&&
			S_2(\lambda)=\begin{pmatrix}
				s_{11}	&0	&0\\
				s_{21}	&\frac{1}{m_{22}(s)}	&\frac{m_{32}(s)}{s_{11}}\\
				s_{31}	&0	&\frac{m_{22}(s)}{s_{11}}
			\end{pmatrix},\\
			S_3(\lambda) &=\begin{pmatrix}
				\frac{m_{22}(s)}{s_{33}}	&	0	&s_{13}\\
				\frac{m_{12}(s)}{s_{33}}	&\frac{1}{m_{22}(s)}	&s_{23}\\
				0	&0&s_{33}
			\end{pmatrix},&&
			S_4(\lambda) =\begin{pmatrix}
				\frac{1}{m_{11}(s)}	&\frac{m_{21}(s)}{s_{33}}		&s_{13}\\
				0		&\frac{m_{11}(s)}{s_{33}}	&s_{23}\\
				0	&0&s_{33}
			\end{pmatrix},\\
			S_5(\lambda) &=\begin{pmatrix}
				\frac{1}{m_{11}(s)}	&	s_{12}	&-\frac{m_{31}(s)}{s_{22}}\\
				0	&s_{22}	&0\\
				0	&s_{32}&\frac{m_{11}(s)}{s_{22}}
			\end{pmatrix},&&
			S_6(\lambda) =\begin{pmatrix}
				\frac{m_{33}(s)}{s_{22}}	&s_{12}	&0\\
				0	&s_{22}	&0\\
				-\frac{m_{13}(s)}{s_{22}}	&s_{32}	&\frac{1}{m_{33}(s)}
			\end{pmatrix},
		\end{aligned}\end{equation*}
		and
			\begin{equation*}\begin{aligned}
			T_1(\lambda) &=\begin{pmatrix}
				1	&	-\frac{s_{12}}{s_{11}}	&\frac{m_{31}(s)}{m_{33}(s)}\\
			0	&1	&-\frac{m_{32}(s)}{m_{33}(s)}\\
			0	&0&1
			\end{pmatrix},&&
		T_2(\lambda)=\begin{pmatrix}
				1	&-\frac{m_{21}(s)}{m_{22}(s)}	&-\frac{s_{13}}{s_{11}}\\
				0	&1	&0\\
				0	&-\frac{m_{23}(s)}{m_{22}(s)}	&1
			\end{pmatrix},\\
			T_3(\lambda) &=\begin{pmatrix}
				1	&	-\frac{m_{21}(s)}{m_{22}(s)}	&0\\
				0	&1	&0\\
				-\frac{s_{31}}{s_{33}}	&-\frac{m_{23}(s)}{m_{22}(s)}  &1
			\end{pmatrix},&&
			T_4(\lambda) =\begin{pmatrix}
				1	&0		&0\\
				-\frac{m_{12}(s)}{m_{11}(s)}		&1	&0\\
				\frac{m_{13}(s)}{m_{11}(s)}	&-\frac{s_{32}}{s_{33}}&1
			\end{pmatrix},\\
			T_5(\lambda) &=\begin{pmatrix}
				1	&0	&0\\
				-\frac{m_{12}(s)}{m_{11}(s)}	&1	&-\frac{s_{23}}{s_{22}}\\
				\frac{m_{13}(s)}{m_{11}(s)}	&0&1
			\end{pmatrix},&&
			T_6(\lambda) =\begin{pmatrix}
				1	&0	&\frac{m_{31}(s)}{m_{33}(s)}\\
				-\frac{s_{21}}{s_{22}}	&1	&-\frac{m_{32}(s)}{m_{33}(s)}\\
				0	&0	&1
			\end{pmatrix}.
		\end{aligned}\end{equation*}
	\end{lem}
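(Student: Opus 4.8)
The plan is to use that $M_n$, $X_+$ and $X_-$ all solve the same linear $x$-equation. Writing $\check X = X\,{\rm e}^{\mathfrak L x}$ at $t=0$, each of $\check M_n$, $\check X_+$, $\check X_-$ satisfies the single first-order system $\check X_x = L\check X$, and any two fundamental matrix solutions of it differ by right multiplication by an $x$-independent matrix. The compact-support hypothesis guarantees $\det X_\pm = 1$, so $X_\pm$ are genuine fundamental solutions; hence there exist matrices $S_n(\lambda)$ and $T_n(\lambda)$, independent of $x$, with $\check M_n = \check X_- S_n = \check X_+ T_n$, and rewriting these through the adjoint action ${\rm e}^{x\widehat{\mathfrak L}}$ gives exactly \eqref{24}. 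Feeding in the global scattering relation \eqref{20}, which in the present notation reads $\check X_+ = \check X_- s$, forces $S_n = sT_n$, hence $T_n = s^{-1}S_n$; and since $\det s = \det X_-^{-1}\det X_+ = 1$, the inverse $s^{-1}$ is the transposed cofactor matrix, whose entries are, up to sign, the minors $m_{ij}(s)$. This is the structural reason that the $T_n$ (and, symmetrically, parts of the $S_n$) are expressed through minors.

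To obtain the explicit entries I would work region by region, using that the contours $\gamma^n_{ij}$ in \eqref{23} are dictated entirely by the ordering of $\{\operatorname{Re} l_i(\lambda)\}_{i=1}^3$ on $D_n$. In $D_1$ one checks $\operatorname{Re} l_3 > \operatorname{Re} l_2 > \operatorname{Re} l_1$, so that the first column of $M_1$ is built entirely from $+\infty$ and coincides with the first column of $X_+$, the third column is essentially retarded and is a scalar multiple of the third column of $X_-$, while the second column is genuinely mixed. Taking the limits $x\to\pm\infty$ in \eqref{23}, the contour integrals along the advanced (resp. retarded) pieces vanish at $+\infty$ (resp. $-\infty$), while $X_\pm\to I$ there; matching with the column decompositions coming from $\check M_1 = \check X_+ T_1$ and $\check M_1 = \check X_- S_1$ identifies each purely advanced column of $M_1$ with a column of $X_+$ and each purely retarded column with a column of $X_-$. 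Through $\check X_+ = \check X_- s$ this gives the corresponding $T_1$-columns as standard basis vectors and the matching $S_1$-columns as columns of $s$, which is how the entries $s_{11}, s_{21}, s_{31}$ enter $S_1$.

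The genuinely mixed columns are the crux of the argument: such a column of $M_n$ is a new bounded Beals--Coifman solution that is neither a column of $X_+$ nor of $X_-$. For these I would impose boundedness together with $\det M_n = 1$ and solve the resulting small linear system by Cramer's rule; this is precisely the step that turns products and quotients of the $s_{ij}$ into minor ratios such as $m_{33}(s)/s_{11}$ and $1/m_{33}(s)$ appearing in $S_1,\dots,S_6$ and $T_1,\dots,T_6$, and the identities $\det S_n = \det T_n = 1$ serve as a running consistency check. Finally, instead of repeating the computation in all six sectors, I would propagate the $D_1$ formulas using the $\mathbb Z_3$ and $\mathbb Z_2$ symmetries of $M$ together with the induced symmetries $s(\lambda) = \tilde\sigma_1^{-1}s(\omega\lambda)\tilde\sigma_1 = \tilde\sigma_2\overline{s(\overline\lambda)}\tilde\sigma_2^{-1}$, since conjugation by $\tilde\sigma_1$ cyclically permutes the three sheets and sends $D_n$ to the region obtained by rotation through $\omega$. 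The main obstacle I anticipate is exactly the bookkeeping for these mixed columns: fixing the signs and selecting the correct minors in each sector, and verifying that the separate column-by-column identifications assemble into a single $x$-independent matrix compatible with $S_n = sT_n$.
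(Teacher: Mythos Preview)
Your proposal is correct and follows essentially the same route as the paper. The paper's proof is extremely terse: it observes that $S_n(\lambda)=\lim_{x\to-\infty}e^{-x\widehat{\mathfrak L(\lambda)}}M_n(x,\lambda)$ and $T_n(\lambda)=\lim_{x\to+\infty}e^{-x\widehat{\mathfrak L(\lambda)}}M_n(x,\lambda)$, derives $s(\lambda)T_n(\lambda)=S_n(\lambda)$ from \eqref{20}, and then simply asserts that this constitutes a matrix factorization problem that can be uniquely solved. Your column-by-column analysis via the ordering of $\operatorname{Re}l_i$ on each $D_n$, together with the boundedness/determinant constraints for the mixed columns, is precisely the mechanism that makes this factorization unique and produces the explicit minor ratios; you are spelling out what the paper leaves implicit. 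The symmetry shortcut you propose for propagating the $D_1$ formulas to the other sectors is a reasonable labor-saving device not mentioned in the paper's proof, but it is consistent with the symmetry properties already established for $M$ and $s$.
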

	\begin{proof}
		Derived from the equation \eqref{24}, we obtain
		\begin{equation*}
			\left\{
				\begin{aligned}
					S_n(\lambda)=& \lim_{x\rightarrow-\infty}{\rm{e}}^{-x\widehat{\mathfrak{L}(\lambda)}}M_n(x,\lambda),\\
					T_n(\lambda)=& \lim_{x\rightarrow\infty}{\rm{e}}^{-x\widehat{\mathfrak{L}(\lambda)}}M_n(x,\lambda),
				\end{aligned}\right.
				\quad \lambda\in \bar{D}_n\setminus\tilde{\mathcal{N}},
		\end{equation*}
		and recall that $s(k)$ is defined by equation \eqref{21} for all $\lambda\in \mathbb{C}\setminus\{0\}$ for compactly supported data, so we have
		\begin{equation*}
			s(\lambda)T_n(\lambda)=S_n(\lambda).
		\end{equation*}
	Given $s(\lambda)$, the above equation constitutes a matrix factorization problem which can be uniquely solved for $S_n(\lambda)$ and $T_n(\lambda)$.
	\end{proof}
	
	\begin{lem}
		Assume $u_0(x)$, $v_0(x)\in \mathcal{S}(\mathbb{R})$, and let $\{s(\lambda),M_n(x,\lambda)\}$ be the spectral functions and eigenfunctions associated with $(u_0(x), v_0(x))$. More importantly, $u_0^{(i)}(x)$, $v_0^{(i)}(x)\in C^{\infty}_0(\mathbb{R})$, which uniformly converge to $u_0(x)$, $v_0(x)$, respectively. Let $\{s^{(i)}(\lambda),M^{(i)}_n(x,\lambda)\}$ be the spectral functions and eigenfunctions associated with $(u_0^{(i)}(x), v_0^{(i)}(x))$. Then it follows
		\begin{equation*}
			\begin{aligned}
				&\lim_{i\rightarrow\infty}s^{(i)}(\lambda)=s(\lambda),\quad 	\lambda\in	\left(\begin{array}{ccc}
					\omega^2\bar{S} & \mathbb{R}_+ & \omega\mathbb{R}_+
					\\
					\mathbb{R}_+ & \omega\bar{S} & 	\omega^2 \mathbb{R}_+
					\\
					\omega \mathbb{R}_+ & \omega^2 \mathbb{R}_+ &\bar{S}
				\end{array}\right)
				\backslash\{0\},\\
				&\lim_{i\rightarrow\infty}(s^A)^{(i)}(\lambda)=s^A(\lambda),\quad \lambda \in\left(\begin{array}{ccc}
					-\omega^2\bar{S} & \mathbb{R}_- & \omega\mathbb{R}_-
					\\
					\mathbb{R}_- & -\omega\bar{S} & 	\omega^2 \mathbb{R}_-
					\\
					\omega \mathbb{R}_- & \omega^2 \mathbb{R}_- &-\bar{S}
				\end{array}\right)
				\backslash\{0\},\\
				&\lim_{i\rightarrow\infty}X_+^{(i)}(x,\lambda)=X_+(x,\lambda),\quad x\in \mathbb{R}, \lambda\in (\omega^2 \bar{S},\omega \bar{S},\bar{S})\setminus\{0\},\\
				&\lim_{i\rightarrow\infty}X_-^{(i)}(x,\lambda)=X_-(x,\lambda),\quad x\in \mathbb{R}, \lambda\in (-\omega^2 \bar{S},-\omega \bar{S},-\bar{S})\setminus\{0\},\\
				&\lim_{i\rightarrow\infty}M_n^{(i)}(x,\lambda)=M_n(x,\lambda),\quad x\in \mathbb{R}, \lambda\in \bar{D}_n\setminus\tilde{\mathcal{N}},n=1,2,\cdots,6.
				\end{aligned}
		\end{equation*}
		
	\end{lem}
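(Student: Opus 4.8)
The plan is to exploit that every object named in the statement is the solution of a linear integral equation whose kernel depends on the initial data only through the potential matrices $L_1$ (and, via the associated gauge and transpose constructions, $\widehat{L_1}$ and $L_1^{\rt}$), and that these potentials are built from $u_0,v_0$ and their derivatives through the entire functions ${\rm e}^{2u}$, ${\rm e}^{-u\pm 3v}$ together with the linear terms in $N$. I interpret the stated convergence $u_0^{(i)}\to u_0$, $v_0^{(i)}\to v_0$ in the Schwartz topology (equivalently, uniform convergence together with all derivatives and uniform Schwartz bounds), which is the natural and necessary sense in which $C_0^{\infty}(\mathbb{R})$ data approximate Schwartz data. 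The first step is then to deduce that $L_1^{(i)}\to L_1$ in $\mathcal{S}(\mathbb{R})$, uniformly for $\lambda$ in compact subsets of $\mathbb{C}\setminus\{0\}$, and in particular that all the $L_1^{(i)}$ are dominated by a single Schwartz function $b(x)$ independent of $i$; the same holds for $\widehat{L_1}^{(i)}$ and $(L_1^{\rt})^{(i)}$.

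Next I would treat the Jost functions. Subtracting the Volterra equations \eqref{15} for $X_\pm^{(i)}$ and $X_\pm$ shows that the difference solves the same Volterra equation with inhomogeneous term $\int {\rm e}^{(x-y)\widehat{\mathfrak{L}(\lambda)}}[(L_1^{(i)}-L_1)X_\pm^{(i)}]\rd y$. Using the uniform bound on $X_\pm^{(i)}$ furnished by the earlier proposition on the basic properties of the eigenfunctions (a common dominating function $f_\pm$), together with the standard Neumann-series estimate for Volterra operators, I would conclude that $\lvert X_\pm^{(i)}-X_\pm\rvert\to 0$ uniformly for $x\in\mathbb{R}$ and for $\lambda$ in compact subsets of the relevant sectors away from the origin. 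The convergence of the scattering matrices then follows at once: inserting $X_+^{(i)}\to X_+$ and $L_1^{(i)}\to L_1$ into the integral representations \eqref{slambda} and \eqref{sAlambda} and invoking dominated convergence with dominating function $b$ yields $s^{(i)}\to s$ and $(s^A)^{(i)}\to s^A$ pointwise on their respective domains.

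For the piecewise-analytic functions $M_n$ the argument is the same in spirit but must be carried out within Fredholm theory, since \eqref{23} is a Fredholm rather than a Volterra equation. Away from the singular set $\tilde{\mathcal{N}}$ the Fredholm determinant $f(\lambda)$ is nonzero, so the resolvent of the integral operator exists and depends continuously on the kernel in Hilbert--Schmidt norm. Because the kernels converge uniformly on compact subsets of $\bar D_n\setminus\tilde{\mathcal{N}}$ (by the first step, via the Hadamard bound $\lvert K^{(m)}\rvert\le m^{m/2}\prod_{j}b(y_j)$ already established), the Fredholm determinants $f^{(i)}(\lambda)$ converge to $f(\lambda)$ locally uniformly. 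Continuity of the resolvent in the kernel then delivers $M_n^{(i)}\to M_n$ for $\lambda\in\bar D_n\setminus\tilde{\mathcal{N}}$.

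The main obstacle is precisely this last point. Unlike the Volterra equations, the equations \eqref{23} may fail to be solvable on the moving zero sets $\mathcal{N}^{(i)}$ attached to the approximating data, so one must rule out that these approximating zeros accumulate at a regular point $\lambda\in\bar D_n\setminus\tilde{\mathcal{N}}$. Establishing the locally uniform convergence $f^{(i)}\to f$ of the Fredholm determinants (which, for compactly supported data, are entire in $\lambda$) and then applying Hurwitz's theorem to conclude that $f^{(i)}(\lambda)$ stays bounded away from zero for large $i$ on each fixed regular $\lambda$, is the technical heart of the proof. Once this uniform domination by $b(x)$ and $f_\pm(x)$, and the non-accumulation of zeros, are in place, all the passages to the limit are routine.
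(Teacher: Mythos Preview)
The paper states this lemma without proof, so there is no argument to compare against. Your proposal is the standard and correct approach: it follows the same pattern used in the cited references \cite{good-boussinesq,lenells-sG}, namely uniform Schwartz convergence of the potentials, a Volterra--Neumann stability estimate for the Jost functions, dominated convergence in the integral representations \eqref{slambda} and \eqref{sAlambda} for the scattering matrices, and continuity of the Fredholm resolvent in the kernel for the $M_n$. Your identification of the genuine technical point---ruling out accumulation of the approximating Fredholm zeros $\mathcal{N}^{(i)}$ at a regular $\lambda$ via locally uniform convergence of the determinants $f^{(i)}\to f$ and Hurwitz's theorem---is exactly right and is the one step that goes beyond routine dominated convergence.
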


	\begin{lem}
		$($Jump condition for $M(x,\lambda)$$)$ Assume $u_0(x)$, $v_0(x)\in \mathcal{S}(\mathbb{R})$, $M(x,\lambda)$ satisfies the condition
		\begin{equation*}
			M_+(x,\lambda)=M_-(x,\lambda)v(x,0,\lambda),\quad \lambda \in \Sigma\setminus\tilde{\mathcal{N}},
		\end{equation*}
		where $M_+(x,\lambda)$ and $M_-(x,\lambda)$ indicate the limit that $\lambda$ approaches $\Sigma$ from the left and right. If $\lambda\in {\rm{e}}^{(j-1)\pi  i/3}\mathbb{R}_+$ for $j=1,2,\cdots,6$, then $v(x,0,\lambda)=v_j(x,0,\lambda)$, and $v_j(x,0,\lambda)={\rm{e}}^{x\widehat{\mathfrak{L}(\lambda)}}J_j(\lambda)$, where $J_j=J_j(\lambda)~(j=1,2,\cdots,6)$ satisfy
		\begin{equation*}\begin{aligned}
		J_1 &=\begin{pmatrix}
				1	&	-r_1(\lambda)	&0\\
				r_1^*(\lambda)	&1-r_1(\lambda)r_1^*(\lambda)	&0\\
				0	&0&1
			\end{pmatrix},&&
			J_2=\begin{pmatrix}
				1	&0	&0\\
				0	&1-r_2(\omega\lambda)r_2^*(\omega\lambda)	&-r_2^*(\omega\lambda)\\
				0	&r_2(\omega\lambda)	&1
			\end{pmatrix},\\
			J_3 &=\begin{pmatrix}
				1-r_1(\omega^2\lambda)r_1^*(\omega^2\lambda)	&	0	&r_1^*(\omega^2\lambda)\\
				0 & 1& 0\\
				-r_1(\omega^2\lambda)	&0	&1\\
			\end{pmatrix},&&
			J_4 =\begin{pmatrix}
				1--r_2(\lambda)r_2^*(\lambda)	&-r_2^*(\lambda)		&0\\
			r_2(\lambda)	&1	&0\\
				0	&0&1
			\end{pmatrix},\\
			J_5 &=\begin{pmatrix}
				1	&0	&0\\
				0	&1	&-r_1(\omega\lambda)\\
				0	&r_1^*(\omega\lambda)&1-r_1(\omega\lambda)r_1^*(\omega\lambda)
			\end{pmatrix},&&
			J_6 =\begin{pmatrix}
				1	&0	&r_2(\omega^2\lambda)\\
				0	&1	&0\\
				-r^*_2(\omega^2\lambda)	&0	&1-r_2(\omega^2\lambda)r^*_2(\omega^2\lambda)
			\end{pmatrix},
		\end{aligned}\end{equation*}
		where $r_1(\lambda):=\frac{s_{12}(\lambda)}{s_{11}(\lambda)}$ and $r_2(\lambda):=\frac{s_{12}^A(\lambda)}{s_{11}^A(\lambda)}$.
		
	\end{lem}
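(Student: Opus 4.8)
The plan is to compute each of the six jumps directly from the factorizations of $M_n$ established in the preceding lemma, and then to reduce the resulting minor-valued entries to the two reflection coefficients by means of the cofactor identity and the two symmetries of $s(\lambda)$. First I would reduce to the case of compactly supported initial data $u_0,v_0\in C_0^\infty(\mathbb{R})$, for which the eigenfunctions $X_\pm(x,\lambda)$ are analytic on all of $\mathbb{C}\setminus\{0\}$ and $\det X_\pm=1$; the general case $u_0,v_0\in\mathcal{S}(\mathbb{R})$ then follows by the approximation lemma already proved (the convergences $s^{(i)}\to s$ and $M_n^{(i)}\to M_n$), together with the definitions $r_1=s_{12}/s_{11}$ and $r_2=s^A_{12}/s^A_{11}$.

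On each open ray $\Sigma_j={\rm{e}}^{(j-1)\pi i/3}\mathbb{R}_+$ the two adjacent regions $D_n$ (minus side) and $D_{n'}$ (plus side) share the boundary $\Sigma_j$, so that $M_-=M_n$ and $M_+=M_{n'}$. Using the factorization $M_n=X_-{\rm{e}}^{x\widehat{\mathfrak{L}(\lambda)}}S_n$ with the single globally analytic $X_-$ on both sides of the ray, and the fact that the conjugation operator ${\rm{e}}^{x\widehat{\mathfrak{L}}}$ is multiplicative, I obtain
\begin{equation*}
v_j=M_-^{-1}M_+={\rm{e}}^{x\widehat{\mathfrak{L}(\lambda)}}\big(S_n^{-1}S_{n'}\big),
\end{equation*}
so that $J_j=S_n^{-1}S_{n'}$ (and, where it is more convenient, the same argument with $X_+$ gives $J_j=T_n^{-1}T_{n'}$). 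This is already the asserted structure $v_j={\rm{e}}^{x\widehat{\mathfrak{L}(\lambda)}}J_j$; at $t=0$ the operator ${\rm{e}}^{x\widehat{\mathfrak{L}}}$ produces exactly the factors ${\rm{e}}^{(l_i-l_j)x}$. That each $J_j$ is unimodular is a built-in consistency check, since $\det S_n=\det s=1$.

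The core of the argument is the explicit evaluation and simplification of these products. For $\Sigma_1=\mathbb{R}_+$, separating $D_1$ and $D_6$, I would compute $S_6^{-1}S_1$ from the matrices listed in the previous lemma and simplify using $\det s=1$. The essential tool is the cofactor relation $s^A=(s^{-1})^{\rm{T}}$, which gives $(s^A)_{ij}=(-1)^{i+j}m_{ij}(s)$ and thereby converts the minors appearing in $S_n$ into entries of $s^A$; in particular $r_2=s^A_{12}/s^A_{11}=-m_{12}(s)/m_{11}(s)$, while $r_1=s_{12}/s_{11}$ by definition. The conjugated coefficients $r_1^*,r_2^*$ are produced by the $\mathbb{Z}_2$ symmetry $s(\lambda)=\tilde\sigma_2\overline{s(\overline{\lambda})}\tilde\sigma_2^{-1}$, which relates the sub-diagonal entries of $s$ and $s^A$ to the Schwarz reflections $\overline{r_k(\overline{\lambda})}$. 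After these substitutions the product collapses to the stated $2\times2$ block with $1$ in the remaining diagonal position, matching $J_1$ exactly.

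The remaining five jumps I would obtain by the same computation combined with the $\mathbb{Z}_3$ symmetry $s(\lambda)=\tilde\sigma_1^{-1}s(\omega\lambda)\tilde\sigma_1$: the rotation $\lambda\mapsto\omega\lambda$ cyclically permutes rows and columns through $\tilde\sigma_1$ and carries $\Sigma_1\mapsto\Sigma_3\mapsto\Sigma_5$ and $\Sigma_4\mapsto\Sigma_6\mapsto\Sigma_2$, which is precisely why $J_2,\dots,J_6$ carry the shifted arguments $r_1(\omega\lambda),r_2(\omega\lambda),r_1(\omega^2\lambda),r_2(\omega^2\lambda)$ in the displayed positions. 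I expect the main obstacle to be purely bookkeeping rather than conceptual: keeping the orientation convention (which adjacent region supplies $M_+$ and which supplies $M_-$) consistent across all six rays, and carrying the signs from the cofactor identity through the matrix products so that every off-diagonal entry lands with the correct reflection coefficient. Once the orientation assignment and the minor-to-$s^A$ dictionary are fixed, each $J_j$ follows from a short and essentially mechanical simplification.
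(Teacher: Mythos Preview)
Your proposal is correct and follows exactly the natural route: the paper states this lemma without proof, but the intended argument is precisely the one you describe, namely $J_j=S_n^{-1}S_{n'}$ (equivalently $T_n^{-1}T_{n'}$) computed from the preceding factorization lemma, simplified via $\det s=1$, the cofactor identity $s^A=(s^{-1})^{\rm T}$, and the $\mathbb{Z}_2$/$\mathbb{Z}_3$ symmetries of $s$, with the general $\mathcal{S}(\mathbb{R})$ case recovered by the approximation lemma. Your identification of the orientation convention on $\Sigma_1$ (so that $M_-=M_6$, $M_+=M_1$) and of the rotation pattern $\Sigma_1\mapsto\Sigma_3\mapsto\Sigma_5$, $\Sigma_4\mapsto\Sigma_6\mapsto\Sigma_2$ under $\lambda\mapsto\omega\lambda$ is also correct.
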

	
	\begin{lem}
		Assume $u_0(x)$, $v_0(x)\in \mathcal{S}(\mathbb{R})$, then the functions $M_n$ and $M_n^A=(M_n^{-1})^T$ for $n=1,4$ can be given by
		
		\begin{equation*}
			\begin{aligned}
				&M_1=\begin{pmatrix}
					(X_+)_{11} & \frac{(X_-^A)_{31}(X_+^A)_{23}-(X_-^A)_{21}(X_+^A)_{33}}{s_{11}} & \frac{(X_-)_{13}}{s^A_{33}} \\
					(X_+)_{21} & \frac{(X_-^A)_{11}(X_+^A)_{33}-(X_-^A)_{31}(X_+^A)_{13}}{s_{11}} & \frac{(X_-)_{23}}{s^A_{33}} \\
					(X_+)_{31} & \frac{(X_-^A)_{21}(X_+^A)_{13}-(X_-^A)_{11}(X_+^A)_{23}}{s_{11}} & \frac{(X_-)_{33}}{s^A_{33}} \\
				\end{pmatrix},\\
					\end{aligned}
			\end{equation*}
			\begin{equation*}
				\begin{aligned}
				&M_1^A=\begin{pmatrix}
					\frac{(X_-^A)_{11}}{s_{11}} & \frac{(X_+)_{31}(X_-)_{23}-(X_+)_{21}(X_-)_{33}}{s^A_{33}} & (X_+^A)_{13} \\
					\frac{(X_-^A)_{21}}{s_{11}} & \frac{(X_+)_{11}(X_-)_{33}-(X_+)_{31}(X_-)_{13}}{s^A_{33}} & (X_+^A)_{23} \\
					\frac{(X_-^A)_{31}}{s_{11}} & \frac{(X_+)_{21}(X_-)_{13}-(X_+)_{11}(X_-)_{23}}{s^A_{33}} & (X_+^A)_{33} \\
				\end{pmatrix},\\
					\end{aligned}
			\end{equation*}
		\begin{equation*}
		\begin{aligned}
				&M_4=\begin{pmatrix}
					\frac{(X_-)_{11}}{s^A_{11}} & \frac{(X_-^A)_{23}(X^A_+)_{31}-(X^A_+)_{21}(X^A_-)_{33}}{s_{33}} & (X_+)_{13} \\
						\frac{(X_-)_{21}}{s^A_{11}} & \frac{(X_+^A)_{11}(X^A_-)_{33}-(X^A_-)_{13}(X^A_+)_{31}}{s_{33}} & (X_+)_{23} \\
					\frac{(X_-)_{31}}{s^A_{11}} & \frac{(X_+^A)_{21}(X^A_-)_{13}-(X^A_+)_{11}(X^A_-)_{23}}{s_{33}} & (X_+)_{33} \\
					\end{pmatrix},\\
						\end{aligned}
				\end{equation*}
			\begin{equation*}
			\begin{aligned}
				&M_4^A=\begin{pmatrix}
				(X_+^A)_{11} & \frac{(X_+)_{23}(X_-)_{31}-(X_-)_{21}(X_+)_{33}}{s_{11}^A} & \frac{(X_-^A)_{13}}{s_{33}} \\
				(X_+^A)_{21} & \frac{(X_-)_{11}(X_+)_{33}-(X_+)_{13}(X_-)_{31}}{s_{11}^A} & \frac{(X_-^A)_{23}}{s_{33}} \\
				(X_+^A)_{31} & \frac{(X_-)_{21}(X_+)_{13}-(X_-)_{11}(X_+)_{23}}{s_{11}^A} &\frac{ (X_-^A)_{33}}{s_{33}} \\
				\end{pmatrix}.\\
			\end{aligned}
		\end{equation*}
	\end{lem}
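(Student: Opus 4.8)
The plan is to build everything on the two global factorizations recorded in \eqref{24}, namely $M_n = X_- e^{x\widehat{\mathfrak{L}}} S_n = X_+ e^{x\widehat{\mathfrak{L}}} T_n$, together with their cofactor counterparts. First I would argue under the temporary assumption that $u_0,v_0$ have compact support, so that $s(\lambda)$ is defined on all of $\mathbb{C}\setminus\{0\}$, $\det X_\pm=\det X_\pm^A=1$ (whence $\det s=1$ by taking determinants in \eqref{20}), and \eqref{24} is available; the passage to general $u_0,v_0\in\mathcal{S}(\mathbb{R})$ is then obtained at the end from the approximation lemma (with $u_0^{(i)},v_0^{(i)}\in C_0^\infty$) and the stated continuity of $s$, $X_\pm$ and $M_n$ in the data. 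Writing $\Psi_\pm=X_\pm e^{x\mathfrak{L}}$ and $\Phi_n=M_n e^{x\mathfrak{L}}$, the operator relations in \eqref{24} become genuine matrix products $\Phi_n=\Psi_- S_n=\Psi_+ T_n$, so each column of $\Phi_n$ is a fixed linear combination of columns of $\Psi_\pm$ dictated by $S_n$ or $T_n$, and $M_n=\Phi_n e^{-x\mathfrak{L}}$ restores the correct exponential weight column by column.

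The ``pure'' columns then come out immediately. Since the first column of $T_1$ is $(1,0,0)^{\rt}$, column one of $\Phi_1$ equals column one of $\Psi_+$ and the diagonal exponential cancels, giving $(M_1)_{i1}=(X_+)_{i1}$; likewise the third column of $S_1$ is $(0,0,1/m_{33}(s))^{\rt}$, so $(M_1)_{i3}=(X_-)_{i3}/m_{33}(s)$. Invoking $\det s=1$ and $s^A=(s^{-1})^{\rt}$ one has $(s^A)_{ij}=(-1)^{i+j}m_{ij}(s)$, in particular $m_{33}(s)=s^A_{33}$ and $m_{11}(s)=s^A_{11}$, which converts the minors into the stated normalizations. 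Applying the same reading to the single-index columns of $S_4,T_4$, and to the cofactor factorizations $M_n^A=X_-^A e^{-x\widehat{\mathfrak{L}}}(S_n^{-1})^{\rt}=X_+^A e^{-x\widehat{\mathfrak{L}}}(T_n^{-1})^{\rt}$ (where the triangular $(S_n^{-1})^{\rt}$, $(T_n^{-1})^{\rt}$ again have a single-index column), produces all the pure columns of $M_1^A$, $M_4$ and $M_4^A$, i.e.\ those equal to a single column of $X_\pm$ or $X_\pm^A$ divided by $s_{11}$, $s_{33}$, $s^A_{11}$ or $s^A_{33}$.

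The crux is the middle column, which blends both solution families and so cannot be extracted from a single factorization. Passing to the cofactor eigenfunctions $\Psi_\pm^A=(\Psi_\pm^{-1})^{\rt}$, whose columns solve $\chi_x=-L^{\rt}\chi$, I would recognize that the proposed second column of $M_1$ is precisely $-s_{11}^{-1}$ times the cross product $(\Psi_-^A)^{(1)}\times(\Psi_+^A)^{(3)}$, read off entry by entry from the three stated values of $(M_1)_{i2}$ (the cross-product components match on the nose, including the sign). Because $\mathrm{tr}\,L=0$, the cross product of any two solutions of the cofactor system again solves the original system $\psi_x=L\psi$ (this is the $\wedge^2\mathbb{C}^3\cong(\mathbb{C}^3)^{\ast}$ intertwining specialized to a trace-free generator), so this vector is a legitimate solution in the same channel as the second column of $\Phi_1$. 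It then remains to pin it down by its behavior at infinity: $\Psi_\pm^A\to e^{-x\mathfrak{L}}$ as $x\to\pm\infty$ respectively, the two regimes being linked by $\Psi_+^A=\Psi_-^A s^A$, and matching the resulting limits against the limits of the second column of $\Phi_1$ prescribed by $T_1$ (at $+\infty$) and $S_1$ (at $-\infty$) fixes both the direction of the vector and the scalar normalization $s_{11}^{-1}$; uniqueness of the bounded solution of the associated Volterra equation forces equality. The identical argument, now with cross products of columns of $\Psi_\pm$ themselves (which solve the cofactor system), delivers the middle columns of $M_1^A$, $M_4$ and $M_4^A$.

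The main obstacle I anticipate is exactly this middle column: one must (i) identify the correct pair of columns of $X_\pm^A$ (resp.\ $X_\pm$) whose cross product reproduces the claimed entries, (ii) verify the trace-free intertwining that makes the cross product solve the conjugate system, and (iii) carry out the two-sided asymptotic matching that simultaneously determines the vector and the scalar factor $s_{11}^{-1}$ (resp.\ $s_{33}^{-1}$, $(s^A_{11})^{-1}$, $(s^A_{33})^{-1}$), keeping the minor-versus-cofactor bookkeeping consistent throughout. Everything else reduces to routine reading of the triangular factors $S_n,T_n$ once \eqref{24} and the identity $(s^A)_{ij}=(-1)^{i+j}m_{ij}(s)$ are in hand.
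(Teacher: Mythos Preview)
The paper states this lemma without proof, so there is no argument to compare against. Your plan is correct and would work. The reading of the ``pure'' columns directly from the triangular factors $S_n,T_n$ and $(S_n^{-1})^{\rt},(T_n^{-1})^{\rt}$, together with the identification $m_{33}(s)=s^A_{33}$, $m_{11}(s)=s^A_{11}$ via $\det s=1$, is exactly right.

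For the middle column your cross-product idea is the correct one, but the ``ODE + two-sided asymptotic matching + uniqueness'' route you outline is more delicate than necessary. A cleaner, purely algebraic argument is available once you have already computed the pure columns of $M_1^A$: since $\det M_1=1$, the general identity $A_{\cdot 2}=(A^A)_{\cdot 3}\times(A^A)_{\cdot 1}$ gives
\[
(M_1)_{\cdot 2}=(M_1^A)_{\cdot 3}\times(M_1^A)_{\cdot 1}=(X_+^A)_{\cdot 3}\times\tfrac{1}{s_{11}}(X_-^A)_{\cdot 1},
\]
which is exactly the stated expression, and symmetrically for $M_1^A,M_4,M_4^A$. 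Equivalently, you can verify this identity directly from the factorization by writing $(\Psi_-^A)_{\cdot 1}=\Psi_+^A(s^{\rt})_{\cdot 1}=s_{11}(\Psi_+^A)_{\cdot 1}+s_{12}(\Psi_+^A)_{\cdot 2}+s_{13}(\Psi_+^A)_{\cdot 3}$ and using the bilinearity of the cross product together with $(A^A)_{\cdot 2}\times(A^A)_{\cdot 3}=A_{\cdot 1}$, $(A^A)_{\cdot 3}\times(A^A)_{\cdot 1}=A_{\cdot 2}$; this collapses to $\Psi_+(T_1)_{\cdot 2}$ on the nose. Either way you avoid the asymptotic-matching step entirely, and with it the bookkeeping worries (i)--(iii) that you flagged. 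The compact-support reduction followed by the approximation lemma is the right way to finish.
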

	
	\begin{remark}
		Under the Assumption \ref{assu1}, the function $M(x,t,\lambda)$ can be analytically continued to the zeros of the Fredholm determinant.
		
	\end{remark}
	
	\begin{lem}
		Assume $u_0(x)$, $v_0(x)\in \mathcal{S}(\mathbb{R})$, the function $M(x, \lambda)$ has the same expansion with the function $X_+(x, \lambda)$ as $\lambda\rightarrow0$. More importantly, the leading term in the expansion is $G(x)$.
	\end{lem}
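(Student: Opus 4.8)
The plan is to establish two things: that $M(x,\lambda)$ extends analytically across the apparent singularity at $\lambda=0$ (recall $0\in\tilde{\mathcal N}$, so a priori $M$ need not be defined there), and that its value at the origin is exactly $G(x)$. For the structural part I would use the same gauge device that regularized the Jost functions: writing $M=G\tilde M$, the conjugated matrix $\tilde M=G^{-1}M$ satisfies the nonsingular $x$-equation \eqref{18}, since the passage from \eqref{14} to \eqref{18} uses only the differential equation and not the boundary or normalization data. Because the coefficient $\widehat{L_1}=\widehat{N}^{(x)}+\tfrac{\lambda}{6}\widehat M$ is regular at $\lambda=0$, the factor $\tilde M$ admits a Taylor expansion there, and hence $M=G\,\tilde M(x,0)+\mathcal O(\lambda)$ has an expansion of exactly the same type as the one recorded for $X_+$ in \eqref{19}. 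It then remains to identify the leading coefficient $\tilde M(x,0)$ with the identity, i.e.\ to prove $M(x,0)=G(x)$.

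For the actual identification I would argue sectorially and rely on the explicit column representation of $M_1$ (and $M_4$) in terms of the Jost solutions from the preceding lemma, letting $\lambda\to0$ inside each entry; this route simultaneously yields regularity, because it exhibits $M_1$ as a ratio of functions that are analytic at $\lambda=0$ over denominators that do not vanish there. The inputs are the limits already established: $X_\pm(x,\lambda)=G(x)+\mathcal O(\lambda)$ and $X_\pm^A(x,\lambda)=G^A(x)+\mathcal O(\lambda)$ with $G^A=(G^{-1})^{\rm T}$, together with $s(\lambda)=I+\mathcal O(\lambda)$ and $s^A(\lambda)=I+\mathcal O(\lambda)$, so that $s_{11}(0)=s^A_{33}(0)=1$ and no pole appears. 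The first column of $M_1$ is literally the first column of $X_+$, hence tends to the first column of $G$; the third column equals $s^A_{33}{}^{-1}$ times the third column of $X_-$, hence tends to the third column of $G$. The second column is a bilinear combination of the entries of $X^A_\pm$ (for instance $((X_-^A)_{31}(X_+^A)_{23}-(X_-^A)_{21}(X_+^A)_{33})/s_{11}$) which, in the limit, becomes the corresponding $2\times2$ cofactor combination of $G^A$; by the cofactor involution $\mathrm{cof}(G^A)=\det(G^A)\,((G^A)^{-1})^{\rm T}=G$, together with the unit-determinant normalization $\det X_\pm=\det X_\pm^A=1$ carried to the limit, this column also tends to the second column of $G$. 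Collecting the three columns gives $M_1(x,0)=G(x)$.

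I would then propagate this to the remaining sectors. The matrix $M_4$ is treated by the identical computation applied to its explicit formula, or equivalently through the $\mathbb Z_2$ symmetry $M(x,\lambda)=\tilde{\sigma}_2\overline{M(x,\overline{\lambda})}\tilde{\sigma}_2^{-1}$ combined with $\tilde{\sigma}_2\overline{G}\tilde{\sigma}_2=G$. The four sectors $D_2,D_3,D_5,D_6$ then follow from the $\mathbb Z_3$ symmetry $M(x,\lambda)=\tilde{\sigma}_1^{-1}M(x,\omega\lambda)\tilde{\sigma}_1$ together with the invariance $\tilde{\sigma}_1^{-1}G\,\tilde{\sigma}_1=G$ recorded in the Remark. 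Since the six sectorial limits all agree and equal $G(x)$, the piecewise function $M$ has a single well-defined leading coefficient $G(x)$ at $\lambda=0$, whence $M$ is analytic at the origin with the same leading term as $X_+$, which is the assertion.

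I expect the principal obstacle to be the bookkeeping in the second column rather than the two direct column limits: one must verify both that the apparent singularity at $\lambda=0$ is genuinely removable, which relies on $s_{11}(0)=s^A_{33}(0)=1$ keeping the denominators finite, and that the bilinear combination of $X^A_\pm$-entries collapses via the cofactor involution to precisely the right entry of $G$. A secondary point needing care is the justification, for general $u_0,v_0\in\mathcal S(\mathbb R)$, of the explicit formula for $M_1$ in a neighborhood of $\lambda=0$; this I would obtain by first establishing it for compactly supported data, where \eqref{24} and the factorization $s(\lambda)T_n(\lambda)=S_n(\lambda)$ are available, and then passing to the Schwartz class through the approximation lemma.
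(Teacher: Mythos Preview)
Your approach is correct for the core content of the lemma but follows a genuinely different route from the paper. The paper does \emph{not} use the explicit column formulas for $M_1$; instead it sets up a perturbed Fredholm integral equation. Concretely, it writes the truncated expansion $\mathcal{M}_p(x,\lambda)=G(x)+G(x)Y_+^{(1)}\lambda+\cdots+G(x)Y_+^{(p)}\lambda^p$, defines $L_{(p)}=(\partial_x\mathcal{M}_p+\mathcal{M}_p\mathfrak{L})\mathcal{M}_p^{-1}$ and $\Delta=L-L_{(p)}$, and observes that $\mathcal{M}_p^{-1}M_n$ satisfies a Fredholm equation with kernel controlled by $\Delta$. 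Since $|\Delta|\le C\alpha(x)|\lambda|^{p+1}$, Hadamard's inequality makes the Fredholm determinant close to~$1$ and the Fredholm minor small, giving $|M_n-\mathcal{M}_p|\le C|\lambda|^{p+1}$ directly for every $p$. This uniform estimate yields agreement with the $X_+$ expansion \emph{to all orders} in one stroke.

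Your argument, by contrast, works column-by-column via the explicit representation of $M_1$ in terms of $X_\pm,X_\pm^A,s,s^A$. The payoff is transparency: regularity at $\lambda=0$ is immediate (analytic numerators over nonvanishing denominators $s_{11}(0)=s^A_{33}(0)=1$), and the leading term drops out of the cofactor identity $\mathrm{cof}(G^A)=G$ that you correctly isolate. What your route does \emph{not} deliver without substantial extra work is the stronger assertion that the full Taylor coefficients of $M$ coincide with those of $X_+$: pushing the explicit formulas to higher order would require nontrivial identities among $Y_\pm^{(j)}$ and $s^{(j)}$, $(s^A)^{(j)}$, whereas the paper's Fredholm estimate bypasses this entirely. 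Also note that your opening heuristic---that regularity of $\widehat{L_1}$ at $\lambda=0$ already forces $\tilde M=G^{-1}M$ to be analytic there---is not by itself a proof, since $M_n$ is defined by a Fredholm (not Volterra) equation whose kernel is singular at $\lambda=0$; you correctly fall back on the explicit formulas for this, but the first paragraph should be flagged as motivation only. For the purposes of the reconstruction formula~\eqref{7}, only the leading term $G(x)$ is actually used, so your argument suffices for the application.
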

	
	\begin{proof}
		When dealing with the problem that $\lambda=0$ is the singularity of the kernel of the Volterra integral equation \eqref{15}, we introduce the transformation $X=GY$, and expand the eigenfunctions $X_{\pm}(x,\lambda)$ to get the expansion \eqref{19} as $\lambda\rightarrow0$. Denote
		\begin{equation*}
			\mathcal{M}_p(x,\lambda):=G(x)+G(x)Y_+^{(1)}\lambda+\cdots+G(x)Y_+^{(p)}\lambda^p,
		\end{equation*}
		and it is evident that the following can be obtained
		\begin{equation*}
			\left|M(x,\lambda)-\mathcal{M}_p(x,\lambda)\right|\le C\left|\lambda\right|^p,\quad x\in \mathbb{R},\,\lambda\in \mathbb{C}\setminus\Sigma,\,\left|\lambda\right|<\epsilon,
		\end{equation*}
		with $\epsilon<1$ small enough.
		
		Let $D_n^\epsilon:=D_n\cup\{\lambda<\epsilon\}$. The upper bound of the maximum norm for $\mathcal{M}_p$ is less than 1, given that $G$ is a matrix-valued function residing within the Schwartz space. Thus, $\mathcal{M}_p^{-1}(x,\lambda)$ exists and is uniformly bounded for $\lambda\in D_n^\epsilon$.		
		
		Next suppose that $L_{(p)}:=(\partial_x\mathcal{M}_p+\mathcal{M}_p\mathfrak{L})\mathcal{M}_p^{-1}$ and set $\Delta=L-L_{(p)}$. By calculating, we get the differential equation that $\mathcal{M}_p^{-1}M$ satisfies
		\begin{equation*}
			(\mathcal{M}_p^{-1}M)_x=\mathcal{M}_p^{-1}\Delta M+[\mathfrak{L},\mathcal{M}^{-1}_pM],
		\end{equation*}
		that is
		\begin{equation*}
			({\rm{e}}^{-x\hat{\mathfrak{L}}}(\mathcal{M}_p^{-1}M))_x={\rm{e}}^{-x\hat{\mathfrak{L}}}(\mathcal{M}_p^{-1}\Delta M).
		\end{equation*}
		Each entry of $\mathcal{M}_p^{-1}M_n$ satisfies the integral equation
		\begin{equation*}
			(\mathcal{M}_p^{-1}M_n)_{ij}(x,\lambda)=\delta_{ij}+\int_{\gamma^n_{ij}}({\rm{e}}^{(x-y)\hat{\mathfrak{L}}}(\mathcal{M}_p^{-1}\Delta M_n)(y,\lambda))_{ij}\rd y.
		\end{equation*}
		Without loss of generality, consider the first column of $M_1$ and assume $m_i(x,\lambda)=(M_1)_{i1}(x,\lambda)$, and then rewrite the Fredholm integral equation as follows
		\begin{equation*}
		  m_i(x,\lambda)=(\mathcal{M}_p)_{i1}+\int_{\mathbb{R}}\sum_{j=1}^{3}K_{ij}(x,y,\lambda)m_j(y,\lambda)\rd y,
		\end{equation*}
		where
		\begin{equation*}
			K_{ij}(x,y,\lambda)=\sum_{s=1}^{3}(\mathcal{M}_p(x,\lambda))_{is}H_s(x,y){\rm{e}}^{(l_s-l_1)(x-y)}(\mathcal{M}_p^{-1}\Delta)_{sj}(y,\lambda),
		\end{equation*}
		and
		\begin{equation*}
			H_s(x,y)=\left\{
			\begin{aligned}
				&H(x-y), \quad &&{\rm{if}} \quad {\rm{Re}}\,l_i(\lambda)<{\rm{Re}}\,l_1(\lambda),\\
				&-H(y-x), \quad &&{\rm{if}}\quad {\rm{Re}}\,l_i(\lambda)\ge{\rm{Re}}\,l_1(\lambda).
				\end{aligned}
			\right.
		\end{equation*}
		Let us now rewrite $\Delta$ as follows
		\begin{equation*}
			\Delta=(L_1\mathcal{M}_p+[\mathfrak{L},\mathcal{M}_p]-\partial_x\mathcal{M}_p)\mathcal{M}_p^{-1}.
		\end{equation*}
		Since $L_1$ and $G$ are Schwartz class matrix-valued functions and $\mathcal{M}_p^{-1}$ is bounded for $\lambda\in D_n^\epsilon$, there exists a function $\alpha(x)\in \mathcal{S}(\mathbb{R})$ such that
		\begin{equation*}
			\left|\Delta(x,\lambda)\right|\le C\alpha(x)\left|\lambda\right|^{p+1}, \quad x\in \mathbb{R}, \lambda\in D_n^\epsilon,
		\end{equation*}
		and
		\begin{equation*}
			\left|K(x,y,\lambda)\right|\le \alpha(y)\left|\lambda\right|^{P+1}, \quad x,y\in \mathbb{R}, \lambda\in D_1^\epsilon.
		\end{equation*}
		In this case, set $K^{(0)}=1$ and define
		\begin{equation}\label{25}
			K^{(m)}\bigg(
			\begin{aligned}
				x_1,i_1,x_2,i_2,\cdots,x_m,i_m\\
				y_1,i_1^{\prime},y_2,i_2^{\prime},\cdots,y_m,i_m^{\prime}
			\end{aligned};
			\lambda
			\bigg)=\det
			\begin{pmatrix}
				K(x_1,y_1,\lambda)_{i_1i_1^{\prime}} & \cdots & K(x_1,y_m,\lambda)_{i_1i_m^{\prime}}\\
				\vdots &  &\vdots\\
				K(x_m,y_1,\lambda)_{i_mi_1^{\prime}} & \cdots &K(x_m,y_m,\lambda)_{i_mi_m^{\prime}}
			\end{pmatrix}.
		\end{equation}
		
		Similar to the discussion of the basic properties of $M_n$, obtain the following estimator
		\begin{equation*}
			\left|K^{(m)}\right|\le m^{{\frac{m}{2}}}\prod_{j=1}^{m}\alpha(y_j)\left|\lambda\right|^{p+1}.
		\end{equation*}
		Define the Fredholm determinant $f(\lambda)$ and Fredholm minor $F$ with $K(x,y,\lambda)$ given by \eqref{25}
		\begin{equation*}
			f(\lambda)=\sum_{m=0}^{\infty}f^{(m)}(\lambda), \quad \lambda \in D_1^\epsilon,
		\end{equation*}
		\begin{equation*}
			F_{ii^\prime}(x,y,\lambda)=\sum_{m=0}^{\infty}F^{(m)}_{ii^\prime}(x,y,\lambda),\quad x,y\in \mathbb{R}, \lambda\in D_1^\epsilon,
		\end{equation*}
		where
		\begin{equation*}
			f^{(m)}(\lambda)=\frac{(-1)^m}{m!}\sum_{i_1,i_2,\cdots,i_m=1}^{3}\int_{\mathbb{R}^m}K^{(m)}\bigg(
			\begin{aligned}
				x_1,i_1,x_2,i_2,\cdots,x_m,i_m\\
				x_1,i_1,x_2,i_2,\cdots,x_m,i_m
			\end{aligned};
			\lambda
			\bigg)\rd x_1 \rd x_2 \cdots \rd x_m,
		\end{equation*}
		\begin{equation*}
		F^{(m)}_{ii^\prime}(x,y,\lambda)=\frac{(-1)^m}{m!}\sum_{i_1,i_2,\cdots,i_m=1}^{3}\int_{\mathbb{R}^m}K^{(m+1)}\bigg(
			\begin{aligned}
				x,i,x_1,i_1,x_2,i_2,\cdots,x_m,i_m\\
				y,i^\prime,y_1,i_1,y_2,i_2,\cdots,y_m,i_m
			\end{aligned};
			\lambda
			\bigg)\rd x_1 \rd x_2 \cdots \rd x_m.
		\end{equation*}
		Therefore, one has
		\begin{equation*}
			\left|f^{(m)}(\lambda)\right|\le \frac{3^mm^{\frac{m}{2}}\Vert\alpha(y)\Vert^m_{L^1(\mathbb{R})}\left|\lambda\right|^{(p+1)m}}{m!},\quad \lambda \in D_1^\epsilon,m\ge 0,
		\end{equation*}
		\begin{equation*}
			\left|F^{(m)}_{ii^\prime}(x,y,\lambda)\right|\le \frac{3^m(m+1)^{\frac{m+1}{2}}\Vert\alpha(y)\Vert^m_{L^1(\mathbb{R})}\alpha(y)\left|\lambda\right|^{(p+1)(m+1)}}{m!},\quad x,y\in \mathbb{R}, \lambda \in D_1^\epsilon,m\ge 0.
		\end{equation*}
		This promptly suggests that
		\begin{equation*}
			\vert f(\lambda)-1\vert \le C\vert\lambda\vert^{p+1},\quad \lambda \in D_1^\epsilon,
		\end{equation*}
		\begin{equation*}
			\vert F(x,y,\lambda)\vert \le C\alpha(y)\vert\lambda\vert^{p+1},\quad \lambda\in D_1^\epsilon.
		\end{equation*}
		
		Observe that there are no zeros of the Fredholm integral equation for $\lambda\in D_1^\epsilon$. Consequently, based on the theoretical examination of Fredholm integral equations, the following conclusion can be obtained
		\begin{equation*}
			m_i(x,\lambda)=(\mathcal{M}_p)_{i1}(x,\lambda)+\frac{1}{f(\lambda)}\int_{\mathbb{R}}\sum_{j=1}^{3}F_{ij}(x,y,\lambda)(\mathcal{M}_p)_{j1}(y,\lambda) \rd y,\quad x\in \mathbb{R}, \lambda \in D_1^\epsilon,
		\end{equation*}
		and
		\begin{equation*}
			\left|  m_i(x,\lambda)-(\mathcal{M}_p)_{i1}(x,\lambda)\right|  \le C \vert \lambda\vert^{p+1}.
		\end{equation*}
		
	\end{proof}
	
	\subsection{The reconstruction formula }
	\quad\quad
	
	Define the time-dependent eigenfunctions $\{M_n(x,t,\lambda)\}_{n=1}^6$ by substituting $L_1(x,\lambda)$ into the Fredholm integral equations for $L_1(x,t,\lambda)$. Let the piecewise holomorphic function $M(x,t,\lambda)$ be represented as $M(x,t,\lambda)=M_n(x,t,\lambda)$ when $\lambda\in D_n$. Moreover, the function $M(x,t,\lambda)$ defined by the RH problem \ref{rhp} shares an equivalent expansion with $X_+(x,t,\lambda)$, indicating that
	\begin{equation*}\left\{
		\begin{aligned}
			&u(x,t)=\frac{1}{2}\lim_{\lambda\rightarrow0}\ln[(\omega,\omega^2,1)M(x,t,\lambda)]_{13},\\
			&v(x,t)=\frac{1}{6}\lim_{\lambda\rightarrow0}\ln[(\omega,\omega^2,1)M(x,t,\lambda)]_{13}-\frac{1}{3}\lim_{\lambda\rightarrow0}\ln[(\omega^2,\omega,1)M(x,t,\lambda)]_{13}.
		\end{aligned}\right.
	\end{equation*}
	\begin{lem}
		For $\lambda\in D_n$, the function $M_n(x,t,\lambda)$ is smooth for $(x,t)\in \mathbb{R}\times \left[0,T\right)$, and satisfies the Lax pair \eqref{14}.
	\end{lem}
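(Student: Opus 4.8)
The plan is to split the assertion into three parts: smoothness in $(x,t)$, the $x$-part \eqref{14} of the Lax pair, and the (implicit) $t$-part $(M_n)_t-[\mathfrak{A},M_n]=A_1M_n$ of \eqref{13}, with the $t$-part being the substantive content. The time-dependent eigenfunctions $M_n(x,t,\lambda)$ are defined by the Fredholm integral equations \eqref{23}, in which the contours $\gamma_{ij}^n$ depend only on $\lambda$ (through $\mathrm{Re}\,l_i$) while $t$ enters solely as a parameter through the smooth potential $L_1(x,t,\lambda)$. Since $(u,v)$ is assumed to be a (smooth) solution of \eqref{3} on $\mathbb{R}\times[0,T)$, the kernel of \eqref{23} depends smoothly on $t$; away from the zero set $\tilde{\mathcal{N}}$ of the Fredholm determinant the resolvent is bounded and varies smoothly with $t$, so differentiation under the integral sign together with the uniform bounds already established for $M_n$ yields smoothness of $M_n(x,t,\lambda)$ in $(x,t)$. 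Differentiating \eqref{23} in $x$ shows directly that $M_n$ solves the $x$-part \eqref{14}, exactly as in the $t$-independent case.

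The key remaining step is the $t$-equation, and the crucial input is the zero-curvature identity $L_t-A_x+[L,A]=0$, which holds precisely because $(u,v)$ solves \eqref{3}. Passing to $\check{M}_n:=M_n\,{\rm{e}}^{\mathfrak{L}x+\mathfrak{A}t}$, which satisfies $(\check{M}_n)_x=L\check{M}_n$, I set
\[
\check{P}:=(\check{M}_n)_t-A\check{M}_n,
\]
and compute $\partial_x\check{P}$ by cross-differentiation, using $(\check{M}_n)_x=L\check{M}_n$ and replacing $(\check{M}_n)_t$ by $\check{P}+A\check{M}_n$. All terms proportional to $\check{M}_n$ then collect into the factor $L_t-A_x+[L,A]$, which vanishes, leaving
\[
\partial_x\check{P}=L\check{P}.
\]
Translating back through $P:=\check{P}\,{\rm{e}}^{-\mathfrak{L}x-\mathfrak{A}t}=(M_n)_t-[\mathfrak{A},M_n]-A_1M_n$, this reads $P_x-[\mathfrak{L},P]=L_1P$; that is, $P$ obeys the same linear $x$-equation as $M_n$ itself.

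It then remains to pin down $P$ by its behaviour at the endpoints of the contours $\gamma_{ij}^n$. Writing the component equation $\partial_xP_{ij}-(l_i-l_j)P_{ij}=(L_1P)_{ij}$ and integrating from the free endpoint of $\gamma_{ij}^n$, I claim the boundary term vanishes: as $x$ tends to that endpoint one has $(M_n)_{ij}\to\delta_{ij}$, so $(M_{n,t})_{ij}\to0$ and $(a_i-a_j)(M_n)_{ij}\to(a_i-a_j)\delta_{ij}=0$, while $(A_1M_n)_{ij}\to0$ because $A_1\to O$ as $|x|\to\infty$; hence $P_{ij}\to0$. Consequently $P$ satisfies the homogeneous integral equation obtained from \eqref{23} by deleting the inhomogeneous term $\delta_{ij}$, and by uniqueness of solutions for $\lambda\in\bar{D}_n\setminus\tilde{\mathcal{N}}$ (nonvanishing Fredholm determinant) we conclude $P\equiv0$, i.e.\ $M_n$ also solves the $t$-part of \eqref{13}.

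The main obstacle I anticipate is precisely this last step: rigorously justifying the interchange of the $t$-derivative with both the limit $x\to(\text{endpoint})$ and the integral, and confirming that $P$ inherits exactly the vanishing boundary data needed for the free term of its integral equation to be zero. This is where the decay of $A_1$ and the uniform estimates from the proposition on the basic properties of $M_n$ must be combined with care; once the homogeneous integral equation is established, uniqueness from the nonvanishing Fredholm determinant closes the argument at once.
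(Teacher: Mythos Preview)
The paper states this lemma without proof, so there is no explicit argument to compare against; the authors evidently regard the result as routine. Your approach is the standard and correct one: smoothness in $(x,t)$ and the $x$-equation follow directly from differentiating the Fredholm integral equations \eqref{23} with their smooth $t$-dependent kernel, and the $t$-equation is obtained via exactly the cross-differentiation argument you give, using the zero-curvature relation to show that $P=(M_n)_t-[\mathfrak{A},M_n]-A_1M_n$ again solves $P_x-[\mathfrak{L},P]=L_1P$, then checking that $P_{ij}$ vanishes at the free endpoint of $\gamma_{ij}^n$ so that $P$ satisfies the homogeneous version of \eqref{23} and hence is identically zero by nonvanishing of the Fredholm determinant. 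Your entrywise boundary analysis is correct (note that $(A_1M_n)_{ij}\to 0$ at either endpoint since $A_1\to O$ and $M_n$ is bounded, so the mixing of rows causes no trouble), and the caveat you flag about interchanging $\partial_t$ with the $x\to\pm\infty$ limit is the only point requiring care; it is handled by the uniform-in-$t$ decay estimates already recorded for $M_n$ and the Schwartz-class decay of $A_1$.
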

	\begin{lem}
		For $(x,t)\in \mathbb{R}\times \left[0,T\right)$, the time-dependent function $M(x,t,\lambda)$ is a sectionally holomorphic function for $\lambda\in \mathbb{C}\setminus\Sigma$, and satisfies the same jump conditions with function $M(x,\lambda)$.
	\end{lem}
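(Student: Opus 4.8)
The plan is to mirror the $t=0$ analysis, treating $t\in[0,T)$ as a smooth parameter and upgrading the spatial transfer factor ${\rm e}^{x\widehat{\mathfrak{L}(\lambda)}}$ to the full factor ${\rm e}^{x\widehat{\mathfrak{L}(\lambda)}+t\widehat{\mathfrak{A}(\lambda)}}$. First I would establish sectional holomorphicity. The time-dependent eigenfunctions $M_n(x,t,\lambda)$ are defined by the Fredholm integral equations \eqref{23} with $L_1(x,t,\lambda)$ in place of $L_1(x,\lambda)$, and $L_1(x,t,\cdot)$ depends holomorphically on $\lambda\in D_n$ while lying in the Schwartz class in $x$, uniformly on $[0,T)$. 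Hence the Hadamard-type bounds used in the Proposition on the basic properties of $M_n$ carry over verbatim: the associated Fredholm determinant is holomorphic in $\lambda\in D_n$ and continuous up to $\bar{D}_n$, so $M_n(x,t,\cdot)$ is analytic on $D_n\setminus\tilde{\mathcal{N}}$ and continuous on $\bar{D}_n\setminus\tilde{\mathcal{N}}$ for each fixed $(x,t)$. Under Assumption \ref{assu1} the set $\tilde{\mathcal{N}}$ reduces to $\{0\}$, so $M(x,t,\lambda):=M_n(x,t,\lambda)$ for $\lambda\in D_n$ is sectionally holomorphic on $\mathbb{C}\setminus\Sigma$.

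For the jump conditions, the crucial input is that the scattering data carry no genuine time dependence. The time-dependent Jost solutions $X_\pm(x,t,\lambda)$ solve both parts of \eqref{13}, and because $L_1$ and $A_1$ vanish as $|x|\to\infty$ with common asymptotic $t$-generator $\mathfrak{A}$, the scattering relation \eqref{20} upgrades to
\begin{equation*}
X_+(x,t,\lambda)=X_-(x,t,\lambda)\,{\rm e}^{x\widehat{\mathfrak{L}(\lambda)}+t\widehat{\mathfrak{A}(\lambda)}}\,s(\lambda),
\end{equation*}
with the \emph{same} scattering matrix $s(\lambda)$ as at $t=0$. In particular the reflection coefficients $r_1,r_2$ of \eqref{4} are time-independent, and only the transfer factor acquires the extra exponential ${\rm e}^{t\widehat{\mathfrak{A}(\lambda)}}$.

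Then I would derive the jumps exactly as in the $t=0$ lemma. On each ray $\Sigma_j$ I express the boundary values $M_\pm(x,t,\lambda)$ through the factorizations \eqref{24}, now reading $M_n=X_-\,{\rm e}^{x\widehat{\mathfrak{L}}+t\widehat{\mathfrak{A}}}S_n=X_+\,{\rm e}^{x\widehat{\mathfrak{L}}+t\widehat{\mathfrak{A}}}T_n$, and use the continuity of the relevant Jost function ($X_+$ or $X_-$) across $\Sigma_j$ to cancel the $X_\pm$ factor from $v_j=M_-^{-1}M_+$. What survives is the conjugation of a constant matrix $J_j(\lambda)$ — built solely from $S_n,T_n$, hence from $s(\lambda)$ — by ${\rm e}^{x\widehat{\mathfrak{L}(\lambda)}+t\widehat{\mathfrak{A}(\lambda)}}$. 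Since the diagonal generator acts on a matrix unit $E_{ij}$ by ${\rm e}^{\widehat{\mathfrak{L}}x+\widehat{\mathfrak{A}}t}E_{ij}={\rm e}^{(l_i-l_j)x+(a_i-a_j)t}E_{ij}={\rm e}^{\theta_{ij}}E_{ij}$, this reproduces precisely the off-diagonal phases ${\rm e}^{\pm\theta_{ij}}$ appearing in $v_1,\dots,v_6$ of \eqref{6}, with the identical $J_j$ as in the $t=0$ case.

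The main obstacle is the legitimacy of the second step: proving that $s(\lambda)$, and hence $r_1,r_2$, is genuinely independent of $t$. This rests on the compatibility of the Lax pair \eqref{13}, i.e. on $(u,v)$ solving \eqref{3}, together with the uniform decay of $A_1(x,t,\lambda)$ as $|x|\to\infty$; one differentiates the defining integral equations in $t$ and invokes the $t$-part of \eqref{13} to show $\partial_t\!\left({\rm e}^{-t\widehat{\mathfrak{A}}}s\right)=0$. Establishing this decay and the interchange of limits uniformly on $[0,T)$, so that the $t$-evolution preserves the Fredholm structure underlying $M_n$, is the technical heart of the argument; the remaining analyticity and algebraic manipulations are routine extensions of the $t=0$ results.
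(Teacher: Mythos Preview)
The paper states this lemma without proof, so there is no argument to compare against directly. Your outline is the standard and correct approach: sectional holomorphicity follows from the Fredholm theory of \eqref{23} with $t$ carried as a smooth parameter, and the jump computation reduces to the time-independence of $s(\lambda)$ (hence of $r_1,r_2$), after which the only $t$-dependence in $v_j$ enters through the conjugation ${\rm e}^{x\widehat{\mathfrak{L}}+t\widehat{\mathfrak{A}}}$ and produces exactly the phases $\theta_{ij}$ of \eqref{6}.

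One minor slip: with the normalization $X_+=X_-\,{\rm e}^{x\widehat{\mathfrak{L}}+t\widehat{\mathfrak{A}}}s$ that you adopt, the isospectrality computation yields $\partial_t s=0$ directly, not $\partial_t\big({\rm e}^{-t\widehat{\mathfrak{A}}}s\big)=0$. The latter identity would be the relevant one only if the $t$-factor had been left out of the transfer exponential. The substance of your argument is unaffected.
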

	Combining the two Lemmas mentioned above, if $M(x,t,\lambda)$ satisfies the Lax pair \eqref{2} and the jump conditions, then the $M(x,t,\lambda)$ we constructed is the solution to the RH problem \ref{rhp}. In other words, the reconstruction formula \eqref{7} provides the solution to the initial value problem of the two-component nonlinear KG equation \eqref{3}.

	\section{Higher-order asymptotics of the problem \eqref{3}}\label{sec4}
	\ \ \ \
	In the previous section, we have constructed the required RH problem \ref{rhp} and obtained the jump relation \ref{5} satisfied by the eigenfunctions.  To prepare for the analysis of the long-time asymptotic behaviors of the solution to the initial value problem of the two-component KG equation \eqref{3}, firstly, consider the RH problem \ref{rhp} and note the dispersion relation $\theta_{21}(\lambda)=\theta_{21}(x,t,\lambda)$ that
	\begin{equation}\label{26}	\theta_{21}(x,t,\lambda)=\frac{(\omega^2-\omega)t}{2}[(\lambda-\lambda^{-1})\xi+(\lambda+\lambda^{-1})],
	\end{equation}
	where $\xi=\frac{x}{t}$. After some calculations, the critical points of phase function $\theta_{21}(\lambda)$ are gotten
	\begin{equation}\label{lambda0}
		\pm \lambda_0=\pm \sqrt{\frac{\vert x-t\vert}{\vert x+t\vert}},
	\end{equation}
	and according to the relationship between the three phase functions $\theta_{21}(\lambda)$, $\theta_{31}(\lambda)$ and $\theta_{32}(\lambda)$, the critical points of $\theta_{31}(\lambda)$ are $\pm \omega \lambda_0$, while the critical points of $\theta_{32}(\lambda)$ are $\pm \omega^2 \lambda_0$.
	
	 In the following analysis, we need to define notation $B_{\lambda_0}(0):=\{\lambda\in\mathbb{C}\vert\vert\lambda\vert<\lambda_0\}$,  to represent an open disk in the complex plane with $0$ as the center and $\lambda_0$ as the radius, and $\partial B_{\lambda_0}(0):=\{\lambda\in\mathbb{C}\vert \vert \lambda\vert=\lambda_0\}$ to represent the boundary of $B_{\lambda_0}(0)$. The signatures of the real part of phase function $\theta_{21}(\lambda)$ in different cases are obtained:
	 \begin{itemize}
	 	\item For $\xi \geq1$, ${\rm{Re}}\,\theta_{21}>0$ for ${\rm{Im}}\,\lambda>0$ and ${\rm{Re}}\,\theta_{21}< 0$ for ${\rm{Im}}\,\lambda<0$.
	 	\item For $\xi \leq-1$, ${\rm{Re}}\,\theta_{21}>0$ for ${\rm{Im}}\,\lambda<0$ and ${\rm{Re}}\,\theta_{21}< 0$ for ${\rm{Im}}\,\lambda>0$.
	 	\item For $\vert\xi\vert<1$, ${\rm{Re}}\,\theta_{21}>0$ for $\lambda\in\{B_{\lambda_0}(0)\cap {\rm{Im}}\,\lambda<0\} \cup\{B_{\lambda_0}^c(0)\cap{\rm{Im}}\,\lambda>0\}$ and
	    ${\rm{Re}}\,\theta_{21}<0$ for $\lambda\in\{B_{\lambda_0}(0)\cap {\rm{Im}}\,\lambda>0\} \cup\{B_{\lambda_0}^c(0)\cap{\rm{Im}}\,\lambda<0\}$.
	 \end{itemize}
	 To more intuitively understand the case of $\vert\xi\vert<1$, we present the signature of the real part of $\theta_{21}$ in equation \eqref{26} as shown in Fig. \ref{figsignature}.
	  \begin{figure}[htbp]
	 	\centering
	 	\begin{tikzpicture}[scale=1.2]
	 		
	 		\fill[black!20!blue!20] (0,0) -- (-1.5,0) arc (180:360:1.5cm) -- cycle;
	 	     \definecolor{shadecolor}{RGB}{200,200,200}
	 	
	 	    \fill[black!20!blue!20] (-3,0) -- (3,0) -- (3,1.732*1.5) -- (-3,1.732*1.5) -- cycle;
	 	    \fill[white] (0,0) -- (1.5,0) arc (0:180:1.5cm) -- cycle;
	 	
	 		\draw [very thick,black!20!blue](0,0) circle (1.5cm);
	 		
	 		\draw [very thick,black!20!blue](0,0) -- (3,0);
	 		\draw [very thick,black!20!blue](-3,0) -- (0,0);
	 		\draw [very thick,black!20!blue](0,0) -- (1.5,1.5*1.732);
	 		\draw [very thick,black!20!blue](-1.5,-1.5*1.732) -- (0,0);
	 		\draw [very thick,black!20!blue](0,0) -- (-1.5,1.732*1.5);
	 		\draw [very thick,black!20!blue](0,0) -- (1.5,-1.5*1.732);
	 		
	 		\draw [very thick, black!20!blue, -latex](0,0) -- (2.4,0);
	 		\draw [very thick, black!20!blue, -latex](0,0) -- (-2.4,0);
	 		\draw [very thick, black!20!blue, -latex](0,0) -- (1.2,1.2*1.732);
	 		\draw [very thick, black!20!blue, -latex](0,0) -- (-1.2,1.2*1.732);
	 		\draw [very thick, black!20!blue, -latex](0,0) -- (1.2,-1.2*1.732);
	 		\draw [very thick, black!20!blue, -latex](0,0) -- (-1.2,-1.2*1.732);
	 		
	 		\node[anchor=south west, inner sep=5pt] at (0,-0.1) {0};
	 		\node[below] at (1.8,0) {$\lambda_0$};
	 		\node[right] at (0.8,1.35) {$-\omega^2\lambda_0$};
	 		\node[left] at (-0.9,1.35) {$\omega\lambda_0$};
	 		\node[below] at (-1.8,0) {$-\lambda_0$};
	 		\node[left] at (-0.8,-1.35) {$\omega^2\lambda_0$};
	 		\node[right] at (0.8,-1.35) {$-\omega\lambda_0$};
	 		\fill (0,0) circle (1.5pt);
	 		\fill (1.5,0) circle (1.5pt);
	 		\fill (0.75,1.732*0.75) circle (1.5pt);
	 		\fill (-0.75,1.732*0.75) circle (1.5pt);
	 		\fill (-1.5,0) circle (1.5pt);
	 		\fill (-0.75,-1.732*0.75) circle (1.5pt);
	 		\fill (0.75,-1.732*0.75) circle (1.5pt);
	 	\end{tikzpicture}
	 	\caption{The signature of the real part of $\theta_{21}(\lambda)$ for $\vert\xi\vert<1$. In the shaded regions, ${\rm{Re}}\,\theta_{21}>0$; in the white regions, ${\rm{Re}}\,\theta_{21}<0$}
	 	\label{figsignature}
	 \end{figure}
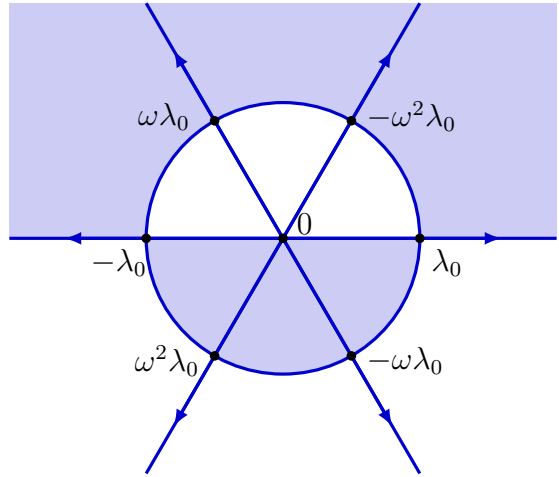

	 \subsection{Long-time asymptotic behavior outside the light cone $\vert \frac{x}{t}\vert>1$}
	 \ \ \ \
	 This subsection divides into two parts to discuss the long-time asymptotic behavior of the solution to the two-component nonlinear KG equation \eqref{3}.

	\subsubsection{The situation of $\xi>1$}\label{411}
	\ \ \ \
	For $\xi>1$, decompose the two jump matrices into the following forms
	\begin{equation*}
		\begin{aligned}
			v_1&=\begin{pmatrix}
				1	&0	&0\\
				r_1^*(\lambda) {\rm{e}}^{\theta_{21}} &1 &0\\
				0 &0&1
			\end{pmatrix}
			\begin{pmatrix}
				1	& -r_1(\lambda) {\rm{e}}^{-\theta_{21}}	&0\\
				0 &1 &0\\
				0 &0&1
			\end{pmatrix}\\
			&:=v_1^l v_1^u,
		\end{aligned}
	\end{equation*}
	and
	\begin{equation*}
		\begin{aligned}
			v_4&=\begin{pmatrix}
				1	& -r_2^*(\lambda) {\rm{e}}^{-\theta_{21}}	&0\\
				0 &1 &0\\
				0 &0&1
			\end{pmatrix}
			\begin{pmatrix}
				1	&0	&0\\
				r_2(\lambda) {\rm{e}}^{\theta_{21}} &1 &0\\
				0 &0&1
			\end{pmatrix}\\
			&:=v_2^u v_2^l.
		\end{aligned}
	\end{equation*}
	
	Without loss of generality, it is assumed that the reflection coefficient $r_1(\lambda)$ can be analytically extended to the first quarter and continuous to the boundary, and the reflection coefficient $r_2(\lambda)$ can be analytically extended to the third quarter and continuous to the boundary. In this case, it follows that $v_1^u$ and $v_2^u$ exponentially decay to identity matrix $I$ as ${\rm{Im}}\, \lambda>0$, $v_1^l$ and $v_2^l$ exponentially decay to the identity matrix $I$ as ${\rm{Im}}\, \lambda<0$ for $t$ approaches infinity. According to the condition of RH problem \ref{rhp}, it is obtained that the solution $M(x,t,\lambda)$ of the RH problem \ref{rhp} is the identity matrix $I$ under the above assumption. Using the reconstruction formula \eqref{7} given in Theorem \ref{reconstruction formula}, one can obtain the solution $u(x,t)$ and $v(x,t)$ of the two-component nonlinear KG equation \eqref{3} approach zero when $t\rightarrow\infty$. 
	
	\begin{lem} \label{lem41}
		For $\xi >1$, the solution of the RH problem \ref{rhp} satisfies $\|M(x,t,\lambda)-I\|\rightarrow 0$ and the solutions $u(x,t)$ and $v(x,t)$ of the two-component nonlinear KG equation \eqref{3} decay rapidly to zero with behavior $\mathcal{O}(x^{-N-3/2})$ in Region ${\rm{I}}$ and with behavior $\mathcal{O}(t^{-N-3/2})$ in Region ${\rm{II}}$ for $N\geq0$ and $x >0$.
	\end{lem}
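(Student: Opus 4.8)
The plan is to reduce RH problem \ref{rhp} on $\Sigma$ to a small-norm problem whose jump is uniformly close to the identity, and then to transfer the resulting estimate on $M-I$ to the solution through the reconstruction formula \eqref{7}. Throughout I would exploit the $\mathbb{Z}_3$ symmetry $M(x,t,\lambda)=\tilde\sigma_1^{-1}M(x,t,\omega\lambda)\tilde\sigma_1$ of RH problem \ref{rhp}, which ties $v_2,v_3,v_5,v_6$ to $v_1$ and $v_4$ via $\lambda\mapsto\omega\lambda,\omega^2\lambda$. It therefore suffices to control the triangular factorizations $v_1=v_1^l v_1^u$ and $v_4=v_2^u v_2^l$ recorded above, the remaining rays being handled verbatim after these substitutions.

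First I would open lenses along the six rays. Using the assumed analytic continuation of $r_1$ into the first quadrant and of $r_2$ into the third quadrant (or, granting only $r_1,r_2\in\mathcal{S}(\mathbb{R})$, the standard splitting $r=r_{\mathrm a}+r_{\mathrm r}$ into an analytic part and a rapidly decreasing remainder), I define $M^{(1)}$ by multiplying $M$ by $(v_1^u)^{-1}$, $v_1^l$ and their $\mathbb{Z}_3$-rotated analogues in thin sectors adjacent to each ray. A direct computation shows the jump of $M^{(1)}$ on each original ray becomes $I$, while the new jumps sit on the lens boundaries $\Gamma_u,\Gamma_l$ and equal the factors $v_1^u$ (on $\Gamma_u$) and $v_1^l$ (on $\Gamma_l$). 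Writing $\lambda=se^{i\delta}$ on $\Gamma_u$ and using \eqref{26} with $\omega^2-\omega=-i\sqrt3$,
\begin{equation*}
\mathrm{Re}\,\theta_{21}=\frac{\sqrt3\,t\,\sin\delta}{2}\Big[(\xi+1)s+(\xi-1)s^{-1}\Big]\ge \sqrt3\,t\,\sin\delta\,\sqrt{\xi^2-1},
\end{equation*}
so the off-diagonal entry $-r_1e^{-\theta_{21}}$ of $v_1^u$ decays exponentially whenever $\xi$ is bounded away from $1$; the entry $r_1^*e^{\theta_{21}}$ of $v_1^l$ decays on $\Gamma_l$ by the mirror sign structure.

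The rate is then read off from the small-norm theory of RH problems, which gives solvability of the deformed problem together with $\|M^{(1)}-I\|\le C\,\|\hat v-I\|_{L^1\cap L^2\cap L^\infty}$ on the deformed contour $\hat\Sigma$. In Region II I take $t$ as the large parameter: the bound above yields $e^{-ct\sqrt{\xi^2-1}}$ away from $\lambda=0$, while near $\lambda=0$, where the phase degenerates as $\xi\to1^+$, the estimate is salvaged by Theorem \ref{theo21}, which guarantees that $r_1,r_2$ and all their derivatives vanish at the origin; integrating by parts in $s$ against the oscillation $e^{-\theta_{21}}$, each step producing a power of $t^{-1}$ with no boundary term, yields decay faster than any polynomial, and tracking the $L^1\cap L^2\cap L^\infty$ norms pins the exponent to $\mathcal{O}(t^{-N-3/2})$ for every $N\ge0$. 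In Region I one uses $x=\xi t\gg t$ as the large parameter, where $\theta_{21}\sim-\tfrac{i\sqrt3}{2}x(\lambda-\lambda^{-1})$ carries no real stationary point and the same argument gains powers of $x^{-1}$, the endpoint behaviour at $0$ and $\infty$ again controlled by the rapid vanishing of $r_1,r_2$ from Theorem \ref{theo21}, giving $\mathcal{O}(x^{-N-3/2})$. Substituting $M=I+\mathcal{O}(\cdots)$ (uniformly down to $\lambda=0$, where the vanishing of $r$ keeps the jump small so that $G\to I$) into \eqref{7} and expanding $\ln(1+\text{small})$ transfers these bounds to $u(x,t)$ and $v(x,t)$.

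The main obstacle is uniformity up to the boundary $\xi=1^+$ of Region II, where the saddle $\lambda_0=\sqrt{(\xi-1)/(\xi+1)}$ collides with the origin and the exponential decay furnished by $\mathrm{Re}\,\theta_{21}$ degenerates. There the polynomial rate cannot come from the phase and must be extracted entirely from the all-order vanishing of $r_1,r_2$ at $\lambda=0$; making the integration-by-parts argument rigorous and uniform in $\xi$, in particular controlling the interaction between the moving saddle and the singularity of $\theta_{21}$ at the origin, is the delicate point and is precisely where the hypotheses of Theorem \ref{theo21} are indispensable.
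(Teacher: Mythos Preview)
Your argument is essentially the same as the paper's: open lenses on the six rays using the triangular factorizations $v_1=v_1^lv_1^u$, $v_4=v_2^uv_2^l$ (extended to the other rays by the $\mathbb{Z}_3$ symmetry), split $r_j=r_{j,a}+r_{j,r}$ into an analytic part that one moves onto the lens boundaries and a rapidly decaying remainder that stays on $\Sigma$, and then invoke small-norm/Beals--Coifman theory to conclude $M\to I$ with the stated rates. The paper records the same decomposition and the same $\mathcal{O}(t^{-N-3/2})$, $\mathcal{O}(x^{-N-3/2})$ bounds on the remainders, citing the sine-Gordon and Boussinesq literature for the construction; your version supplies the explicit computation of $\mathrm{Re}\,\theta_{21}$ on the lens and the AM--GM lower bound, which the paper omits.

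One small clarification: for $\xi>1$ the derivative $\partial_\lambda\theta_{21}=\frac{-i\sqrt3\,t}{2}\bigl[(\xi+1)+(\xi-1)\lambda^{-2}\bigr]$ never vanishes on the real axis and in fact blows up as $\lambda\to0$, so the phase does not ``degenerate'' there; your own formula shows $\mathrm{Re}\,\theta_{21}\to+\infty$ on the upper lens as $s\to0$. The genuine role of the all-order vanishing of $r_1,r_2$ at the origin (Theorem~\ref{theo21}) is rather to keep the jump small in a fixed neighbourhood of $0$ where the six lens systems meet and to allow the reconstruction formula~\eqref{7} to be applied at $\lambda=0$. With that adjustment your outline matches the paper's argument.
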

		\begin{proof}
	  As constructed in Refs. \cite{tzitzeica,tzitzeica-2024} and \cite{lenells-wang-good}, the following decomposition can be obtained:
	  \begin{equation*}
	  	r_1(\lambda){\rm{e}}^{-\theta_{21}(\lambda)}:=r_{1,a}(\lambda){\rm{e}}^{-\theta_{21}(\lambda)}+	r_{1,r}(\lambda){\rm{e}}^{-\theta_{21}(\lambda)},
	  \end{equation*}
	  where $r_{1,a}(\lambda){\rm{e}}^{-\theta_{21}(\lambda)}$ and $r^*_{1,a}(\lambda){\rm{e}}^{\theta_{21}(\lambda)}$ can be analytically continued to the upper and lower half-plane, respectively, due to the Schwartz reflection principle. More importantly, it can be obtained  that $\|r_{1,r}(\lambda){\rm{e}}^{-\theta_{21}(\lambda)}\|_{L^1\cap L^\infty}\rightarrow0$ as ${\rm{Im}}\, \lambda>0$ when $t\rightarrow\infty$. Moreover, for any integer $N\geq0$, $r_{1,r}(\lambda){\rm{e}}^{-\theta_{21}(\lambda)}=\mathcal{O}(x^{-N-3/2})$ in Region ${\rm{I}}$, while in Region ${\rm{II}}$, $r_{1,r}(\lambda){\rm{e}}^{-\theta_{21}(\lambda)}=\mathcal{O}(t^{-N-3/2})$.
	
	  Similarly, by separating $r_2(\lambda){\rm{e}}^{\theta_{21}(\lambda)}:=r_{2,a}(\lambda){\rm{e}}^{-\theta_{21}(\lambda)}+	r_{2,r}(\lambda){\rm{e}}^{\theta_{21}(\lambda)}$, we can get the same properties as above: $r_{2,a}(\lambda){\rm{e}}^{\theta_{21}(\lambda)}$ can be analytically continuous to the lower half complex plane and $r^*_{2,a}(\lambda){\rm{e}}^{-\theta_{21}(\lambda)}$ can be analytically continuous to the upper half complex plane, as well as $\|r_{2,r}(\lambda){\rm{e}}^{\theta_{21}(\lambda)}\|_{L^1\cap L^\infty}\rightarrow0$ as ${\rm{Im}}\, \lambda<0$ when $t\rightarrow\infty$. In Region ${\rm{I}}$, one has $r_{2,r}(\lambda){\rm{e}}^{\theta_{21}(\lambda)}=\mathcal{O}(x^{-N-3/2})$, while in Region ${\rm{II}}$, one has $r_{2,r}(\lambda){\rm{e}}^{\theta_{21}(\lambda)}=\mathcal{O}(t^{-N-3/2})$.
	\end{proof}
	
	Through some transformations, we can always transform the jump matrices in the RH problem \ref{rhp} on the real axis into the following form, based on the decomposition of the jump matrices mentioned above and in Lemma \ref{lem41}:
	\begin{equation*}
		v_1^{(1)}(x,t,\lambda):=\left\{
		\begin{aligned}
			&\begin{pmatrix}
				1 & -r_{1,r}(\lambda){\rm{e}}^{-\theta_{21}(x,t,\lambda)} & 0\\
				r^*_{1,r}(\lambda){\rm{e}}^{\theta_{21}(x,t,\lambda)} & 1-r_{1,r}(\lambda)r^*_{1,r}(\lambda) & 0\\
				0 & 0 & 1\\
			\end{pmatrix},\quad &&\lambda\in \mathbb{R}_+,\\
				&\begin{pmatrix}
				1 & 0 & 0\\
				r^*_{1,a}(\lambda){\rm{e}}^{\theta_{21}(x,t,\lambda)} & 1 & 0\\
				0 & 0 & 1\\
			\end{pmatrix},\quad &&\lambda\in \omega_1\mathbb{R}_+,\\
				&\begin{pmatrix}
				1 & r_{1,a}(\lambda){\rm{e}}^{-\theta_{21}(x,t,\lambda)} & 0\\
				0 & 1 & 0\\
				0 & 0 & 1\\
			\end{pmatrix},\quad &&\lambda\in (\omega_1)^{-1}\mathbb{R}_+,
			\end{aligned}
		\right.
	\end{equation*}
	\begin{equation*}
	v_4^{(1)}(x,t,\lambda):=\left\{
	\begin{aligned}
		&\begin{pmatrix}
			1-r_{2,r}(\lambda)r^*_{2,r}(\lambda) & -r_{2,r}^*(\lambda){\rm{e}}^{-\theta_{21}(x,t,\lambda)} & 0\\
			r_{2,r}(\lambda){\rm{e}}^{\theta_{21}(x,t,\lambda)} & 1 & 0\\
			0 & 0 & 1\\
		\end{pmatrix},\quad &&\lambda\in \mathbb{R}_-,\\
		&\begin{pmatrix}
			1 & -r^*_{2,a}(\lambda){\rm{e}}^{-\theta_{21}(x,t,\lambda)} & 0\\
			0 & 1 & 0\\
			0 & 0 & 1\\
		\end{pmatrix},\quad &&\lambda\in \omega_1\mathbb{R}_-,\\
		&\begin{pmatrix}
			1 & 0 & 0\\
			-r^*_{2,a}(\lambda){\rm{e}}^{\theta_{21}(x,t,\lambda)} & 1 & 0\\
			0 & 0 & 1\\
		\end{pmatrix},\quad &&\lambda\in (\omega_1)^{-1}\mathbb{R}_-,
	\end{aligned}
	\right.
	\end{equation*}
	where $\omega_1={\rm{e}}^{\frac{\pi i}{8}}$. By leveraging the symmetries, it can be deduced that the jump matrices on the remaining four jump lines under the transformation can also be decomposed into the same form as above.
\par
According to the Beals-Coifman theory \cite{Beals-Coifman-1984}, the solution to the RH problem \ref{rhp} can be given, and the solution to the two-component nonlinear KG equation \eqref{3} can be ultimately derived by using the Theorem \ref{reconstruction formula}. Before performing these steps, define the Cauchy operator $\mathcal{C}$ on the contour $\Sigma$ as
	\begin{equation}\label{27}
		(\mathcal{C}f)(\lambda):=\frac{1}{2\pi i}\int_{\Sigma}\frac{f(y)}{y-\lambda} {\rm{d}} y.
	\end{equation}
	If $ f \in \dot{E}^3(\Sigma)$, then $ f^\pm $ exists almost everywhere on $\Sigma$, and $f^\pm \in \dot{L}^3(\Sigma)$ (the definitions of the symbols and the specific form of the theorem can be found in \cite{lenells-2018}). The trivial decomposition of the jump matrix $v^{(1)}$ is considered
	\begin{equation*}
		v^{(1)}:=(b_-)^{-1}b_+=Iv^{(1)},
	\end{equation*} and define $\omega=\omega_++\omega_-$, $\omega_+=b_+-I=v^{(1)}-I$, $\omega_-=I-b_-=O$. Next define the Cauchy projection operator
	\begin{equation*}
		(\mathcal{C}_\pm f)(\lambda)=\lim_{{\scriptsize {\begin{aligned}
						&\lambda'\rightarrow\lambda\in\Sigma\\[-1ex]
						&\lambda'\in \pm side\, of~ \Sigma
		\end{aligned}}}}\frac{1}{2\pi i}\int_{\Sigma}\frac{f(y)}{y-\lambda'} {\rm{d}} y,
	\end{equation*}
	and the operator $\mathcal{C}_\omega f:=\mathcal{C}_+(f\omega_-)+\mathcal{C}_-(f\omega_+)=\mathcal{C}_-(f\omega_+)$. It can also be proven that if $f\in \dot{E}^3(\Sigma) $, then $\mathcal{C}_\omega f$ is the bounded linear operator on $\dot{E}^3(\Sigma)$. Then let $\mu\in I+\dot{E}^3(\Sigma)$ be the solution of the  equation
	\begin{equation*}
		\mu=I+\mathcal{C}_\omega \mu.
	\end{equation*}
	If $I-\mathcal{C}_\omega$ is invertible, the RH problem \ref{rhp} has the unique solution of the form
	\begin{equation*}
		M(\lambda)=I+\frac{1}{2 \pi i}\int_{\Sigma}\frac{\mu(y)\omega(y)}{y-\lambda}{\rm{d}}y.
	\end{equation*}
	
	According to the above theory, if $\omega(x,t,\lambda)=v^{(1)}(x,t,\lambda)-I$, it is obvious that $\omega\in \dot{E}^3(\Sigma)$. In addition, by using the signature of the real part of $\theta_{21}(\lambda)$, we can get $\|\omega\|_{L^\infty(\lambda)}\rightarrow0$ with the behavior of $\mathcal{O}(x^{-N-3/2})$ in the Region ${\rm{I}}$  and with the behavior of $\mathcal{O}(t^{-N-3/2})$ in the Region ${\rm{II}}$ for $N\geq0$ and $x>0$ when $t\rightarrow\infty$, the condition that $I-\mathcal{C}_\omega$ is invertible as mentioned above can be obtained. In other words, the solution $M(x,t,\lambda)$ to the RH problem \ref{rhp} is unique, and $\|M_\pm -I\|_{\dot{E}^3(\Sigma')}\rightarrow I$ rapidly as $t$ approaches infinity. Therefore, it is concluded that $M(x,t,\lambda)\rightarrow I$ as $t \rightarrow \infty$, and by using equation \eqref{7}, it is obtained that $u(x,t)$ and $v(x,t)$ decay exponentially to zero in the two regions as described in the lemma. So the proof for Regions ${\rm{I}}$ and ${\rm{II}}$ in Theorem \ref{region} for the case when $x>0$ has been completed.

	\subsubsection{The situation of $\xi < -1$}\label{412}
	\ \ \ \
	Since the real part signature of $\theta_{21}(\lambda)$ for $\xi < -1$ is the opposite of that for $\xi > -1$, the jump matrix need to be decomposed differently
	\begin{equation}\label{28}
		v_1=\begin{pmatrix}
		1	&-\frac{r_1(\lambda)}{1-r_1(\lambda)r_1^*(\lambda)}{\rm{e}}^{-\theta_{21}}	&0\\
		0 &1 &0\\
		0 &0&1
		\end{pmatrix}
		\begin{pmatrix}
			\frac{1}{1-r_1(\lambda)r_1^*(\lambda)}	&	&0\\
			0 &1-r_1(\lambda)r_1^*(\lambda) &0\\
			0 &0&1
		\end{pmatrix}
		\begin{pmatrix}
			1	&0	&0\\
			\frac{r_1^*(\lambda)}{1-r_1(\lambda)r_1^*(\lambda)}{\rm{e}}^{\theta_{21}} &1 &0\\
			0 &0&1
		\end{pmatrix},
	\end{equation}
	and
		\begin{equation}\label{29}
		v_4=\begin{pmatrix}
			1	&0	&0\\
			\frac{r_2^*(\lambda)}{1-r_2(\lambda)r_2^*(\lambda)}{\rm{e}}^{\theta_{21}} &1 &0\\
			0 &0&1
		\end{pmatrix}
		\begin{pmatrix}
			1-r_2(\lambda)r_2^*(\lambda)	&	&0\\
			0 &\frac{1}{1-r_2(\lambda)r_2^*(\lambda)} &0\\
			0 &0&1
		\end{pmatrix}
		\begin{pmatrix}
			1	&-\frac{r_2(\lambda)}{1-r_2(\lambda)r_2^*(\lambda)}{\rm{e}}^{-\theta_{21}}	&0\\
			0 &1 &0\\
			0 &0&1
		\end{pmatrix}.
	\end{equation}
	To prepare for the subsequent matrix transformation, here we must introduce the functions $\delta_1(\lambda)$ and $\delta_4(\lambda)$ to satisfy the following jump and asymptotic conditions:
    \begin{equation*}
    	\left\{
    	\begin{aligned}
    		&\begin{aligned}\delta_{1+}(\lambda)
    			=&\delta_{1-}(\lambda)(1-r_1(\lambda)r_1^*(\lambda)),\quad  &&\lambda\in \mathbb{R}_+,\\
    			=&\delta_{1-}(\lambda), &&\lambda\in \mathbb{C}\backslash \mathbb{R}_+,
    		\end{aligned}\\
    		&\delta_1(\lambda)\rightarrow1,\qquad\qquad\qquad\qquad\qquad\quad\, \lambda\rightarrow\infty,
    	\end{aligned}
    	\right.
    \end{equation*}
    and
	\begin{equation*}
		\left\{
		\begin{aligned}
			&\begin{aligned}\delta_{4+}(\lambda)
				=&\delta_{4-}(\lambda)(1-r_2(\lambda)r_2^*(\lambda)),\quad  \lambda\in \mathbb{R}_-,\\
				=&\delta_{4-}(\lambda), \qquad\qquad\qquad\qquad \lambda\in \mathbb{C}\backslash \mathbb{R}_-,
			\end{aligned}\\
			&\delta_4(\lambda)\rightarrow1,\quad \qquad\qquad\qquad\qquad\quad\;\, \lambda\rightarrow\infty.
		\end{aligned}
		\right.
	\end{equation*}
	It is obvious that the functions $\delta_1(\lambda)$ and $\delta_4(\lambda)$ can be given in the form of integral equations
	\begin{equation*}
		\delta_1(\lambda)={\rm{exp}}\left\{\frac{1}{2 \pi i}\int_{0}^{\infty}\frac{\ln (1-r_1(s)r_1^*(s))}{s-\lambda}{\rm{d}}s\right\},\quad \lambda\in \mathbb{C}\backslash \mathbb{R}_+,
	\end{equation*}
	\begin{equation*}
		\delta_4(\lambda)={\rm{exp}}\left\{\frac{1}{2 \pi i}\int_{0}^{-\infty}\frac{\ln (1-r_2(s)r_2^*(s))}{s-\lambda}{\rm{d}}s\right\},\quad \lambda\in \mathbb{C}\backslash \mathbb{R}_-,
	\end{equation*}
	where the function $\ln (x)$ involved in the integral is a real-valued function.
\par	
	The other $\delta$ functions can also be given by using the symmetries
	\begin{equation}\label{30}
		\begin{aligned}
			&\delta_2(\lambda)=\delta_4(\omega\lambda), &&\lambda\in \mathbb{C}\backslash \omega^2\mathbb{R}_-,\\
			&\delta_3(\lambda)=\delta_1(\omega^2\lambda), &&\lambda\in \mathbb{C}\backslash \omega\mathbb{R}_+,\\
			&\delta_5(\lambda)=\delta_1(\omega\lambda), &&\lambda\in \mathbb{C}\backslash \omega^2\mathbb{R}_+,\\
			&\delta_6(\lambda)=\delta_4(\omega^2\lambda), &&\lambda\in \mathbb{C}\backslash \omega\mathbb{R}_-.\\
		\end{aligned}
			\end{equation}
	Based on the six $\delta$ functions constructed above, and the jump matrix decomposition given in the equations \eqref{28} and \eqref{29}, define the matrix-valued function
	\begin{equation*}
		\Delta(\lambda)=\begin{pmatrix}
			\frac{\delta_1(\lambda)\delta_6(\lambda)}{\delta_3(\lambda)\delta_4(\lambda)} & 0 & 0\\
			0 & 	\frac{\delta_4(\lambda)\delta_5(\lambda)}{\delta_1(\lambda)\delta_2(\lambda)} & 0\\
			0 & 0 & 	\frac{\delta_2(\lambda)\delta_3(\lambda)}{\delta_5(\lambda)\delta_6(\lambda)}
		\end{pmatrix},
	\end{equation*}
	and take the transformation
	\begin{equation*}
		M^{(1)}(x,t,\lambda)=M(x,t,\lambda)\Delta(\lambda).
	\end{equation*}
  Under this transformation, the jump matrices (the same notation used in subsection \ref{411} is used here, but in the different form) on the real axis corresponding to the new eigenfunction $M^{(1)}(x,t,\lambda)$ become
	\begin{equation}\label{31}
	 	\begin{aligned}
		v_1^{(1)}	&=\begin{pmatrix}
				1-r_1(\lambda)r_1^*(\lambda) & -\frac{\delta_{N1}}{\delta^2_{1-}}\frac{r_1(\lambda)}{1-r_1(\lambda)r_1^*(\lambda)}{\rm{e}}^{-\theta_{21}} & 0\\
				\frac{\delta^2_{1+}}{\delta_{N1}}\frac{r_1^*(\lambda)}{1-r_1(\lambda)r_1^*(\lambda)}{\rm{e}}^{\theta_{21}} & 1 & 0\\
				0 & 0 & 1
			\end{pmatrix}\\
			&=\begin{pmatrix}
				1 & -\frac{\delta_{N1}}{\delta^2_{1-}}\frac{r_1(\lambda)}{1-r_1(\lambda)r_1^*(\lambda)}{\rm{e}}^{-\theta_{21}} & 0\\
				0 & 1 & 0\\
				0 & 0 & 1
			\end{pmatrix}
			\begin{pmatrix}
				1 & 0 & 0\\
				\frac{\delta^2_{1+}}{\delta_{N1}}\frac{r_1^*(\lambda)}{1-r_1(\lambda)r_1^*(\lambda)}{\rm{e}}^{\theta_{21}} & 1 & 0\\
				0 & 0 & 1
			\end{pmatrix}\\
			&:=v_{1,u}^{(1)}v_{1,l}^{(1)},
		\end{aligned}
	\end{equation}
	and
	
		\begin{equation}\label{32}
		\begin{aligned}
			v_4^{(1)}	&=\begin{pmatrix}
				1 & -\frac{\delta_{4+}^2}{\delta_{N4}}\frac{r_2^*(\lambda)}{1-r_2(\lambda)r_2^*(\lambda)}{\rm{e}}^{-\theta_{21}} & 0\\
				\frac{\delta_{N4}}{\delta^2_{4-}}\frac{r_2(\lambda)}{1-r_2(\lambda)r_2^*(\lambda)}{\rm{e}}^{\theta_{21}} & 1 & 0\\
				0 & 0 & 1
			\end{pmatrix}\\
			&=\begin{pmatrix}
				1 & 0 & 0\\
				\frac{\delta_{N4}}{\delta^2_{4-}}\frac{r_2(\lambda)}{1-r_2(\lambda)r_2^*(\lambda)}{\rm{e}}^{\theta_{21}} & 1 & 0\\
				0 & 0 & 1
			\end{pmatrix}
			\begin{pmatrix}
				1 & -\frac{\delta_{4+}^2}{\delta_{N4}}\frac{r_2^*(\lambda)}{1-r_2(\lambda)r_2^*(\lambda)}{\rm{e}}^{-\theta_{21}} & 0\\
				0 & 1 & 0\\
				0 & 0 & 1
			\end{pmatrix}\\
			&:=v_{4,l}^{(1)}v_{4,u}^{(1)},
		\end{aligned}
	\end{equation}
	where $\delta_{N1}:=\frac{\delta_3\delta_4^2\delta_5}{\delta_2\delta_6}$ and $\delta_{N4}:=\frac{\delta_1^2\delta_2\delta_6}{\delta_3\delta_5}$.
	\begin{lem} \label{lem42}
		For $\xi <-1$, the solution of the RH problem \ref{rhp} satisfies $\|M(x,t,\lambda)-I\|\to0$ and the solutions $u(x,t)$ and $v(x,t)$ of the two-component nonlinear KG equation \eqref{3} decay rapidly to zero with behavior $\mathcal{O}(\left| x\right| ^{-N-3/2})$ in Region ${\rm{I}}$ and with behavior $\mathcal{O}(t^{-N-3/2})$ in Region ${\rm{II}}$ for $N\geq0$ and $x <0$.
	\end{lem}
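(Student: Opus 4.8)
The plan is to mirror the analysis of Lemma~\ref{lem41}, but to account for the reversed signature of ${\rm{Re}}\,\theta_{21}$ established above for $\xi<-1$, which forces the use of the lower--diagonal--upper factorizations \eqref{28} and \eqref{29} in place of the simple triangular splitting employed in Subsection~\ref{411}. First I would absorb the diagonal middle factors ${\rm diag}(\frac{1}{1-r_1 r_1^*},\,1-r_1 r_1^*,1)$ and its $r_2$-counterpart into the conjugating matrix $\Delta(\lambda)$ assembled from the six $\delta$-functions $\delta_1,\dots,\delta_6$ of \eqref{30}, and set $M^{(1)}=M\Delta$. Since each $\delta_j(\lambda)\to1$ as $\lambda\to\infty$, the transformation preserves the normalization $M^{(1)}=I+\mathcal{O}(\lambda^{-1})$, and one checks that $\Delta$ and $\Delta^{-1}$ are uniformly bounded on $\mathbb{C}\setminus\Sigma$ away from $\lambda=0$; the resulting jump matrices on the real axis are exactly $v_1^{(1)}$ and $v_4^{(1)}$ in \eqref{31}--\eqref{32}, whose off-diagonal entries now carry the correct exponentials for deformation.

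Next I would carry out the contour deformation. Each nontrivial entry, for instance $\frac{\delta_{N1}}{\delta_{1-}^2}\frac{r_1}{1-r_1 r_1^*}{\rm{e}}^{-\theta_{21}}$, is split as an analytic part plus a remainder, following the construction of \cite{tzitzeica-2024,lenells-wang-good} already invoked in Lemma~\ref{lem41}, with the decomposition chosen so that the remainder and its first $N$ derivatives vanish at the stationary points $\pm\lambda_0$ (which for $\xi<-1$ satisfy $\lambda_0>1$). Because the signature is reversed, the analytic parts are now continued into the half-plane where the corresponding ${\rm{e}}^{\pm\theta_{21}}$ decays as $t\to\infty$, and the symmetries $\lambda\mapsto\omega\lambda$, $\lambda\mapsto\bar\lambda$ of the RH problem propagate the same splitting to the remaining four rays of $\Sigma$. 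This produces a deformed RH problem whose jump $v^{(1)}$ differs from $I$ only through the rapidly decaying remainders on $\Sigma$ and exponentially small contributions on the opened lenses.

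Then I would invoke the small-norm theory. Writing $\omega=v^{(1)}-I$ and using the signature of ${\rm{Re}}\,\theta_{21}$ from Fig.~\ref{figsignature} together with the Schwartz decay of $r_1,r_2$ from Theorem~\ref{theo21}, one obtains $\|\omega\|_{L^1\cap L^\infty}=\mathcal{O}(|x|^{-N-3/2})$ in Region~${\rm I}$ and $\mathcal{O}(t^{-N-3/2})$ in Region~${\rm II}$ for every $N\ge0$. By the Beals--Coifman argument recalled in Subsection~\ref{411}, $I-\mathcal{C}_\omega$ is then invertible for $t$ large, the deformed problem has a unique solution with $\|M^{(1)}-I\|\to0$ at these rates, and hence $M=M^{(1)}\Delta^{-1}\to I$. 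Feeding this into the reconstruction formula \eqref{7} yields the asserted decay of $u(x,t)$ and $v(x,t)$ for $x<0$, completing the case $\xi<-1$ of Theorem~\ref{region}.

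The main obstacle is the control of the $\delta$-functions. Unlike the $\xi>1$ case, the transformation $M^{(1)}=M\Delta$ is indispensable here, so one must show that the ratios $\delta_{N1}/\delta_{1-}^2$, $\delta_{4+}^2/\delta_{N4}$, and their conjugates remain uniformly bounded, and that the $\delta_j$ neither blow up at the origin nor destroy the vanishing of the remainders at $\pm\lambda_0$. The key input is that $r_1(0)=r_2(0)=0$ with rapid decay (Theorem~\ref{theo21}), which guarantees that $\ln(1-r_1 r_1^*)$ and $\ln(1-r_2 r_2^*)$ are Schwartz and vanish at $0$, so the Cauchy integrals defining $\delta_1,\delta_4$ extend continuously to $\lambda=0$ and the $\delta$-conjugation leaves the $L^1\cap L^\infty$ remainder estimates intact.
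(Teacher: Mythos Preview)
Your plan matches the paper's proof almost exactly: conjugate by $\Delta$ built from the six $\delta$-functions, decompose the resulting triangular factors into analytic plus small remainders, open lenses, and apply the small-norm argument to $M^{(1)}$. One point needs sharpening. The assertion ``$M=M^{(1)}\Delta^{-1}\to I$'' is not correct as written, since $\Delta$ is $t$-independent and is not the identity for generic $\lambda$; what you actually need (and what the paper uses) is that $\Delta(0)=I$, so that $\lim_{\lambda\to0}M=\lim_{\lambda\to0}M^{(1)}$ and the reconstruction formula \eqref{7} sees only $M^{(1)}$. Continuity of $\delta_1,\delta_4$ at $0$ alone does not give this; the paper closes the gap by invoking the symmetries \eqref{30}, which force $\delta_1(0)=\delta_3(0)=\delta_5(0)$ and $\delta_2(0)=\delta_4(0)=\delta_6(0)$, whence each diagonal entry $\Delta_j(0)=1$. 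Add that observation and your argument is complete.
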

	\begin{proof}
			The decomposition forms \eqref{31} and \eqref{32} of the jump matrices obtained after the eigenfunction transformation are similar to the decomposition of the analytic and non-analytic parts of the reflection coefficients $r_1(\lambda)$ and $r_2(\lambda)$ in subsection \ref{411}. Here we can also do similar operations with $\xi \geq 1$ to obtain the new eigenfunction $M^{(1)}$ exponentially to $I$, similar to the situation in Lemma \ref{lem41} as $t\rightarrow\infty$.
			
			In addition, since we transform the original RH problem \ref{rhp} by using $\delta_j(\lambda)$ $(j=1,\cdots,6)$ functions, and the relation \eqref{7} between the RH problem \ref{rhp} and the solution of the two-component nonlinear KG equation \eqref{3} is based on the asymptotics $\lambda\rightarrow0$, we also need to analyze the behavior of $\delta_j(\lambda)$ for $\lambda\rightarrow 0$. Theorem \ref{theo21} shows that $r_1(\lambda),r_2(\lambda)\rightarrow0$ as $\lambda\rightarrow0$, thus $\delta_1(\lambda)$ and $\delta_4(\lambda)$ are continuous at $\lambda=0$. Then $\delta_1(0)=\delta_3(0)=\delta_5(0)$ and $\delta_2(0)=\delta_4(0)=\delta_6(0)$ are derived by combining the symmetries \eqref{30} satisfied by the $\delta$ functions. It is obvious that $\Delta(\lambda)\rightarrow I$ as $\lambda\rightarrow0$. So in the case of $\xi\leq -1$, the solution of the two-component nonlinear KG equation \eqref{3} satisfies $u(x,t),v(x,t)\rightarrow0$ with behavior $\mathcal{O}(\left| x\right| ^{-N-3/2})$ in Region ${\rm{I}}$ and with behavior $\mathcal{O}(t^{-N-3/2})$ in Region ${\rm{II}}$ for $N\geq0$ and $x <0$.  Combining this with the Lemma \ref{lem41}, the proof of this lemma and the asymptotics for Regions ${\rm{I}}$ and ${\rm{II}}$ in Theorem \ref{region} is completed.
			
	\end{proof}

    \subsection{Long-time asymptotic behavior inside the light cone $\vert\frac{x}{t}\vert< 1$}\label{sec42}
    \ \ \ \
    In this section, the higher-order asymptotic solutions \eqref{8} and \eqref{9} to the initial value problem of two-component nonlinear KG equation \eqref{3} in Theorem \ref{region} are obtained by applying Deift-Zhou higher-order asymptotic theory \cite{higher-order} to the RH problem \ref{rhp}. In addition, since the classification of the real part signature of $\theta_{21}(\lambda)$ in the Fig. \ref{figsignature} is more complex than the previous subsection, this subsection will expend more effect on analyzing the properties of the new RH problem obtained by each transformation, and finally gives the expression of the higher-order asymptotic solutions.
    \par
    In subections \ref{sub421}-\ref{sub425} below, we consider the case of Region ${\rm{IV}}$ in Theorem \ref{region}, and through a series of transformations, we ultimately prove the result for Region ${\rm{IV}}$. The transformations involved in Region ${\rm{III}}$ are largely the same as those in Region ${\rm{IV}}$, but some of the error estimates are changed. Based on the results from the preceding subsections, we provide new error estimates and prove the result for Region ${\rm{III}}$ in Theorem \ref{region} in subection \ref{regionIII}.

    \subsubsection{Perform the first transformation}\label{sub421}
    \ \ \ \
    From the relations of the real part signature of $\theta_{21}(\lambda)$ in different value ranges of $\xi$ as depicted in Fig. \ref{figsignature}, it can be seen that when $\vert \lambda\vert>\lambda_0$, the jump matrix needs to be decomposed into the same form as that for $\xi>1$, in the other case, the jump matrix needs to be decomposed into the same form as that for $\xi<-1$. Next, in order to better utilize decay conditions in different regions of the jump matrix \eqref{6} for case $\xi<-1$, several scalar RH problems need to be introduced to deal with the jump matrix. For ease of writing, the notation below is the same as in the subsection \ref{412}, but here the $\delta_j(\lambda)$ $(j=1,2,\cdots,6)$ functions satisfy different properties than in the previous subsection. Next introduce two functions
     \begin{equation*}
    	\left\{
    	\begin{aligned}
    		&\begin{aligned}\delta_{1+}(\lambda)
    			=&\delta_{1-}(\lambda)(1-r_1(\lambda)r_1^*(\lambda)),\quad  0<\lambda<\lambda_0,\\
    			=&\delta_{1-}(\lambda), \qquad\qquad\qquad\qquad \lambda\in \mathbb{C}\backslash (0,\lambda_0),
    		\end{aligned}\\
    		&\delta_1(\lambda)\rightarrow1,\quad \qquad\qquad\qquad\qquad\quad\;\, \lambda\rightarrow\infty,
    	\end{aligned}
    	\right.
    \end{equation*}
    and

    \begin{equation*}
    	\left\{
    	\begin{aligned}
    		&\begin{aligned}\delta_{4+}(\lambda)
    			=&\delta_{4-}(\lambda)(1-r_2(\lambda)r_2^*(\lambda)),\quad  -\lambda_0<\lambda<0,\\
    			=&\delta_{4-}(\lambda), \qquad\qquad\qquad\qquad \lambda\in \mathbb{C}\backslash (-\lambda_0,0),
    		\end{aligned}\\
    		&\delta_4(\lambda)\rightarrow1,\quad \qquad\qquad\qquad\qquad\quad\;\, \lambda\rightarrow\infty.
    	\end{aligned}
    	\right.
    \end{equation*}
    Similarly, it yields that
    	\begin{equation}\label{33}
    	\delta_1(\lambda)={\rm{exp}}\left\{\frac{1}{2 \pi i}\int_{0}^{\lambda_0}\frac{\ln (1-r_1(s)r_1^*(s))}{s-\lambda}{\rm{d}}s\right\}:={\rm{e}}^{\chi_1(\lambda)},\quad \lambda\in \mathbb{C}\backslash (0,\lambda_0),
    \end{equation}
    \begin{equation}\label{34}
    	\delta_4(\lambda)={\rm{exp}}\left\{\frac{1}{2 \pi i}\int_{0}^{-\lambda_0}\frac{\ln (1-r_2(s)r_2^*(s))}{s-\lambda}{\rm{d}}s\right\}:={\rm{e}}^{\chi_4(\lambda)},\quad \lambda\in \mathbb{C}\backslash (-\lambda_0,0).
    \end{equation}
    By a similar method as the preceding subsection, the other four $\delta$ functions are given by using the symmetries of the jump relation, i.e.,
    \begin{equation}\label{35}
    	\begin{aligned}
    		&\delta_2(\lambda)=\delta_4(\omega\lambda), &&\lambda\in \mathbb{C}\backslash (-\omega^2\lambda_0,0),\\
    		&\delta_3(\lambda)=\delta_1(\omega^2\lambda), &&\lambda\in \mathbb{C}\backslash (0,\omega\lambda_0),\\
    		&\delta_5(\lambda)=\delta_1(\omega\lambda), &&\lambda\in \mathbb{C}\backslash (0,\omega^2\lambda_0),\\
    		&\delta_6(\lambda)=\delta_4(\omega^2\lambda), &&\lambda\in \mathbb{C}\backslash (-\omega\lambda_0,0).\\
    	\end{aligned}
    \end{equation}
    Taking $\delta_j(\lambda)$ $(j=1,4)$ as examples, some properties satisfied by the two functions are given, and efforts are also made to get the final goal. The case of $\delta_j(\lambda)$ for $j= 2,3,5,6$ can also share similar properties by using the symmetries.

    \begin{prop}\label{prop41}
    	The functions $\delta_1(\lambda)$ and $\delta_4(\lambda)$ satisfy the following properties:
    	\begin{enumerate}
    		\item  $\delta_1(\lambda)$ is analytic for $\lambda\in \mathbb{C}\backslash (0,\lambda_0)$ and $\delta_4(\lambda)$ is analytic for $\lambda\in \mathbb{C}\backslash (-\lambda_0,0)$, and both of them are continuous up to the boundary of their domains.
    		\item The functions $\chi_j$ $(j=1,4)$ can be expanded in the following forms
    		\begin{equation*}
    			\begin{aligned}
    				\chi_1(\lambda)=&i\nu_1(\lambda_0)\log_{-\pi}(\lambda-\lambda_0)+[\chi_{11}^{(1)}(\lambda_0)\log_{-\pi}(\lambda-\lambda_0)+\chi_{11}^{(2)}(\lambda_0)](\lambda-\lambda_0)+\cdots\\
    				&+\![\chi_{1N}^{(1)}(\lambda_0)\log_{-\pi}(\lambda-\lambda_0)\!+\!\chi_{1N}^{(2)}(\lambda_0)](\lambda-\lambda_0)^N\!+\!\mathcal{O}((\lambda-\lambda_0)^{N+1}\log_{-\pi}(\lambda-\lambda_0)),\\
    			\end{aligned}
    		\end{equation*}
    			\begin{equation*}
    			\begin{aligned}
    				\chi_4(\lambda)=&i\nu_4(-\lambda_0)\log_0(\lambda+\lambda_0)+[\chi_{41}^{(1)}(-\lambda_0)\log_0(\lambda+\lambda_0)+\chi_{41}^{(2)}(-\lambda_0)](\lambda+\lambda_0)+\cdots\\
    				&+\![\chi_{4N}^{(1)}(-\lambda_0)\log_0(\lambda+\lambda_0)\!+\!\chi_{4N}^{(2)}(-\lambda_0)](\lambda+\lambda_0)^N\!+\!\mathcal{O}((\lambda+\lambda_0)^{N+1}\log_0(\lambda+\lambda_0)),\\
    			\end{aligned}
    		\end{equation*}
    	where $\log_{-\pi}(z):=\ln\vert z\vert+i \arg_{-\pi}(z)$, $\arg_{-\pi}(z)\in (-\pi,\pi)$, $\log_{0}(z):=\ln \vert z\vert+i \arg_{0}(z)$, $\arg_0(z)\in (0,2\pi)$, and
    	\begin{equation}\label{nu14}
    		\nu_1(\lambda_0)=-\frac{1}{2\pi}\ln(1-\vert r_1(\lambda_0)\vert^2),\qquad \nu_4(-\lambda_0)=-\frac{1}{2\pi}\ln(1-\vert r_2(-\lambda_0)\vert^2).
    	\end{equation}
    Here $\chi_{ij}^{(k)}$ $(i=1,4,j=1,2,\cdots,N,k=1,2)$ are smooth functions that can also be obtained by integrating by parts over $\chi_1(\lambda)$ many times.
    	\item $\delta_1(\lambda)$ and $\delta_4(\lambda)$ are bounded in their domains and satisfy the conjugation symmetries
    	\begin{equation*}
    		\delta_1^{-1}(\lambda)= \overline{\delta_1(\overline{\lambda})},\qquad\lambda\in \mathbb{C}\backslash (0,\lambda_0),
    	\end{equation*}
    	and
    		\begin{equation*}
    		\delta_4^{-1}(\lambda)= \overline{\delta_4(\overline{\lambda})},\qquad\lambda\in \mathbb{C}\backslash (-\lambda_0,0).
    	\end{equation*}
    	\end{enumerate}
    \end{prop}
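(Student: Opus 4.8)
The plan is to treat $\chi_1$ and $\chi_4$ as Cauchy-type (Sokhotski--Plemelj) integrals and extract the three claims from the classical theory of such integrals. The key preliminary observation, supplied by Theorem~\ref{theo21} together with Assumption~\ref{assu1}, is that on the integration interval $(0,\lambda_0)\subset\mathbb{R}$ the density $g_1(s):=\ln(1-r_1(s)r_1^*(s))$ is a \emph{real-valued}, smooth function (since $r_1^*(s)=\overline{r_1(s)}$ and $|r_1|<1$ there) which decays rapidly and, crucially, vanishes at the lower endpoint because $r_1(0)=0$; the analogous statement holds for $g_4(s):=\ln(1-r_2(s)r_2^*(s))$ on $(-\lambda_0,0)$. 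For the first claim, since $\chi_1(\lambda)=\frac{1}{2\pi i}\int_0^{\lambda_0}\frac{g_1(s)}{s-\lambda}\rd s$ has a smooth density on a finite interval, differentiation under the integral sign shows $\chi_1$ is holomorphic for $\lambda\notin[0,\lambda_0]$, and Privalov's theorem (H\"older density) yields continuous boundary values from either side of the cut; hence $\delta_1=\mathrm{e}^{\chi_1}$ is analytic on $\mathbb{C}\setminus(0,\lambda_0)$ and extends continuously to the boundary. Because $g_1(0)=0$, no logarithmic singularity is generated at the origin, so $\delta_1$ is in fact continuous up to $\lambda=0$. The corresponding assertions for $\delta_4$ are identical with $[-\lambda_0,0]$ in place of $[0,\lambda_0]$.

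The heart of the argument, and the step I expect to require the most care, is the endpoint expansion in the second claim. I would isolate the singular contribution at the active endpoint $\lambda_0$ by writing $\chi_1(\lambda)=\frac{g_1(\lambda_0)}{2\pi i}\int_0^{\lambda_0}\frac{\rd s}{s-\lambda}+\frac{1}{2\pi i}\int_0^{\lambda_0}\frac{g_1(s)-g_1(\lambda_0)}{s-\lambda}\rd s$. The first integral equals $\log(\lambda_0-\lambda)-\log(-\lambda)$, whose singular part as $\lambda\to\lambda_0$ is a multiple of $\log_{-\pi}(\lambda-\lambda_0)$ (the remaining $\log(-\lambda)$ being analytic near $\lambda_0$), and the coefficient is $\frac{g_1(\lambda_0)}{2\pi i}=\frac{\ln(1-|r_1(\lambda_0)|^2)}{2\pi i}=i\nu_1(\lambda_0)$ by the definition \eqref{nu14}, which produces exactly the leading term. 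To obtain the full expansion I would Taylor-expand $g_1(s)-g_1(\lambda_0)=\sum_{k\ge1}\frac{g_1^{(k)}(\lambda_0)}{k!}(s-\lambda_0)^k$ and integrate term by term (equivalently, integrate by parts repeatedly): writing $(s-\lambda_0)^k=((s-\lambda)+(\lambda-\lambda_0))^k$ and expanding binomially, the single term with no factor of $(s-\lambda)$ in the numerator reproduces a $(\lambda-\lambda_0)^k\log_{-\pi}(\lambda-\lambda_0)$ contribution, while the remaining terms give purely polynomial pieces, so each order $k$ contributes $[\chi_{1k}^{(1)}(\lambda_0)\log_{-\pi}(\lambda-\lambda_0)+\chi_{1k}^{(2)}(\lambda_0)](\lambda-\lambda_0)^k$ with coefficients built from $g_1^{(k)}(\lambda_0)$, hence smooth in $\lambda_0$ by the smoothness of $r_1$. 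The uniform remainder $\mathcal{O}((\lambda-\lambda_0)^{N+1}\log_{-\pi}(\lambda-\lambda_0))$ follows from the Taylor remainder of $g_1$. The expansion of $\chi_4$ at $-\lambda_0$ is obtained in the same way, with the branch $\log_0$ (argument in $(0,2\pi)$) forced by the orientation of the cut $(-\lambda_0,0)$, and leading coefficient $i\nu_4(-\lambda_0)$.

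For the third claim, boundedness away from the endpoints is immediate from analyticity together with $\delta_1\to1$ as $\lambda\to\infty$; near $\lambda_0$ the local model $\delta_1\sim(\lambda-\lambda_0)^{i\nu_1}$ has modulus $|(\lambda-\lambda_0)^{i\nu_1}|=\mathrm{e}^{-\nu_1\arg_{-\pi}(\lambda-\lambda_0)}$, which stays bounded since $\nu_1$ is real and the argument is confined to $(-\pi,\pi)$, so $\delta_1$ neither blows up nor vanishes at the endpoint. The conjugation symmetry is a short computation: using that $g_1(s)$ is real for real $s$, one has $\overline{\chi_1(\overline{\lambda})}=\overline{\frac{1}{2\pi i}\int_0^{\lambda_0}\frac{g_1(s)}{s-\overline{\lambda}}\rd s}=-\frac{1}{2\pi i}\int_0^{\lambda_0}\frac{g_1(s)}{s-\lambda}\rd s=-\chi_1(\lambda)$, whence $\overline{\delta_1(\overline{\lambda})}=\mathrm{e}^{-\chi_1(\lambda)}=\delta_1^{-1}(\lambda)$, and likewise for $\delta_4$. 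Finally, the corresponding properties for $\delta_2,\delta_3,\delta_5,\delta_6$ are inherited directly from those of $\delta_1,\delta_4$ through the symmetry relations \eqref{35}, by substituting $\omega\lambda$ or $\omega^2\lambda$ for $\lambda$ and rotating the cuts accordingly.
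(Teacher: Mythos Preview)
Your proposal is correct and follows the standard route for such Cauchy-integral estimates. The paper itself does not give a proof of this proposition at all: it simply states that ``the proof of this proposition can refer to Ref.~\cite{higher-order}, which are not stated much here.'' Your argument---splitting off the endpoint contribution of the density, Taylor-expanding the remainder, and reading off the $(\lambda-\lambda_0)^k\log_{-\pi}(\lambda-\lambda_0)$ and $(\lambda-\lambda_0)^k$ pieces via repeated integration by parts---is precisely the mechanism underlying the Deift--Zhou reference, so you have in effect supplied the details the paper omits. The conjugation-symmetry and boundedness arguments are likewise the expected ones.
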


    The proof of the this proposition can refer to Ref. \cite{higher-order}, which are not stated much here.

    Similarly, construct the matrix transformation
    	\begin{equation*}
    	\Delta(\lambda)=\begin{pmatrix}
    		\frac{\delta_1(\lambda)\delta_6(\lambda)}{\delta_3(\lambda)\delta_4(\lambda)} & 0 & 0\\
    		0 & 	\frac{\delta_4(\lambda)\delta_5(\lambda)}{\delta_1(\lambda)\delta_2(\lambda)} & 0\\
    		0 & 0 & 	\frac{\delta_2(\lambda)\delta_3(\lambda)}{\delta_5(\lambda)\delta_6(\lambda)}
    	\end{pmatrix}
    			:=
    			\begin{pmatrix}
    					\Delta_1(\lambda) & 0 & 0\\
    					0 & 	\Delta_2(\lambda) & 0\\
    					0 & 0 & 	\Delta_3(\lambda)
    				\end{pmatrix},\quad 
    \end{equation*}
   which satisfies the symmetry $\Delta(\lambda)=\tilde{\sigma}_j^{-1}\Delta(\omega^j\lambda)\tilde{\sigma}_j$ for $j=1,2$.

    The new matrix-valued eigenfunction $M^{(1)}(x,t,\lambda)$ is obtained by transforming the eigenfunction involved in RH problem \ref{rhp} by
    	\begin{equation}\label{36}
    	M^{(1)}(x,t,\lambda)=M(x,t,\lambda)\Delta(\lambda).
    \end{equation}
    Obviously we can derive that $\delta_j(\lambda)$ $(j=1,4)$ are continuous at $\lambda=0$, and $\delta_1(0)=\delta_3(0)=\delta_5(0)$, $\delta_2(0)=\delta_4(0)=\delta_6(0)$ are derived by combining the symmetries in \eqref{35}, which implies that $\Delta(\lambda)\rightarrow I$ as $\lambda\rightarrow0$. Obviously, it follows that $\delta_j\rightarrow1$ $(j=1,2,\cdots,6)$ and $\Delta \rightarrow I$  as $\lambda\rightarrow\infty$.
    	\begin{rhp}\label{rhp1}
    	Find a $3\times3$ matrix-valued function $M^{(1)}(x,t,\lambda)$ with the following properties:
    	\begin{itemize}
    		\item $M^{(1)}(x,t,\lambda)$ is holomorphic in $\mathbb{C}\setminus\Sigma^{(1)}$, where 	the contour $\Sigma^{(1)}$ is shown in Fig. \ref{figSigma1}.
    		\item  $M^{(1)}(x,t,\lambda)$ satisfies the following jump conditions
    		\begin{equation*}
    			M^{(1)}_+(x,t,\lambda)=M^{(1)}_-(x,t,\lambda)v_j^{(1)}(x,t,\lambda),\qquad \lambda
    			\in \Sigma_j^{(1)},
    		\end{equation*}
    		where only the forms $v_j^{(1)}$ $(j=1,4,7,10)$ are given below
    		
    		\begin{align}\label{37}
    			v^{(1)}_1 &=\begin{pmatrix}
    				1 & -\frac{\delta_{N1}}{\delta^2_1}r_1(\lambda){\rm{e}}^{-\theta_{21}} & 0\\
    				\frac{\delta^2_1}{\delta_{N1}}r^*_1(\lambda){\rm{e}}^{\theta_{21}} & 1-r_1(\lambda)r_1^*(\lambda) & 0\\
    				0 & 0 & 1
    			\end{pmatrix}, \qquad&&\lambda>\lambda_0,\nonumber\\
    			v_4^{(1)}&=\begin{pmatrix}
    		   1-r_2(\lambda)r_2^*(\lambda) & -\frac{\delta^2_4}{\delta_{N4}}r_2^*(\lambda){\rm{e}}^{-\theta_{21}} & 0\\
    		   \frac{\delta_{N4}}{\delta^2_4}r_2(\lambda){\rm{e}}^{\theta_{21}} & 1 & 0\\
    			0 & 0 & 1
    		\end{pmatrix}, &&\lambda<-\lambda_0,\\
    			v^{(1)}_7 &=\begin{pmatrix}
    			1-r_1(\lambda)r_1^*(\lambda) & -\frac{\delta_{N1}}{\delta^2_{1-}}\frac{r_1(\lambda)}{1-r_1(\lambda)r_1^*(\lambda)}{\rm{e}}^{-\theta_{21}} & 0\\
    			\frac{\delta^2_{1+}}{\delta_{N1}}\frac{r_1^*(\lambda)}{1-r_1(\lambda)r_1^*(\lambda)}{\rm{e}}^{\theta_{21}} & 1 & 0\\
    			0 & 0 & 1
    		\end{pmatrix}, && 0<\lambda<\lambda_0,\nonumber\\
    		v^{(1)}_{10} &=\begin{pmatrix}
    			1 & -\frac{\delta^2_{4+}}{\delta_{N4}}\frac{r_2^*(\lambda)}{1-r_2(\lambda)r_2^*(\lambda)}{\rm{e}}^{-\theta_{21}} & 0\\
    			\frac{\delta_{N4}}{\delta^2_{4-}}\frac{r_2(\lambda)}{1-r_2(\lambda)r_2^*(\lambda)}{\rm{e}}^{\theta_{21}} & 1-r_2(\lambda)r_2^*(\lambda) & 0\\
    			0 & 0 & 1
    		\end{pmatrix}, && -\lambda_0<\lambda<0,\nonumber
    		\end{align}
    		and the rest of the jump matrices can be obtained according to the symmetries satisfied by $v_{j+2}^{(1)}(\lambda)=\tilde{\sigma}_1^{-1}v_j^{(1)}(\omega^2\lambda)\tilde{\sigma}_1$. Here $\delta_{N1}:=\frac{\delta_3\delta_4^2\delta_5}{\delta_2\delta_6}$ and $\delta_{N4}:=\frac{\delta_1^2\delta_2\delta_6}{\delta_3\delta_5}$, and $r_1(\lambda)$, $r_2(\lambda)$ are given by \eqref{4}.
    		\item  As $\lambda\rightarrow\infty$,  $M^{(1)}(x,t,\lambda)=I+\mathcal{O}(\frac{1}{\lambda})$ for $\lambda\in\mathbb{C}\setminus\Sigma^{(1)}$.
    		\item  As $\lambda\rightarrow0$,  $M^{(1)}(x,t,\lambda)=G(x,t)+\mathcal{O}({\lambda})$ for $\lambda\in\mathbb{C}\setminus\Sigma^{(1)}$.
    		\item  $M^{(1)}(x,t,\lambda)=\tilde{\sigma}_1^{-1}M^{(1)}(x,t,\omega\lambda)\tilde{\sigma}_1=\tilde{\sigma}_2\overline{M^{(1)}(x,t,\overline{\lambda})}\tilde{\sigma}_2^{-1}$.
    	\end{itemize}
    \end{rhp}
     \begin{figure}[htbp]
    	\centering
    	\begin{tikzpicture}[scale=1.2]
    		\draw [very thick,black!20!blue](0,0) -- (3,0);
    		\draw [very thick,black!20!blue](-3,0) -- (0,0);
    		\draw [very thick,black!20!blue](0,0) -- (1.5,1.5*1.732);
    		\draw [very thick,black!20!blue](-1.5,-1.5*1.732) -- (0,0);
    		\draw [very thick,black!20!blue](0,0) -- (-1.5,1.732*1.5);
    		\draw [very thick,black!20!blue](0,0) -- (1.5,-1.5*1.732);
    		
    		\draw[very thick, black!20!blue, -latex] (0,0) -- (1,0);
    		\draw[very thick, black!20!blue, -latex] (0,0) -- (0.5,0.5*1.732);
    		\draw[very thick, black!20!blue, -latex] (0,0) -- (-0.5,0.5*1.732);
    		\draw[very thick, black!20!blue, -latex] (0,0) -- (-1,0);
    		\draw[very thick, black!20!blue, -latex] (0,0) -- (-0.5,-0.5*1.732);
    		\draw[very thick, black!20!blue, -latex] (0,0) -- (0.5,-0.5*1.732);
    		
    		\draw[very thick, black!20!blue, -latex] (0,0) -- (2.4,0);
    		\draw[very thick, black!20!blue, -latex] (0,0) -- (1.2,1.2*1.732);
    		\draw[very thick, black!20!blue, -latex] (0,0) -- (-1.2,1.2*1.732);
    		\draw[very thick, black!20!blue, -latex] (0,0) -- (-2.4,0);
    		\draw[very thick, black!20!blue, -latex] (0,0) -- (-1.2,-1.2*1.732);
    		\draw[very thick, black!20!blue, -latex] (0,0) -- (1.2,-1.2*1.732);
    		
    		\node[right] at (3,0) {$\Sigma^{(1)}$};
    		\node[below] at (1.8,0) {$\lambda_0$};
    		\node[right] at (0.8,1.35) {$-\omega^2\lambda_0$};
    		\node[left] at (-0.9,1.35) {$\omega\lambda_0$};
    		\node[below] at (-1.8,0) {$-\lambda_0$};
    		\node[left] at (-0.8,-1.35) {$\omega^2\lambda_0$};
    		\node[right] at (0.8,-1.35) {$-\omega\lambda_0$};
    		
    		
    		\node[red!70!black,below] at (2.3,0) {$1$};
    		\node[red!70!black,left] at (1.1,1.2*1.6) {$2$};
    		\node[red!70!black,left] at (-1.2,1.2*1.6) {$3$};
    		\node[red!70!black,below] at (-2.3,0) {$4$};
    		\node[red!70!black,right] at (-1.05,-1.2*1.6) {$5$};
    		\node[red!70!black,right] at (1.2,-1.2*1.6) {$6$};

         	\node[red!70!black,above] at (0.8,-0.1) {$7$};
    		\node[red!70!black,left] at (0.4,0.7) {$8$};
    		\node[red!70!black,left] at (-0.5,0.6) {$9$};
    		\node[red!70!black,anchor=north east] at (-0.7,0) {$10$};
    		\node[red!70!black,right] at (-0.45,-0.9) {$11$};
    		\node[red!70!black,right] at (0.4,-0.6) {$12$};
    		\fill (0,0) circle (1.5pt);
    		\fill (1.5,0) circle (1.5pt);
    		\fill (0.75,1.732*0.75) circle (1.5pt);
    		\fill (-0.75,1.732*0.75) circle (1.5pt);
    		\fill (-1.5,0) circle (1.5pt);
    		\fill (-0.75,-1.732*0.75) circle (1.5pt);
    		\fill (0.75,-1.732*0.75) circle (1.5pt);
    	\end{tikzpicture}
    	\caption{The jump contour $\Sigma^{(1)}$ in the complex $\lambda$-plane}
    	\label{figSigma1}
    \end{figure}
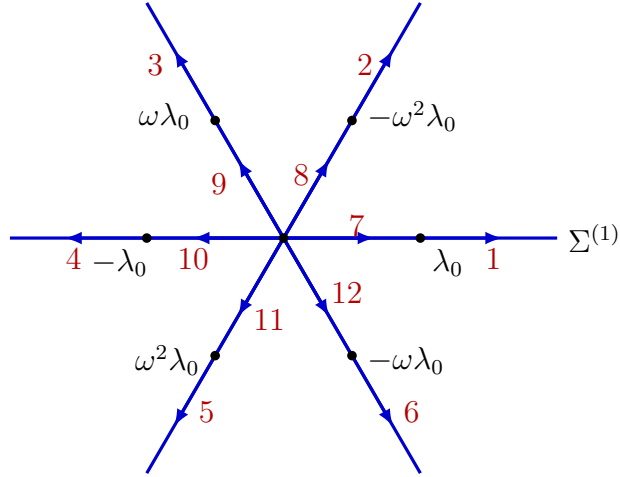

     \subsubsection{Perform the second transformation}\label{422}
    \ \ \ \
    In the last transformation, we have introduced six scalar RH problems and transformed the original RH problem \ref{rhp} to get the new RH problem \ref{rhp1}, which is also in preparation for opening the jump on $\Sigma$. For ease of writing later, introduce the notations
    \begin{equation*}
    	\begin{aligned}
    		&\hat{r}_1(\lambda)=\frac{r_1(\lambda)}{1-r_1(\lambda)r_1^*(\lambda)},\qquad 0<\lambda<\lambda_0,\\
    		&\hat{r}_2(\lambda)=\frac{r_2^*(\lambda)}{1-r_2(\lambda)r_2^*(\lambda)},\qquad -\lambda_0<\lambda<0,
    	\end{aligned}
    \end{equation*}
     then the scattering coefficients involved in the jump relation are decomposed in the following lemma.

     \begin{lem}\label{lem43}
     	The scattering coefficients $r_j(\lambda)$ and $\hat{r}_j(\lambda)$ $(j=1,2)$ can be decomposed into
     	\begin{equation*}
     		\begin{aligned}
     			&r_1(\lambda)=r_{1,a}(\lambda)+r_{1,r}(\lambda),\qquad \lambda\in\left[\lambda_0,\infty\right),\\
     			&r_2^*(\lambda)=r^*_{2,a}(\lambda)+r^*_{2,r}(\lambda),\qquad \lambda\in\left(-\infty,-\lambda_0\right],\\
     			&\hat{r}_1(\lambda)=\hat{r}_{1,a}(\lambda)+\hat{r}_{1,r}(\lambda),\qquad \lambda\in\left[0,\lambda_0\right),\\
     			&\hat{r}_2(\lambda)=\hat{r}_{2,a}(\lambda)+\hat{r}_{2,r}(\lambda),\qquad \lambda\in\left(-\lambda_0,0\right],\\
       		\end{aligned}
     	\end{equation*}
     	and the decompositions have the following properties:
     	\begin{enumerate}
     		\item The function $r_{1,a}$ can be analytically continuous to $\lambda>\lambda_0$, ${\rm{Im}}\,\lambda>0$, and $r^*_{2,a}$ can be analytically continuous to $\lambda<-\lambda_0$, ${\rm{Im}}\,\lambda>0$, respectively. The function $\hat{r}_{1,a}$ can be analytically continuous to $0<\lambda<\lambda_0$, ${\rm{Im}}\,\lambda<0$, and $\hat{r}_{2,a}$ can be analytically continuous to $-\lambda_0<\lambda<0$, ${\rm{Im}}\,\lambda<0$, respectively.
     		\item These functions satisfy the following estimators for any positive integer $N$
     		\begin{equation*}
     			\begin{aligned}
     					&\left|  r_{1,a}(\lambda)-\sum_{n=0}^{N}\frac{r^{(n)}_1(\lambda_0)}{n!}(\lambda-\lambda_0)^n \right| \leq C{\rm{e}}^{\frac{t}{4}{\vert{\rm{Re}}\,\vartheta_{21}(\lambda)\vert}}\vert \lambda-\lambda_0\vert^{N+1},\\
     					&\left|  r_{1,a}(\lambda)\right| \leq \frac{C}{1+\vert\lambda\vert} {\rm{e}}^{\frac{t}{4}\vert{\rm{Re}}\,\vartheta_{21}(\lambda)\vert}, \quad
     					 \lambda\in\{{\rm{Re}}\,\lambda>\lambda_0\}\cap\{ {\rm{Im}}\,\lambda>0\};
     			\end{aligned}
     			\end{equation*}
     			\begin{equation*}
     			\begin{aligned}
     				&\left|   r^*_{2,a}(\lambda)-\sum_{n=0}^{N}\frac{r_2^{*(n)}(-\lambda_0)}{n!}(\lambda+\lambda_0)^n\right| \leq C{\rm{e}}^{\frac{t}{4}\vert{\rm{Re}}\,\vartheta_{21}(\lambda)\vert}\vert \lambda+\lambda_0\vert^{N+1},\\
     				&	\left|  r^*_{2,a}(\lambda)\right| \leq \frac{C}{1+\vert\lambda\vert} {\rm{e}}^{\frac{t}{4}\vert{\rm{Re}}\,\vartheta_{21}(\lambda)\vert}, \quad
     				\lambda\in\{{\rm{Re}}\,\lambda<-\lambda_0\}\cap\{ {\rm{Im}}\,\lambda>0\};
     			\end{aligned}
     			\end{equation*}
     			and
     			\begin{equation*}
     				\begin{aligned}
     				&\left|  \hat{r}_{1,a}(\lambda)-\sum_{n=0}^{N}\frac{\hat{r}_1^{(n)}(\lambda_0)}{n!}(\lambda-\lambda_0)^n\right| \leq C{\rm{e}}^{\frac{t}{4}\vert{\rm{Re}}\,\vartheta_{21}(\lambda)\vert}\vert \lambda-\lambda_0\vert^{N+1},\\
     				&\left|  \hat{r}_{1,a}(\lambda)\right| \leq \frac{C}{1+\vert\lambda\vert} {\rm{e}}^{\frac{t}{4}\vert{\rm{Re}}\,\vartheta_{21}(\lambda)\vert}, \quad
     				\lambda\in\{B_{\lambda_0}(0)\}\cap\{ {\rm{Im}}\,\lambda<0\}\cap\{\frac{\lambda_0}{2}<{\rm{Re}}\,\lambda<\lambda_0\};
     				\end{aligned}
     			\end{equation*}
     			\begin{equation*}
     				\begin{aligned}
     				&\left|   \hat{r}_{2,a}(\lambda)-\sum_{n=0}^{N}\frac{\hat{r}_2(-\lambda_0)}{n!}(\lambda+\lambda_0)^n\right| \leq C{\rm{e}}^{\frac{t}{4}\vert{\rm{Re}}\,\vartheta_{21}(\lambda)\vert}\vert \lambda+\lambda_0\vert^{N+1},\\
     				&\left|  \hat{r}_{2,a}(\lambda)\right| \leq \frac{C}{1+\vert\lambda\vert} {\rm{e}}^{\frac{t}{4}\vert{\rm{Re}}\,\vartheta_{21}(\lambda)\vert}, \quad		
     				\lambda\in\{B_{\lambda_0}(0)\}\cap\{ {\rm{Im}}\,\lambda<0\}\cap\{-\lambda_0<{\rm{Re}}\,\lambda<-\frac{\lambda_0}{2}\}.
     				\end{aligned}
     			\end{equation*}
     		where $\vartheta_{21}=\theta_{21}/t$.
     		\item For any $1\leq p\leq \infty$, the reflection coefficients satisfy the following equations
     		\begin{equation*}
     			\begin{aligned}
     				&\|(1+\vert\cdot\vert)r_{1,r}(x,t,\cdot){\rm{e}}^{-t\vartheta_{21}}\|_{L^p(\lambda_0,\infty)}\leq \frac{C}{t^{N+\frac{3}{2}}},\\	&\|(1+\vert\cdot\vert)r^*_{2,r}(x,t,\cdot){\rm{e}}^{-t\vartheta_{21}}\|_{L^p(-\infty,-\lambda_0)}\leq \frac{C}{t^{N+\frac{3}{2}}},\\	&\|(1+\vert\cdot\vert)\hat{r}_{1,r}(x,t,\cdot){\rm{e}}^{-t\vartheta_{21}}\|_{L^p(0,\lambda_0)}\leq \frac{C}{t^{N+\frac{3}{2}}},\\	&\|(1+\vert\cdot\vert)\hat{r}^*_{2,r}(x,t,\cdot){\rm{e}}^{-t\vartheta_{21}}\|_{L^p(-\lambda_0,0)}\leq \frac{C}{t^{N+\frac{3}{2}}}.\\
     			\end{aligned}
     		\end{equation*}
     		In particular, $\hat{r}_j(x,t,\lambda)=\mathcal{O}(t^{-N-1/2})$ for $0<\left|\lambda \right|<\frac{\lambda_0}{2} $.
     	\end{enumerate}
     \end{lem}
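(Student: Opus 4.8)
The plan is to establish the four decompositions by a single \emph{$t$-dependent analytic approximation} scheme centered at the stationary points $\pm\lambda_0$ of the phase $\theta_{21}$, following the method of \cite{Deift-Zhou-1993} in the higher-order form developed in \cite{higher-order,good-boussinesq}. By Theorem \ref{theo21} the reflection coefficients $r_1,r_2$ are smooth and rapidly decaying on their intervals of definition, and the absence-of-solitons structure (which makes the $\delta$-functions well defined) guarantees $1-r_jr_j^*>0$ there, so the quotients $\hat r_1,\hat r_2$ are again smooth with the same decay; moreover the Schwartz reflection relation $r_j^*(\lambda)=\overline{r_j(\bar\lambda)}$ lets me pass between a function and its starred partner, and the $\mathbb Z_3$ symmetry of the jump data reduces the four cases to essentially one calculation. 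The sign table for ${\rm Re}\,\theta_{21}$ in Fig. \ref{figsignature} dictates the direction of continuation: outside the disk $B_{\lambda_0}(0)$ the factor ${\rm e}^{-\theta_{21}}$ decays into ${\rm Im}\,\lambda>0$, which is why $r_{1,a}$ and $r^*_{2,a}$ must continue to the upper half-plane, whereas inside the disk the sign flips and $\hat r_{1,a},\hat r_{2,a}$ must continue downward.

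First I would treat $r_1$ on $[\lambda_0,\infty)$. Write $P_N(\lambda)=\sum_{n=0}^{N}\frac{r_1^{(n)}(\lambda_0)}{n!}(\lambda-\lambda_0)^n$ for its Taylor polynomial at $\lambda_0$, which is entire and so may be placed in the analytic part at no cost. The smooth remainder $r_1-P_N$ vanishes to order $N+1$ at $\lambda_0$ and decays rapidly; I would split it through a Fourier transform taken in a variable adapted to the phase $\vartheta_{21}$, absorbing into $r_{1,a}$ the half of the spectrum that continues analytically into ${\rm Im}\,\lambda>0$ and leaving the complementary half as $r_{1,r}$. Setting $r_{1,a}:=P_N+(\text{analytic half of the remainder})$, the continued remainder inherits both the vanishing order and, from a $t$-dependent choice of spectral cutoff, the bound $|r_{1,a}-P_N|\le C{\rm e}^{\frac t4|{\rm Re}\,\vartheta_{21}(\lambda)|}|\lambda-\lambda_0|^{N+1}$ together with $|r_{1,a}(\lambda)|\le\frac{C}{1+|\lambda|}{\rm e}^{\frac t4|{\rm Re}\,\vartheta_{21}(\lambda)|}$, which are exactly the estimates of item (2). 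The $\frac t4$ growth is harmless because on the deformed contour ${\rm Re}\,\theta_{21}\ge0$ is large and positive, so $r_{1,a}{\rm e}^{-\theta_{21}}$ still decays. The functions $r^*_2,\hat r_1,\hat r_2$ are handled identically, the only changes being the interval of definition and the half-plane of continuation fixed by Fig. \ref{figsignature}.

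It remains to prove the $L^p$ bounds in item (3). On the real contour the phase is purely imaginary, since $\omega^2-\omega=-i\sqrt3$ makes $\theta_{21}$ in \eqref{26} imaginary for real $\lambda$, so ${\rm e}^{-t\vartheta_{21}}$ has unit modulus and \emph{all} the $t$-decay must be carried by $r_{1,r}$ itself. This is where the $t$-dependent cutoff pays off: because $r_1-P_N$ is smooth with an $(N+1)$-fold zero at $\lambda_0$, repeated integration by parts in the Fourier variable transfers smoothness into inverse powers of the growing cutoff, and a count matched to the $(N+1)$-fold vanishing together with the $\sqrt t$ scale of the stationary point yields $\|(1+|\cdot|)r_{1,r}{\rm e}^{-t\vartheta_{21}}\|_{L^p}\le Ct^{-N-3/2}$ uniformly in $1\le p\le\infty$. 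On the interval $0<|\lambda|<\lambda_0/2$, where the phase has no stationary point, the same integration-by-parts argument applied directly (without the extra half-power coming from the critical point) gives the cruder estimate $\hat r_j=\mathcal O(t^{-N-1/2})$ claimed at the end. The main obstacle is bookkeeping all of these estimates \emph{uniformly} as the stationary point $\lambda_0$ varies over $(0,\infty)$ and degenerates at the endpoints $\lambda_0\to0$ and $\lambda_0\to\infty$ (i.e. $|\xi|\to1$ and $\xi\to0$): there the two critical points $\pm\lambda_0$ coalesce or escape, the interval $(0,\lambda_0)$ carrying $\hat r_1$ shrinks or expands, and one must verify that the constant $C$ and the growth factor ${\rm e}^{\frac t4|{\rm Re}\,\vartheta_{21}|}$ can be chosen independently of these limits, which is precisely what is needed to feed the decomposition into the transitional Region III analysis.
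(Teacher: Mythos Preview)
Your approach is correct and is essentially the standard Deift--Zhou analytic approximation scheme; the paper itself does not give a proof but simply refers to the framework of Cheng--Venakides--Zhou \cite{Zhou-sG}, which is the same Taylor-polynomial-plus-Fourier-splitting construction you outline. One small remark: the uniformity concern you raise at the end, as $\lambda_0\to0$ or $\lambda_0\to\infty$, is \emph{not} asserted in this lemma---the paper works here under the Region~IV assumption $0<\lambda_0<M$ and then states a separate modified lemma in Section~\ref{regionIII} where the constants are allowed to depend on $\lambda_0$ through a factor $C_N(\lambda_0)$ vanishing to all orders at the endpoints, so you need not (and should not) try to make the present estimates uniform in that limit.
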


    The proof of this lemma refers to the analytical framework introduced by Cheng, Venakides, and Zhou in their study of the sine-Gordon equation \cite{Zhou-sG}, which are not proved here.

    The next work is to deal with the contour $\Sigma^{(1)}$ to prepare for the higher-order asymptotics later. To do so, divide the $\lambda$-complex plane into different defined regions according to the six critical points above, as shown in the Fig. \ref{figSigma2}. Here, only the numbering of the jump contour branches and the regions at critical points $\pm\lambda_0$ are provided. The remaining four critical points, which are symmetric to $\pm\lambda_0$, can be obtained through symmetry. Then define the following transformation according to the partition in the Fig. \ref{figSigma2}. Give the matrix transformation $W(x,t,\lambda)=\bigcup\limits_{j=0}^2W_{\omega^j\lambda_0}(x,t,\lambda)\bigcup\limits_{j=0}^2W_{-\omega^j\lambda_0}(x,t,\lambda)$, which is also the second transformation in this process.

    \begin{figure}[htbp]
    	\centering
    	\begin{tikzpicture}[scale=0.6]

    		\draw [very thick,black!40!green](0,0) -- (8.5,0);
    		\draw [very thick,black!40!green](-8.5,0) -- (0,0);
    		
    		\draw [very thick,black!20!blue](0,0) -- (4,2);
    		\draw [very thick,black!20!blue](4,2) -- (8,-2);
    		\draw [very thick,black!20!blue](0,0) -- (-4,2);
    		\draw [very thick,black!20!blue](-4,2) -- (-8,-2);
    		
    	    \draw [very thick,black!20!blue](0,0) -- (4,-2);
    	    \draw [very thick,black!20!blue](4,-2) -- (8,2);
    	    \draw [very thick,black!20!blue](0,0) -- (-4,-2);
    	    \draw [very thick,black!20!blue](-4,-2) -- (-8,2);
    	
    	    \draw[very thick, black!20!blue, -latex]  (4,2) -- (2,1);
    	    \draw[very thick, black!20!blue, -latex]  (4,-2) -- (2,-1);
    	    \draw[very thick, black!20!blue, -latex]  (-4,2) -- (-2,1);
    	    \draw[very thick, black!20!blue, -latex]  (-4,-2) -- (-2,-1);
    	
    	    \draw[very thick, black!20!blue, -latex]  (6,0) -- (4.8,1.2);
    	    \draw[very thick, black!20!blue, -latex]  (6,0) -- (4.8,-1.2);
    	    \draw[very thick, black!20!blue, -latex]  (-6,0) -- (-4.8,1.2);
    	    \draw[very thick, black!20!blue, -latex]  (-6,0) -- (-4.8,-1.2);
    	
    	    \draw[very thick, black!40!green, -latex]  (0,0) -- (3,0);
    	    \draw[very thick, black!40!green, -latex]  (0,0) -- (-3,0);
    	    \draw[very thick, black!40!green, -latex]  (0,0) -- (7.3,0);
    	    \draw[very thick, black!40!green, -latex]  (0,0) -- (-7.3,0);
    	
    	    \draw[very thick, black!20!blue, -latex]  (6,0) -- (7.2,1.2);
    	    \draw[very thick, black!20!blue, -latex]  (6,0) -- (7.2,-1.2);
    	    \draw[very thick, black!20!blue, -latex]  (-6,0) -- (-7.2,1.2);
    	    \draw[very thick, black!20!blue, -latex]  (-6,0) -- (-7.2,-1.2);

	        \fill (0,0) circle (2.5pt);
            \fill (6,0) circle (2.5pt);
            \fill (-6,0) circle (2.5pt);
    		
    		\node[right] at (8.5,0) {$\Sigma^{(2)}$};
    		\node[below] at (6,0) {$\lambda_0$};
    		\node[below] at (-6,0) {$-\lambda_0$};
    		
    		\node[red!70!black,above] at (7,1) {\small$1$};
    		\node[red!70!black,above] at (4,0.75) {\small$2$};
    		\node[red!70!black,below] at (4,-0.75) {\small$3$};
    		\node[red!70!black,below] at (7,-1) {\small$4$};
    		\node[red!70!black,below] at (3,0.2) {\small$5$};
    		\node[red!70!black,below] at (7.4,0.2) {\small$6$};
    		
    		\node[red!70!black,below] at (-7,-1) {\small$7$};
    		\node[red!70!black,below] at (-4,-0.75) {\small$8$};
    		\node[red!70!black,above] at (-4,0.75) {\small$9$};
    		\node[red!70!black,above] at (-7,1) {\footnotesize$10$};
    		\node[red!70!black,above] at (-3,-0.7) {\footnotesize$11$};
    		\node[red!70!black,above] at (-7.4,-0.7) {\footnotesize$12$};
    		
    		\node[below] at (8,1.2) {\scriptsize$D_{11}$};
    		\node[below] at (6,2) {\scriptsize$D_{12}$};
    		\node[below] at (3,1) {\scriptsize$D_{13}$};
    		\node[below] at (3,-0.5) {\scriptsize$D_{14}$};
    		\node[below] at (6,-1.2) {\scriptsize$D_{15}$};
    		\node[below] at (8,-0.4) {\scriptsize$D_{16}$};
    		
    		\node[below] at (-8,-0.5) {\scriptsize$D_{41}$};
    		\node[below] at (-6,-1.2) {\scriptsize$D_{42}$};
    		\node[below] at (-3,-0.5) {\scriptsize$D_{43}$};
    		\node[below] at (-3,1.2) {\scriptsize$D_{44}$};
    		\node[below] at (-6,2) {\scriptsize$D_{45}$};
    		\node[below] at (-8,1.2) {\scriptsize$D_{46}$};

    		\coordinate (center) at (0,0);
    	
    	 \begin{scope}[rotate around={60:(center)}]

    	\draw [very thick,black!40!green](0,0) -- (8.5,0);
    	\draw [very thick,black!40!green](-8.5,0) -- (0,0);
    	
    	\draw [very thick,black!20!blue](0,0) -- (4,2);
    	\draw [very thick,black!20!blue](4,2) -- (8,-2);
    	\draw [very thick,black!20!blue](0,0) -- (-4,2);
    	\draw [very thick,black!20!blue](-4,2) -- (-8,-2);
    	
    	\draw [very thick,black!20!blue](0,0) -- (4,-2);
    	\draw [very thick,black!20!blue](4,-2) -- (8,2);
    	\draw [very thick,black!20!blue](0,0) -- (-4,-2);
    	\draw [very thick,black!20!blue](-4,-2) -- (-8,2);
    	
    	\draw[very thick, black!20!blue, -latex]  (4,2) -- (2,1);
    	\draw[very thick, black!20!blue, -latex]  (4,-2) -- (2,-1);
    	\draw[very thick, black!20!blue, -latex]  (-4,2) -- (-2,1);
    	\draw[very thick, black!20!blue, -latex]  (-4,-2) -- (-2,-1);
    	
    	\draw[very thick, black!20!blue, -latex]  (6,0) -- (4.8,1.2);
    	\draw[very thick, black!20!blue, -latex]  (6,0) -- (4.8,-1.2);
    	\draw[very thick, black!20!blue, -latex]  (-6,0) -- (-4.8,1.2);
    	\draw[very thick, black!20!blue, -latex]  (-6,0) -- (-4.8,-1.2);
    	
    	\draw[very thick, black!40!green, -latex]  (0,0) -- (3,0);
    	\draw[very thick, black!40!green, -latex]  (0,0) -- (-3,0);
    	\draw[very thick, black!40!green, -latex]  (0,0) -- (7.3,0);
    	\draw[very thick, black!40!green, -latex]  (0,0) -- (-7.3,0);
    	
    	\draw[very thick, black!20!blue, -latex]  (6,0) -- (7.2,1.2);
    	\draw[very thick, black!20!blue, -latex]  (6,0) -- (7.2,-1.2);
    	\draw[very thick, black!20!blue, -latex]  (-6,0) -- (-7.2,1.2);
    	\draw[very thick, black!20!blue, -latex]  (-6,0) -- (-7.2,-1.2);

    	\fill (0,0) circle (2.5pt);
    	\fill (6,0) circle (2.5pt);
    	\fill (-6,0) circle (2.5pt);

    	 \end{scope}
    	
    	 \node[below] at (4,3*1.8) {$-\omega^2\lambda_0$};
    	 \node[below] at (-4,-3*1.4) {$\omega^2\lambda_0$};

    	 	\coordinate (center) at (0,0);
    	
    	 \begin{scope}[rotate around={120:(center)}]
	 	
    	 \draw [very thick,black!40!green](0,0) -- (8.5,0);
    	 \draw [very thick,black!40!green](-8.5,0) -- (0,0);
    	 
    	 \draw [very thick,black!20!blue](0,0) -- (4,2);
    	 \draw [very thick,black!20!blue](4,2) -- (8,-2);
    	 \draw [very thick,black!20!blue](0,0) -- (-4,2);
    	 \draw [very thick,black!20!blue](-4,2) -- (-8,-2);
    	 
    	 \draw [very thick,black!20!blue](0,0) -- (4,-2);
    	 \draw [very thick,black!20!blue](4,-2) -- (8,2);
    	 \draw [very thick,black!20!blue](0,0) -- (-4,-2);
    	 \draw [very thick,black!20!blue](-4,-2) -- (-8,2);
    	 
    	 \draw[very thick, black!20!blue, -latex]  (4,2) -- (2,1);
    	 \draw[very thick, black!20!blue, -latex]  (4,-2) -- (2,-1);
    	 \draw[very thick, black!20!blue, -latex]  (-4,2) -- (-2,1);
    	 \draw[very thick, black!20!blue, -latex]  (-4,-2) -- (-2,-1);
    	 
    	 \draw[very thick, black!20!blue, -latex]  (6,0) -- (4.8,1.2);
    	 \draw[very thick, black!20!blue, -latex]  (6,0) -- (4.8,-1.2);
    	 \draw[very thick, black!20!blue, -latex]  (-6,0) -- (-4.8,1.2);
    	 \draw[very thick, black!20!blue, -latex]  (-6,0) -- (-4.8,-1.2);
    	 
    	 \draw[very thick, black!40!green, -latex]  (0,0) -- (3,0);
    	 \draw[very thick, black!40!green, -latex]  (0,0) -- (-3,0);
    	 \draw[very thick, black!40!green, -latex]  (0,0) -- (7.3,0);
    	 \draw[very thick, black!40!green, -latex]  (0,0) -- (-7.3,0);
    	 
    	 \draw[very thick, black!20!blue, -latex]  (6,0) -- (7.2,1.2);
    	 \draw[very thick, black!20!blue, -latex]  (6,0) -- (7.2,-1.2);
    	 \draw[very thick, black!20!blue, -latex]  (-6,0) -- (-7.2,1.2);
    	 \draw[very thick, black!20!blue, -latex]  (-6,0) -- (-7.2,-1.2);

    	 \fill (0,0) circle (2.5pt);
    	 \fill (6,0) circle (2.5pt);
    	 \fill (-6,0) circle (2.5pt);

    	 \end{scope}
    		
    		\node[below] at (-4,5.2) {$\omega\lambda_0$};
    		\node[below] at (4,-4.3) {$-\omega\lambda_0$};
    		
    	\end{tikzpicture}
    	\caption{The jump contour $\Sigma^{(2)}$ in the complex $\lambda$-plane}
    	\label{figSigma2}
    \end{figure}
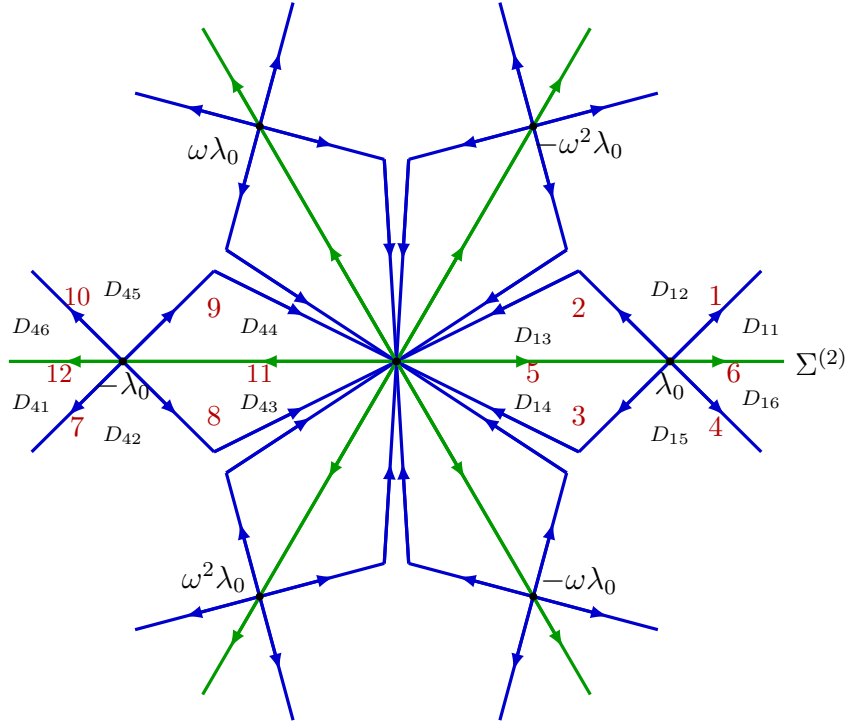

    Firstly, define the following matrix transformation at $\lambda_0$ as

    \begin{align}\label{38}
    	W_{\lambda_0}(x,t,\lambda)=\left\{\begin{aligned}
    		 &\begin{pmatrix}
    		 	1 & \frac{\delta_{N1}}{\delta^2_1}r_{1,a}(\lambda){\rm{e}}^{-\theta_{21}} & 0\\
    		 	0 & 1 & 0\\
    		 	0 & 0 & 1
    		 \end{pmatrix}, \qquad &&\lambda\in D_{11},\\
    		 &\begin{pmatrix}
    		 	1 & 0 & 0\\
    		 	-\frac{\delta^2_{1+}}{\delta_{N1}}\hat{r}^*_{1,a}(\lambda){\rm{e}}^{\theta_{21}} & 1 & 0\\
    		 	0 & 0 & 1
    		 \end{pmatrix}, \qquad &&\lambda\in D_{13},\\
    		 &\begin{pmatrix}
    		 	1 & -\frac{\delta_{N1}}{\delta^2_{1-}}\hat{r}_{1,a}(\lambda){\rm{e}}^{-\theta_{21}} & 0\\
    		 	0 & 1 & 0\\
    		 	0 & 0 & 1
    		 \end{pmatrix},\qquad &&\lambda\in D_{14},\\
    		 &\begin{pmatrix}
    		 	1 & 0 & 0\\
    		 	\frac{\delta^2_1}{\delta_{N1}}r^*_{1,a}(\lambda){\rm{e}}^{\theta_{21}} & 1 & 0\\
    		 	0 & 0 & 1
    		 \end{pmatrix}, \qquad &&\lambda\in D_{16},\\
    		 & \quad I, \qquad && \lambda\in D_{12}\cup D_{15}.
    	\end{aligned}
    	\right.
    \end{align}
    The transformation defined at $-\lambda_0$ is of the form
     \begin{equation}\label{39}
    	W_{-\lambda_0}(x,t,\lambda)=\left\{\begin{aligned}
    		&\begin{pmatrix}
    			1 & 0 & 0\\
    			-\frac{\delta_{N4}}{\delta^2_4}r_{2,a}(\lambda){\rm{e}}^{\theta_{21}} & 1 & 0\\
    			0 & 0 & 1
    		\end{pmatrix}, \qquad &&\lambda\in D_{41},\\
    		&\begin{pmatrix}
    			1 & \frac{\delta^2_{4+}}{\delta_{N4}}\hat{r}_{2,a}{\rm{e}}^{-\theta_{21}} & 0\\
    			0 & 1 & 0\\
    			0 & 0 & 1
    		\end{pmatrix}, \qquad &&\lambda\in D_{43},\\
    		&\begin{pmatrix}
    			1 & 0 & 0\\
    			\frac{\delta_{N4}}{\delta^2_{4-}}\hat{r}^*_{2,a}{\rm{e}}^{\theta_{21}} & 1 & 0\\
    			0 & 0 & 1
    		\end{pmatrix},\qquad &&\lambda\in D_{44},\\
    		&\begin{pmatrix}
    			1 & -\frac{\delta^2_4}{\delta_{N4}}r_{2,a}^*(\lambda){\rm{e}}^{-\theta_{21}} & 0\\
    			0 & 1 & 0\\
    			0 & 0 & 1
    		\end{pmatrix}, \qquad &&\lambda\in D_{46},\\
    		 & \quad I, \qquad && \lambda\in D_{42}\cup D_{45}.
    	\end{aligned}
    	\right.
    \end{equation}
     In addition, the definitions of $W(x,t,\lambda)$ at several other critical points satisfy the relations:
     \begin{equation*}
  	   \begin{aligned}
  		&W_{\omega\lambda_0}(x,t,\lambda)=\tilde{\sigma}_1^{-1}W_{\lambda_0}(x,t,\omega^2\lambda)\tilde{\sigma}_1,\\
  		&W_{\omega^2\lambda_0}(x,t,\lambda)=\tilde{\sigma}_1^{-1}W_{\lambda_0}(x,t,\omega\lambda)\tilde{\sigma}_1,\\
  		&W_{-\omega\lambda_0}(x,t,\lambda)=\tilde{\sigma}_1^{-1}W_{-\lambda_0}(x,t,\omega^2\lambda)\tilde{\sigma}_1,\\
  		&W_{-\omega^2\lambda_0}(x,t,\lambda)=\tilde{\sigma}_1^{-1}W_{-\lambda_0}(x,t,\omega\lambda)\tilde{\sigma}_1.\\
  	   \end{aligned}
     \end{equation*}
     From the properties satisfied by the reflection coefficients $r_1(\lambda)$ and $r_2(\lambda)$ in Theorem \ref{theo21}, it follows that $W \rightarrow I$ as $\lambda\rightarrow0$ and $\lambda\rightarrow\infty$.

     After defining the matrix transformation $W(x,t,\lambda)$ in \eqref{38} and \eqref{39}, the eigenfunction $M^{(1)}$ is operated by
      \begin{equation}\label{40}
     	M^{(2)}(x,t,\lambda)=M^{(1)}(x,t,\lambda)W(x,t,\lambda).
     \end{equation}
     \par
    Under this transformation, the RH problem for the eigenfunction $M^{(2)}(x,t,\lambda)$ is gotten immediately.

    \begin{rhp}\label{rhp2}
    	Find a $3\times3$ matrix-valued function $M^{(2)}(x,t,\lambda)$ with the following properties:
    	\begin{itemize}
    		\item $M^{(2)}(x,t,\lambda)$ is holomorphic in $\mathbb{C}\setminus\Sigma^{(2)}$, where 	$\Sigma^{(2)}$ is  shown in Fig. \ref{figSigma2}.
    		\item  $M^{(2)}(x,t,\lambda)$ satisfies the following jump relationship
    		\begin{equation*}
    			M^{(2)}_+(x,t,\lambda)=M^{(2)}_-(x,t,\lambda)v_j^{(2)}(x,t,\lambda),\qquad \lambda
    			\in \Sigma_j^{(2)},
    		\end{equation*}
    		where only the forms $v_j^{(2)}$ $(j=1,\cdots,12)$ are given
    		\begin{equation}\label{41}\begin{aligned}
    			&v^{(2)}_1 =\begin{pmatrix}
    				1 & -\frac{\delta_{N1}}{\delta^2_1}r_{1,a}(\lambda){\rm{e}}^{-\theta_{21}} & 0\\
    				0 & 1 & 0\\
    				0 & 0 & 1
    			\end{pmatrix},\quad
    			&& v^{(2)}_2=\begin{pmatrix}
    				1 & 0 & 0\\
    				-\frac{\delta^2_{1+}}{\delta_{N1}}\hat{r}^*_{1,a}(\lambda){\rm{e}}^{\theta_{21}} & 1 & 0\\
    				0 & 0 & 1
    			\end{pmatrix},\\
    			&v^{(2)}_3=\begin{pmatrix}
    				1 & \frac{\delta_{N1}}{\delta^2_{1-}}\hat{r}_{1,a}(\lambda){\rm{e}}^{-\theta_{21}} & 0\\
    				0 & 1 & 0\\
    				0 & 0 & 1
    			\end{pmatrix},\quad
    			&&v^{(2)}_4=\begin{pmatrix}
    				1 & 0 & 0\\
    				\frac{\delta^2_1}{\delta_{N1}}r^*_{1,a}(\lambda){\rm{e}}^{\theta_{21}} & 1 & 0\\
    				0 & 0 & 1
    			\end{pmatrix},\\
    			&v^{(2)}_5=\begin{pmatrix}
    				1-r_{1,r}r^*_{1,r} & -\frac{\delta_{N1}}{\delta^2_{1-}}\hat{r}_{1,r}{\rm{e}}^{-\theta_{21}} & 0\\
    				\frac{\delta^2_{1+}}{\delta_{N1}}\hat{r}^*_{1,r}{\rm{e}}^{\theta_{21}} & 1 & 0\\
    				0 & 0 & 1
    			\end{pmatrix},
    			&& v^{(2)}_6=\begin{pmatrix}
    				1 & -\frac{\delta_{N1}}{\delta^2_1}r_{1,r}{\rm{e}}^{-\theta_{21}} & 0\\
    				\frac{\delta^2_1}{\delta_{N1}}r^*_{1,r}{\rm{e}}^{\theta_{21}} & 1-r_{1,r}r^*_{1,r} & 0\\
    				0 & 0 & 1
    			\end{pmatrix},\\
    			&v^{(2)}_7 =\begin{pmatrix}
    				1 & 0 & 0\\
    				\frac{\delta_{N4}}{\delta^2_4}r_{2,a}(\lambda){\rm{e}}^{\theta_{21}} & 1 & 0\\
    				0 & 0 & 1
    			\end{pmatrix},\quad
    			&& v^{(2)}_8=\begin{pmatrix}
    				1 & \frac{\delta^2_{4+}}{\delta_{N4}}\hat{r}_{2,a}{\rm{e}}^{-\theta_{21}} & 0\\
    				0 & 1 & 0\\
    				0 & 0 & 1
    			\end{pmatrix},\\
    			&v^{(2)}_9=\begin{pmatrix}
    				1 & 0 & 0\\
    				-\frac{\delta_{N4}}{\delta^2_{4-}}\hat{r}^*_{2,a}{\rm{e}}^{\theta_{21}} & 1 & 0\\
    				0 & 0 & 1
    			\end{pmatrix},\quad
    			&&v^{(2)}_{10}=\begin{pmatrix}
    				1 & -\frac{\delta^2_4}{\delta_{N4}}r_{2,a}^*(\lambda){\rm{e}}^{-\theta_{21}} & 0\\
    				0 & 1 & 0\\
    				0 & 0 & 1
    			\end{pmatrix},\\
    			&v^{(2)}_{11}=\begin{pmatrix}
    				1 & -\frac{\delta^2_{4+}}{\delta_{N4}}\hat{r}_{2,r}{\rm{e}}^{-\theta_{21}} & 0\\
    				\frac{\delta_{N4}}{\delta^2_{4-}}\hat{r}^*_{2,r}{\rm{e}}^{\theta_{21}} & 1-r_{2,r}r^*_{2,r} & 0\\
    				0 & 0 & 1
    			\end{pmatrix},
    			&& v^{(2)}_{12}=\begin{pmatrix}
    				1-r_{2,r}r^*_{2,r} & -\frac{\delta_4^2}{\delta_{N4}}r_{2,r}^*{\rm{e}}^{-\theta_{21}} & 0\\
    				\frac{\delta_{N4}}{\delta_4^2}r_{2,r}{\rm{e}}^{\theta_{21}} & 1 & 0\\
    				0 & 0 & 1
    			\end{pmatrix},\\
    		\end{aligned}\end{equation}
    		and the rest of the jump matrices can be obtained according to the symmetry  $v_{j+12}^{(2)}(\lambda)=\tilde{\sigma}_1^{-1}v_j^{(2)}(\omega^2\lambda)\tilde{\sigma}_1$ for $j=1,2,\cdots,24$.
    		\item  As $\lambda\rightarrow\infty$, $M^{(1)}(x,t,\lambda)=I+\mathcal{O}(\frac{1}{\lambda})$ for $\lambda\in\mathbb{C}\setminus\Sigma^{(2)}$.
    		\item  As $\lambda\rightarrow0$, $M^{(2)}(x,t,\lambda)=G(x,t)+\mathcal{O}({\lambda})$ for $\lambda\in\mathbb{C}\setminus\Sigma^{(2)}$.
    		\item  $M^{(2)}(x,t,\lambda)=\tilde{\sigma}_1^{-1}M^{(2)}(x,t,\omega\lambda)\tilde{\sigma}_1=\tilde{\sigma}_2\overline{M^{(2)}(x,t,\overline{\lambda})}\tilde{\sigma}_2^{-1}$.
    	\end{itemize}
    \end{rhp}

    \begin{lem}
    	The jump matrices in \eqref{41} of RH problem \ref{rhp2} satisfies the following properties:
    	\begin{enumerate}
    		\item The jump matrices $v_j^{(2)}\rightarrow I$ $(j=1, 2,\cdots,36)$ uniformly as $t\rightarrow\infty$ for $\lambda\in \Sigma^{(2)}\setminus\{\pm \omega^j\lambda_0,j=0,1,2\}$.
    		\item For the index $j$ that $\Sigma_j^{(2)}\in \mathbb{R}\cup\omega\mathbb{R}\cup\omega^2\mathbb{R}$, it follows that
    		\begin{equation*}
    			\|(1+\vert\cdot\vert)(v^{(2)}_j-I)\|_{L^1\cap L^\infty(\Sigma^{(2)}_j)}\leq \frac{C}{t^{N+\frac{3}{2}}}.
    		\end{equation*}
    	\end{enumerate}
    \end{lem}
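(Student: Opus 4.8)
The plan is to establish both assertions by first reducing, via the $\mathbb{Z}_3$ symmetry $v_{j+12}^{(2)}(\lambda)=\tilde{\sigma}_1^{-1}v_j^{(2)}(\omega^2\lambda)\tilde{\sigma}_1$ and the conjugation symmetry of RH problem \ref{rhp2}, to the finitely many representative jump matrices attached to the two critical points $\lambda_0$ and $-\lambda_0$, and then splitting the analysis into two groups according to the contour $\Sigma^{(2)}$ in Fig. \ref{figSigma2}: the slanted lens boundaries (indices $1$--$4$ near $\lambda_0$ and $7$--$10$ near $-\lambda_0$), which carry the \emph{analytic} parts $r_{1,a},\hat{r}_{1,a},r_{2,a},\hat{r}_{2,a}$, and the residual pieces lying on $\mathbb{R}\cup\omega\mathbb{R}\cup\omega^2\mathbb{R}$ (indices $5$, $6$, $11$, $12$ and their rotations), which carry the \emph{remainder} parts $r_{1,r},\hat{r}_{1,r},r_{2,r},\hat{r}_{2,r}$. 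Throughout I would use that, by Proposition \ref{prop41}, the scalar functions $\delta_j$ are bounded above and below on $\Sigma^{(2)}$ away from the origin, so that all prefactors $\delta_{N1}/\delta_1^2$, $\delta_1^2/\delta_{N1}$, $\delta_{N4}/\delta_4^2$, and so on, can be absorbed into the constants and do not affect any estimate.

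For the first assertion, fix a lens boundary, say the ray bounding $D_{11}$ on which $v_1^{(2)}$ has the single nontrivial entry $-\tfrac{\delta_{N1}}{\delta_1^2}r_{1,a}(\lambda){\rm{e}}^{-\theta_{21}}$. By construction of the contour via Fig. \ref{figsignature}, this ray lies in the region where ${\rm{Re}}\,\theta_{21}>0$, so the phase contributes decay $|{\rm{e}}^{-\theta_{21}}|={\rm{e}}^{-t\,{\rm{Re}}\,\vartheta_{21}}$ with $\theta_{21}=t\vartheta_{21}$. Combining this with the analytic-continuation bound $|r_{1,a}(\lambda)|\le \tfrac{C}{1+|\lambda|}{\rm{e}}^{\frac{t}{4}|{\rm{Re}}\,\vartheta_{21}(\lambda)|}$ from Lemma \ref{lem43} gives a net exponent $-\tfrac{3t}{4}|{\rm{Re}}\,\vartheta_{21}(\lambda)|$, which is strictly negative for every $\lambda$ bounded away from the stationary points $\pm\omega^j\lambda_0$. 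Hence $\|v_1^{(2)}-I\|\to 0$ uniformly outside any fixed neighborhood of those points; the same computation, with the matching sign of ${\rm{Re}}\,\theta_{21}$ and the continuation direction recorded in Lemma \ref{lem43}(1), applies to $v_2^{(2)},\dots,v_4^{(2)}$ and, near $-\lambda_0$, to $v_7^{(2)},\dots,v_{10}^{(2)}$, and the symmetries transport the bounds to all rotated indices. For the residual pieces on the rays the uniform convergence in part (1) is immediate from the stronger estimate proved for part (2).

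For the second assertion, the relevant jump matrices are $v_5^{(2)},v_6^{(2)},v_{11}^{(2)},v_{12}^{(2)}$ together with their $\mathbb{Z}_3$ images. Their off-diagonal entries have the form $\tfrac{\delta_{N1}}{\delta_{1-}^2}\hat{r}_{1,r}{\rm{e}}^{-\theta_{21}}$, $\tfrac{\delta_1^2}{\delta_{N1}}r_{1,r}^*{\rm{e}}^{\theta_{21}}$ and their $r_2$-analogues, while the diagonal entries are $1-r_{1,r}r_{1,r}^*$ or $1-r_{2,r}r_{2,r}^*$. Since these pieces lie on $\mathbb{R}\cup\omega\mathbb{R}\cup\omega^2\mathbb{R}$, one has $|{\rm{e}}^{\mp\theta_{21}}|=|{\rm{e}}^{\mp t\vartheta_{21}}|$, and the bounded $\delta$-prefactors are absorbed into the constant; the weighted estimates $\|(1+|\cdot|)r_{1,r}{\rm{e}}^{-t\vartheta_{21}}\|_{L^p}\le C t^{-N-3/2}$, and the analogous bounds for $\hat{r}_{1,r},r_{2,r},\hat{r}_{2,r}$ furnished by Lemma \ref{lem43}(3), then yield at once
\begin{equation*}
\|(1+|\cdot|)(v_j^{(2)}-I)\|_{L^1\cap L^\infty(\Sigma_j^{(2)})}\le \frac{C}{t^{N+3/2}},
\end{equation*}
the diagonal contribution being quadratically small in the same norm.

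The main obstacle is the careful bookkeeping in the first assertion: one must verify, ray by ray around each of the six critical points, that the side into which each analytic part $r_{j,a}$ or $\hat{r}_{j,a}$ has been continued (as recorded in Lemma \ref{lem43}(1)) is exactly the side on which the corresponding phase $\mp\theta_{21}$ decays, so that the growth factor ${\rm{e}}^{t|{\rm{Re}}\,\vartheta_{21}|/4}$ from the continuation is strictly dominated. This sign matching is precisely what the signature picture of $\theta_{21}$ in Fig. \ref{figsignature} and the geometry of $\Sigma^{(2)}$ in Fig. \ref{figSigma2} are engineered to guarantee. The degeneration of the decay rate as $\lambda\to\pm\omega^j\lambda_0$ is harmless here, since those points are excised in the statement and are handled separately by the local parametrix construction in the subsequent subsections.
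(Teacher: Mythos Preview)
Your proposal is correct and follows essentially the same approach that the paper indicates: the paper's own proof is the one-line remark that the result ``can be completed according to the properties of $\delta$ function and Lemma \ref{lem43}'', and your argument is precisely a fleshed-out version of that sketch, using Proposition \ref{prop41} to absorb the $\delta$-prefactors and splitting the contours into the lens boundaries (controlled by the analytic-continuation bounds in Lemma \ref{lem43}(2) against the decaying phase) versus the rays (controlled directly by the remainder estimates in Lemma \ref{lem43}(3)).
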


   The proof can be completed according to the properties of $\delta$ function and Lemma \ref{lem43}, which will not take up space to give a detailed proof here.

   \subsubsection{Local parametrix at the critical points $\pm\omega^j\lambda_0$}\label{423}
   \ \ \ \
   The RH problem \ref{rhp2}, constructed from the previous subsection, corresponds to the jump matrices $v^{(2)}_j(x,t,\lambda)$ that asymptotically approaches the identity matrix $I$ as $t\rightarrow\infty$ except for the critical points $\pm\omega^j\lambda_0$ $(j=0,1,2)$ on the jump contours $\Sigma^{(2)}$. Next, we deal two critical points $\pm\lambda_0$ on the real axis, and establish the relationship between the local RH problem at the two critical points and the parabolic cylinder functions. The other four critical points can be dealt with correspondingly by using the symmetries.
   \par
   For any $\epsilon>0$, consider two small discs $B_{\epsilon}(\pm\lambda_0):=\{\lambda\in\mathbb{C}|\vert\lambda\pm\lambda_0\vert\leq \epsilon \}$. The problem with this subsection lies in the interior of the light cone $\vert\frac{x}{t}\vert< 1$, that is, Region {\rm{IV}}. Thus, there exists a fixed constant $M$ such that $0<\lambda_0<M$. Define two scaling transformations at $\pm\lambda_0$ by
   \begin{equation}\label{42}
   	\begin{aligned}
   		&z_1=(\lambda-\lambda_0)\sqrt{\frac{2\sqrt{3}t}{\lambda_0(1+\lambda_0^2)}},\\
   		&z_2=(\lambda+\lambda_0)\sqrt{\frac{2\sqrt{3}t}{\lambda_0(1+\lambda_0^2)}}.\\
   	\end{aligned}
   \end{equation}
   \par
   Then the phase function $\theta_{21}(\lambda)$ involved in the jump matrix is expanded as follows at $\lambda_0$ and $-\lambda_0$, respectively
   \begin{equation*}
   	\begin{aligned}	&\theta_{21}(\lambda)=\theta_{21}(\lambda_0)-\frac{iz_1^2}{2}\left(1-\frac{\lambda_0^3z_1}{a\eta_1^4}\right),\\
   		&\theta_{21}(\lambda)=\theta_{21}(-\lambda_0)+\frac{iz_2^2}{2}\left( 1+\frac{\lambda_0^3z_2}{a\eta_2^4}\right) ,\\
   	\end{aligned}
   \end{equation*}
   where $a=\sqrt{\frac{\lambda_0(1+\lambda_0^2)}{2\sqrt{3}t}}=(bt)^{-1/2}$, $b=\frac{2 \sqrt{3}}{\lambda_0(1+\lambda_0^2)}$, $\eta_1=\lambda_0+ak_1z_1$ and $\eta_2=-\lambda_0+ak_2z_2$, $0<k_1,k_2<1$.

    Recall the definition of $\delta_j(\lambda)$ $(j=1,4)$ in \eqref{33} and \eqref{34}, which can be written as
    \begin{equation*}
    	\delta_1(\lambda)={\rm{e}}^{\chi_1(\lambda)}={\rm{e}}^{i\nu_1(\lambda_0)\log_{-\pi}(\lambda-\lambda_0)+\chi_1'(\lambda)},
    \end{equation*}
     \begin{equation*}
    	\delta_4(\lambda)={\rm{e}}^{\chi_4(\lambda)}={\rm{e}}^{i\nu_4(-\lambda_0)\log_0(\lambda+\lambda_0)+\chi_4'(\lambda)}.
    \end{equation*}
    Take the element $\frac{\delta^2_{1+}}{\delta_{N1}}$ involved in matrix $v_2^{(2)}$ as an example, it is expressed as
    \begin{equation*}
    		    \frac{\delta^2_{1+}(\lambda)}{\delta_{N1}(\lambda)}={\rm{e}}^{2i\nu_1(\lambda_0)\log_{-\pi}(z_1)}\frac{a^{2i\nu_1(\lambda_0)}{\rm{e}}^{-2\chi_1'(\lambda_0)}}{\delta_{N1}(\lambda_0)}\frac{{\rm{e}}^{2\chi_1'(\lambda_0)-2\chi_1'(\lambda)}\delta_{N1}(\lambda_0)}{\delta_{N1}(\lambda)}.
    \end{equation*}
    Define $\delta_{\lambda_0}^0(\lambda_0)=\frac{a^{2i\nu_1(\lambda_0)}{\rm{e}}^{-2\chi_1'(\lambda_0)}}{\delta_{N1}(\lambda_0)}$ and $\delta_{\lambda_0}^1(\lambda)=\frac{{\rm{e}}^{2\chi_1'(\lambda_0)-2\chi_1'(\lambda)}\delta_{N1}(\lambda_0)}{\delta_{N1}(\lambda)}$.
     Since $\delta_{N1}(\lambda)$ is not the function of $\delta_1(\lambda)$, it is holomorphic at $\lambda_0$ and can be performed as a Taylor expansion at $\lambda_0$. Through sorting out the expansion, it is derived that
    \begin{equation}\label{43}
    	\begin{aligned}
          \delta_{\lambda_0}^1(\lambda) &={\rm{exp}}\{[\alpha_{11}(\lambda_0)\log_{-\pi}(\lambda\!-\!\lambda_0)\!+\!\alpha_{12}(\lambda_0)](\lambda\!-\!\lambda_0)+[\alpha_{21}(\lambda_0)\log_{-\pi}(\lambda\!-\!\lambda_0)\\
          &\quad\!+\!\alpha_{22}(\lambda_0)](\lambda\!-\!\lambda_0)^2+\cdots+[\alpha_{N1}(\lambda_0)\log_{-\pi}(\lambda\!-\!\lambda_0)\!+\!\alpha_{N2}(\lambda_0)](\lambda\!-\!\lambda_0)^N\\
          &\quad\!+\!\mathcal{O}[(\lambda\!-\!\lambda_0)^{N+1}\log_{-\pi}(\lambda\!-\!\lambda_0)]\},
    	\end{aligned}
    \end{equation}
    where $\alpha_{ij}$ $(i=1, 2,\cdots,N,j=1,2)$ are smooth functions of $\lambda_0$ and are independent of $\lambda$.

    Similarly, the element $\frac{\delta_{N4}}{\delta^2_{4}}$ involved in matrix $v_7^{(2)}$ can be expressed as
    \begin{equation*}
    	\begin{aligned}		    \frac{\delta_{N4}(\lambda)}{\delta^2_{4}(\lambda)}&={\rm{e}}^{-2i\nu_4(-\lambda_0)\log_0(z_2)}
    \frac{\delta_{N4}(-\lambda_0)}{a^{2i\nu_4(-\lambda_0)}{\rm{e}}^{-2\chi_4'(-\lambda_0)}}
    \frac{\delta_{N4}(\lambda)}{{\rm{e}}^{2\chi_4'(-\lambda_0)-2\chi_4'(\lambda)}\delta_{N4}(-\lambda_0)}.\\
    	\end{aligned}
    \end{equation*}
      Further denote $\delta_{-\lambda_0}^0(-\lambda_0)=\frac{a^{2i\nu_4(-\lambda_0)}{\rm{e}}^{-2\chi_4'(-\lambda_0)}}{\delta_{N4}(-\lambda_0)}$, $\delta_{-\lambda_0}^1(\lambda)=\frac{{\rm{e}}^{2\chi_4'(-\lambda_0)-2\chi_4'(\lambda)}
      \delta_{N4}(-\lambda_0)}{\delta_{N4}(\lambda)}$, which can be expanded into the following form
       \begin{equation}\label{44}
       	\begin{aligned}
       	\delta_{-\lambda_0}^1(\lambda)
       	&={\rm{exp}}\{[\beta_{11}(\!-\!\lambda_0)\log_0(\lambda+ \lambda_0)+\beta_{12}(\!-\!\lambda_0)](\lambda\!+\!\lambda_0)+[\beta_{21}(\!-\!\lambda_0)\log_0(\lambda\!+\! \lambda_0)\\
       		&\!+\!\beta_{22}(\!-\!\lambda_0)](\lambda\!+\!\lambda_0)^2
       		+\cdots+[\beta_{N1}(\!-\!\lambda_0)\log_0(\lambda\!+\!\lambda_0)+\beta_{N2}(\!-\!\lambda_0)](\lambda\!+\!\lambda_0)^N\\
       		&+\mathcal{O}[(\lambda\!+\!\lambda_0)^{N+1}\log_0(\lambda\!+\!\lambda_0)]\},
       	\end{aligned}
       \end{equation}
    where $\beta_{ij}$ $(i=1, 2,\cdots,N,j=1,2)$ are smooth functions of $-\lambda_0$ and are independent of $\lambda$.

    Then introduce the diagonal matrix $H$ at $\pm\lambda_0$ as
    \begin{equation}\label{45}
    	H(\pm\lambda_0,t)=\begin{pmatrix}
     [\delta_{\pm\lambda_0}^0(\lambda)]^{\mp\frac{1}{2}}\,{\rm{e}}^{-\frac{1}{2}\theta_{21}(\pm\lambda_0)} & 0 & 0\\
    		0 &  [\delta_{\pm\lambda_0}^0(\lambda)]^{\pm\frac{1}{2}}\,{\rm{e}}^{\frac{1}{2}\theta_{21}(\pm\lambda_0)} & 0\\
    		0 & 0 & 1
    	\end{pmatrix},
    \end{equation}
    which satisfy the following properties.

    \begin{lem}\label{lem45}
    	For $t\geq1$ and $0<\lambda_0<M$, fix $z_1$ and $z_2$. The matrix functions $H(\pm\lambda_0,t)$ defined by equation \eqref{45} satisfy
    	\begin{equation*}
    		\sup_{t\geq1}\vert H(\pm\lambda_0,t)\vert\leq C.
    	\end{equation*}
     The functions $\delta_{\pm\lambda_0}^0(\pm\lambda_0)$ satisfy
    	\begin{equation*}
    		\vert\delta_{\lambda_0}^0(\lambda_0)\vert=1, \quad
    		\vert\delta_{-\lambda_0}^0(-\lambda_0)\vert={\rm{e}}^{-2\pi \nu_4(-\lambda_0)}.
    	\end{equation*}
    	Moreover, the function $\frac{\lambda_0^3z_j}{a\eta_j^4}\rightarrow0$ for $j=1,2$.
    \end{lem}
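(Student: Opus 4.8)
The plan is to treat the three assertions separately, resting on two elementary observations: that the phase $\theta_{21}$ is purely imaginary at the real stationary points $\pm\lambda_0$, and that the scalar factors $\delta_1,\delta_4$ and their symmetric relatives are unimodular at real points off their cuts. For the first, note from \eqref{26} that $\omega^2-\omega=-\ri\sqrt{3}$, so that for real $\lambda$ the bracket $(\lambda-\lambda^{-1})\xi+(\lambda+\lambda^{-1})$ is real and hence $\theta_{21}(\pm\lambda_0)\in\ri\mathbb{R}$; consequently $|\mathrm{e}^{\pm\frac12\theta_{21}(\pm\lambda_0)}|=1$. This reduces the boundedness of the diagonal matrix $H(\pm\lambda_0,t)$ in \eqref{45} to the boundedness of $[\delta^0_{\pm\lambda_0}(\pm\lambda_0)]^{\pm1/2}$, i.e.\ to the second assertion, which I would therefore prove first.

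To compute $|\delta^0_{\pm\lambda_0}|$ I would analyze its three defining factors. Since $a=(bt)^{-1/2}>0$ is real and $\nu_1(\lambda_0),\nu_4(-\lambda_0)$ are real (indeed nonnegative) by \eqref{nu14}, we have $|a^{2\ri\nu_1}|=|a^{2\ri\nu_4}|=1$. Next, writing $\delta_{N1}$ and $\delta_{N4}$ via the symmetries \eqref{35} as products of $\delta_1,\delta_4$ evaluated at $\pm\lambda_0$ and at the off-axis points $\pm\omega\lambda_0,\pm\omega^2\lambda_0$, I would invoke the conjugation symmetry $\delta_j^{-1}(\lambda)=\overline{\delta_j(\bar\lambda)}$ of Proposition \ref{prop41}(3). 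This gives $|\delta_j(\mu)\,\delta_j(\bar\mu)|=1$, and since the off-axis points occur in conjugate pairs ($\overline{\omega\lambda_0}=\omega^2\lambda_0$, etc.) while $\delta_4(\lambda_0)$ and $\delta_1(-\lambda_0)$ are unimodular (real points off the respective cuts), every factor pairs off to yield $|\delta_{N1}(\lambda_0)|=|\delta_{N4}(-\lambda_0)|=1$. Hence $|\delta^0_{\lambda_0}(\lambda_0)|=\mathrm{e}^{-2\,\mathrm{Re}\,\chi_1'(\lambda_0)}$ and $|\delta^0_{-\lambda_0}(-\lambda_0)|=\mathrm{e}^{-2\,\mathrm{Re}\,\chi_4'(-\lambda_0)}$, and it remains to evaluate these two real parts.

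This evaluation is the crux and the main obstacle. Starting from the Cauchy integrals \eqref{33}--\eqref{34}, I would split $\ln(1-|r_1(s)|^2)=\ln(1-|r_1(\lambda_0)|^2)+[\ln(1-|r_1(s)|^2)-\ln(1-|r_1(\lambda_0)|^2)]$ (and similarly at $-\lambda_0$). After subtracting the explicit $\ri\nu\log$ singularity, the difference term yields a Cauchy integral of a real integrand over $2\pi\ri$, which is purely imaginary at the stationary point and so does not affect $\mathrm{Re}\,\chi_j'$. The constant term reduces to $\ri\nu_1[\log(\lambda_0-\lambda)-\log(-\lambda)]-\ri\nu_1\log_{-\pi}(\lambda-\lambda_0)$ (resp.\ with $\nu_4$, $-\lambda_0$ and $\log_0$), whose real part is governed entirely by branch jumps of the logarithms at the endpoint $s=0$ and at the branch point. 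The delicate point is to track these branches: for $+\lambda_0$ the endpoint contribution $\mathrm{Re}\,[-\ri\nu_1\log(-\lambda_0)]=\pm\pi\nu_1$ exactly cancels the $\mathrm{Re}\,[\ri\nu_1\log(-1)]$ arising from $\log(\lambda_0-\lambda)$ versus $\log_{-\pi}(\lambda-\lambda_0)$, giving $\mathrm{Re}\,\chi_1'(\lambda_0)=0$; whereas for $-\lambda_0$ the endpoint $s=0$ gives $\log(-\lambda)|_{\lambda=-\lambda_0}=\log(\lambda_0)$, which is real and contributes nothing, so only the $\log_0$-versus-$\log(-1)$ discrepancy survives and leaves $\mathrm{Re}\,\chi_4'(-\lambda_0)=\pi\nu_4$. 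These give $|\delta^0_{\lambda_0}(\lambda_0)|=1$ and $|\delta^0_{-\lambda_0}(-\lambda_0)|=\mathrm{e}^{-2\pi\nu_4(-\lambda_0)}$, the asymmetry being precisely an artifact of the two branch conventions $\log_{-\pi}$ and $\log_0$ fixed in Proposition \ref{prop41}.

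With the second assertion established, the uniform bound $\sup_{t\ge1}|H(\pm\lambda_0,t)|\le C$ follows at once: the entries at $\pm\lambda_0$ have moduli $1$ and $\mathrm{e}^{\mp\pi\nu_4(-\lambda_0)}$, and since $0<\lambda_0<M$ with $r_2$ smooth and $r_2(0)=0$ by Theorem \ref{theo21}, the quantity $\nu_4(-\lambda_0)$ remains bounded, so the entries are bounded uniformly in $t$. Finally, for the remainder I would read $\eta_1=\lambda_0+ak_1z_1$ and $\eta_2=-\lambda_0+ak_2z_2$ with $z_1,z_2$ fixed and $0<k_1,k_2<1$: as $t\to\infty$ one has $a=(bt)^{-1/2}\to0$, whence $\eta_j\to\pm\lambda_0$ and $\eta_j^4\to\lambda_0^4$ stays bounded away from $0$ (using $\lambda_0>0$). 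Since this term is the normalized cubic Lagrange remainder of $\theta_{21}$ at the stationary point and $\lambda-\lambda_0=az_j$, it is one order smaller in $a$ than the quadratic term, hence carries an extra factor $a$ and is $O(t^{-1/2})\to0$, which is the last claim.
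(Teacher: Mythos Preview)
Your proposal is correct and follows essentially the same route as the paper: reduce the boundedness of $H$ to the modulus identities for $\delta^0_{\pm\lambda_0}$, establish $|a^{2\ri\nu}|=1$ and $|\delta_{N1}(\lambda_0)|=|\delta_{N4}(-\lambda_0)|=1$ via the conjugation symmetry of the $\delta_j$ together with the pairing $\overline{\omega\lambda_0}=\omega^2\lambda_0$, and then compute $\mathrm{Re}\,\chi_1'(\lambda_0)=0$, $\mathrm{Re}\,\chi_4'(-\lambda_0)=\pi\nu_4$. Your write-up is in fact more explicit than the paper's on two points---you spell out why $\theta_{21}(\pm\lambda_0)\in\ri\mathbb{R}$ and you actually carry out the branch-tracking for $\chi_j'$ (the paper merely states the values), and you also address the third assertion about the cubic remainder, which the paper's own proof does not treat.
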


    \begin{proof}
    	From $a\in\mathbb{R}$ and $a>0$, by calculation it follows that
    	\begin{equation*}
    		\vert a^{2i\nu_1(\lambda_0)}\vert=\vert {\rm{e}}^{2 i \nu_1(\lambda_0)\ln a}\vert=1.
    	\end{equation*}
    	According to the definitions of \(\delta_{N1} = \frac{\delta_3 \delta_5 \delta_4^2}{\delta_2 \delta_6}\) and \(\delta_{N4} = \frac{\delta_1^2 \delta_2 \delta_6}{\delta_3 \delta_5}\), and by leveraging the conjugate symmetries of the \(\delta\) functions as stated in Proposition \ref{prop41}, along with the symmetric relationships \eqref{35} among $\delta_j$, we deduce that \(|\delta_{N1}| = 1\) and \(|\delta_{N4}| = 1\). Additionally, one has
    	\begin{equation*}
    		{\rm{Re}}\,\chi_1'(\lambda_0)=0, \quad {\rm{Re}}\,\chi_4'(-\lambda_0)=-\pi\nu_4(-\lambda_0),
    	\end{equation*}
    	so we can get
    	\begin{equation*}
    		\vert\delta_{\lambda_0}^0(\lambda_0)\vert=1, \quad
    		\vert\delta_{-\lambda_0}^0(-\lambda_0)\vert={\rm{e}}^{-2\pi \nu_4(-\lambda_0)}.
    	\end{equation*}

     Finally, the boundedness of $H(\pm\lambda_0,t)$ as $t\geq1$ can also be proven from the above equation, as well as the values taken by the phase function $\theta_{21}(\lambda)$ at $\pm\lambda_0$.

    \end{proof}

    Consider the following transformation of the eigenfunction $M^{(2)}$ with respect to $H(\pm\lambda_0,t)$:
    \begin{equation*}
    	M^{(\epsilon)}_{\pm\lambda_0}(x,t,\lambda)=M^{(2)}(x,t,\lambda)H(\pm\lambda_0,t),\qquad \lambda\in B_{\epsilon}(\pm\lambda_0),
    \end{equation*}
     and define $\Sigma^{\lambda_0}_j=\Sigma^{(2)}_j\cap B_{\lambda_0}(\epsilon)$ $(j=1,2,3,4)$, $\Sigma^{-\lambda_0}_j=\Sigma^{(2)}_j\cap B_{-\lambda_0}(\epsilon)$ $(j=7,8,9,10)$ as shown in Fig. \ref{figSigmalam}.
     After integrating the scale transformation \eqref{42}, we obtain the jump relations associated with the local RH problem at $\lambda_0$, and based on the expansion of $\delta_{\lambda_0}^1$ in \eqref{43}, the expansion forms of the jump matrix $v^{(\epsilon)}$ at $\Sigma^{\lambda_0}$ are derived:
      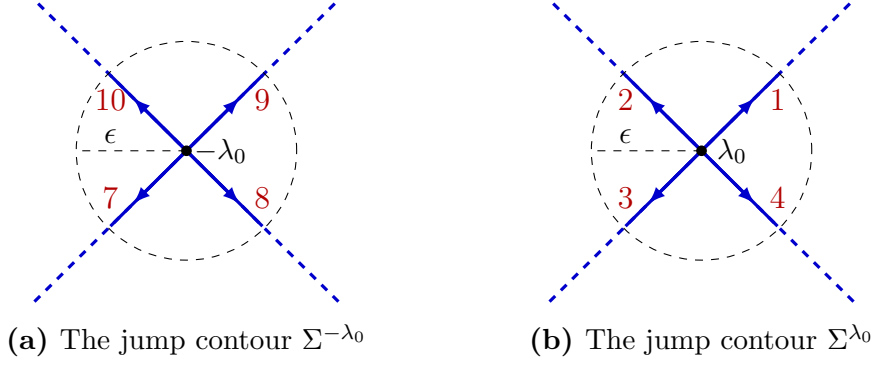
\begin{figure}[htbp]
     	\centering
     	\begin{subfigure}[b]{0.4\textwidth}
     		\centering
     	\begin{tikzpicture}
     		\draw[dashed,very thick,black!20!blue] (-2,-2) -- (2,2);
     		\draw[dashed,very thick,black!20!blue] (2,-2) -- (-2,2);
     		
     		\draw[dashed] (0,0) -- (-1.45,0);
     		
     		\draw [very thick,black!20!blue](-1,-1) -- (1,1);
     		\draw [very thick,black!20!blue](-1,1) -- (1,-1);
     		
     		\draw[very thick, black!20!blue, -latex]  (0,0) -- (0.7,0.7);
     		\draw[very thick, black!20!blue, -latex]  (0,0) -- (-0.7,0.7);
     		\draw[very thick, black!20!blue, -latex]  (0,0) -- (0.7,-0.7);
     		\draw[very thick, black!20!blue, -latex]  (0,0) -- (-0.7,-0.7);
     		
     		\draw [ dashed] (0,0) circle (1.45);
     		\fill (0,0) circle (2pt);
     		
     		\node at (0.45,0) {\small$-\lambda_0$};
     		\node at (-1,0.2) {$\epsilon$};
     		
     		\node[red!70!black,above] at (1,0.4) {$9$};
     		\node[red!70!black,above] at (-1,0.4) {$10$};
     		\node[red!70!black,above] at (-1,-0.9) {$7$};
     		\node[red!70!black,above] at (1,-0.9) {$8$};
     		
     	\end{tikzpicture}
     		\caption{The jump contour $\Sigma^{-\lambda_0}$}
     	\end{subfigure}
     	\begin{subfigure}[b]{0.4\textwidth}
     		\centering
     		\begin{tikzpicture}
     			\draw[dashed,very thick,black!20!blue] (-2,-2) -- (2,2);
     			\draw[dashed,very thick,black!20!blue] (2,-2) -- (-2,2);
     			
     			\draw[dashed] (0,0) -- (-1.45,0);
     			
     			\draw [very thick,black!20!blue](-1,-1) -- (1,1);
     			\draw [very thick,black!20!blue](-1,1) -- (1,-1);
     			
     			\draw[very thick, black!20!blue, -latex]  (0,0) -- (0.7,0.7);
     			\draw[very thick, black!20!blue, -latex]  (0,0) -- (-0.7,0.7);
     			\draw[very thick, black!20!blue, -latex]  (0,0) -- (0.7,-0.7);
     			\draw[very thick, black!20!blue, -latex]  (0,0) -- (-0.7,-0.7);
     			
     			\draw [ dashed] (0,0) circle (1.45);
     			\fill (0,0) circle (2pt);
     			
     			\node at (0.4,0) {\small$\lambda_0$};
     			\node at (-1,0.2) {$\epsilon$};
     			
     			\node[red!70!black,above] at (1,0.4) {$1$};
     			\node[red!70!black,above] at (-1,0.4) {$2$};
     			\node[red!70!black,above] at (-1,-0.9) {$3$};
     			\node[red!70!black,above] at (1,-0.9) {$4$};
     			
     		\end{tikzpicture}
     		\caption{The jump contour $\Sigma^{\lambda_0}$}
     	\end{subfigure}
     	\caption{The jump contours $\Sigma^{\pm\lambda_0}$ in the complex $\lambda$-plane}
     	\label{figSigmalam}
     \end{figure}
     \begin{align}\label{vep}
     			v^{(\epsilon)}_1(z_1) &=\begin{pmatrix}
     				1 & -{\rm{e}}^{-2i\nu_1(\lambda_0)\log_{-\pi}(z_1)}[\delta_{\lambda_0}^1(z_1)]^{-1}r_{1,a}(z_1){\rm{e}}^{\frac{iz_1^2}{2}(1-\frac{\lambda_0^3z_1}{a\eta_1^4})} & 0\\
     				0 & 1 & 0\\
     				0 & 0 & 1
     			\end{pmatrix}\nonumber\\
     			&=I  -{\rm{e}}^{-2i\nu_1(\lambda_0)\log_{-\pi}(z_1)+\frac{iz_1^2}{2}\left(1-\frac{\lambda_0^3z_1}{a\eta_1^4}\right)}\left( \sum_{0\leq q\leq p\leq N}\frac{(\ln t)^q}{(bt)^{\frac{p}{2}}} v_{1,pq}^{(\epsilon)}\right)+E_{v_1}^{(\epsilon)}(z_1,t,\lambda_0) ,\nonumber\\
     			v^{(\epsilon)}_2(z_1)&=\begin{pmatrix}
     				1 & 0 & 0\\
     				-{\rm{e}}^{2i\nu_1(\lambda_0)\log_{-\pi}(z_1)}\delta_{\lambda_0}^1(z_1)\hat{r}^*_{1,a}(z_1){\rm{e}}^{-\frac{iz_1^2}{2}(1-\frac{\lambda_0^3z_1}{a\eta_1^4})} & 1 & 0\\
     				0 & 0 & 1
     			\end{pmatrix}\nonumber\\
     			&=I  -{\rm{e}}^{2i\nu_1(\lambda_0)\log_{-\pi}(z_1)-\frac{iz_1^2}{2}\left(1-\frac{\lambda_0^3z_1}{a\eta_1^4}\right)}\left( \sum_{0\leq q\leq p\leq N}\frac{(\ln t)^q}{(bt)^{\frac{p}{2}}} v_{2,pq}^{(\epsilon)}\right)+E_{v_2}^{(\epsilon)}(z_1,t,\lambda_0),\\
     			v^{(\epsilon)}_3(z_1)&=\begin{pmatrix}
     				1 & {\rm{e}}^{-2i\nu_1(\lambda_0)\log_{-\pi}(z_1)}[\delta_{\lambda_0}^1(z_1)]^{-1}\hat{r}_{1,a}(z_1){\rm{e}}^{\frac{iz_1^2}{2}(1-\frac{\lambda_0^3z_1}{a\eta_1^4})} & 0\\
     				0 & 1 & 0\\
     				0 & 0 & 1
     			\end{pmatrix}\nonumber\\
     			&=I + {\rm{e}}^{-2i\nu_1(\lambda_0)\log_{-\pi}(z_1)+\frac{iz_1^2}{2}\left(1-\frac{\lambda_0^3z_1}{a\eta_1^4}\right)}\left( \sum_{0\leq q\leq p\leq N}\frac{(\ln t)^q}{(bt)^{\frac{p}{2}}} v_{3,pq}^{(\epsilon)}\right)+E_{v_3}^{(\epsilon)}(z_1,t,\lambda_0),\nonumber\\
     			v^{(\epsilon)}_4(z_1)&=\begin{pmatrix}
     				1 & 0 & 0\\
     				{\rm{e}}^{2i\nu_1(\lambda_0)\log_{-\pi}(z_1)}\delta_{\lambda_0}^1(z_1)r^*_{1,a}(z_1){\rm{e}}^{-\frac{iz_1^2}{2}(1-\frac{\lambda_0^3z_1}{a\eta_1^4})} & 1 & 0\\
     				0 & 0 & 1
     			\end{pmatrix}\nonumber\\
     			&=I + 	{\rm{e}}^{2i\nu_1(\lambda_0)\log_{-\pi}(z_1)-\frac{iz_1^2}{2}\left(1-\frac{\lambda_0^3z_1}{a\eta_1^4}\right)}\left( \sum_{0\leq q\leq p\leq N}\frac{(\ln t)^q}{(bt)^{\frac{p}{2}}} v_{4,pq}^{(\epsilon)}\right)+E_{v_4}^{(\epsilon)}(z_1,t,\lambda_0),\nonumber
     	\end{align}
     where
     \begin{equation}\label{46}
     	\|E_{v_j}^{(\epsilon)}(\cdot,t,\lambda_0)\|_{L^1\cap L^\infty}=\mathcal{O}\left( \frac{(\ln t)^{N+1}}{t^{(N+1)/2}}\right) ,\quad j=1,\cdots,4.
     \end{equation}
     The elements in matrices $v_{1,pq}^{(\epsilon)}(z_1,\lambda_0)$ and $v_{3,pq}^{(\epsilon)}(z_1,\lambda_0)$ are zero except for the $(1,2)$ positions, and the elements in matrices $v_{2,pq}^{(\epsilon)}(z_1,\lambda_0)$ and $v_{4,pq}^{(\epsilon)}(z_1,\lambda_0)$ are zero except for the $(2,1)$ positions. The elements at those positions involve only $\alpha_{ij}(\lambda_0)$, the relevant reflection coefficient $r_{1,a}$ and $\hat{r}_{1,a}$, as well as their conjugates.

      Similarly, by utilizing equation \eqref{44},
      the jump relations corresponding to the local RH problem at $-\lambda_0$ are as follows

      \begin{align}\label{vep1}
      		v^{(\epsilon)}_7(z_2)&= \begin{pmatrix}
      			1 & 0 & 0\\
      			{\rm{e}}^{-2i\nu_4(-\lambda_0)\log_0(z_2)}[\delta_{-\lambda_0}^1(z_2)]^{-1}r_{2,a}(z_2){\rm{e}}^{\frac{iz_2^2}{2}(1+\frac{\lambda_0^3z_2}{a\eta_2^4})} & 1 & 0\\
      			0 & 0 & 1
      		\end{pmatrix}\nonumber\\
      		&=I + {\rm{e}}^{-2i\nu_4(-\lambda_0)\log_0(z_2)+\frac{iz_2^2}{2}(1+\frac{\lambda_0^3z_2}{a\eta_2^4})}\left( \sum_{0\leq q\leq p\leq N}\frac{(\ln t)^q}{(bt)^{\frac{p}{2}}} v_{7,pq}^{(\epsilon)}\right)
      		+E_{v_7}^{(\epsilon)}(z_2,t,\lambda_0),\nonumber \\
      		v^{(\epsilon)}_8(z_2)&=\begin{pmatrix}
      			1 & {\rm{e}}^{2i\nu_4(-\lambda_0)\log_0(z_2)}\delta_{-\lambda_0}^1(z_2)\hat{r}_{2,a}(z_2){\rm{e}}^{-\frac{iz_2^2}{2}(1+\frac{\lambda_0^3z_2}{a\eta_2^4})} & 0\\
      			0 & 1 & 0\\
      			0 & 0 & 1
      		\end{pmatrix}\nonumber\\
      		&=I + {\rm{e}}^{2i\nu_4(-\lambda_0)\log_0(z_2)-\frac{iz_2^2}{2}(1+\frac{\lambda_0^3z_2}{a\eta_2^4})}\left( \sum_{0\leq q\leq p\leq N}\frac{(\ln t)^q}{(bt)^{\frac{p}{2}}} v_{8,pq}^{(\epsilon)}\right)
      		+E_{v_8}^{(\epsilon)}(z_2,t,\lambda_0), \\
      		v^{(\epsilon)}_9(z_2)&=\begin{pmatrix}
      			1 & 0 & 0\\
      			-{\rm{e}}^{-2i\nu_4(-\lambda_0)\log_0(z_2)}[\delta_{-\lambda_0}^1(z_2)]^{-1}\hat{r}^*_{2,a}(z_2){\rm{e}}^{\frac{iz_2^2}{2}(1+\frac{\lambda_0^3z_2}{a\eta_2^4})} & 1 & 0\\
      			0 & 0 & 1
      		\end{pmatrix}\nonumber\\
      		&=I 	-{\rm{e}}^{-2i\nu_4(-\lambda_0)\log_0(z_2)+\frac{iz_2^2}{2}(1+\frac{\lambda_0^3z_2}{a\eta_2^4})}\left( \sum_{0\leq q\leq p\leq N}\frac{(\ln t)^q}{(bt)^{\frac{p}{2}}} v_{9,pq}^{(\epsilon)}\right)
      		+E_{v_9}^{(\epsilon)}(z_2,t,\lambda_0),\nonumber\\
      		v^{(\epsilon)}_{10}(z_2)&=\begin{pmatrix}
      			1 & -	{\rm{e}}^{2i\nu_4(-\lambda_0)\log_0(z_2)}\delta_{-\lambda_0}^1(z_2)r^*_{2,a}(z_2){\rm{e}}^{-\frac{iz_2^2}{2}(1+\frac{\lambda_0^3z_2}{a\eta_2^4})} & 0\\
      			0 & 1 & 0\\
      			0 & 0 & 1
      		\end{pmatrix}\nonumber\\
      		&=I -	{\rm{e}}^{2i\nu_4(-\lambda_0)\log_0(z_2)-\frac{iz_2^2}{2}(1+\frac{\lambda_0^3z_2}{a\eta_2^4})}\left( \sum_{0\leq q\leq p\leq N}\frac{(\ln t)^q}{(bt)^{\frac{p}{2}}} v_{10,pq}^{(\epsilon)}\right)
      		+E_{v_{10}}^{(\epsilon)}(z_2,t,\lambda_0),\nonumber
      	\end{align}
       where
       \begin{equation}\label{47}
       	\|E_{v_j}^{(\epsilon)}(\cdot,t,\lambda_0)\|_{L^1\cap L^\infty}=\mathcal{O}\left( \frac{(\ln t)^{N+1}}{t^{(N+1)/2}}\right) ,\quad j=7, 8,\cdots,10.
       \end{equation}
     The elements in matrices $v_{7,pq}^{(\epsilon)}(z_1,\lambda_0)$ and $v_{9,pq}^{(\epsilon)}(z_1,\lambda_0)$ are zero except for the $(2,1)$ positions, and the elements in matrices $v_{8,pq}^{(\epsilon)}(z_1,\lambda_0)$ and $v_{10,pq}^{(\epsilon)}(z_1,\lambda_0)$ are zero except for the $(1,2)$ positions. The elements at those positions involve only $\beta_{ij}(\lambda_0)$, the relevant reflection coefficient $r_{2,a}$ and $\hat{r}_{2,a}$, as well as their conjugates.

     More importantly, the aforementioned expansions also meet the following conditions to ensure the rationality of the expansions:
      \begin{equation}\label{48}
      	\|{\rm{e}}^{\pm2i\nu_1(\lambda_0)\log_{-\pi}(\cdot)\mp\frac{i(\cdot)^2}{2}\left(1-\frac{\lambda_0^3(\cdot)}{a\eta_1^4}\right)}v_{j,pq}^{(\epsilon)}(\cdot,\lambda_0)\|_{L^1\cap L^\infty}\leq C, \quad j=1,\cdots,4,\quad 0\leq q\leq p\leq N,
      \end{equation}
      \begin{equation}\label{49}
      	\|{\rm{e}}^{\pm2i\nu_4(-\lambda_0)\log_0(\cdot)\mp\frac{i(\cdot)^2}{2}\left(1+\frac{\lambda_0^3(\cdot)}{a\eta_2^4}\right)}v_{j,pq}^{(\epsilon)}(\cdot,\lambda_0)\|_{L^1\cap L^\infty}\leq C, \quad j=7,\cdots,10,\quad 0\leq q\leq p\leq N.
      \end{equation}

     Fix $z_1$ and $z_2$, and then analyze some terms as $t\rightarrow \infty$. According to Lemma \ref{lem43}, it can be immediately obtained that $\frac{\lambda_0^3z_1}{a\eta_1^4},\frac{\lambda_0^3z_2}{a\eta_2^4}\rightarrow0$, and
     $$r_{1,a}\rightarrow \sum_{n=0}^{N}\frac{r^{(n)}_1(\lambda_0)}{n!}(az_1)^n,\quad \hat{r}_{1,a}\rightarrow \sum_{n=0}^{N}\frac{\hat{r}^{(n)}_1(\lambda_0)}{n!}(az_1)^n,$$
     and
     $$r_{2,a}\rightarrow \sum_{n=0}^{N}\frac{r^{(n)}_2(-\lambda_0)}{n!}(az_2)^n,\quad \hat{r}_{2,a}\rightarrow \sum_{n=0}^{N}\frac{\hat{r}^{(n)}_2(-\lambda_0)}{n!}(az_2)^n.$$

     According to the Beals-Coifman theory \cite{Beals-Coifman-1984}, the trivial decomposition of the jump matrix $v^{(\epsilon)}$  corresponding to point $\lambda_0$ is performed here. Setting $\omega^{(1)}_-=O$, $\omega^{(1)}_+=v^{(\epsilon)}-I$, and using the Cauchy operator $\mathcal{C}$ defined by equation \eqref{27}, we have
     \begin{equation*}
     	\begin{aligned}
     		\mathcal{C}_{\omega^{(1)}}f&=\mathcal{C}_-(f(v^{(\epsilon)}-I))\\
     		&=\sum_{0\leq q \leq p\leq N}\frac{(\ln t)^q}{(bt)^{p/2}}\mathcal{C}_{pq}f+E_c(t,\lambda(z_1))f,
     	\end{aligned}
     \end{equation*}
     where
     \begin{equation*}
     	\begin{aligned}
     		\mathcal{C}_{pq}f=&\mathcal{C}_-(f{\rm{e}}^{\pm2i\nu_1(\lambda_0)\log_{-\pi}(\cdot)\mp\frac{i(\cdot)^2}{2}(1-\frac{\lambda_0^3(\cdot)}{a\eta1^4})}v^{(\epsilon)}_{pq}),
     	\end{aligned}
     \end{equation*}
     and
     \begin{equation*}
     E_c(t,z_1)f=\mathcal{C}_-(fE^{(\epsilon)}_v(\cdot,t,\lambda_0)).
     \end{equation*}
    From the aforementioned estimates in the equations (\ref{46})-(\ref{49}), it can be concluded that for $0< \lambda_0< M$ the following equation holds uniformly
     \begin{equation*}
     	\begin{aligned}
     	&\|\mathcal{C}_{pq}\|_{\dot{L}^3\cap L^\infty\rightarrow \dot{L}^3}\leq c_{pq},\\
     	&\|E_c(t,z_1)\|_{\dot{L}^3\cap L^\infty\rightarrow \dot{L}^3}=\mathcal{O}\left( \frac{(\ln t)^{N+1}}{t^{(N+1)/2}}\right).\\
     	\end{aligned}
     \end{equation*}
     In addition, for $t$ sufficiently large and $0< \lambda_0< M$, the operator $1-\mathcal{C}_{00}$ is invertible and $(1-\mathcal{C}_{00})^{-1}$ is a bounded operator. Suppose $\mu^{(\epsilon)}_{\lambda_0}\in I+\dot{L}^3(\Sigma^{\lambda_0})$ is the solution to $(1-\mathcal{C}_{\omega^{(1)}})\mu^{(\epsilon)}_{\lambda_0}=I$, thus
     \begin{equation*}
    	\begin{aligned}
    	\mu^{(\epsilon)}_{\lambda_0}&=(1-\mathcal{C}_{\omega^{(1)}})^{-1}I\\
    	&=I+\sum_{0\leq q \leq p\leq N}\frac{(\ln t)^q}{(bt)^{p/2}}\mathcal{C}_{pq}I+\mathcal{O}\left( \frac{(\ln t)^{N+1}}{t^{(N+1)/2}}\right),
    	\end{aligned}
    \end{equation*}
    and we can always take a sufficiently large $t$ such that $\|\omega^{(1)}\|_{L^p(\Sigma^{\lambda_0}}<\|\mathcal{C}_-\|_{L^p(\Sigma^{\lambda_0})\rightarrow L^p(\Sigma^{\lambda_0})}^{-1}$, hence it follows that

     \begin{equation*}
    	\begin{aligned}
    		\|\mu^{(\epsilon)}_{\lambda_0}-I\|_{L^p(\Sigma^{\lambda_0})}&=\|(1-\mathcal{C}_{\omega^{(1)}})^{-1}\mathcal{C}_{\omega^{(1)}}I\|_{L^p(\Sigma^{\lambda_0})}\\
    		&\leq  \sum_{j=1}^{\infty}\|\mathcal{C}_{\omega^{(1)}}\|^{j-1}_{L^p(\Sigma^{\lambda_0})\rightarrow L^p(\Sigma^{\lambda_0})}\|\mathcal{C}_{\omega^{(1)}}I\|_{L^P(\Sigma^{\lambda_0})}\\
    		&\leq \sum_{j=1}^{\infty} \|\mathcal{C}_-\|_{L^p(\Sigma^{\lambda_0})\rightarrow L^p(\Sigma^{\lambda_0})}^j\|\omega^{(1)}\|_{L^\infty(\Sigma^{\lambda_0})}^{j-1}\|\omega^{(1)}\|_{L^p(\Sigma^{\lambda_0})}\\
    		&=\frac{\|\mathcal{C}_-\|_{L^p(\Sigma^{\lambda_0})\rightarrow L^p(\Sigma^{\lambda_0})}\|\omega^{(1)}\|_{L^p(\Sigma^{\lambda_0})}}{1-\|\mathcal{C}_-\|_{L^p(\Sigma^{\lambda_0})\rightarrow L^p(\Sigma^{\lambda_0})}\|\omega^{(1)}\|_{L^\infty(\Sigma^{\lambda_0})}}\\
    		&\leq \frac{C(\ln t)^\frac{1}{p}}{t^\frac{1}{2}}.
    	\end{aligned}
    \end{equation*}
     In other words, one has
     \begin{equation}\label{50}
     	\begin{aligned}
     		M_{\lambda_0}^{(\epsilon)}(x,t,z_1)&=I+\mathcal{C}(\mu^{(\epsilon)}_{\lambda_0}\omega_+^{(1)})\\
     		&=I+\frac{1}{2\pi i}\int_{\Sigma^{\lambda_0}}\frac{\mu^{(\epsilon)}_{\lambda_0}(x,t,\zeta)\omega_{+,j}^{(1)}(x,t,\zeta)}{\zeta-z_1}{\rm{d}}\zeta\\
     		&=I+\frac{1}{2\pi i}\int_{\Sigma^{\lambda_0}}\frac{\omega_{+}^{(1)}(x,t,\zeta)}{\zeta-z_1}{\rm{d}}\zeta+\frac{1}{2\pi i}\int_{\Sigma^{\lambda_0}}\frac{(\mu^{(\epsilon)}_{\lambda_0}(x,t,\zeta)-I)\omega_{+,j}^{(1)}(x,t,\zeta)}{\zeta-z_1}{\rm{d}}\zeta\\
     		&=I+\frac{1}{2\pi i}\sum_{j=1}^{4}
     		\!\int_{\Sigma^{\lambda_0}_j}\!\!\!\frac{v^{(\epsilon)}_j(x,t,\zeta)-I}{\zeta-z_1}{\rm{d}}\zeta\!+\!\frac{1}{2\pi i}\!\int_{\Sigma^{\lambda_0}}\!\!\!\!\frac{(\mu^{(\epsilon)}_{\lambda_0}(x,t,\zeta)\!-\!I)(v^{(\epsilon)}_j(x,t,\zeta)\!-\!I)}{\zeta-z_1}{\rm{d}}\zeta\\
     		&=I+\frac{\sum\limits_{0\leq q\leq p\leq N}\frac{(\ln t)^q}{(bt)^{p/2}}(M_{\lambda_0}^{(1)})_{pq}(x,t)}{z_1}+\mathcal{O}\left( \frac{(\ln t)^{N+1}}{t^{(N+2)/2}}\right),
     	\end{aligned}
     \end{equation}
     where
     \begin{equation}\label{51}
     	(M_{\lambda_0}^{(1)})_{pq}(x,t)=\sum_{j=1}^{4}\frac{1}{2 \pi i}\int_{\Sigma^{\lambda_0}_j}{\rm{e}}^{\pm2i\nu_1(\lambda_0)\log_{-\pi}\zeta\mp\frac{i\zeta^2}{2}\left(1-\frac{\lambda_0^3\zeta}{a\eta_1^4}\right)}v_{j,pq}^{(\epsilon)}(\zeta,\lambda_0){\rm{d}}\zeta,
     \end{equation}
     and the choice of the $\pm$ sign in the above depends on the value of the index $j$.

    Similarly, $M^{(\epsilon)}_{-\lambda_0}$ can be expanded as
     \begin{equation}\label{52}
     		M_{-\lambda_0}^{(\epsilon)}(x,t,z_2)=I+\frac{\sum\limits_{0\leq q\leq p\leq N}\frac{(\ln t)^q}{(bt)^{p/2}}(M_{-\lambda_0}^{(1)})_{pq}(x,t)}{z_2}+\mathcal{O}\left( \frac{(\ln t)^{N+1}}{t^{(N+2)/2}}\right),
     \end{equation}
     and
      \begin{equation}\label{53}
     	(M_{-\lambda_0}^{(1)})_{pq}(x,t)=\sum_{j=7}^{10}\frac{1}{2 \pi i}\int_{\Sigma^{-\lambda_0}_j}{{\rm{e}}^{\pm2i\nu_4(-\lambda_0)\log_0\zeta\mp\frac{i\zeta^2}{2}\left(1-\frac{\lambda_0^3\zeta}{a\eta_2^4}\right)}v_{j,pq}^{(\epsilon)}(\zeta,-\lambda_0)}{\rm{d}}\zeta,
     \end{equation}
      the choice of the $\pm$ sign depends on the value of $j$.

      \subsubsection{Small norm RH problem}
\ \ \ \
   Based on the scaling transformations made according to equation \eqref{42}, the functions $M_{\pm\lambda_0}^{(\epsilon)}$ in the equations (\ref{50})-(\ref{53}) concerning $z_1$ and $z_2$ are transformed into the following functions concerning $\lambda$:
    \begin{equation*}\left\lbrace
     	\begin{aligned}
	        	&	M_{\lambda_0}^{(\epsilon)}(x,t,\lambda)=I+\frac{\sum\limits_{0\leq q\leq p\leq N}\frac{(\ln t)^q}{(bt)^{(p+1)/2}}(M_{\lambda_0}^{(1)})_{pq}(x,t)}{\lambda-\lambda_0}+\mathcal{O}\left( \frac{(\ln t)^{N+1}}{t^{(N+2)/2}}\right),\\
	        		&M_{-\lambda_0}^{(\epsilon)}(x,t,\lambda)=I+\frac{\sum\limits_{0\leq q\leq p\leq N}\frac{(\ln t)^q}{(bt)^{(p+1)/2}}(M_{-\lambda_0}^{(1)})_{pq}(x,t)}{\lambda+\lambda_0}+\mathcal{O}\left( \frac{(\ln t)^{N+1}}{t^{(N+2)/2}}\right).
    	\end{aligned}\right.
\end{equation*}
\par
   To transform the jumps on the cross into jumps on the disc, the following definitions are made
   \begin{equation*}
   	\begin{aligned}
     M_{\lambda_0}(x,t,\lambda)&=H(\lambda_0,t)M_{\lambda_0}^{(\epsilon)}(x,t,\lambda)H(\lambda_0,t)^{-1}\\
     &=I+H(\lambda_0,t)\left( \frac{\sum\limits_{0\leq q\leq p\leq N}\frac{(\ln t)^q}{(bt)^{(p+1)/2}}(M_{\lambda_0}^{(1)})_{pq}(x,t)}{\lambda-\lambda_0}\right) H(\lambda_0,t)^{-1}+\mathcal{O}\left( \frac{(\ln t)^{N+1}}{t^{(N+2)/2}}\right),\\
   	\end{aligned}
   \end{equation*}
   \begin{equation*}
   	\begin{aligned}
   	M_{-\lambda_0}(x,t,\lambda)&=H(-\lambda_0,t)M_{-\lambda_0}^{(\epsilon)}(x,t,\lambda)H(-\lambda_0,t)^{-1}\\
   	&=I\!+\!H(-\lambda_0,t)\!\left( \frac{\sum\limits_{0\leq q\leq p\leq N}\!\!\frac{(\ln t)^q}{(bt)^{(p+1)/2}}(M_{-\lambda_0}^{(1)})_{pq}(x,t)}{\lambda+\lambda_0}\right)\! H(-\lambda_0,t)^{-1}\!+\!\mathcal{O}\!\left( \frac{(\ln t)^{N+1}}{t^{(N+2)/2}}\right).\\
   \end{aligned}
      \end{equation*}
      \par
Furthermore, the eigenfunctions at the other four critical points can be given by the symmetries below
     \begin{equation*}
    	M_{\pm\omega\lambda_0}(x,t,\lambda)=\tilde{\sigma}_1M_{\pm\lambda_0}(x,t,\omega^2\lambda)\tilde{\sigma}_1^{-1},
    \end{equation*}
    \begin{equation*}
    	M_{\pm\omega^2\lambda_0}(x,t,\lambda)=\tilde{\sigma}_1^2M_{\pm\lambda_0}(x,t,\omega\lambda)\tilde{\sigma}_1^{-2}.
    \end{equation*}

     For $j=0,1,2$, define the matrix functions that are non-constant only on the local circle by
    \begin{equation*}\hat{M}_{\pm\omega^j\lambda_0}(\lambda)=\left\lbrace
    	\begin{aligned}
    		&M_{\pm\omega^j\lambda_0}(\lambda), \quad && \lambda\in \partial B_\epsilon(\pm\omega^j\lambda_0),\\
    		&I, &&\lambda\in \bigcup\limits_{i=0}^{2}\partial B_\epsilon(\pm\omega^i\lambda_0)\setminus\partial B_\epsilon(\pm\omega^j\lambda_0).
    	\end{aligned}\right.
    \end{equation*}
    The above efforts are made to construct the following matrix-valued eigenfunction $M^{(3)}(x,t,\lambda)$ without jumps on contours $\Sigma^{\pm\omega^j\lambda_0}$ for $j=0,1,2$:
    \begin{equation}\label{54}
    	M^{(3)}(x,t,\lambda):=\left\lbrace\begin{aligned}
    	&	M^{(2)} (\hat{M}_{\pm\omega^j\lambda_0}(x,t,\lambda))^{-1},\quad &&\lambda\in B_{\epsilon}(\pm\omega^j\lambda_0),\\
    	&	M^{(2)},&& otherelse.
    	\end{aligned}
    	 \right.
    \end{equation}
    \par
    Define the contour $\Sigma^{(3)}=\bigcup\limits_{j=0}^{2}\partial B_{\epsilon}(\pm\omega^j\lambda_0)\cup\Sigma^{(2)}\setminus\bigcup\limits_{j=0}^{2}\Sigma^{\pm\omega^j\lambda_0}$ as shown in Fig. \ref{figSigma3}.
    The new jump relation corresponding to the eigenfunction $M^{(3)}(x,t,\lambda)$ in \eqref{54} is obtained by combining the transformation
    \begin{equation}\label{55}
    	v^{(3)}:=\left\lbrace
    	\begin{aligned}
    	&	v^{(2)},\quad &&\lambda\in \Sigma^{(2)}\setminus\bigcup_{j=0}^{2}\Sigma^{\pm\omega^j\lambda_0},\\
    	&	(\hat{M}_{\pm\omega^j\lambda_0}(x,t,\lambda))^{-1}, &&\lambda\in\bigcup_{j=0}^{2}\partial B_{\epsilon}(\pm\omega^j\lambda_0),\\
    	& I, && \lambda\in \Sigma^{\pm\omega^j\lambda_0}.
    	\end{aligned}
    	\right.
    \end{equation}

    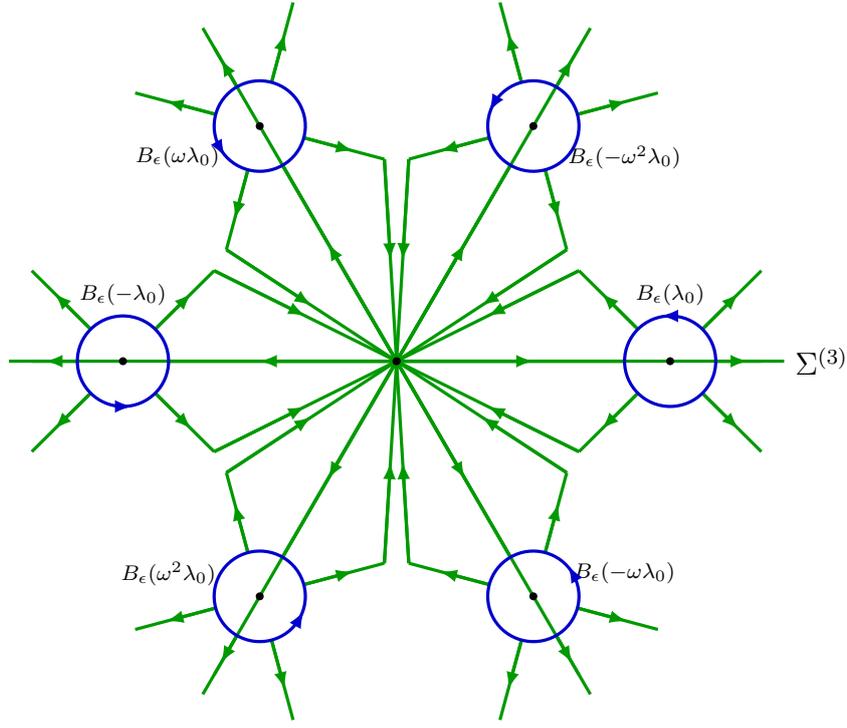
\begin{figure}[htbp]
    	\centering
    	\begin{tikzpicture}[scale=0.6]
    		
            \draw [very thick,black!40!green](0,0) -- (8.5,0);
            \draw [very thick,black!40!green](-8.5,0) -- (0,0);

    		\draw [very thick,black!40!green](0,0) -- (4,2);
    		\draw [very thick,black!40!green](4,2) -- (8,-2);
    		\draw [very thick,black!40!green](0,0) -- (-4,2);
    		\draw [very thick,black!40!green](-4,2) -- (-8,-2);
    		
    		\draw [very thick,black!40!green](0,0) -- (4,-2);
    		\draw [very thick,black!40!green](4,-2) -- (8,2);
    		\draw [very thick,black!40!green](0,0) -- (-4,-2);
    		\draw [very thick,black!40!green](-4,-2) -- (-8,2);
    		
    		\draw[very thick, black!40!green, -latex]  (4,2) -- (2,1);
    		\draw[very thick, black!40!green, -latex]  (4,-2) -- (2,-1);
    		\draw[very thick, black!40!green, -latex]  (-4,2) -- (-2,1);
    		\draw[very thick, black!40!green, -latex]  (-4,-2) -- (-2,-1);
    		
    		\draw[very thick, black!40!green, -latex]  (6,0) -- (4.5,1.5);
    		\draw[very thick, black!40!green, -latex]  (6,0) -- (4.5,-1.5);
    		\draw[very thick, black!40!green, -latex]  (-6,0) -- (-4.5,1.5);
    		\draw[very thick, black!40!green, -latex]  (-6,0) -- (-4.5,-1.5);
    		
    		\draw[very thick, black!40!green, -latex]  (0,0) -- (3,0);
    		\draw[very thick, black!40!green, -latex]  (0,0) -- (-3,0);
    		\draw[very thick, black!40!green, -latex]  (0,0) -- (7.7,0);
    		\draw[very thick, black!40!green, -latex]  (0,0) -- (-7.7,0);
    		
    		\draw[very thick, black!40!green, -latex]  (6,0) -- (7.5,1.5);
    		\draw[very thick, black!40!green, -latex]  (6,0) -- (7.5,-1.5);
    		\draw[very thick, black!40!green, -latex]  (-6,0) -- (-7.5,1.5);
    		\draw[very thick, black!40!green, -latex]  (-6,0) -- (-7.5,-1.5);

    		\fill[white] (6,0) circle (1cm);
    	
    		\draw [very thick,black!40!green](0,0) -- (8,0);
    		\draw[black!20!blue, very thick] (6,0) circle (1cm);
    		
    		\fill[white] (-6,0) circle (1cm);
    		
    		\draw [very thick,black!40!green](0,0) -- (-8,0);
    		\draw[black!20!blue, very thick] (-6,0) circle (1cm);
    		
    		\fill (0,0) circle (2.5pt);
    		\fill (6,0) circle (2.5pt);
    		\fill (-6,0) circle (2.5pt);

    		\node[right] at (8.5,0) {$\Sigma^{(3)}$};
    		\node[above] at (6,1) {\scriptsize$B_{\epsilon}(\lambda_0)$};
    		\node[above] at (-6,1) {\scriptsize$B_{\epsilon}(-\lambda_0)$};

    		\begin{scope}[rotate around={60:(center)}]

    		\draw [very thick,black!40!green](0,0) -- (8.5,0);
    		\draw [very thick,black!40!green](-8.5,0) -- (0,0);

    		\draw [very thick,black!40!green](0,0) -- (4,2);
    		\draw [very thick,black!40!green](4,2) -- (8,-2);
    		\draw [very thick,black!40!green](0,0) -- (-4,2);
    		\draw [very thick,black!40!green](-4,2) -- (-8,-2);
    		
    		\draw [very thick,black!40!green](0,0) -- (4,-2);
    		\draw [very thick,black!40!green](4,-2) -- (8,2);
    		\draw [very thick,black!40!green](0,0) -- (-4,-2);
    		\draw [very thick,black!40!green](-4,-2) -- (-8,2);
    		
    		\draw[very thick, black!40!green, -latex]  (4,2) -- (2,1);
    		\draw[very thick, black!40!green, -latex]  (4,-2) -- (2,-1);
    		\draw[very thick, black!40!green, -latex]  (-4,2) -- (-2,1);
    		\draw[very thick, black!40!green, -latex]  (-4,-2) -- (-2,-1);
    		
    		\draw[very thick, black!40!green, -latex]  (6,0) -- (4.5,1.5);
    		\draw[very thick, black!40!green, -latex]  (6,0) -- (4.5,-1.5);
    		\draw[very thick, black!40!green, -latex]  (-6,0) -- (-4.5,1.5);
    		\draw[very thick, black!40!green, -latex]  (-6,0) -- (-4.5,-1.5);
    		
    		\draw[very thick, black!40!green, -latex]  (0,0) -- (3,0);
    		\draw[very thick, black!40!green, -latex]  (0,0) -- (-3,0);
    		\draw[very thick, black!40!green, -latex]  (0,0) -- (7.7,0);
    		\draw[very thick, black!40!green, -latex]  (0,0) -- (-7.7,0);
    		
    		\draw[very thick, black!40!green, -latex]  (6,0) -- (7.5,1.5);
    		\draw[very thick, black!40!green, -latex]  (6,0) -- (7.5,-1.5);
    		\draw[very thick, black!40!green, -latex]  (-6,0) -- (-7.5,1.5);
    		\draw[very thick, black!40!green, -latex]  (-6,0) -- (-7.5,-1.5);

    		\fill[white] (6,0) circle (1cm);
    		
    		\draw [very thick,black!40!green](0,0) -- (8,0);
    		\draw[black!20!blue, very thick] (6,0) circle (1cm);
    		
    		\fill[white] (-6,0) circle (1cm);
    		
    		\draw [very thick,black!40!green](0,0) -- (-8,0);
    		\draw[black!20!blue, very thick] (-6,0) circle (1cm);
    		
    		\fill (0,0) circle (2.5pt);
    		\fill (6,0) circle (2.5pt);
    		\fill (-6,0) circle (2.5pt);

    		\end{scope}
    		
    		\node[below] at (5,5) {\scriptsize$B_{\epsilon}(-\omega^2\lambda_0)$};
    		\node[below] at (-5,-4.2) {\scriptsize$B_{\epsilon}(\omega^2\lambda_0)$};

    		\coordinate (center) at (0,0);
    		
    		\begin{scope}[rotate around={120:(center)}]

    		\draw [very thick,black!40!green](0,0) -- (8.5,0);
    		\draw [very thick,black!40!green](-8.5,0) -- (0,0);

    		\draw [very thick,black!40!green](0,0) -- (4,2);
    		\draw [very thick,black!40!green](4,2) -- (8,-2);
    		\draw [very thick,black!40!green](0,0) -- (-4,2);
    		\draw [very thick,black!40!green](-4,2) -- (-8,-2);
    		
    		\draw [very thick,black!40!green](0,0) -- (4,-2);
    		\draw [very thick,black!40!green](4,-2) -- (8,2);
    		\draw [very thick,black!40!green](0,0) -- (-4,-2);
    		\draw [very thick,black!40!green](-4,-2) -- (-8,2);
    		
    		\draw[very thick, black!40!green, -latex]  (4,2) -- (2,1);
    		\draw[very thick, black!40!green, -latex]  (4,-2) -- (2,-1);
    		\draw[very thick, black!40!green, -latex]  (-4,2) -- (-2,1);
    		\draw[very thick, black!40!green, -latex]  (-4,-2) -- (-2,-1);
    		
    		\draw[very thick, black!40!green, -latex]  (6,0) -- (4.5,1.5);
    		\draw[very thick, black!40!green, -latex]  (6,0) -- (4.5,-1.5);
    		\draw[very thick, black!40!green, -latex]  (-6,0) -- (-4.5,1.5);
    		\draw[very thick, black!40!green, -latex]  (-6,0) -- (-4.5,-1.5);
    		
    		\draw[very thick, black!40!green, -latex]  (0,0) -- (3,0);
    		\draw[very thick, black!40!green, -latex]  (0,0) -- (-3,0);
    		\draw[very thick, black!40!green, -latex]  (0,0) -- (7.7,0);
    		\draw[very thick, black!40!green, -latex]  (0,0) -- (-7.7,0);
    		
    		\draw[very thick, black!40!green, -latex]  (6,0) -- (7.5,1.5);
    		\draw[very thick, black!40!green, -latex]  (6,0) -- (7.5,-1.5);
    		\draw[very thick, black!40!green, -latex]  (-6,0) -- (-7.5,1.5);
    		\draw[very thick, black!40!green, -latex]  (-6,0) -- (-7.5,-1.5);

    		\fill[white] (6,0) circle (1cm);
    		
    		\draw [very thick,black!40!green](0,0) -- (8,0);
    		\draw[black!20!blue, very thick] (6,0) circle (1cm);
    		
    		\fill[white] (-6,0) circle (1cm);
    		
    		\draw [very thick,black!40!green](0,0) -- (-8,0);
    		\draw[black!20!blue, very thick] (-6,0) circle (1cm);
    		
    		\fill (0,0) circle (2.5pt);
    		\fill (6,0) circle (2.5pt);
    		\fill (-6,0) circle (2.5pt); 
    		
    		\end{scope}
    		
    		\node[below] at (-4.8,5) {\scriptsize$B_{\epsilon}(\omega\lambda_0)$};
    		\node[below] at (5,-4.2) {\scriptsize$B_{\epsilon}(-\omega\lambda_0)$};
    		
    		\draw[very thick, black!20!blue, -latex] (6,1) -- (5.8,1);
    		
    		\begin{scope}[rotate around={60:(center)}]
    		\draw[very thick, black!20!blue, -latex] (6,1) -- (5.8,1);
    		\end{scope}
    		
    		\begin{scope}[rotate around={120:(center)}]
    			\draw[very thick, black!20!blue, -latex] (6,1) -- (5.8,1);
    		\end{scope}
    		
    		\begin{scope}[rotate around={180:(center)}]
    			\draw[very thick, black!20!blue, -latex] (6,1) -- (5.8,1);
    		\end{scope}
    		
    		\begin{scope}[rotate around={240:(center)}]
    			\draw[very thick, black!20!blue, -latex] (6,1) -- (5.8,1);
    		\end{scope}
    		
    		\begin{scope}[rotate around={300:(center)}]
    			\draw[very thick, black!20!blue, -latex] (6,1) -- (5.8,1);
    		\end{scope}

    	\end{tikzpicture}
    	\caption{The jump contour $\Sigma^{(3)}$ in the complex $\lambda$-plane}
    	\label{figSigma3}
    \end{figure}

   Based on the properties of $v^{(2)}$, we can directly obtain the following lemma regarding the estimate of the jump relationship over different regions.
    \begin{lem}
    	Define $\omega^{(3)}=v^{(3)}-I$, the jump relationship \eqref{55} satisfies the following estimates for $0<\lambda_0<M$ and for $t$ sufficiently large
    	\begin{equation*}
    		\|(1+\vert\cdot\vert)\omega^{(3)}\|_{(L^1\cap L^\infty)(\Sigma^{(1)})}\leq \frac{C}{t^{N+\frac{3}{2}}},
    	\end{equation*}
    		\begin{equation*}
    		\|(1+\vert\cdot\vert)\omega^{(3)}\|_{(L^1\cap L^\infty)(\Sigma^{(2)}\setminus (\cup_{j=0}^{2}\Sigma^{\pm\omega^j\lambda_0}  \cup\Sigma^{(1)}))}\leq C{\rm{e}}^{-ct},
    	\end{equation*}
    	\begin{equation*}
    			\|\omega^{(3)}\|_{(L^1\cap L^\infty)(\bigcup_{j=0}^{2}\partial B_{\epsilon}(\pm\omega^j\lambda_0))}\leq C t^{-\frac{1}{2}}.
    	\end{equation*}
    \end{lem}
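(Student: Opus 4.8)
The plan is to estimate $\omega^{(3)}=v^{(3)}-I$ separately on the three disjoint pieces of $\Sigma^{(3)}$ distinguished in the definition \eqref{55}, since on each piece $v^{(3)}$ is given by a different explicit formula and each of the three asserted bounds corresponds to exactly one piece. Concretely, on the retained axis rays (the part of $\Sigma^{(1)}=\{\mathbb{R},\omega\mathbb{R},\omega^2\mathbb{R}\}$ kept in $\Sigma^{(3)}$) and on the opened crosses away from the local disks one has $v^{(3)}=v^{(2)}$, whereas on the circles $\bigcup_{j=0}^{2}\partial B_\epsilon(\pm\omega^j\lambda_0)$ one has $v^{(3)}=(\hat{M}_{\pm\omega^j\lambda_0})^{-1}$; on the truncated rays $\Sigma^{\pm\omega^j\lambda_0}$ the jump is trivial and contributes nothing. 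I would then treat the three bounds in turn.

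First, on $\Sigma^{(1)}$ the surviving jumps are $v^{(2)}_5,v^{(2)}_6,v^{(2)}_{11},v^{(2)}_{12}$ together with their $\mathbb{Z}_3$-images, all of which are built from the \emph{residual} coefficients $r_{1,r},\hat{r}_{1,r},r^*_{2,r},\hat{r}^*_{2,r}$ multiplied by $\mathrm{e}^{\pm\theta_{21}}$. Part 3 of Lemma \ref{lem43} supplies exactly $\|(1+|\cdot|)r_{1,r}\mathrm{e}^{-t\vartheta_{21}}\|_{L^p}\le Ct^{-N-3/2}$ and the analogous bounds for the other three residual coefficients; summing over the finitely many axis rays (which is the content of the preceding lemma on the matrices $v^{(2)}_j$) yields $\|(1+|\cdot|)\omega^{(3)}\|_{(L^1\cap L^\infty)(\Sigma^{(1)})}\le Ct^{-N-3/2}$.

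Second, on the opened crosses away from the local disks $v^{(3)}=v^{(2)}$ is built from the \emph{analytic} coefficients $r_{1,a},\hat{r}_{1,a},r_{2,a},\hat{r}^*_{2,a}$ times $\mathrm{e}^{\pm\theta_{21}}=\mathrm{e}^{\pm t\vartheta_{21}}$. Here I would invoke the signature chart of $\mathrm{Re}\,\theta_{21}$ in Fig. \ref{figsignature}: the crosses are opened so that on each ray the exponential is decaying, and away from the stationary points $\pm\omega^j\lambda_0$ the quantity $|\mathrm{Re}\,\vartheta_{21}|$ is bounded below by a positive constant. Since the analytic-continuation bounds of Lemma \ref{lem43} cost only $\mathrm{e}^{\frac{t}{4}|\mathrm{Re}\,\vartheta_{21}|}$, the product leaves a net factor $\mathrm{e}^{-\frac{3t}{4}|\mathrm{Re}\,\vartheta_{21}|}\le \mathrm{e}^{-ct}$, giving the stated $\|(1+|\cdot|)\omega^{(3)}\|\le C\mathrm{e}^{-ct}$ on this piece.

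Finally, on each circle $v^{(3)}=(\hat{M}_{\pm\omega^j\lambda_0})^{-1}$, and by the symmetry reductions it suffices to handle $\pm\lambda_0$. Recalling $M_{\pm\lambda_0}=H(\pm\lambda_0,t)M^{(\epsilon)}_{\pm\lambda_0}H(\pm\lambda_0,t)^{-1}$ with $H$ uniformly bounded by Lemma \ref{lem45}, and using the parametrix expansions \eqref{50} and \eqref{52} rewritten in the variable $\lambda$, the leading $(p,q)=(0,0)$ term contributes $\frac{(M^{(1)}_{\pm\lambda_0})_{00}}{(bt)^{1/2}(\lambda\mp\lambda_0)}$; on the fixed-radius circle $|\lambda\mp\lambda_0|=\epsilon$ this is $O(t^{-1/2})$, higher terms are smaller, and since the circles have fixed finite length and $\lambda$ stays bounded the $L^1\cap L^\infty$ norm inherits the same $Ct^{-1/2}$ bound with no weight needed. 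The main obstacle I anticipate is the second (exponential) estimate: one must verify that the crosses emanating from all six critical points are opened consistently with the sign chart and that the lower bound on $|\mathrm{Re}\,\vartheta_{21}|$, hence the decay rate $c$, can be taken uniform for $\lambda_0\in(0,M)$; keeping this uniformity as $\lambda_0$ approaches the endpoints (where the stationary points coalesce at the origin or escape to infinity) is the delicate point, though the restriction $0<\lambda_0<M$ inside Region IV keeps them separated and renders the argument local near each $\pm\omega^j\lambda_0$.
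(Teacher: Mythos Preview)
Your proposal is correct and follows essentially the same route the paper indicates: the paper gives no detailed proof of this lemma, stating only that it ``can directly [be] obtain[ed] based on the properties of $v^{(2)}$,'' which in turn are reduced to Lemma~\ref{lem43} and the boundedness of the $\delta$-functions. Your piece-by-piece treatment---residual coefficients on $\Sigma^{(1)}$ via part~3 of Lemma~\ref{lem43}, exponential decay of the analytic pieces on the opened crosses via the sign chart of $\mathrm{Re}\,\vartheta_{21}$, and the $t^{-1/2}$ bound on the circles via the parametrix expansions \eqref{50}--\eqref{52} together with Lemma~\ref{lem45}---is exactly the expected filling-in of that sketch.
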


    Thus the asymptotic solution in Region ${\rm IV}$ is mainly given by the information on $\bigcup\limits_{j=0}^{2}\partial B_{\epsilon}(\pm\omega^j\lambda_0)$ as $t$ large enough.

    \subsubsection{The higher-order asymptotic solutions}\label{sub425}
    \ \ \ \
   The operator $\mathcal{C}_{\omega^{(3)}}$  on $\bigcup\limits_{j=0}^{2}\partial B_{\epsilon}(\pm\omega^j\lambda_0)$ is given by
   \begin{equation*}
   	\begin{aligned}
   		 \mathcal{C}_{\omega^{(3)}} f&=\sum_{j=0}^{2} \mathcal{C}_-\{f[(\hat{M}_{\omega^j\lambda_0})^{-1}-I]\}+\sum_{j=0}^{2} \mathcal{C}_-\{f[(\hat{M}_{-\omega^j\lambda_0})^{-1}-I]\}\\
   		 &:=\sum_{j=0}^{2}\mathcal{A}_jf+\sum_{j=0}^{2}\mathcal{B}_jf,
   	\end{aligned}
   \end{equation*}
   where $\mathcal{A}_jf=\mathcal{C}_-\{f[(\hat{M}_{\omega^j\lambda_0})^{-1}-I]\}$ and $\mathcal{B}_jf=\mathcal{C}_-\{f[(\hat{M}_{-\omega^j\lambda_0})^{-1}-I]\}$.
    \begin{lem}
       The operator $1-\mathcal{C}_{\omega^{(3)}}$ is invertible and $(1-\mathcal{C}_{\omega^{(3)}})^{-1}$ is a bounded linear operator.
    \end{lem}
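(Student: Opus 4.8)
The plan is to recognize this as the standard \emph{small-norm} Riemann-Hilbert estimate and to invert $1-\mathcal{C}_{\omega^{(3)}}$ through a convergent Neumann series once its operator norm is shown to be strictly less than one for $t$ large. First I would recall from the Beals-Coifman framework already invoked in Section~\ref{sec4} that the Cauchy projection operators $\mathcal{C}_\pm$ are bounded on $\dot{L}^3(\Sigma^{(3)})$, with an operator norm depending only on the geometry of the contour $\Sigma^{(3)}$ (its rays, its self-intersection at the origin, and the six circles $\partial B_\epsilon(\pm\omega^j\lambda_0)$) and not on $t$. Since the trivial factorization of $v^{(3)}$ gives $\omega^{(3)}_-=O$ and $\omega^{(3)}_+=v^{(3)}-I=\omega^{(3)}$, the operator reduces to $\mathcal{C}_{\omega^{(3)}}f=\mathcal{C}_-(f\,\omega^{(3)})$, exactly as encoded in the decomposition $\mathcal{C}_{\omega^{(3)}}=\sum_{j=0}^2\mathcal{A}_j+\sum_{j=0}^2\mathcal{B}_j$ preceding the statement.

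The next step is the elementary but crucial bound
\begin{equation*}
	\|\mathcal{C}_{\omega^{(3)}}\|_{\dot{L}^3(\Sigma^{(3)})\to\dot{L}^3(\Sigma^{(3)})}\leq\|\mathcal{C}_-\|_{\dot{L}^3\to\dot{L}^3}\,\|\omega^{(3)}\|_{L^\infty(\Sigma^{(3)})},
\end{equation*}
which transfers all the $t$-dependence onto the sup-norm of the jump. Here I would simply read off the three regional estimates from the immediately preceding lemma: on $\Sigma^{(1)}$ the bound is $\mathcal{O}(t^{-N-3/2})$, on the remaining analytic parts of $\Sigma^{(2)}$ it is exponentially small $\mathcal{O}(e^{-ct})$, and on the circles $\bigcup_{j=0}^2\partial B_\epsilon(\pm\omega^j\lambda_0)$ it is $\mathcal{O}(t^{-1/2})$. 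Taking the maximum, the dominant contribution comes from the circles, so $\|\omega^{(3)}\|_{L^\infty(\Sigma^{(3)})}\leq Ct^{-1/2}$ uniformly for $0<\lambda_0<M$, whence $\|\mathcal{C}_{\omega^{(3)}}\|\leq Ct^{-1/2}\to0$ as $t\to\infty$.

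Finally, fixing $t$ large enough that $\|\mathcal{C}_{\omega^{(3)}}\|<1$, the operator $1-\mathcal{C}_{\omega^{(3)}}$ is invertible with inverse given by the Neumann series $\sum_{k\geq0}\mathcal{C}_{\omega^{(3)}}^{\,k}$, and this yields the uniform bound
\begin{equation*}
	\|(1-\mathcal{C}_{\omega^{(3)}})^{-1}\|_{\dot{L}^3\to\dot{L}^3}\leq\frac{1}{1-\|\mathcal{C}_{\omega^{(3)}}\|}\leq\frac{1}{1-Ct^{-1/2}},
\end{equation*}
proving both invertibility and boundedness of the inverse.

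I expect the main obstacle to be not the Neumann argument itself but the uniform-in-$t$ control of $\|\mathcal{C}_-\|$ on $\Sigma^{(3)}$: because this contour is unbounded and has multiple self-intersection points, one must verify that the Cauchy operator remains bounded there with a constant independent of $t$ and of $\lambda_0\in(0,M)$. This is handled by the weighted $(1+\vert\cdot\vert)$ norms in the preceding lemma, which guarantee the correct decay of $\omega^{(3)}$ both as $\lambda\to\infty$ along the rays and as $\lambda\to0$ at the origin, so that $\omega^{(3)}\in(L^1\cap L^\infty)(\Sigma^{(3)})$ and the multiplication $f\mapsto f\,\omega^{(3)}$ maps $\dot{L}^3$ continuously into the domain of $\mathcal{C}_-$; the remaining work is purely the geometric boundedness of $\mathcal{C}_\pm$, which is a standard consequence of the theory cited in Ref.~\cite{lenells-2018}.
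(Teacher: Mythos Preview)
Your proposal is correct and follows exactly the standard small-norm Neumann series argument that the paper intends; in fact the paper states this lemma without proof, presumably because the preceding lemma's estimates on $\omega^{(3)}$ make the Neumann series argument you give routine. Your treatment is actually more detailed than what the paper supplies.
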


    Let $\mu^{(3)}\in I+\dot{L}^3(\bigcup_{j=0}^{2}\partial B_{\epsilon}(\pm\omega^j\lambda_0))$  be the solution to the equation $\mu^{(3)}=I+\mathcal{C}_{\omega^{(3)}}\mu^{(3)}$, which is \begin{equation*}
    	\begin{aligned}
    	\mu^{(3)}(x,t,\lambda)&=(1-\mathcal{C}_{\omega^{(3)}})^{-1}I\\
    		&=\left( 1-\sum_{j=0}^{2}\mathcal{A}_j-\sum_{j=0}^{2}\mathcal{B}_j\right) ^{-1}I\\
    		&=I+\sum_{n=1}^{N}\left( \sum_{j=0}^{2}\mathcal{A}_j+\sum_{j=0}^{2}\mathcal{B}_j\right) ^nI+E^{(3)}(\lambda,t,\lambda_0),
    	\end{aligned}
    \end{equation*}
    where $\|E^{(3)}(\cdot,t,\lambda_0)\|_{\dot{L}^3(\bigcup_{j=0}^{2}\partial B_{\epsilon}(\pm\omega^j\lambda_0))}=\mathcal{O}\left( \frac{1}{t^{(N+2)/2}}\right)    $ and $\|\mu^{(3)}-I\|_{L^p(\Sigma^{(3)})}\leq {C}{t^{-\frac{1}{2}}}$.

    Therefore, the RH problem concerning $M^{(3)}(x,t,\lambda)$ has a unique solution
    \begin{equation*}
    	\begin{aligned}
    			M^{(3)}(x,t,\lambda)&=I+\mathcal{C}(\mu^{(3)} \omega^{(3)})\\
    			&=I+\frac{1}{2 \pi i}\int_{\Sigma^{(3)}}\frac{\mu^{(3)}(x,t,\zeta)\omega^{(3)}(x,t,\zeta)}{\zeta-\lambda}{\rm{d}}\zeta\\
    			&=I+\frac{1}{2 \pi i}\int_{\bigcup\limits_{j=0}^{2}\partial B_{\epsilon}(\pm\omega^j\lambda_0)}\frac{\mu^{(3)}(x,t,\zeta)\omega^{(3)}(x,t,\zeta)}{\zeta-\lambda}{\rm{d}}\zeta+\mathcal{O}\left( \frac{C}{t^{N+\frac{3}{2}}}\right) .\\
    	\end{aligned}
    \end{equation*}
    \par
    In the reconstruction formula \eqref{7}, it is necessary to take the limit of the eigenfunction $M(x,t,\lambda)$ with respect to $\lambda\rightarrow0$ to obtain the solution of the two-component nonlinear KG equation \eqref{3}.
    Continuing with the non-tangential limit of $M^{(3)}(x,t,\lambda)$ as $\lambda\rightarrow0$, define it as $S(x,t)$ and calculate to get
   \begin{equation*}
   	\begin{aligned}
   	  S(x,t)&=\lim_{\lambda\rightarrow0}	M^{(3)}(x,t,\lambda)\\
   	  &=I+\frac{1}{2 \pi i}\int_{\bigcup\limits_{j=0}^{2}\partial B_{\epsilon}(\pm\omega^j\lambda_0)}\frac{\mu^{(3)}(x,t,\zeta)\omega^{(3)}(x,t,\zeta)}{\zeta}{\rm{d}}\zeta+\mathcal{O}\left( \frac{C}{t^{N+\frac{3}{2}}}\right) \\
   	  &=I+\frac{1}{2 \pi i}\int_{\bigcup\limits_{j=0}^{2}\partial B_{\epsilon}(\pm\omega^j\lambda_0)}\frac{\omega^{(3)}(x,t,\zeta)}{\zeta}{\rm{d}}\zeta\\
   	  &\quad +\frac{1}{2 \pi i}\int_{\bigcup\limits_{j=0}^{2}\partial B_{\epsilon}(\pm\omega^j\lambda_0)}\frac{(\mu^{(3)}(x,t,\zeta)-I)\omega^{(3)}(x,t,\zeta)}{\zeta}{\rm{d}}\zeta+\mathcal{O}\left( \frac{C}{t^{N+\frac{3}{2}}}\right) .\\
   	\end{aligned}
   \end{equation*}
   \par
   Since $\omega^{(3)}(x,t,\lambda)=	(\hat{M}_{\pm\omega^j\lambda_0}(x,t,\lambda))^{-1}-I$ holds on region $\lambda\in\bigcup\limits_{j=0}^{2}\partial B_{\epsilon}(\pm\omega^j\lambda_0)$, so we have
   \begin{equation*}
   	\begin{aligned}
   			&\frac{1}{2 \pi i}\int_{\partial B_{\epsilon}(\lambda_0)}\frac{\omega^{(3)}(x,t,\zeta)}{\zeta}{\rm{d}}\zeta\\
   			&\quad=-\frac{1}{2 \pi i}\int_{\partial B_{\epsilon}(\omega^j\lambda_0)}\frac{H(\lambda_0,t)\left( \frac{\sum\limits_{0\leq q\leq p\leq N}\frac{(\ln t)^q}{(bt)^{(p+1)/2}}(M_{\lambda_0}^{(1)})_{pq}(x,t)}{\zeta-\lambda_0}\right) H(\lambda_0,t)^{-1}}{\zeta}{\rm{d}}\zeta+\mathcal{O}\left( \frac{(\ln t)^{N+1}}{t^{(N+2)/2}}\right) \\
   			 &\quad=-\frac{H(\lambda_0,t)\left( {\sum\limits_{0\leq q\leq p\leq N}\frac{(\ln t)^q}{(bt)^{p/2}}(M_{\lambda_0}^{(1)})_{pq}(x,t)}\right) H(\lambda_0,t)^{-1}}{\lambda_0\sqrt{bt}}+\mathcal{O}\left( \frac{(\ln t)^{N+1}}{t^{(N+2)/2}}\right) ,
   	\end{aligned}
   \end{equation*}
   and similarly
   \begin{equation*}
   	\begin{aligned}
   		&\frac{1}{2 \pi i}\int_{\partial B_{\epsilon}(-\lambda_0)}\frac{\omega^{(3)}(x,t,\zeta)}{\zeta}{\rm{d}}\zeta\\
   		&\quad=-\frac{1}{2 \pi i}\int_{\partial B_{\epsilon}(\omega^j\lambda_0)}\!\!\!\!\!\!\frac{H(-\lambda_0,t)\left( \frac{\sum\limits_{0\leq q\leq p\leq N}\frac{(\ln t)^q}{(bt)^{(p+1)/2}}(M_{-\lambda_0}^{(1)})_{pq}(x,t)}{\zeta+\lambda_0}\right) H(-\lambda_0,t)^{-1}}{\zeta}{\rm{d}}\zeta+\mathcal{O}\left( \frac{(\ln t)^{N+1}}{t^{(N+2)/2}}\right) \\
   		&\quad=\frac{H(-\lambda_0,t)\left( {\sum\limits_{0\leq q\leq p\leq N}\frac{(\ln t)^q}{(bt)^{p/2}}(M_{-\lambda_0}^{(1)})_{pq}(x,t)}\right) H(-\lambda_0,t)^{-1}}{\lambda_0\sqrt{bt}}+\mathcal{O}\left( \frac{(\ln t)^{N+1}}{t^{(N+2)/2}}\right) .
   	\end{aligned}
   \end{equation*}
   \par
   Finally, by utilizing the symmetry of $\hat{M}_{\pm\omega^j\lambda_0}(\lambda)$ on other circles, it is derived that

      \begin{align}\label{56}
     		S(x,t)
     		&=I -\sum_{j=0}^{2}\frac{\tilde{\sigma}_1^jH(\lambda_0,t)\left( {\sum\limits_{0\leq q\leq p\leq N}\frac{(\ln t)^q}{(bt)^{p/2}}(M_{\lambda_0}^{(1)})_{pq}(x,t)}\right) H(\lambda_0,t)^{-1}\tilde{\sigma}_1^{-j}}{\sqrt{\frac{2\sqrt{3}t\lambda_0}{1+\lambda_0^2}}}\nonumber\\
     		&\qquad +\sum_{j=0}^{2}\frac{\tilde{\sigma}_1^jH(-\lambda_0,t)\left( {\sum\limits_{0\leq q\leq p\leq N}\frac{(\ln t)^q}{(bt)^{p/2}}(M_{-\lambda_0}^{(1)})_{pq}(x,t)}\right) H(-\lambda_0,t)^{-1}\tilde{\sigma}_1^{-j}}{\sqrt{\frac{2\sqrt{3}t\lambda_0}{1+\lambda_0^2}}}\\
     		&\qquad+\mathcal{O}\left( \frac{(\ln t)^{N+1}}{t^{(N+2)/2}}\right) \nonumber.
     	\end{align}

     Since the jump matrix $v^{(\epsilon)}$ involves only elements in $(1,2)$ and $(2,1)$ positions in the equations \eqref{vep} and \eqref{vep1}, and the two position elements are conjugate to each other, it is reasonable to denote
     \begin{equation}\label{57}
     		(M_{\lambda_0}^{(1)})_{pq}(x,t)=\begin{pmatrix}
     		0 & a_{pq}(x,t) & 0\\
     		a^*_{pq}(x,t) & 0 & 0\\
     		0 & 0 & 0
     		\end{pmatrix},\quad
     		(M_{-\lambda_0}^{(1)})_{pq}(x,t)=\begin{pmatrix}
     			0 & b_{pq}(x,t) & 0\\
     			b^*_{pq}(x,t) & 0 & 0\\
     			0 & 0 & 0
     		\end{pmatrix}.
     \end{equation}
   Under this definition, substituting equation \eqref{57} into equation \eqref{56} yields
   \begin{equation*}
   	S(x,t)=I+\frac{1}{\sqrt{\frac{2\sqrt{3}t\lambda_0}{1+\lambda_0^2}}}\begin{pmatrix}
   	0 & s_1 & s_2\\
   	s_2 & 0 & s_1\\
   	s_1 & s_2 & 0
   	\end{pmatrix}+\mathcal{O}\left( \frac{(\ln t)^{N+1}}{t^{(N+2)/2}}\right),
   \end{equation*}
   where
    \begin{equation*}
    	\begin{aligned}
    			&s_1=\delta^0_{-\lambda_0}{\rm{e}}^{-\theta_{21}(-\lambda_0)}\left( \sum_{p=0}^{N}\sum_{q=0}^{p}\frac{(\ln t)^q}{(bt)^{p/2}}b_{pq}\right)-(\delta^0_{\lambda_0})^{-1}{\rm{e}}^{-\theta_{21}(\lambda_0)}\left( \sum_{p=0}^{N}\sum_{q=0}^{p}\frac{(\ln t)^q}{(bt)^{p/2}}a_{pq}\right),\\
    			&s_2=(\delta^0_{-\lambda_0})^{-1}{\rm{e}}^{\theta_{21}(-\lambda_0)}\left( \sum_{p=0}^{N}\sum_{q=0}^{p}\frac{(\ln t)^q}{(bt)^{p/2}}b_{pq}^*\right)-\delta^0_{\lambda_0}{\rm{e}}^{\theta_{21}(\lambda_0)}\left( \sum_{p=0}^{N}\sum_{q=0}^{p}\frac{(\ln t)^q}{(bt)^{p/2}}a^*_{pq}\right),\\
    	\end{aligned}
    \end{equation*}
    and the equality $s_1=s_2^*$ holds.
    \par
    Recalling the three transformations in \eqref{36}, \eqref{40} and \eqref{54} that we have made on the eigenfunction $M(x,t,\lambda)$ previously, it is obvious that
     \begin{equation*}\left\lbrace
     	\begin{aligned}
     		&\Delta(\lambda)\rightarrow I,\\
     		&W(x,t,\lambda)\rightarrow I, \qquad \quad as\, \lambda\rightarrow0.\\
     		&\hat{M}_{\pm\omega^j\lambda_0}(x,t,\lambda)\rightarrow I,
     	\end{aligned}\right.
     \end{equation*}
     \par
     \begin{lem}\label{RegionIV}
     	According to the reconstruction formula \eqref{7}, the higher-order asymptotic solutions to the initial value problem of the two-component nonlinear KG equation \eqref{3} in Region {\rm{IV}} can be expressed by
     	\begin{equation}\label{58}
     		\begin{aligned}
     			u(x,t)&=\frac{1}{2}\lim_{\lambda\rightarrow0}\ln [(\omega,\omega^2,1)M(x,t,\lambda)]_{13}\\
     			&=\frac{1}{2}\ln \left( 1+\!\!\!\!\sum\limits_{0\leq q\leq p\leq N}\!\!\!\frac{(\ln t)^q}{(bt)^{p/2}}\frac{2{\rm{Re}}\,\left( \omega^2\delta^0_{-\lambda_0}{\rm{e}}^{-\theta_{21}(-\lambda_0)}b_{pq}(x,t)\right) -2{\rm{Re}}\,\left(\omega \delta^0_{\lambda_0}{\rm{e}}^{\theta_{21}(\lambda_0)}a^*_{pq}(x,t)\right)}{\sqrt{\frac{2\sqrt{3}t\lambda_0}{1+\lambda_0^2}}}\right)\\
     			&\quad +\mathcal{O}\left( \frac{(\ln t)^{N+1}}{t^{(N+2)/2}}\right)	,	
     		\end{aligned}
     	\end{equation}
     	\begin{equation}\label{59}
     		\begin{aligned}
     			v(x,t)&=\frac{1}{6}\lim_{\lambda \rightarrow0} \ln[(\omega,\omega^2,1)M(x,t,\lambda)]_{13}-\frac{1}{3}\lim_{\lambda \rightarrow0} \ln[(\omega^2,\omega,1)M(x,t,\lambda)]_{13}\\
     			&=\frac{1}{6}\ln \left( 1+\!\!\!\!\sum\limits_{0\leq q\leq p\leq N}\!\!\!\frac{(\ln t)^q}{(bt)^{p/2}}\frac{2{\rm{Re}}\,\left( \omega^2\delta^0_{-\lambda_0}{\rm{e}}^{-\theta_{21}(-\lambda_0)}b_{pq}(x,t)\right) -2{\rm{Re}}\,\left(\omega \delta^0_{\lambda_0}{\rm{e}}^{\theta_{21}(\lambda_0)}a^*_{pq}(x,t)\right)}{\sqrt{\frac{2\sqrt{3}t\lambda_0}{1+\lambda_0^2}}}\right)\\
     			& \quad -\frac{1}{3}\ln \left( 1+\!\!\!\!\sum\limits_{0\leq q\leq p\leq N}\!\!\!\frac{(\ln t)^q}{(bt)^{p/2}}\frac{2{\rm{Re}}\,\left( \omega\delta^0_{-\lambda_0}{\rm{e}}^{-\theta_{21}(-\lambda_0)}b_{pq}(x,t)\right) -2{\rm{Re}}\,\left(\omega^2 \delta^0_{\lambda_0}{\rm{e}}^{\theta_{21}(\lambda_0)}a^*_{pq}(x,t)\right)}{\sqrt{\frac{2\sqrt{3}t\lambda_0}{1+\lambda_0^2}}}\right)\\
     			&\quad +\mathcal{O}\left( \frac{(\ln t)^{N+1}}{t^{(N+2)/2}}\right)	,
     		\end{aligned}
     	\end{equation}
     	where $\delta_{\lambda_0}^0(\lambda)=\frac{a^{2i\nu_1(\lambda_0)}{\rm{e}}^{-2\chi_1'(\lambda_0)}}{\delta_{N1}(\lambda_0)}$,  $\delta_{-\lambda_0}^0(-\lambda_0)=\frac{a^{2i\nu_4(-\lambda_0)}{\rm{e}}^{-2\chi_4'(-\lambda_0)}}{\delta_{N4}(-\lambda_0)}$, $\theta_{21}(\lambda_0)=\frac{-2 \sqrt{3}t \lambda_0}{1+\lambda_0^2}$ and $\theta_{21}(-\lambda_0)=\frac{2 \sqrt{3}t \lambda_0}{1+\lambda_0^2}$.
     \end{lem}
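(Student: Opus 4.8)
The plan is to feed the explicitly computed small-norm limit $S(x,t)=\lim_{\lambda\to0}M^{(3)}(x,t,\lambda)$ of \eqref{56} directly into the reconstruction formula \eqref{7}. The first step is to record that a neighborhood of $\lambda=0$ is disjoint from every disc $B_\epsilon(\pm\omega^j\lambda_0)$ and from all jump rays, so on that neighborhood the definitions \eqref{36}, \eqref{40} and \eqref{54} collapse to $M^{(3)}=M^{(2)}=M^{(1)}W=M\Delta W$, whence $M=M^{(3)}W^{-1}\Delta^{-1}$. Since $\Delta(\lambda)\to I$, $W(x,t,\lambda)\to I$ and $\hat{M}_{\pm\omega^j\lambda_0}(x,t,\lambda)\to I$ as $\lambda\to0$, taking the non-tangential limit yields $\lim_{\lambda\to0}M(x,t,\lambda)=S(x,t)$. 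Comparing with the normalization $\lim_{\lambda\to0}M=G$ of RH problem \ref{rhp} identifies $S=G$, so the unknown matrix $G$ in \eqref{7} may be replaced by the explicit $S$.

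Next I would substitute the cyclic form \eqref{57} of $S$, i.e. $S=I+\frac{1}{\sqrt{D}}\,C+\mathcal{O}\!\left(\frac{(\ln t)^{N+1}}{t^{(N+2)/2}}\right)$ with $D=\frac{2\sqrt3\,t\lambda_0}{1+\lambda_0^2}$ and $C$ the cyclic matrix with entries $s_1,s_2$, into the row-vector extractions in \eqref{7}. The rows $(\omega,\omega^2,1)$ and $(\omega^2,\omega,1)$ are left-eigenvectors of $\tilde{\sigma}_1$ (for example $(\omega,\omega^2,1)\tilde{\sigma}_1=\omega^2(\omega,\omega^2,1)$), which is precisely the algebraic fact responsible both for the eigenvector property of $G$ used in \eqref{7} and for the cyclic $\mathbb{Z}_3$ arrangement of $S$ coming from the symmetry sum over $j=0,1,2$ in \eqref{56}. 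A direct multiplication then gives $[(\omega,\omega^2,1)S]_{13}=1+\tfrac{1}{\sqrt{D}}(\omega s_2+\omega^2 s_1)+\mathcal{O}(\cdots)$ and $[(\omega^2,\omega,1)S]_{13}=1+\tfrac{1}{\sqrt{D}}(\omega^2 s_2+\omega s_1)+\mathcal{O}(\cdots)$, so only the two combinations $\omega s_2+\omega^2 s_1$ and $\omega^2 s_2+\omega s_1$ enter.

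The third step is the reality reduction. The relation $s_1=s_2^{*}$ quoted after \eqref{57} is forced by the $\mathbb{Z}_2$ symmetry $M^{(2)}=\tilde{\sigma}_2\overline{M^{(2)}(\bar\lambda)}\tilde{\sigma}_2^{-1}$, which is preserved along the whole chain of transformations and, evaluated at the real point $\lambda=0$, gives $S=\tilde{\sigma}_2\overline{S}\tilde{\sigma}_2^{-1}$. Together with $\omega^2=\overline{\omega}$ this collapses each combination to a real part, $\omega s_2+\omega^2 s_1=2\,{\rm Re}(\omega s_2)$ and $\omega^2 s_2+\omega s_1=2\,{\rm Re}(\omega^2 s_2)$. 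Inserting the defining expressions for $s_1,s_2$ and invoking Lemma \ref{lem45} (so that $\theta_{21}(\pm\lambda_0)\in i\mathbb{R}$, $|\delta^0_{\lambda_0}|=1$, $|\delta^0_{-\lambda_0}|={\rm e}^{-2\pi\nu_4(-\lambda_0)}$) together with the conjugation symmetry of Proposition \ref{prop41} recasts these real parts as the numerators in \eqref{58} and \eqref{59}. Finally I would write $u=\tfrac12\ln[(\omega,\omega^2,1)S]_{13}$ and the corresponding difference for $v$, and carry the remainder of \eqref{56} through the logarithm: since each bracketed quantity is $1+\mathcal{O}(t^{-1/2})$, the logarithm is analytic there and the error $\mathcal{O}\!\left((\ln t)^{N+1}/t^{(N+2)/2}\right)$ is preserved, producing exactly \eqref{58} and \eqref{59}.

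The hardest part will be the uniform control of the error under the interchange of limits: one must verify that the remainder estimate in \eqref{56}, established on the circles $\bigcup_{j}\partial B_\epsilon(\pm\omega^j\lambda_0)$, survives the non-tangential limit $\lambda\to0$ and the composition with $W^{-1}\Delta^{-1}$, and that it is not degraded after passing through $\ln(1+\cdot)$. A secondary but delicate point is the conjugation bookkeeping in the reality reduction: matching ${\rm Re}\big((\delta^0_{-\lambda_0})^{-1}{\rm e}^{\theta_{21}(-\lambda_0)}b^*_{pq}\big)$ against ${\rm Re}\big(\delta^0_{-\lambda_0}{\rm e}^{-\theta_{21}(-\lambda_0)}b_{pq}\big)$ requires tracking the phase and the nonunit modulus $|\delta^0_{-\lambda_0}|={\rm e}^{-2\pi\nu_4(-\lambda_0)}$ consistently against the normalization carried by the parabolic-cylinder data in $b_{pq}$ from \eqref{53}, so that the moduli cancel and the stated real-part formulas hold.
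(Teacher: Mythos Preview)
Your proposal is correct and follows essentially the same route as the paper: the paper's argument is precisely to note that the three transformations \eqref{36}, \eqref{40}, \eqref{54} all reduce to the identity at $\lambda=0$, so $\lim_{\lambda\to0}M=S$, then to substitute the cyclic form of $S$ (with $s_1=s_2^{*}$) into the reconstruction formula \eqref{7}. Your write-up is in fact more explicit than the paper's on the row-vector multiplications $[(\omega,\omega^2,1)S]_{13}=1+D^{-1/2}(\omega s_2+\omega^2 s_1)$ and on the $\mathbb{Z}_2$-symmetry origin of the reality reduction, and you have correctly flagged the only genuinely delicate bookkeeping point, namely reconciling $\mathrm{Re}\big((\delta^0_{-\lambda_0})^{-1}e^{\theta_{21}(-\lambda_0)}b^*_{pq}\big)$ with $\mathrm{Re}\big(\delta^0_{-\lambda_0}e^{-\theta_{21}(-\lambda_0)}b_{pq}\big)$ given that $|\delta^0_{-\lambda_0}|\neq 1$.
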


Finally, taking $N=0$ for an example, it is immediate that
\begin{equation*}
	\begin{aligned}
		&v_{1,00}^{\epsilon}=\begin{pmatrix}
			0 & r_1(\lambda_0) & 0\\
			0 & 0 & 0\\
			0 & 0 & 0
		\end{pmatrix},\quad 	&&v_{2,00}^{\epsilon}=\begin{pmatrix}
		0 & 0 & 0\\
		\frac{r^*_1(\lambda_0)}{1-r_1(\lambda_0)r_1^*(\lambda_0)} & 0 & 0\\
		0 & 0 & 0
		\end{pmatrix},\\
			&v_{3,00}^{\epsilon}=\begin{pmatrix}
			0 & \frac{r_1(\lambda_0)}{1-r_1(\lambda_0)r_1^*(\lambda_0)} & 0\\
			0 & 0 & 0\\
			0 & 0 & 0
		\end{pmatrix},\quad
			&&v_{4,00}^{\epsilon}=\begin{pmatrix}
			0 & 0 & 0\\
			r_1^*(\lambda_0) & 0 & 0\\
			0 & 0 & 0
		\end{pmatrix},
	\end{aligned}
\end{equation*}
and
\begin{equation*}
	\begin{aligned}
		&v_{7,00}^{\epsilon}=\begin{pmatrix}
			0 & 0 & 0\\
			r_2(-\lambda_0) & 0 & 0\\
			0 & 0 & 0
		\end{pmatrix},\quad 	&&v_{8,00}^{\epsilon}=\begin{pmatrix}
			0 & \frac{r^*_2(-\lambda_0)}{1-r_2(-\lambda_0)r_2^*(-\lambda_0)} & 0\\
			0 & 0 & 0\\
			0 & 0 & 0
		\end{pmatrix},\\
		&v_{9,00}^{\epsilon}=\begin{pmatrix}
			0 & 0 & 0\\
			\frac{r_2(-\lambda_0)}{1-r_2(-\lambda_0)r_2^*(-\lambda_0)} & 0 & 0\\
			0 & 0 & 0
		\end{pmatrix},\quad
		&&v_{10,00}^{\epsilon}=\begin{pmatrix}
			0 & r_2^*(-\lambda_0) & 0\\
			0 & 0 & 0\\
			0 & 0 & 0
		\end{pmatrix}.
	\end{aligned}
\end{equation*}
\par
Similar calculations as above indicate that
\begin{equation*}
       a_{00}(x,t)=-\frac{\sqrt{2\pi}{\rm{e}}^{\frac{\pi i}{4}-\frac{\pi \nu_1(\lambda_0)}{2}}}{r_1^*(\lambda_0)\Gamma(i\nu_1(\lambda_0))},\quad b_{00}(x,t)=-\frac{\sqrt{2\pi}{\rm{e}}^{-\frac{\pi i}{4}-\frac{5\pi \nu_4(-\lambda_0)}{2}}}{r_2(-\lambda_0)\Gamma(-i\nu_4(-\lambda_0))}.
\end{equation*}
Thus, under the case $N=0$ for $\vert\frac{x}{t}\vert<1$, the long-time asymptotic form of the solution to the two-component nonlinear KG equation \eqref{3} can be obtained, as shown in equations \eqref{N01} and \eqref{N02}.

Henceforth, we have completed the proof of the long-time asymptotics in high-order form for the solutions in Region ${\rm{IV}}$ of Theorem \ref{region}.

\subsubsection{Long-time asymptotic behavior in Region ${\rm{III}}$: $\vert\frac{x}{t}\vert\to1^{-}$}\label{regionIII}
\ \ \ \
In Region ${\rm{III}}$, where $\vert\frac{x}{t}\vert\to1^{-}$, the critical point $\lambda_0=\sqrt{\frac{\vert x-t\vert}{\vert x+t\vert}}$ approaches $0$ for $x>0$ and diverges to $\infty$ for $x<0$. More importantly, according to Theorem \ref{theo21}, the reflection coefficients $r_1(\lambda)$ and $r_2(\lambda)$ vanish to all orders as $\lambda\to0$ and $\lambda\to\infty$. Hence, in Region ${\rm{III}}$, we make the following modifications to the estimates given in Lemma \ref{lem43}.
\begin{lem} Under the same assumptions as Lemma \ref{lem43}, when $\lambda_0$ approaches $0$ for $x>0$ and diverges to $\infty$ for $x<0$, a nonnegative smooth function $C_N(\lambda)$ exists, which vanishes to all orders at $\lambda=0$ and $\lambda=\infty$. Consequently, we obtain
\begin{enumerate}
	\item The scattering coefficients $r_j(\lambda)$ and $\hat{r}_j(\lambda)$ $(j=1,2)$ satisfy the following estimators for $N\geq0$
	
	\begin{equation*}
		\begin{aligned}
			&\left|  r_{1,a}(\lambda)-\sum_{n=0}^{N}\frac{r^{(n)}_1(\lambda_0)}{n!}(\lambda-\lambda_0)^n \right| \leq C_N(\lambda_0){\rm{e}}^{\frac{t}{4}{\vert{\rm{Re}}\,\vartheta_{21}(\lambda)\vert}}\vert \lambda-\lambda_0\vert^{N+1},\\
			&\left|   r^*_{2,a}(\lambda)-\sum_{n=0}^{N}\frac{r_2^{*(n)}(-\lambda_0)}{n!}(\lambda+\lambda_0)^n\right| \leq C_N(-\lambda_0){\rm{e}}^{\frac{t}{4}\vert{\rm{Re}}\,\vartheta_{21}(\lambda)\vert}\vert \lambda+\lambda_0\vert^{N+1},
		\end{aligned}
	\end{equation*}
	and
	\begin{equation*}
		\begin{aligned}
			&\left|  \hat{r}_{1,a}(\lambda)-\sum_{n=0}^{N}\frac{\hat{r}_1^{(n)}(\lambda_0)}{n!}(\lambda-\lambda_0)^n\right| \leq C_N(\lambda_0){\rm{e}}^{\frac{t}{4}\vert{\rm{Re}}\,\vartheta_{21}(\lambda)\vert}\vert \lambda-\lambda_0\vert^{N+1},\\
			&\left|   \hat{r}_{2,a}(\lambda)-\sum_{n=0}^{N}\frac{\hat{r}_2(-\lambda_0)}{n!}(\lambda+\lambda_0)^n\right| \leq C_N(-\lambda_0){\rm{e}}^{\frac{t}{4}\vert{\rm{Re}}\,\vartheta_{21}(\lambda)\vert}\vert \lambda+\lambda_0\vert^{N+1},
		\end{aligned}
	\end{equation*}
	where $\vartheta_{21}=\theta_{21}/t$ and $C_N(\lambda_0)$ is smooth non-negative function that vanishes to any order as $\lambda=0$ and $\lambda=\infty$.
	\item For any $1\leq p\leq \infty$, the scattering coefficients satisfy the following inequalities
	\begin{equation*}
		\begin{aligned}
			&\|(1+\vert\cdot\vert)r_{1,r}(x,t,\cdot){\rm{e}}^{-t\vartheta_{21}}\|_{L^p(\lambda_0,\infty)}\leq \frac{C}{t^{N+\frac{3}{2}}},\\	&\|(1+\vert\cdot\vert)r^*_{2,r}(x,t,\cdot){\rm{e}}^{-t\vartheta_{21}}\|_{L^p(-\infty,-\lambda_0)}\leq \frac{C}{t^{N+\frac{3}{2}}},\\	&\|(1+\vert\cdot\vert)\hat{r}_{1,r}(x,t,\cdot){\rm{e}}^{-t\vartheta_{21}}\|_{L^p(0,\lambda_0)}\leq \frac{C}{t^{N+\frac{3}{2}}},\\	&\|(1+\vert\cdot\vert)\hat{r}^*_{2,r}(x,t,\cdot){\rm{e}}^{-t\vartheta_{21}}\|_{L^p(-\lambda_0,0)}\leq \frac{C}{t^{N+\frac{3}{2}}}.\\
		\end{aligned}
	\end{equation*}
\end{enumerate}
\end{lem}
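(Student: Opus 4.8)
The plan is to treat this lemma as a quantitative refinement of Lemma~\ref{lem43}: the analytic--nonanalytic splitting and the contour surgery are exactly those of the Cheng--Venakides--Zhou scheme \cite{Zhou-sG} used there, and the only new task is to track how the implied constants depend on the position of the critical point $\lambda_0$. The decisive extra input is Theorem~\ref{theo21}, which guarantees that $r_1,r_2$ together with all of their derivatives vanish to every order as $\lambda\to0$ and decay faster than any power as $\lambda\to\pm\infty$. Since in Region~${\rm III}$ the critical point $\lambda_0=\sqrt{|x-t|/|x+t|}$ satisfies $\lambda_0\to0$ for $x>0$ and $\lambda_0\to\infty$ for $x<0$, the Taylor data $r_1^{(n)}(\lambda_0)$ and $r_2^{(n)}(-\lambda_0)$ feeding the decomposition are themselves rapidly small, and the goal is to factor this smallness out into a single envelope $C_N(\lambda_0)$.

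For item~1 I would start from the splitting $r_1=r_{1,a}+r_{1,r}$ of Lemma~\ref{lem43}, in which $r_{1,a}$ is the analytic interpolant matching the degree-$N$ Taylor polynomial of $r_1$ at $\lambda_0$. Its approximation error is governed, through Taylor's formula,
\begin{equation*}
\Bigl|r_{1,a}(\lambda)-\sum_{n=0}^{N}\tfrac{r_1^{(n)}(\lambda_0)}{n!}(\lambda-\lambda_0)^n\Bigr|
\le \frac{|\lambda-\lambda_0|^{N+1}}{(N+1)!}\,{\rm{e}}^{\frac{t}{4}|{\rm{Re}}\,\vartheta_{21}(\lambda)|}\,\sup_{\zeta}\bigl|r_1^{(N+1)}(\zeta)\bigr|,
\end{equation*}
the supremum running over a $\lambda_0$-adapted neighbourhood of $\lambda_0$ inside the domain of analyticity. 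Setting
\begin{equation*}
C_N(\lambda_0):=C\,\sup_{\zeta}\Bigl(\bigl|r_1^{(N+1)}(\zeta)\bigr|+\bigl|r_2^{(N+1)}(\zeta)\bigr|\Bigr),
\end{equation*}
Theorem~\ref{theo21} then forces $C_N$ to be smooth, nonnegative, and to vanish to all orders both as $\lambda_0\to0$ and as $\lambda_0\to\infty$, because $r_1^{(N+1)}$ and $r_2^{(N+1)}$ do so on the shrinking (respectively escaping) neighbourhood. The same bound transfers verbatim to $r_{2,a}^{*}$, $\hat r_{1,a}$ and $\hat r_{2,a}$, since $\hat r_j=r_j/(1-r_jr_j^{*})$ and $\hat r_j^{*}$ are smooth functions of $r_j$ whose derivatives are controlled by $C^{k}$ norms of $r_j$ near $\lambda_0$; absorbing these into the same envelope yields the four displayed inequalities.

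For item~2 no refinement of the constant is needed, and I would simply reproduce the argument of Lemma~\ref{lem43}. The decay $t^{-N-3/2}$ arises entirely from the oscillatory weight ${\rm{e}}^{-t\vartheta_{21}}$ acting on the nonanalytic remainders $r_{1,r},r^{*}_{2,r},\hat r_{1,r},\hat r^{*}_{2,r}$ --- Schwartz functions vanishing to order $N$ at the stationary point --- through repeated integration by parts against $\vartheta_{21}$; this mechanism supplies a fixed constant $C$, so the four $L^p$ estimates are identical to those of Lemma~\ref{lem43}.

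The main obstacle is precisely the degeneration of scales as $\lambda_0\to0$ or $\lambda_0\to\infty$: one must choose the neighbourhoods and the rational cutoff in the construction of $r_{j,a}$ on a scale adapted to $\lambda_0$ (comparable to its distance to the origin, respectively to the scale set by the scaling variable $z_j$), so that the supremum defining $C_N(\lambda_0)$ is genuinely evaluated where $r_1,r_2$ and their derivatives are rapidly small, and one must check that the exponential factor ${\rm{e}}^{\frac{t}{4}|{\rm{Re}}\,\vartheta_{21}|}$ retains the correct domination as the two stationary points coalesce at the origin or escape to infinity. Once this uniformity is secured, the vanishing-to-all-orders property of $C_N$ follows directly from Theorem~\ref{theo21}, and the two items are established.
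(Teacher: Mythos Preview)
Your proposal is consistent with the paper's treatment. In fact the paper does not give a proof of this lemma at all: it is stated as a direct refinement of Lemma~\ref{lem43} (which itself is not proved but deferred to the Cheng--Venakides--Zhou framework \cite{Zhou-sG}), and the surrounding text only remarks that ``the other transformations are similar to the above.'' Your sketch supplies the missing argument, correctly identifying Theorem~\ref{theo21} as the source of the vanishing of $C_N(\lambda_0)$ and noting that item~2 carries over unchanged from Lemma~\ref{lem43}; this is precisely the reasoning the paper leaves implicit.
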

The other transformations are similar to the above. Therefore, we can obtaion the following lemma.
\begin{lem}\label{RegionIII}
	When $\vert\frac{x}{t}\vert\to1^{-}$, that is, in the case of Region ${\rm{III}}$, the solutions $u(x,t)$ and $v(x,t)$ to the initial value problem of the two-component nonlinear KG equation (\ref{3}) have the same leading term as in Region ${\rm{IV}}$, as shown in equations \eqref{58} and \eqref{59} , with an error term of $\mathcal{O}\left(t^{-N-3/2} +\frac{C_N(\lambda_0)(\ln t)^{N+1}}{t^{(N+2)/2}}\right) $.
\end{lem}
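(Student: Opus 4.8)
The plan is to rerun, essentially verbatim, the chain of transformations developed for Region ${\rm IV}$ in Subsections \ref{sub421}--\ref{sub425}. The signature of ${\rm Re}\,\theta_{21}(\lambda)$ in Fig. \ref{figsignature} is valid for every $\xi$ with $\vert\xi\vert<1$, so the $\delta$-function conjugation $M^{(1)}=M\Delta$, the lens-opening $M^{(2)}=M^{(1)}W$, the parabolic-cylinder parametrices at $\pm\omega^j\lambda_0$, and the final small-norm problem all carry over unchanged as algebraic constructions. The only genuine difference is that the critical point $\lambda_0=\sqrt{\vert x-t\vert/\vert x+t\vert}$ now leaves the compact set $0<\lambda_0<M$: it tends to $0$ when $x>0$ and to $\infty$ when $x<0$. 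Consequently every estimate that was previously uniform for $\lambda_0\in(0,M)$ must be re-examined for its dependence on $\lambda_0$ near the two endpoints.

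First I would replace Lemma \ref{lem43} by the modified lemma stated just above, in which the uniform constant $C$ in the Taylor-remainder bounds for $r_{1,a},r_{2,a}^{*},\hat r_{1,a},\hat r_{2,a}$ is upgraded to the $\lambda_0$-dependent envelope $C_N(\lambda_0)$. The justification is Theorem \ref{theo21}: $r_1,r_2$ together with all their derivatives decay to every order as $\lambda\to 0$ and $\lambda\to\pm\infty$. Since $\lambda_0$ is precisely approaching one of these endpoints, each Taylor coefficient $r_1^{(n)}(\lambda_0)$, $r_2^{*(n)}(-\lambda_0)$, and the corresponding $\hat r$-coefficients, is dominated by a single smooth, non-negative function $C_N(\lambda_0)$ that vanishes to all orders at $\lambda_0=0$ and $\lambda_0=\infty$. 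By contrast, the rational-remainder bounds on $r_{j,r},\hat r_{j,r}$ keep their clean $\mathcal{O}(t^{-N-3/2})$ form, because they are produced by the oscillatory decay of ${\rm e}^{-t\vartheta_{21}}$ on the contour away from $\lambda_0$ and are insensitive to where $\lambda_0$ is located.

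Next I would propagate this one change through to the reconstruction. The correction $S(x,t)-I$ splits into two families of integrals: those over the circles $\partial B_{\epsilon}(\pm\omega^j\lambda_0)$, whose integrands are assembled from $r_{j,a},\hat r_{j,a}$ and therefore inherit the factor $C_N(\lambda_0)$ multiplied by the parametrix error $(\ln t)^{N+1}/t^{(N+2)/2}$; and those over the residual contour $\Sigma^{(3)}\setminus\bigcup_{j}\partial B_{\epsilon}(\pm\omega^j\lambda_0)$, which are governed by $r_{j,r},\hat r_{j,r}$ and contribute $\mathcal{O}(t^{-N-3/2})$. Because the leading coefficients $a_{pq},b_{pq}$ in \eqref{57} are computed from exactly the same parabolic-cylinder data as in Region ${\rm IV}$, the leading terms of the expansions \eqref{58} and \eqref{59} are literally unchanged; only the total error, obtained by summing the two families, becomes $\mathcal{O}\!\left(t^{-N-3/2}+C_N(\lambda_0)(\ln t)^{N+1}/t^{(N+2)/2}\right)$, as claimed.

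The hardest part will be checking that the local parametrix construction survives the degeneration of $\lambda_0$ uniformly in $t$. Concretely, one must verify that the diagonal conjugators $H(\pm\lambda_0,t)$ of Lemma \ref{lem45} stay bounded and that the operators $1-\mathcal{C}_{00}$ and $1-\mathcal{C}_{\omega^{(3)}}$ remain invertible with uniformly bounded inverses as $\lambda_0\to 0$ or $\lambda_0\to\infty$. The decisive point is that $\nu_1(\lambda_0)=-\tfrac{1}{2\pi}\ln(1-\vert r_1(\lambda_0)\vert^2)$ and $\nu_4(-\lambda_0)$ themselves vanish to all orders at the endpoints, again by Theorem \ref{theo21}; hence the potentially dangerous factors $a^{2i\nu_1}$ and ${\rm e}^{-2\pi\nu_4}$ stay bounded and the parabolic-cylinder model degenerates continuously to the identity. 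This is exactly the mechanism that makes Region ${\rm III}$ a transition region matching the rapidly decaying behaviour of Region ${\rm II}$, and I would track the $\lambda_0$-dependence of every Beals--Coifman resolvent norm, collecting all endpoint-vanishing factors into the single envelope $C_N(\lambda_0)$.
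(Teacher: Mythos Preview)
Your proposal is correct and follows exactly the paper's own approach: observe that $\lambda_0\to 0$ or $\lambda_0\to\infty$ in Region~III, invoke Theorem~\ref{theo21} to replace the uniform constant in Lemma~\ref{lem43} by the vanishing envelope $C_N(\lambda_0)$, and then rerun the Region~IV transformations verbatim. In fact you supply considerably more detail than the paper does---the paper's argument consists of the modified lemma followed by the single sentence ``The other transformations are similar to the above,'' whereas you explicitly track how the two error contributions $t^{-N-3/2}$ and $C_N(\lambda_0)(\ln t)^{N+1}/t^{(N+2)/2}$ arise and flag the uniformity of the parametrix resolvents as the point requiring care.
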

\par
In conclusion, we have completed all the proofs of Theorem \ref{region}.

\bigskip
\par
\subsection*{Acknowledgements}
\ \ \ \
We extend our sincere gratitude to Xiaodong Zhu for his enthusiastic support and insightful discussions, which have significantly propelled the progress and successful completion of this research. Additionally, we acknowledge the generous financial support from the National Natural Science Foundation of China, Grant Nos. 12371247 and 12431008.

\bigskip

\noindent{\bf Data availability}
Data sharing not applicable to this article as no data sets were generated or analysed during the current study.

\subsection*{Statements and Declarations}

\noindent{\bf Competing Interests}
The authors declare that they have no conflict of interest.

\bibliographystyle{amsplain}

\end{document}